\pdfpagewidth=8.5in
\pdfpageheight=11in

\pdfoutput=1

\documentclass[letterpaper,11pt]{article}

\usepackage{times}
\usepackage{amsmath}
\usepackage{amsfonts}
\usepackage{amssymb}
\usepackage{amsthm}
\usepackage{graphicx}
\usepackage[margin=1in]{geometry}
\usepackage{paralist}
\usepackage{xspace}
\usepackage[linesnumbered,algoruled,
vlined]{algorithm2e}
\SetKwInOut{Input}{input}
\SetKwInOut{Output}{output}
\SetKwComment{Comment}{// }{}
\SetCommentSty{textit}
\SetEndCharOfAlgoLine{}

\usepackage{color}
\usepackage{soul}

\newtheorem{theorem}{Theorem}[section]
\newtheorem{lemma}[theorem]{Lemma}
\newtheorem{definition}[theorem]{Definition}
\newtheorem{corollary}[theorem]{Corollary}

\newcommand{\namedref}[2]{\hyperref[#2]{#1~\ref*{#2}}}
\newcommand{\sectionref}[1]{\namedref{Section}{#1}}
\newcommand{\appendixref}[1]{\namedref{Appendix}{#1}}

\newcommand{\theoremref}[1]{\namedref{Theorem}{#1}}
\newcommand{\defref}[1]{\namedref{Definition}{#1}}
\newcommand{\figureref}[1]{\namedref{Figure}{#1}}

\newcommand{\lemmaref}[1]{\namedref{Lemma}{#1}}

\newcommand{\corollaryref}[1]{\namedref{Corollary}{#1}}

\newcommand{\appref}[1]{\namedref{Appendix}{#1}}

\newcommand{\algref}[1]{\namedref{Algorithm}{#1}}

\newcommand{\lineref}[1]{\namedref{Line}{#1}}
\newcommand{\equalityref}[1]{\hyperref[#1]{Equality~\eqref{#1}}}
\newcommand{\inequalityref}[1]{\hyperref[#1]{Inequality~\eqref{#1}}}

\newcommand{\N}{\mathbb{N}}
\newcommand{\R}{\mathbb{R}}

\newcommand{\BO}{\mathcal{O}}
\newcommand{\sO}{\tilde{\mathcal{O}}}
\newcommand{\sOmega}{\tilde{\Omega}}
\newcommand{\M}{\mathcal{M}}
\newcommand{\Comp}{\lambda}
\newcommand{\Set}[1]{\left\{#1\right\}}
\newcommand{\Congest}{\textsc{congest}\xspace}

\def\cA{\mathcal{A}}

\DeclareMathOperator{\act}{act}
\DeclareMathOperator{\moat}{rad}
\DeclareMathOperator{\Wd}{wd}
\DeclareMathOperator{\WD}{WD}
\DeclareMathOperator{\reg}{Reg}
\DeclareMathOperator{\vor}{Vor}
\DeclareMathOperator*{\argmin}{argmin}

\DeclareMathOperator{\polylog}{polylog}
\DeclareMathOperator{\cP}{{\cal P}}

\DeclareMathOperator{\sent}{sent}
\DeclareMathOperator{\unsent}{list}
\def\rW{\hat{W}}
\newcommand{\true}{\mathbf{true}}
\newcommand{\false}{\mathbf{false}}
\newcommand{\sfcr}{\textsc{dsf-cr}\xspace}
\newcommand{\sfic}{\textsc{dsf-ic}\xspace}

\renewcommand{\paragraph}[1]{\smallskip\par\noindent\textbf{#1}}

\usepackage[colorlinks,linkcolor=blue,filecolor=blue,citecolor=blue,urlcolor=blue
]{hyperref}

\begin{document}
\setcounter{tocdepth}{2}

\title{\textbf{Improved Distributed Steiner Forest Construction}
}

\author{
Christoph Lenzen\thanks{MIT CSAIL, The Stata Center, 32 Vassar Street,
Cambridge, MA 02139, USA.
\hbox{Email: {\tt clenzen@csail.mit.edu}.} Phone: +1 617-253-4632.
Supported by the Deutsche Forschungsgemeinschaft (DFG, reference
number Le 3107/1-1).}
\and 
Boaz Patt-Shamir\thanks{School of Electrical Engineering, Tel Aviv
  University, Tel Aviv 69978, Israel. Email:
  \texttt{boaz@eng.tau.ac.il}. Supported in part by Israel
  Ministry for Science and Technology.}
}

\date{}

\begin{titlepage}

\setcounter{page}{0}

\maketitle



\begin{abstract}
  We present new distributed algorithms for constructing a Steiner Forest in the
  \Congest model. Our deterministic algorithm finds, for any given constant
  $\varepsilon>0$, a $(2+\varepsilon)$-approximation in
  $\sO(sk+\sqrt{\min\Set{st,n}})$ rounds, where $s$ is the ``shortest path
  diameter,'' $t$ is the number of terminals, and $k$ is the number of terminal
  components in the input. Our randomized algorithm finds, with high
  probability, an $\BO(\log n)$-approximation in time $\sO(k+\min\Set{s,\sqrt
  n}+D)$, where $D$ is the unweighted diameter of the network. We prove a
  matching lower bound of $\sOmega(k+\min\Set{s,\sqrt
  n}+D)$on the running time of any distributed approximation
  algorithm for the Steiner Forest problem. The best previous algorithms were
  randomized and obtained either an $\BO(\log n)$-approximation in $\sO(sk)$
  time, or an $\BO(1/\varepsilon)$-approximation in $\sO((\sqrt
  n+t)^{1+\varepsilon}+D)$ time.
\end{abstract} 
\thispagestyle{empty}
\end{titlepage}

\section{Introduction}
Ever since the celebrated paper of Gallager, Humblet, and Spira \cite{GHS-83},
the task of constructing a minimum-weight spanning tree (MST) continues to be
a rich source of difficulties and ideas that drive network algorithmics (see,
e.g., \cite{Elkin-MST,GarayKP-98,LotkerPP,PelegR-00}).
The \emph{Steiner Forest} (SF) problem is a strict generalization of MST: We are
given a network with edge weights and some disjoint node subsets called
\emph{input components}; the task is to find a minimum-weight edge set which
makes each component connected. MST is a special case of SF, and so are the
Steiner Tree and shortest $s$-$t$ path problems.
The general SF problem is well motivated by many practical situations involving
the design of networks, be it physical (it was famously posed as a problem of
railroad design), or virtual (e.g., VPNs or streaming multicast). The problem
has attracted much attention in the classical algorithms community, as detailed
on the dedicated website \cite{Steiner-site}.

The first network algorithm for
SF in the \Congest model (where a link can deliver $\BO(\log n)$ bits in a
time unit---details in \sectionref{sec:model}) was presented by
Khan \textit{et al.}\ \cite{KKMPT-12}. 
It provides $\BO(\log n)$-approximate solutions in time $\sO(sk)$, where $n$ is
the number of nodes, $k$ is the number of components, and $s$ the \emph{shortest
path diameter} of the network, which is
(roughly---see \sectionref{sec:model})  the maximal number of edges 
in a weighted shortest path. 
Subsequently, in \cite{LenzenP13}, it was shown that for any given
$0<\varepsilon\le1/2$, an $\BO(\varepsilon^{-1})$-approximate solution to SF can
be found in time $\sO((\sqrt n+t)^{1+\varepsilon}+D)$, where $D$ is the diameter
of the unweighted version of the network, and $t$ is the number of
\emph{terminals}, i.e., the total number of nodes in all input components.  The
algorithms in \cite{KKMPT-12,LenzenP13} are both randomized.

\paragraph{Our Results.}
In this paper we improve the results for SF in the \Congest model in two ways.
First, we show that for any given constant $\varepsilon>0$, a
$(2+\varepsilon)$-approximate solution to SF can be computed by a deterministic
network algorithm in time $\sO(sk+\sqrt{\min\Set{st,n}})$.
Second, we show that an $\BO(\log n)$-approximation can be attained by a
randomized algorithm in time $\sO(k+\min\Set{s,\sqrt n}+D)\subseteq\sO(s+k)$.
On the other hand, we show that any algorithm in the \Congest model that
computes a solution to SF with non-trivial approximation ratio has running time
in $\sOmega(k+\min\Set{s,\sqrt n}+D)$. If the input is not given by
indicating to each terminal its input component, but rather by
\emph{connection
requests} between terminals, i.e., informing each terminal which
terminals  it must be
connected to, an  $\sOmega(t+\min\Set{s,\sqrt n}+D)$ lower bound
holds.  (It is easy to transform connection requests into equivalent
input components in $\BO(t+D)$ rounds.)

\paragraph{Related work.}
The Steiner Tree problem (the special case of SF where there is one
input component) has a remarkable history, starting with Fermat, who
posed the geometric 3-point on a plane problem circa 1643, including
Gauss (1836), and culminating with a popularization in 1941 by Courant
and Robbins in their book ``What is Mathematics'' \cite{CourantR-41}. An
interesting account of these early developments is given in
\cite{SteinerHistory}.
The contribution of Computer Science to the history of the problem apparently
started with the inclusion of Steiner Tree as one of the original 21 problems
proved NP-complete by Karp \cite{Karp-72}. There are quite a few variants of the
SF problem which are algorithmically interesting, such as Directed Steiner Tree,
Prize-Collecting Steiner Tree, Group Steiner Tree, and more. The site
\cite{Steiner-site} gives a continuously updated state of the
art results for many variants. Let us mention results for just the most common
variants: For the Steiner Tree problem, the best (polynomial-time) approximation
ratio known is $\ln 4+\varepsilon\approx1.386+\varepsilon$ for any constant
$\varepsilon>0$ \cite{ByrkaGRS-10}. For Steiner Forest, the best approximation
ratio known is $2-1/(t-k)$ \cite{AgrawalKR-95}. It is also known that the
approximation ratio of the Steiner Tree (or Forest) problem is  at least
$96/95$, unless P=NP~\cite{ChlebikC-08}.

Regarding distributed algorithms, there are a few relevant results. First, the
special case of minimum-weight spanning tree (MST) is known to have time
complexity of $\tilde\Theta(D+\sqrt n)$ in the \Congest model
\cite{DHKNPPW-11,Elkin-MST,GarayKP-98,KuttenP-98,PelegR-00}. In \cite{CF05}, a
2-approximation for the special case of Steiner Tree is presented, with time
complexity $\sO(n)$. The first distributed solution to the Steiner Forest
problem was presented by Khan \textit{et al.}~\cite{KKMPT-12}, where a
randomized algorithm is used to embed the instance in a virtual tree with
$\BO(\log n)$ distortion, then finding the optimal solution on the tree (which
is just the minimal subforest connecting each input component), and finally
mapping the selected tree edges back to corresponding paths in the original
graph. The result is an $\BO(\log n)$-approximation in time $\sO(sk)$.
Intuitively, $s$ is the time required by the Bellman-Ford algorithm to compute
distributed single-source shortest paths
, and the virtual tree of
\cite{KKMPT-12} is computed in $\sO(s)$ rounds.
A second distributed algorithm for Steiner Forest is presented in
\cite{LenzenP13}.
Here, a sparse spanner for the metric induced on the set of terminals and a
random sample of $\tilde{\Theta}(\sqrt{n})$ nodes is computed, on which the
instance then is solved centrally. To get an
$\BO(\varepsilon^{-1})$-approximation, the algorithm runs for $\sO(D+(\sqrt
n+t)^{1+\varepsilon})$ rounds. For approximation ratio $\BO(\log n)$,
the running time 
is $\sO(D+\sqrt n+t)$. 

\paragraph{Main Techniques.}
Our lower
bounds are
derived by the standard technique of reduction from results on $2$-party
communication complexity.
Our deterministic algorithm is an adaptation of the ``moat growing'' algorithm
of Agrawal, Klein, and Ravi \cite{AgrawalKR-95} to the \Congest model. It
involves determining the times in which ``significant events'' occur (e.g., all
terminals in an input component becoming connected by the currently selected
edges)
and extensive usage of pipelining. The algorithm generalizes
the MST algorithm from~\cite{KuttenP-98}
: for the special case of a Steiner Tree (i.e.,
$k=1$), one can interpret the output as the edge set induced by an MST of the
complete graph on the terminals with edge weights given by the terminal-terminal
distances, yielding a factor-$2$ approximation; specializing further to the MST
problem, the result is an exact MST and the running time becomes
$\sO(\sqrt{n}+D)$.

Our randomized algorithm is based on the embedding of the graph into a tree
metric from \cite{KKMPT-12}, but we improve the complexity of finding a Steiner
Forest. A key insight is that while the least-weight paths in the original
graph corresponding to virtual tree edges might intersect, no node participates
in more than $\BO(\log n)$ distinct paths. Since the union of
all least-weight paths ending at a specific node induces a tree, letting each
node serve routing requests corresponding to different destinations in a
round-robin fashion achieves a pipelining effect reducing the complexity to
$\sO(s+k)$.  If $s>\sqrt{n}$, the virtual tree and the corresponding
solution are constructed only partially, in time $\sO(\sqrt{n}+k+D)$,
and the partial result is used to create another instance with
$\BO(\sqrt n)$ terminals 
that captures the remaining connectivity demands; we solve it using
the algorithm from~\cite{LenzenP13}, obtaining an $\BO(\log n)$-approximation.



\paragraph{Organization.}
In \sectionref{sec:model} we define the model, problem and basic
concepts.  \sectionref{sec-lb} contains our lower bounds.  
In \sectionref{sec-alg1} and \sectionref{sec-alg2} we  present our
deterministic and randomized algorithms, respectively. 
We only give a high-level overview in this extended abstract. Proofs
are deferred to the appendix.

\section{Model and Notation}
\label{sec:model}
\paragraph{System Model.}
We consider the $\Congest(\log n)$ or simply the \Congest model as
specified in~\cite{Peleg:book}, briefly described as follows. The
distributed system is represented 
by a weighted graph $G=(V,E,W)$ of $n:=|V|$ nodes. The weights $W:E\to
\N$ are polynomially bounded in $n$ (and therefore polynomial sums of
weights can be encoded with $\BO(\log n)$ bits). Each node initially
knows its unique identifier of $\BO(\log n)$ bits, the identifiers of
its neighbors, the weight of its incident edges, and the local
problem-specific input specified below. Algorithms proceed in
synchronous rounds, where in each round, (i) nodes perform arbitrary,
finite local computations,%
\footnote{All our algorithms require polynomial computations only.} (ii) may
send, to each neighbor, a possibly distinct message of $\BO(\log n)$ bits, and
(iii) receive the messages sent by their neighbors. For randomized algorithms,
each node has access to an unlimited supply of unbiased, independent random
bits. Time complexity is measured by the number of rounds until all nodes
(explicitly) terminate. 

\paragraph{Notation.}
We use the following conventions and graph-theoretic notions.
\begin{compactitem}
\item The \emph{length} or number of \emph{hops} of a path $p=(v_0,\ldots,v_{\ell(p)})$ in $G$ is $\ell(p)$.
\item The weight of such a path is
  $W(p):=\sum_{i=1}^{\ell(p)}W(v_i,v_{i-1})$. For notational
  convenience, we assume w.l.o.g.\ that different paths have different
  weight (ties broken lexicographically).
\item By $\cP(v,w)$ we denote the set of all paths between $v,w\in V$ in $G$, i.e., $v_0=v$ and $v_{\ell(p)}=w$.
\item The (unweighted) \emph{diameter} of $G$ is \hfill\\$D:=\max_{v,w\in V}\{\min_{p\in \cP(v,w)}\{\ell(p)\}\}$.
\item The (weighted) \emph{distance} of $v$ and $w$ in $G$ is $\Wd(v,w):=\min_{p\in \cP(v,w)}\{W(p)\}$.
\item The \emph{weighted diameter} of $G$ is $\WD:=\max_{v,w\in V}\{\Wd(v,w)\}$.
\item Its \emph{shortest-path-diameter} is $s:=\max_{v,w\in V}\{\min \{\ell(p)\,|\,p\in \cP(v,w)\wedge W(p)=\Wd(v,w)\}\}$.
\item For $v\in V$ and $r\in \R^+_0$, we use $B_G(v,r)$ to denote the
  ball of radius $r$ around $v$ in $G$, which includes all nodes and
  edges at weighted distance at most $r$ from $v$. The ball may
  contain edge fractions: for an edge $\{w,u\}$ for which $w$ is in
  $B_G(v,r)$, the $(r-\Wd(v,w))/\Wd(v,w)$ fraction of the edge closer to $w$
  is considered to be within $B_G(v,r)$, and the remainder is considered
  outside $B_G(v,r)$.
\end{compactitem}
We use ``soft''  asymptotic notation.
Formally, given functions
$f$ and $g$, define (i) $f\in \tilde{\BO}(g)$ iff there is some $h\in
\polylog n$ so that $f\in \BO(gh)$, (ii) $f\in \tilde{\Omega}(g)$ iff
$g\in\sO(f)$, 
and (iii)
$f\in \tilde{\Theta}(g)$ iff $f\in \tilde{\BO}(g)\cap
\tilde{\Omega}(g)$.
By ``w.h.p.,'' we abbreviate ``with probability $1-n^{-\Omega(1)}$'' for a
sufficiently large constant in the $\Omega(1)$ term.

\paragraph{The Distributed Steiner Forest Problem.}
In the Steiner Forest problem, the output is a set of edges. We
require that the output edge set $F$ is represented distributively,
i.e., each node 
can locally answer which of its adjacent 
edges are in the output.
%
The input may be represented by two alternative methods, both are
justified and are common in the
literature. We give the two definitions.
\begin{definition}
[Distributed Steiner Forest with Connection Requests (\sfcr)]\ 
\begin{compactitem}
\item[\textbf{Input:}] At each node $v$, a set of \emph{connection
    requests} $R_v\subseteq V$.
\item[\textbf{Output:}] An edge set $F\subseteq E$ such that
  for each connection request $w\in R_v$, $v$ and $w$ are connected by
  $F$.
\item[\textbf{Goal:}] Minimize $W(F)=\sum_{e\in F}W(e)$.
\end{compactitem}
\end{definition}
\noindent

The set of \emph{terminal} nodes is defined to be $T=\Set{w\mid w\in
  R_v\mbox{ for some }v\in V}\cup \Set{v\mid R_v\ne\emptyset}$, i.e., the
set of nodes $v$ for which there is some connection request $\{v,w\}$.

\begin{definition}[Distributed Steiner Forest with Input Components (\sfic)]\ 
\begin{compactitem}
\item[\textbf{Input:}] At each node $v$, $\Comp(v)\in \Lambda \cup
  \{\bot\}$, where $\Lambda$ is the set of \emph{component
    identifiers}. The set of \emph{terminals} is $T:=\{v\in
V\mid\Comp(v)\neq \bot\}$. An \emph{input component} $C_{\Comp}$ for
$\lambda\ne\bot$ is the set of terminals with label $\Comp$.
\item[\textbf{Output:}] An edge set $F\subseteq E$ such that all terminals
  in each
  {input component}  are connected by $F$.
\item[\textbf{Goal:}] Minimize $W(F)=\sum_{e\in F}W(e)$.
\end{compactitem}
\end{definition}
An instance of \sfic is \emph{minimal}, if $|C_{\Comp}|\neq 1$ for all
$\Comp\in \Lambda$. We assume that the labels $\Comp\in \Lambda$ are encoded
using $\BO(\log n)$ bits. 
We define $t:=|T|$ and $k:=|\Lambda|\leq
t$, i.e., the number of terminals and input components, respectively. 

We say that any two instances of the above problems on the same
weighted graph, regardless of the way the input is given, are
\emph{equivalent} if the set of feasible outputs for the two instances
is identical.
\begin{lemma}\label{lemma:transform_to_input}
  Any instance of \sfcr can be transformed into an
  equivalent instance of \sfic in $\BO(D+t)$ rounds.
\end{lemma}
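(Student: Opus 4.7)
The plan is to reduce the transformation to computing the connected components of the \emph{request graph} $H := (T, \{(v,w) : w \in R_v\})$: the resulting \sfic instance labels each terminal $v$ by the identifier of its component in $H$, and sets $\Comp(v) := \bot$ for every non-terminal. This yields an equivalent instance because two terminals must be connected by any feasible $F$ in the \sfcr instance iff they lie in the same connected component of $H$, which is precisely the feasibility condition of the constructed \sfic instance.

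The computation proceeds in three phases along a BFS tree of $G$ rooted at an arbitrary leader $r$, constructed in $\BO(D)$ rounds by standard flooding. First, each terminal $v$ forwards its incident request edges $(v,w)$ up the tree; although $H$ may contain $\Theta(t^2)$ edges, each intermediate tree node runs an in-network Boruvka-style union--find over the edges it observes from its subtree, and drops any edge that would close a cycle, so at most $t-1$ edges ever survive along any root-directed path. Pipelining this filtered convergecast therefore completes in $\BO(D+t)$ rounds. Second, $r$ locally computes the connected components of the resulting spanning forest and assigns each one a distinct $\BO(\log n)$-bit label from $\Lambda$. Third, the $\BO(t)$ pairs (terminal id, label) are broadcast down the BFS tree in $\BO(D+t)$ pipelined rounds, after which each terminal stores its label as $\Comp(v)$ and any node not receiving a label sets $\Comp(v) := \bot$ locally.

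The main obstacle is matching the $\BO(D+t)$ bound despite the potentially quadratic density of $H$. The key observation is that $r$ only needs a spanning forest of $H$, not its full edge set; the Boruvka-style cycle-dropping at every intermediate node ensures that at most $t-1$ edges traverse any BFS-tree link, which is exactly what pipelined convergecast on a tree of depth $\BO(D)$ supports in $\BO(D+t)$ rounds. A minor subtlety is that a terminal $v$ with $R_v = \emptyset$ may still be a target of some request $(u,v)$; such $v$ is captured automatically because $u$'s forwarded edge advertises $v$'s identifier to the root, which then includes $v$ in the label broadcast.
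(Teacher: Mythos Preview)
Your proposal is correct and follows essentially the same approach as the paper: build a BFS tree, convergecast the connection-request edges with cycle filtering so that only a spanning forest of the request graph (at most $t-1$ edges) travels up, then broadcast the result in $\BO(D+t)$ rounds. The only cosmetic difference is that the paper broadcasts the spanning forest itself and lets every node compute the components locally, whereas you have the root compute components and broadcast $(\text{terminal id},\text{label})$ pairs; both variants fit in $\BO(D+t)$ rounds for the same reason.
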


\begin{lemma}\label{lemma:transform_to_minimal}
Any instance of \sfic can be transformed into
an equivalent minimal instance of \sfic in $\BO(D+k)$ rounds.
\end{lemma}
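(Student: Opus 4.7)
The plan is to reduce the lemma to two standard primitives on a BFS tree, namely pipelined aggregating convergecast and pipelined broadcast.

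The key observation is that a label $\lambda \in \Lambda$ with $|C_\lambda|=1$ imposes no connectivity constraint whatsoever, since a single terminal is trivially ``connected'' by any $F\subseteq E$. Consequently, if we set $\Comp(v)\leftarrow\bot$ at every terminal $v$ whose component $C_{\Comp(v)}$ is a singleton, we obtain an instance of \sfic on the same graph $G$ whose feasible output sets coincide with those of the original instance. By construction, every remaining label is carried by at least two terminals, so the new instance is minimal. Hence it suffices to let each terminal decide, within $\BO(D+k)$ rounds, whether its label is a singleton.

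To this end I would first construct a BFS tree $\tau$ of $G$ rooted at some node $r$ in $\BO(D)$ rounds (using a standard BFS). Next, I would run a pipelined aggregating convergecast on $\tau$: each terminal $v$ injects the pair $(\Comp(v),1)$, and each internal node forwards pairs to its parent in a fixed lexicographic order of labels, merging any two pairs sharing the same label by summing their second coordinates. Since only $k$ distinct labels exist globally, at most $k$ aggregated pairs ever travel on any tree edge, each of size $\BO(\log n)$ bits; the standard analysis of this primitive then yields a total running time of $\BO(D+k)$ and leaves $r$ with the exact multiset $\{(\Comp,|C_\Comp|) : \Comp \in \Lambda\}$. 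Now $r$ locally computes $S:=\{\Comp \in \Lambda : |C_\Comp|=1\}$, which has size $|S|\le k$, and broadcasts $S$ along $\tau$ in pipelined fashion in $\BO(D+k)$ further rounds. Finally, each terminal $v$ with $\Comp(v)\in S$ sets $\Comp(v)\leftarrow \bot$.

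The only step that is not entirely mechanical is the aggregating convergecast, and this is exactly where the $\BO(D+k)$ bound, rather than $\BO(D+t)$, comes from; the obstacle to a naive approach is that a node may have to relay up to $k$ distinct label-count pairs, yet it can ship only $\BO(\log n)$ bits per round. Processing items in a globally fixed (lexicographic) order avoids blocking: whenever a node has transmitted its current smallest label and a larger one arrives from a child, it can merge on the fly, so the depth-$D$ tree contributes only an additive $\BO(D)$ while the $k$ distinct pairs contribute an additive $\BO(k)$. After this transformation the resulting instance is a minimal \sfic instance equivalent to the original, completing the proof.
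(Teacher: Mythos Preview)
Your proof is correct and follows essentially the same approach as the paper: build a BFS tree, perform a pipelined convergecast so that only $\BO(k)$ label-tagged messages cross each tree edge, let the root identify the singleton labels, broadcast the result, and have each affected terminal erase its label. The only cosmetic difference is that you aggregate exact counts $|C_\Comp|$ by summing, whereas the paper merely detects whether $|C_\Comp|\ge 2$ (each node forwards at most two messages per label and replaces them by a single $(\true,\Comp)$ once it has seen a duplicate); the paper's variant sidesteps the need to argue that a count is ``final'' before forwarding, but both are standard and yield the same $\BO(D+k)$ bound.
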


\section{Lower Bounds}
\label{sec-lb}

In this section we state our lower bounds (for proofs and more
discussion, see \appendixref{app-lb}.) 
As our first result,
we show that applying \lemmaref{lemma:transform_to_input} to
instances of \sfcr comes at no penalty in asymptotic running time (a lower bound
of $\Omega(D)$ is trivial). 
\begin{lemma}
\label{lem-lb1}
  Any distributed algorithm for \sfcr with finite approximation ratio
  has time complexity $\Omega(t/\log n)$. This is true even in graphs
  with diameter at most $4$ and no more than two input components.
\end{lemma}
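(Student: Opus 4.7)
Proof proposal: My plan is to reduce from two-party communication complexity. I would construct a family of \sfcr instances parameterized by Bob's input $y\in\{0,1\}^t$ such that (i) the graph has unweighted diameter at most $3$ and induces only one input component, and (ii) the edges incident to Alice's half of the graph in any finite-approximation solution encode $\Omega(t)$ bits about $y$. Since any $R$-round distributed algorithm on this graph can be simulated by a two-party protocol exchanging $O(R\log n)$ bits across the single ``bridge'' edge separating Alice's and Bob's halves, property (ii) forces $R=\Omega(t/\log n)$.

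Construction: the graph has two hubs $u_A,u_B$ joined by a bridge edge of weight $1$; on Alice's side, nodes $a_1,\ldots,a_t$ are each attached to $u_A$ by an edge of large weight $W=\Theta(\alpha t)$; on Bob's side, leaves $b_1,\ldots,b_t$ together with an anchor $b^*$ are attached to $u_B$ by unit-weight edges. Alice issues no requests; Bob sets $R_{b_i}=\{b^*\}$ when $y_i=0$ and $R_{b_i}=\{b^*,a_i\}$ when $y_i=1$. The anchor $b^*$ merges everything into a single input component of size at most $2t+1$, and unweighted distances are at most $3$, so both the diameter and component-count hypotheses of the lemma are satisfied.

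Information-theoretic step: feasibility forces $(u_A,a_i)$ into the output whenever $y_i=1$, since this is the only edge incident to $a_i$. For $|y|_1=k$ the optimum has weight $kW+O(t)$, so choosing $W$ sufficiently larger than $t$ and $\alpha$ guarantees that any $\alpha$-approximate output contains at most $\alpha k$ edges of the form $(u_A,a_i)$. Now restrict attention to $y$ of Hamming weight exactly $k:=\lfloor t/(2e\alpha)\rfloor$. A standard counting argument (each Alice-side output of size at most $\alpha k$ is consistent with at most $\binom{\alpha k}{k}$ inputs) then lower-bounds the number of distinct Alice-side outputs by $\binom{t}{k}/\binom{\alpha k}{k}\geq 2^{\Omega(t/\alpha)}$, which is $2^{\Omega(t)}$ for any constant $\alpha$. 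Since every round of a distributed algorithm sends $O(\log n)$ bits across the bridge in each direction, the two-party simulation (Alice controls $u_A$ and the $a_i$, Bob controls everything else) uses $O(R\log n)$ bits; combined with the counting bound this yields $R\log n=\Omega(t)$ and hence $R=\Omega(t/\log n)$.

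The main obstacle I anticipate is making the argument uniform over all finite approximation ratios while respecting the two-component bound. The component constraint is handled trivially by the anchor $b^*$. The ratio issue is the delicate part: if $W$ is too small relative to $\alpha t$, an $\alpha$-approximation has enough slack to include essentially all $(u_A,a_i)$ edges, destroying the encoding; setting $W=\Theta(\alpha t)$ prevents this while keeping weights polynomially bounded as required by the \Congest model. The two-party simulation itself is standard, and the same argument applies to randomized algorithms via Yao's principle, since the counting bound is essentially a lower bound on the one-way (randomized) communication complexity of an \textsc{index}-type problem.
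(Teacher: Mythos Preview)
Your approach is genuinely different from the paper's and works cleanly for constant approximation ratio, but there is a real gap when $\alpha$ is allowed to grow with $n$: your counting bound is $\log\bigl(\binom{t}{k}/\binom{\alpha k}{k}\bigr)=\Theta\bigl(k\log(t/(\alpha k))\bigr)$ bits, which is maximized at $k=\Theta(t/\alpha)$ and equals $\Theta(t/\alpha)$, regardless of how you tune $W$. Setting $W=\Theta(\alpha t)$ does ensure the output contains at most $\alpha k+O(1)$ heavy edges---that is precisely what keeps the counting non-vacuous---but it cannot push the bit count past $\Theta(t/\alpha)$. The loss is intrinsic to the encoding: the one-way task ``Bob holds a $k$-subset of $[t]$; Alice must output a superset of size at most $\alpha k$'' has communication complexity $\Theta(t/\alpha)$, not $\Theta(t)$. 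So for, e.g., $\alpha=\log n$ your argument yields only $\Omega(t/\log^2 n)$ rounds, and the lemma as stated (arbitrary finite ratio) is not established.

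The paper sidesteps this by reducing from two-party Set Disjointness with input on \emph{both} sides. Alice and Bob each build a star-like gadget whose leaf attachments encode their respective sets $A,B\subseteq[n]$; four edges cross between the halves, two of them ``heavy'' of weight $\rho(2n+2)+1$. The distinguishing event is binary: if $A\cap B=\emptyset$ the optimum is at most $2n+2$, so no $\rho$-approximation can afford a heavy edge; if $A\cap B\ne\emptyset$, feasibility forces a heavy edge into every solution. Hence the presence of a heavy edge decides disjointness exactly, with no $\rho$-dependent degradation, and the $\Omega(n)$ randomized Set Disjointness bound transfers directly to $\Omega(t/\log n)$ rounds. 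The structural difference is that the paper encodes one hard bit whose recovery is forced jointly by feasibility and the weight threshold, whereas your construction encodes $t$ soft bits that an $\alpha$-approximation is free to blur by a factor of~$\alpha$. (For constant $\alpha$ your argument is fine and even achieves diameter~$3$ with a single input component, slightly tighter than the paper's diameter~$4$ with two.)
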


The main result of this section is the following theorem.
\begin{theorem}
\label{thm-lb}
  Any algorithm for the distributed Steiner Forest problem with non-trivial
  approximation ratio has worst-case time complexity in
  $\tilde{\Omega}(\min\{s,\sqrt{n}\}+k+D)$ in expectation.
\end{theorem}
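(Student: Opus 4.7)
The plan is to prove the three terms $\Omega(D)$, $\tilde\Omega(k)$, and $\tilde\Omega(\min\{s,\sqrt n\})$ separately; the stated bound then follows by taking the maximum. The $\Omega(D)$ term is immediate: for any pair of nodes $u,v$ at unweighted distance $D$ placed in the same input component, the output edges incident on $u$ must depend on the existence of $v$, so some message has to traverse a shortest $u$--$v$ path.

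For the $\tilde\Omega(k)$ term, I would adapt the construction behind \lemmaref{lem-lb1} by splitting its $\Theta(t)$ encoded bits into $\Theta(t)$ separate input components, each a pair of terminals on the two sides of the Alice--Bob bottleneck. The same simulation argument then yields a $\tilde\Omega(t/\log n) = \tilde\Omega(k/\log n)$ bound in a graph of constant unweighted diameter.

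For the $\tilde\Omega(\min\{s,\sqrt n\})$ term, I would reduce from $2$-party set disjointness, whose randomized communication complexity on $b$-bit inputs is $\Omega(b)$, using the bottleneck-graph framework of Das Sarma et al. Fix $b := \Theta(\min\{s,\sqrt n\})$ and build a graph with two designated nodes $v_A$, $v_B$ simulated by Alice and Bob, connected by a bottleneck of $\BO(\log n)$ bandwidth per round. When $\sqrt n \leq s$, I would use the standard polylogarithmic-depth witness-path gadget of width $\Theta(\sqrt n)$; when $s < \sqrt n$, I would stretch the bottleneck into a weighted path of $s$ hops using geometrically growing edge weights, so that the shortest-path-diameter becomes $\Theta(s)$ while the unweighted diameter stays polylogarithmic. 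On each side attach a Steiner Forest gadget encoding the $b$ input bits via $b$ pairs $(\alpha_i,\beta_i)$ of cross-bottleneck terminals (or singleton input components $\{\alpha_i,\beta_i\}$), with auxiliary weighted edges so that the optimum cost has a discrete multiplicative gap between disjoint and non-disjoint instances. A $T$-round distributed algorithm then yields an $\BO(T\log n)$-bit protocol for disjointness, so $T \in \tilde\Omega(b)$; Yao's principle handles the ``in expectation'' statement for randomized algorithms.

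The main obstacle is designing the Steiner Forest gadget so that whether $|x\cap y|=0$ can be determined from any non-trivially approximate optimum. The natural approach is to make connecting $\alpha_i$ to $\beta_i$ cheap exactly when the $i$-th bits of both parties are $1$ and otherwise expensive, so that the optimum cost encodes the number of matching ones. Tuning the gadget edge weights so that the gap survives any multiplicative approximation factor, while simultaneously realizing the target values of $n$, $s$, and $D$ and not overloading the bottleneck in the reduction's simulation, requires the same careful balancing as in related distributed lower bounds for MST and approximate shortest paths; once in place, correctness of the reduction is routine.
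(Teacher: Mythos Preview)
Your decomposition into three additive terms and your treatment of the $\Omega(D)$ and $\tilde\Omega(k)$ terms are essentially what the paper does. For $k$, the paper's \lemmaref{lem:lower_k} uses exactly the Set Disjointness reduction you sketch: a double star with one bottleneck edge $(a_0,b_0)$, where $\lambda(a_i)=i$ iff $i\in A$ and $\lambda(b_i)=i$ iff $i\in B$; any feasible output contains $(a_0,b_0)$ iff $A\cap B\neq\emptyset$, and the single-edge cut forces $\Omega(k/\log n)$ rounds.

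For the $\tilde\Omega(\min\{s,\sqrt n\})$ term you are working much harder than necessary. The paper's \lemmaref{lem:lower_s} is a one-line reduction: the weighted shortest $s$--$t$ path problem is literally the special case of Steiner Forest with $t=2$ terminals and $k=1$ component, so the lower bound of Das~Sarma et al.\ \cite{DHKNPPW-11} for approximate $s$--$t$ shortest path applies verbatim. No new gadget is needed, and in particular the bound holds already for instances with a \emph{single} terminal pair, which cleanly separates the $\min\{s,\sqrt n\}$ term from the $k$ term. Your proposed direct construction with $b=\Theta(\min\{s,\sqrt n\})$ terminal pairs could in principle be made to work, but it conflates the two parameters (your hard instances have $k=\Theta(\min\{s,\sqrt n\})$ rather than $k=1$), and you yourself identify the gadget design as the ``main obstacle''---an obstacle that disappears once you notice the $s$--$t$ path special case.
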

The proof of \theoremref{thm-lb} in fact consists of proving the following two
separate lower bounds.


\begin{lemma}\label{lem:lower_k}
  Any distributed algorithm for \sfic with finite approximation ratio
  has time complexity $\Omega(k/\log n)$. This is true even for
  unweighted graphs of diameter 3.
\end{lemma}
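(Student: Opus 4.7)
The plan is to reduce two-party set disjointness to \sfic. I would construct an unweighted graph with two hubs $u_A,u_B$ joined by a single edge, $k$ leaves $a_1,\ldots,a_k$ each adjacent only to $u_A$, and $k$ leaves $b_1,\ldots,b_k$ each adjacent only to $u_B$; the diameter is then clearly $3$. Given subsets $S_A,S_B\subseteq[k]$ (say of size $k/4$ each, the standard hard regime), Alice assigns $\Comp(a_i):=i$ iff $i\in S_A$ and Bob assigns $\Comp(b_i):=i$ iff $i\in S_B$; all remaining labels (including those of $u_A,u_B$) are $\bot$. The resulting instance has $\Theta(k)$ input components.

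The key analysis step is the optimum: component $i$ contains two terminals exactly when $i\in S_A\cap S_B$, in which case it requires the unique length-$3$ path $a_i$-$u_A$-$u_B$-$b_i$; every other component has at most one terminal and needs no edges. Hence $\mathrm{OPT}=0$ iff $S_A\cap S_B=\emptyset$, and $\mathrm{OPT}\ge 3$ otherwise. Because $c\cdot 0=0$ for any finite $c$, a finite-approximation algorithm must output an empty edge set precisely when the inputs are disjoint; when they are not, feasibility forces the selection of $\{u_A,u_B\}$ (the only way to connect any $a_i$ to any $b_i$). Thus the single bit ``is $\{u_A,u_B\}$ in the output?''---known locally to $u_A$---decides disjointness.

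Finally, I would invoke the standard two-party simulation in which Alice simulates $\{u_A\}\cup\{a_i\}_{i\in[k]}$ and Bob simulates $\{u_B\}\cup\{b_i\}_{i\in[k]}$. Since the only edge crossing the cut is $\{u_A,u_B\}$, a $T$-round \Congest algorithm induces a protocol that exchanges $\BO(T\log n)$ bits, and by the above Alice uses it to decide set disjointness. The randomized two-party communication complexity of disjointness on $k$ bits is $\Omega(k)$ (Kalyanasundaram--Schnitger, Razborov), so $T=\Omega(k/\log n)$ in expectation, covering both the deterministic version and the randomized/expectation version needed by \theoremref{thm-lb}. I do not anticipate a substantive obstacle; the only step that deserves careful wording is the feasibility/approximation argument in the second paragraph, and perhaps the explicit verification that the cut has exactly one edge so that a single round transmits $\BO(\log n)$ bits.
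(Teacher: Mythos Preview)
Your proposal is correct and essentially identical to the paper's own proof: the paper also builds two stars with centers $a_0,b_0$ (your $u_A,u_B$) joined by a single edge, labels leaf $a_i$ (resp.\ $b_i$) by $i$ iff $i$ lies in Alice's (resp.\ Bob's) set, and observes that the optimum is $0$ iff the sets are disjoint, so any finite-ratio algorithm reveals disjointness via whether the bridge edge is selected. Your explicit mention of the randomized disjointness bound is a useful addition for the expectation statement in \theoremref{thm-lb}, but otherwise the argument matches the paper's line for line.
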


\begin{lemma}\label{lem:lower_s}
  Any distributed algorithm for \sfic or \sfcr with finite approximation ratio
  has running time $\Omega(s/\log n)$ for $s\in
  \BO(\sqrt{n})$. This holds  even for instances with $t=2$, $k=1$,
  and $D\in \BO(\log n)$.
\end{lemma}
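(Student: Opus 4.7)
The plan is to reduce from 2-party set disjointness on bit-strings of length $b = \Theta(s)$, known to require $\Omega(b)$ bits of communication even in the randomized public-coin model. Given any $R$-round distributed algorithm with finite approximation ratio for the instances we will construct, Alice and Bob will simulate it while exchanging only $\BO(R \log n)$ bits across a narrow cut, yielding $R \in \Omega(s/\log n)$.

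I would construct a family of graphs $\{G_{x_A, x_B}\}$ parameterized by the two parties' inputs, all with (i) a single input component consisting of two terminals $u_A, u_B$ (so $t = 2$ and $k = 1$); (ii) a vertex partition $V = V_A \sqcup V_B$ with $u_A \in V_A$, $u_B \in V_B$, crossed by only a single \emph{bridge} edge, such that the weights of edges inside $V_A$ depend only on $x_A$ and those inside $V_B$ only on $x_B$; (iii) a balanced binary tree of depth $\BO(\log n)$ with \emph{heavy} edges on each side, guaranteeing $D \in \BO(\log n)$; and (iv) a \emph{light chain} of $\Theta(s)$ low-weight edges on each side running from the terminal to the bridge endpoint, so that the weighted shortest $u_A$-$u_B$ path traverses both chains and realizes shortest-path diameter $\Theta(s)$.

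Each light chain is further subdivided into $\Theta(s)$ two-edge ``choice gadgets,'' where the cheaper of the two parallel light edges in the $i$-th gadget on Alice's side is determined by the bit $x_A[i]$, and similarly on Bob's side. Tuning the weight gap between the cheap and expensive alternatives ensures that any output of finite approximation ratio must commit to the cheap option in each gadget; reading off these choices from the output edge set then reveals every bit of $(x_A, x_B)$, and in particular determines disjointness. An $R$-round \Congest execution of such an algorithm is simulated by Alice maintaining the state of $V_A$ and Bob that of $V_B$, exchanging only the $\BO(\log n)$-bit messages that traverse the single bridge edge in each round. This produces a $\BO(R \log n)$-bit protocol for disjointness, so $R \in \Omega(s/\log n)$. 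The same construction handles \sfcr by realizing the lone input component as the connection request $R_{u_A} = \{u_B\}$.

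The main obstacle is the joint calibration of weights: all weights must remain polynomial in $n$ as required by the \Congest model; the heavy tree edges must be expensive enough that any bounded-approximation solution avoids the tree entirely (forcing property (iv)); and the gap between cheap and expensive alternatives in each gadget must be large enough to force correct gadget choices under the prescribed approximation guarantee. These constraints are all simultaneously satisfiable in the spirit of standard Peleg-style lower bound constructions for distributed shortest-path and MST problems.
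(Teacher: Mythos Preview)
Your reduction does not actually encode set disjointness. In your construction the optimal Steiner forest---and hence any finite-ratio approximation---decomposes into two independent halves: the cheap alternative in each of Alice's gadgets (determined solely by $x_A$), the cheap alternative in each of Bob's gadgets (determined solely by $x_B$), and the bridge edge. A node inside a gadget sees both parallel edges and can select the cheaper one with no communication whatsoever, so a $0$-round algorithm already solves your instances exactly. After the simulation you describe, Alice's view of the output on $V_A$ reveals only $x_A$ (which she already holds) and Bob's view on $V_B$ reveals only $x_B$; neither party learns any joint function of the two inputs, and no protocol for disjointness results.

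The missing ingredient is that the \emph{cost} of connecting $u_A$ to $u_B$ must depend jointly on $x_A$ and $x_B$, not on each separately. In the construction of Das~Sarma et al.\ that the paper invokes, there are many parallel $\Theta(s)$-hop paths between the two terminals together with a shallow tree; path $i$ is cheap only when $x_A[i]=x_B[i]=1$, so the optimum is cheap iff the inputs intersect. The communication bottleneck is then not a single bridge edge but the combination of (i) only $\BO(1)$ tree edges crossing each of a sequence of $\Theta(s)$ ``vertical'' cuts and (ii) information along the path edges needing $\Omega(s)$ rounds to cross all of them. The paper's own proof bypasses rebuilding any of this: it simply observes that shortest $s$--$t$ path is the special case $t=2$, $k=1$ of Steiner Forest and cites the lower bound of~\cite{DHKNPPW-11} directly.
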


We remark that the proofs of Lemmas~\ref{lem-lb1} and  \ref{lem:lower_k},
are by reductions from Set Disjointness
\cite{KushilevitzN-book}.
In Lemmas~\ref{lem-lb1} and~\ref{lem:lower_k}, it is trivial to increase the
other parameters, i.e., $D$, $s$, $t$, or $n$, so we may apply
Lemmas~\ref{lemma:transform_to_input} and~\ref{lemma:transform_to_minimal} to
obtain a minimal instance of \sfic without affecting the asymptotic time
complexity.

\section{Deterministic Algorithm}
\label{sec-alg1}
In this section we describe our deterministic algorithm. We start by
reviewing the moat growing algorithm of \cite{AgrawalKR-95}, and then
adapt it to the \Congest model.

\paragraph{Basic Moat Growing Algorithm}
(pseudocode  
 in \algref{algo:centralized}). The algorithm proceeds by 
``moat growing'' and ``moat merging.'' 
A \emph{moat} of radius $r$ around a terminal
$v$ is a set that contains all
nodes and edges within distance $r$ from $v$, where edges may be
included fractionally: for example, if the only edge
incident with $v$ has 
weight $3$, then the moat of radius $2$ around $v$ contains $v$ and the 
$2/3$ of the edge closest to $v$. 
\emph{Moat growing} is a process in which multiple moats
increase their radii at the same rate. 

The algorithm proceeds as follows. All terminals, in parallel, grow moats around
them until two moats intersect. When this happens, 
\begin{inparaenum}[(1)]
\item moat growth is temporarily suspended,
\item the edges of a shortest path connecting two terminals in the meeting moats
are output (discarding edges that close cycles), and
\item the meeting  moats are contracted into a single node. 
\end{inparaenum}
This is called a \emph{merge step} or simply \emph{merge}. Then moat growing
resumes, where the newly formed node is considered an active terminal if some
input component is contained partially (not wholly) in the contracted region,
and otherwise the new node is treated like a regular (non-terminal) node. If the
new node is an active terminal, it resumes the moat-growing with initial radius
$0$. The algorithm terminates when no active terminals remain.

Formal details and analysis are provided
in \appendixref{app-basic}. The bottom line is as follows.

\begin{theorem}\label{theorem:2approx}
  \algref{algo:centralized} outputs a $2$-approximate Steiner forest.
\end{theorem}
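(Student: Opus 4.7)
The plan is to adapt the primal-dual argument of Agrawal, Klein, and Ravi. First, I would set up the cut-based LP relaxation of Steiner Forest, with variables $x_e \geq 0$ and constraints $\sum_{e \in \delta(S)} x_e \geq 1$ for every \emph{separating} set $S \subseteq V$ --- one where some input component $C_\lambda$ has both $C_\lambda \cap S \neq \emptyset$ and $C_\lambda \setminus S \neq \emptyset$. The dual maximizes $\sum_S y_S$ subject to $y_S \geq 0$ and $\sum_{S\colon e \in \delta(S)} y_S \leq W(e)$ for every edge $e$, and weak LP duality gives $\sum_S y_S \leq \mathrm{OPT}$ for any feasible $y$.

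Next, I would interpret moat-growing as implicitly constructing a feasible dual: during any infinitesimal time step in which cluster $C$ is active and its moat is growing, increment $y_C$ at matching rate. Because the process halts at the instant any edge would be overcovered, the dual constraints hold throughout. A standard accounting shows every edge $e \in F$ is tight, $W(e) = \sum_{S\colon e \in \delta(S)} y_S$: the edges of $F$ lie on shortest paths selected when two moats meet, whose total weight is exactly matched by the meeting moats' radii. Summing over $F$ and exchanging the order of summation gives
\[
W(F) \;=\; \sum_{e \in F}\sum_{S\colon e \in \delta(S)} y_S \;=\; \sum_S y_S \cdot |F \cap \delta(S)|,
\]
so it is enough to prove $\sum_S y_S |F \cap \delta(S)| \leq 2\sum_S y_S$. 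Reinterpreted as an integral over the algorithm's time parameter, this reduces to the per-instant inequality
\[
\sum_{C \in \cA_\tau} |F \cap \delta(C)| \;\leq\; 2|\cA_\tau|,
\]
where $\cA_\tau$ denotes the set of active clusters at time $\tau$.

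The main obstacle is this degree bound. I would contract each current cluster in $F$ to a single vertex; since $F$ is acyclic and each cluster is $F$-connected by construction of the merge steps, the contracted multigraph $H_\tau$ is itself a forest, and the left-hand side above equals the total $H_\tau$-degree of the vertices representing active clusters. A standard forest identity bounds the sum of degrees of any vertex subset containing every leaf by twice the subset's size, so it remains only to show that, in $H_\tau$ with isolated vertices discarded, every leaf is an active cluster. An inactive-cluster leaf would have exactly one incident $F$-edge yet already contain every input component it intersects, so that edge would be superfluous for connectivity --- contradicting minimality of $F$. Guaranteeing this minimality is the subtle point that must be addressed when translating the informal description above into the formal \algref{algo:centralized}, either implicitly by the merge rule or explicitly by a reverse-delete cleanup pass. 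Granting this, integrating the per-instant inequality over $\tau$ yields $W(F) \leq 2\sum_S y_S \leq 2\,\mathrm{OPT}$, as required.
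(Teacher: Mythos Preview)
Your proposal is correct and follows the classical Goemans--Williamson LP-based presentation of the Agrawal--Klein--Ravi analysis. The paper's own proof is essentially the same argument stripped of the LP scaffolding: in place of weak duality it proves directly, by induction on the number of merges, that every feasible solution has weight at least $\sum_i \act_i\,\mu_i$ (\lemmaref{lemma:cost}); and in place of your edge-tightness-plus-degree-bound it bounds $W(F)$ by $\sum_i(\moat_i(v_i)+\moat_i(w_i))$ and then argues, via contraction of the balls $B_G(v,\sum_{j<i}\mu_j)$, that at most $2(\act_i-1)$ summands of each $\mu_i$ appear---which is exactly your per-instant inequality $\sum_{C\in\cA_\tau}|F\cap\delta(C)|\le 2|\cA_\tau|$ rephrased as ``at most $\act_i-1$ selected paths, each crossing two active balls.'' Your leaf argument (no inactive cluster is a leaf after pruning, by minimality of $F$) appears in the paper as the observation that $F'$ decomposes into at most $|T'|-1$ terminal-to-terminal paths with no intermediate terminals. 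The only substantive difference is that the paper never asserts individual edges of $F$ are tight and never invokes LP duality, instead proving the lower bound combinatorially; this buys self-containment at the cost of a somewhat longer inductive argument, while your route is shorter once the LP machinery is granted.
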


\paragraph{Rounded Moat Radii.}
To reduce the
number of times the moat growing is suspended due to moats meeting, 
we defer moat merging  to the next integer power of
$1+\varepsilon/2$, where $\varepsilon$ is a given
parameter. Pseudo-code is given in 
\algref{algo:central_approx} in the Appendix.
Obviously, the 
number of distinct radii in which merges may occur in this algorithm
is now
bounded by $\BO(\log_{1+\varepsilon/2}\WD)
\subseteq\BO(\log n/\varepsilon)$ by our assumption that all edge weights, and hence
the weighted diameter, are bounded by a polynomial in
$n$. Furthermore, approximation deteriorates only a little, as the
following result  states (proof in
\appref{app-epsilon}). 

\begin{theorem}\label{theorem:2+eps_approx}
\algref{algo:central_approx} outputs a $(2+\varepsilon)$-approximate Steiner
forest.
\end{theorem}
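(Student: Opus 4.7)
The plan is to mirror the primal--dual analysis underlying \theoremref{theorem:2approx} while accounting for the slack introduced by deferring merges to integer powers of $1+\varepsilon/2$. Recall that for \algref{algo:centralized} one assigns to each moat $S$ ever grown a dual variable $y_S$ equal to the total time $S$ spent active; the $2$-approximation then follows from two facts: (i) the primal charge $W(F)\le 2\sum_S y_S$, obtained by averaging the edges added at each merge over the circumferences of the merging moats, and (ii) dual feasibility $\sum_{S:\,e\in\delta(S)}y_S\le W(e)$ for every edge $e$, which with LP duality gives $\sum_S y_S\le\mathrm{OPT}$.

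I would first verify that the primal charge (i) carries over to \algref{algo:central_approx} verbatim: every merge still selects the edges of a shortest path between two meeting moats, and the factor-$2$ averaging depends only on which moats participate in each merge and on their circumferences, not on the time at which the merge is performed. The entire burden thus shifts to quantifying the loss in (ii). I would establish the relaxed inequality
\[
\sum_{S:\,e\in\delta(S)} y_S \;\le\; (1+\varepsilon/2)\,W(e) \qquad \text{for every } e\in E,
\]
from which it follows that $y/(1+\varepsilon/2)$ is dual feasible and hence $\sum_S y_S\le(1+\varepsilon/2)\,\mathrm{OPT}$. Combining with (i) gives $W(F)\le 2(1+\varepsilon/2)\,\mathrm{OPT}=(2+\varepsilon)\,\mathrm{OPT}$, as claimed.

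The hard part is the proof of the relaxed feasibility inequality. The execution of \algref{algo:central_approx} need not follow the same sequence of merges as \algref{algo:centralized}, because bundling nearly-simultaneous events at a common rounded time may change which pair of moats merges first. Consequently one cannot simply argue ``the dual values produced by Algorithm~2 are those of Algorithm~1 inflated by $1+\varepsilon/2$.'' Instead, I would fix an edge $e$, let $\tau^\star$ be the first time at which $\sum_{S:\,e\in\delta(S)}y_S$ reaches $W(e)$ under the schedule actually executed by \algref{algo:central_approx}, and let $\tau\le(1+\varepsilon/2)\tau^\star$ be the subsequent rounded time at which the corresponding merge is performed. The inequality then reduces to bounding the extra growth accumulated in $[\tau^\star,\tau]$, using that the rate of $\sum_{S:\,e\in\delta(S)}y_S$ is at most two and that the deferral window has multiplicative length at most $1+\varepsilon/2$. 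A case analysis on which adjacent moats remain active during this window yields the stated bound.
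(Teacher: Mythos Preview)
Your plan misidentifies where \algref{algo:central_approx} introduces slack, and this stems from a misreading of the algorithm. You write that for an edge $e$ the merge is ``performed at the subsequent rounded time $\tau\le(1+\varepsilon/2)\tau^\star$''; but in \algref{algo:central_approx} merges are \emph{not} deferred. The else-branch merges two moats exactly when they touch, just as in \algref{algo:centralized}. What is deferred to powers of $1+\varepsilon/2$ is only the \emph{deactivation check}: a newly merged moat is unconditionally declared active, and only at the next threshold does the algorithm test whether any moat has become satisfied. Consequently the per-edge constraints $\sum_{S:e\in\delta(S)}y_S\le W(e)$ hold with equality at the moment of each merge, exactly as in the basic algorithm; your proposed ``relaxed feasibility'' inequality is true but with constant $1$ rather than $1+\varepsilon/2$, so it cannot be the source of the extra factor.

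The real loss is elsewhere. In \algref{algo:centralized} every active moat $S$ is separating, so $\sum_S y_S$ is a feasible dual value and hence at most $\mathrm{OPT}$. In \algref{algo:central_approx} some active moats are already satisfied---they were created by a merge that completed an input component but have not yet been deactivated because the next threshold has not arrived. Their growth contributes to the primal charge $W(F)\le 2\sum_i\act_i\mu_i$ (your step (i) is fine and is what the paper means by ``analogous to \theoremref{theorem:2approx}''), but it does \emph{not} contribute to any valid dual objective, since such $S$ have coefficient $0$ in the LP. The paper therefore compares $\sum_i\act_i\mu_i$ not to $\mathrm{OPT}$ directly but to $\sum_i u_i\mu_i$, where $u_i$ counts only the unsatisfied active moats; the latter is at most $W(F^*)$ by the analogue of \lemmaref{lemma:cost}. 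The missing lemma (\corollaryref{coro:cost_approx}) is a charging argument: each ``bad'' moat (satisfied but still active) survives for at most an $\varepsilon/2$-fraction of the total time elapsed when it was formed, and can be charged against a chain of unsatisfied predecessor moats disjoint from the chains of all other bad moats, yielding $\sum_i\act_i\mu_i\le(1+\varepsilon/2)\sum_i u_i\mu_i$. This is what you need in place of the relaxed per-edge inequality.
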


\subsection{Distributed Moat-Growing Algorithm}
\label{ssec-dist}
Our goal in this section is to derive a distributed implementation of the
centralized \algref{algo:centralized}. To do this, it is sufficient to follow
the order in which moats merge in the sequential algorithm. The first main
challenge we tackle is to achieve pipelining for the merges that do not change
the activity status of terminals; since all active moats grow at the same rate,
we can compute the merge order simply by finding the distances between moats and
ordering them in increasing order. When the active status of some terminal
changes, we recompute the distances.

We start by defining \emph{merge phases}. Intuitively, a merge phase is a
maximal subsequence of merges in which no active terminal turns inactive and no
inactive terminal is merged with an active one.

\begin{definition}
\label{def:merge}
Consider a run of \algref{algo:centralized}, and let 
  $i_1,\ldots,i_{j_{\max}}$ be the values of $i$ in which $\act_{i_j+1}(v)\neq
  \act_{i_j}(v)$ for some $v\in T$, where $i_0=0$. Steps
  $i_{j-1}+1,\ldots,i_j$ are called \emph{merge phase $j$}, and we denote
  $\act^{(j)}(v):=\act_{i_{j-1}+1}(v)$, i.e., node $v$'s activity status
  throughout merge phase $j$. We use $j(i):=\min \{j\in
  \{1,\ldots,j_{\max}\}\mid i_j\geq i\}$ to denote the \emph{phase of merge
  $i$}.
\end{definition}
\begin{lemma}\label{lem-numphases}
  The number of merge phases is at most $2k$. 
\end{lemma}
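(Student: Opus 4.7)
The plan is to classify each merge by the activity statuses of its two constituent moats (children) and of the resulting merged moat (parent), then bound each class via component-closure and moat-conservation arguments. Since inactive moats never grow (they are treated as non-terminals after formation), two inactive moats cannot meet, so every merge has at least one active child. Combined with the parent's status, at most four merge types arise: A+A$\to$A, A+A$\to$I, A+I$\to$A, and A+I$\to$I. A case check confirms that only A+A$\to$A leaves every terminal's activity unchanged; the three others trigger a phase boundary because terminals lying in a child whose activity differs from the parent's have their activity flipped. Let $\alpha_B,\alpha_C,\alpha_D$ count the merges of types A+A$\to$I, A+I$\to$A, A+I$\to$I respectively; the number of merge phases equals $\alpha_B+\alpha_C+\alpha_D$.

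The first key step is to show $\alpha_B+\alpha_D\le k$. A merge producing an inactive parent $M$ must leave no input component partial in $M$, so every previously-open input component with terminals in either child must be fully merged at this step (``closed''). Since each active child witnesses at least one open component partial in it, every A+A$\to$I or A+I$\to$I merge closes at least one input component. WLOG the instance is minimal (by \lemmaref{lemma:transform_to_minimal}), so each of the $k$ components is closed at most once during the entire run, yielding the desired bound.

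The second key step bounds $\alpha_C$ in terms of $\alpha_B$ via conservation of inactive moats. Inactive moats are produced only as parents of A+A$\to$I or A+I$\to$I merges, totalling $\alpha_B+\alpha_D$ creations, and consumed only as inactive children of A+I$\to$A or A+I$\to$I merges, totalling $\alpha_C+\alpha_D$ consumptions. Under minimality, no inactive moat exists initially; letting $F\ge 0$ denote the number remaining at termination, I obtain the identity $\alpha_B+\alpha_D=\alpha_C+\alpha_D+F$, hence $\alpha_C=\alpha_B-F\le\alpha_B$. Combining the two bounds,
\[
\alpha_B+\alpha_C+\alpha_D\le 2\alpha_B+\alpha_D\le 2(\alpha_B+\alpha_D)\le 2k,
\]
which is exactly the claimed bound.

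The main obstacle lies in the first step: verifying rigorously that every A+A$\to$I or A+I$\to$I merge truly closes at least one input component. The argument hinges on the observation that the ``witness'' open component of an active child, being partial in that child, must be fully absorbed by the parent (else it would remain partial, contradicting the parent's inactivity), which in turn uses pairwise disjointness of input components. Once the four merge types are enumerated and this witness property is established, the conservation identity in the second step reduces to straightforward bookkeeping.
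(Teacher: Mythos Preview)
Your proof is correct and follows essentially the same approach as the paper's: both bound the number of inactivation events by $k$ via the component-closure argument, and then use conservation of inactive moats to control the remaining phase-boundary merges. Your version is simply more explicit, introducing the four-way merge-type classification and the counters $\alpha_B,\alpha_C,\alpha_D$, whereas the paper compresses the same reasoning into two sentences (``moats become inactive at most $k$ times'' and ``when an inactive moat merges, either its terminals reactivate or a new inactive moat is formed'').

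One small remark: invoking \lemmaref{lemma:transform_to_minimal} as a ``WLOG'' is slightly off, since that lemma describes a distributed transformation rather than a structural reduction, and the merge-phase count is a property of a fixed run of \algref{algo:centralized}. What you actually need is just that every initial moat is active, which holds directly from the initialization $\act_1(\{v\}):=\true$ in \algref{algo:centralized}; and that an active moat genuinely has a partially-contained input component, which the paper tacitly assumes via minimality elsewhere (cf.\ the proof of \lemmaref{lemma:cost}). So the assumption is justified in context, but you could state it more directly.
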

Next, we define \emph{reduced weights}, formalizing moat contraction.
We use the following notation.

\smallskip\noindent\textbf{Notation.} For a terminal $v$ and merge step $i$, 
$B_i(v)= B_G(v,\moat_{i}(v))$.

\begin{definition}
  Given merge phase $j$ of \algref{algo:centralized}, define the
  \emph{reduced weight} of an edge $e$ by $\rW_j(e)=W(e)-W(e\cap
  \bigcup_{v\in T}B_{i_{j-1}}(v))$, where fractionally contained edges lose
  weight accordingly.
\end{definition}
Note that $\hat W_j$ is determined by the state of the moats just
before phase $j$ starts. We now define the Voronoi decomposition for phase $j$.
\begin{definition}
  Let $G=(V,E,W)$ be a graph with non-negative edge weights, and let
  $C=\Set{c_1,\ldots,c_k}$ be a set of nodes called \emph{centers},
  with positive distances
  between any two centers. The \emph{Voronoi decomposition} of $G$ w.r.t.\ $C$
  is a partition of the nodes and edges into $k$ subsets called \emph{Voronoi
  regions}, where region $i$ contains all nodes and all edge parts
  whose closest center is $c_i$ (ties broken lexicographically).
\end{definition}

In each phase $j$, we consider the Voronoi decomposition using reduced
weights $\hat W_j$ and active terminals
as centers. 
Let $\vor_j(v)$ denote
the Voronoi region of a node $v$ under this decomposition.
Since we need to consider inactive moats too,  the concept we actually
use is the following. 
\begin{definition}
  The \emph{region} of a
  terminal $v$ in phase $j$, denoted $\reg_j(v)$, is defined as
  follows. 
  $\reg_0(v):=B_0(v)=\{v\}$, and for $j>0$,
$$
\reg_j(v):=\reg_{j-1}(v)\cup
\begin{cases}
  \emptyset~,&\text{if } \neg\act^{(j)}(v)\\
  B_{i_j}(v)\cap\left(\vor_j(v)\setminus \bigcup_{u\in
      T}B_{i_{j-1}}(u)\right)~,&\text{if }\act^{(j)}(v)
\end{cases}
$$
The $j^{th}$ terminal decomposition is given by a collection of
shortest-path-trees spanning, for each $v\in T$, $\reg_j(v)$. We require that
the tree of $\reg_{j}(v)$ extends the tree of $\reg_{j-1}(v)$.
\end{definition}

In other words, $\reg_j(v)$ is obtained from $\reg_{j-1}(v)$ by growing all
active moats at the same rate, but only into uncovered parts of the graph; this growth
stops at the end of a merge phase. Given the $(j-1)^{st}$ terminal
decomposition, it 
is straightforward to compute $\vor_j$ and the required spanning trees
using the Bellman-Ford algorithm, as the following lemma states.

\begin{lemma}\label{lemma:partition}
  Suppose that each node $u\in V$
 knows the
  following about the $(j-1)^{th}$ terminal decomposition:
  \begin{compactitem}
  \item the node $v\in T$ for which $u\in \reg_{j-1}(v)$;
  \item $\act^{(j)}(v)$;
  \item the parent in the shortest-path-tree spanning $\reg_{j-1}(v)$  (unless
  $u=v$ is the root);
  \item $\Wd(v,u)-\moat_{i_{j-1}}(v)$.
  \end{compactitem}
  Then, in $\BO(s)$ rounds we can compute 
  shortest-path-trees rooted
  at nodes $v\in T$, that extend the given trees and span
  $\reg_{j-1}(v)\cup\left(\vor_j(v)\setminus \bigcup_{w\in
  T}B_{i_{j-1}}(w)\right)$ for active $v$ (trees of inactive terminals
remain unchanged). By the end of the computation, each node
  knows: 
  \begin{compactitem}
  \item the node $v\in T$ in whose tree $u$ participates;
  \item the parent in the shortest-path-tree rooted at $v$ (unless $u=v$
    is the root);
  \item for each edge incident to $u$, the fraction of it contained in
  the tree rooted at $v$;
  \item $\Wd(v,u)-\moat_{i_{j-1}}(v)$.
  \end{compactitem}
\end{lemma}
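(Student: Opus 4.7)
The plan is to implement phase~$j$'s decomposition via a multi-source weighted Bellman-Ford whose sources are the active terminals and whose edge weights are the reduced weights $\hat W_j$, running it for $\BO(s)$ rounds. The justification for $\BO(s)$ rounds is that for each node $u$ with ultimate winner $v^*$, there is a shortest $\hat W_j$-path from $v^*$ to $u$ that is also a $W$-shortest path between these two nodes, and the latter uses at most $s$ hops by definition of $s$.

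I would begin with a single round of neighbor exchange in which every node broadcasts its region root $v_u$, the activity bit $\act^{(j)}(v_u)$, and its stored offset $d_u = \Wd(v_u, u) - \moat_{i_{j-1}}(v_u)$. This lets both endpoints of each edge $e = \{u, u'\}$ locally compute $\hat W_j(e)$: the portion of $e$ from the $u$-side that sits inside $B_{i_{j-1}}(v_u)$ has length $\max\{0, -d_u\}$, symmetrically from $u'$, and these two portions do not overlap because the balls $B_{i_{j-1}}(\cdot)$ are pairwise disjoint at the start of phase $j$ (otherwise a merge would already have triggered). No third ball meets $e$ by the inductive invariant that each $B_{i_{j-1}}(w)$ is contained in $\reg_{j-1}(w)$ and the regions of the two endpoints already cover $e$.

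I would then run the Bellman-Ford itself. Each active terminal $v$ initializes with $\hat d_v = 0$; inside every active region the values travel along the given spanning tree at zero reduced cost, so after one initial broadcast sweep all boundary nodes of an active moat hold the starting value $0$ (while keeping their existing parent pointers). Inactive-region nodes are treated as sealed obstacles: they neither relay nor receive updates, so active growth cannot cross into already-finalized territory. Every other node starts at $\infty$. For $\BO(s)$ iterations, each participating node sends its current pair $(\hat d_u, \text{winner label})$ to its neighbors and updates via $\hat d_u \leftarrow \min_{u' \sim u}\{\hat d_{u'} + \hat W_j(\{u, u'\})\}$, recording the neighbor achieving the minimum as its new parent pointer. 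Upon convergence, the parent pointers form, for every active $v$, a shortest-path tree spanning exactly $\reg_{j-1}(v) \cup (\vor_j(v) \setminus \bigcup_{w \in T} B_{i_{j-1}}(w))$ and extending the input tree; the final $\hat d_u$ equals $\Wd(v^*, u) - \moat_{i_{j-1}}(v^*)$ for $u$'s winner $v^*$. One final round of offset exchange then lets each node compute the fraction of each incident edge that falls in its own tree: between distinct winners, the Voronoi boundary along $e$ sits at distance $(W(e) + d_{u'}^{\text{new}} - d_u^{\text{new}})/2$ from $u$ (up to the local moat corrections already carried by $\hat W_j(e)$).

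The main obstacle I anticipate is verifying that $\BO(s)$ rounds suffice under the reduced metric~$\hat W_j$, despite the fact that zero-weight moat interiors could \emph{a priori} allow a shortest $\hat W_j$-path to contain long detours through unrelated moats without incurring weight. The saving grace is that inside the final Voronoi region $\vor_j(v^*)$ of the winner, the $W$-shortest path from $v^*$ to $u$ does not enter any other terminal's old ball, so it can serve as a hop-optimal $\hat W_j$-shortest path. A secondary but conceptually parallel obstacle is the local bookkeeping for $\hat W_j(e)$: once the inductive invariants on containment of old balls inside old regions and on coverage of every edge by the regions of its two endpoints are in hand, both the reduced-weight arithmetic and the parent and fraction bookkeeping reduce to routine local computation.
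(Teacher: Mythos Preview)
Your proposal is correct and takes essentially the same approach as the paper: multi-source Bellman--Ford with the active moats as sources, reduced weights $\hat W_j$, inactive regions frozen, running for $\BO(s)$ rounds. The paper's proof is a terse paragraph that simply asserts the $\BO(s)$ bound and that $\hat W_j$ is locally computable; your write-up is considerably more detailed, in particular in identifying and addressing the obstacle that zero-weight moat interiors could in principle inflate hop counts of $\hat W_j$-shortest paths---a point the paper does not discuss at all.
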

Note that \lemmaref{lemma:partition} says that we can ``almost'' compute
the $j^{th}$ terminal
decomposition (the $B_{i_j}(v)$ remain unknown).
What justifies the trouble of computing decompositions is the
following key observation.

\begin{lemma}\label{lemma:path_in_region}
  For $i=1,\ldots,i_{\max}$, let $v_i$ and $w_i$ be the 
  terminals whose moats are joined in the $i^{th}$ merge of
  \algref{algo:centralized}. Let $p$ be a shortest path connecting
  them. Then $p\subseteq\reg_{j(i)}(v_i)\cup
  \reg_{j(i)}(w_i)$.
\end{lemma}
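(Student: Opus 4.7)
The plan is to fix an arbitrary $x \in p$, set $j := j(i)$, and show $x \in \reg_j(v_i) \cup \reg_j(w_i)$.

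First I reduce to a single-sided claim via the triangle inequality. Since $p$ is a shortest $v_i$-$w_i$ path, any $x \in p$ satisfies $\Wd(v_i,x)+\Wd(w_i,x)=W(p)$, and the merge condition at step $i$ yields $W(p)=\moat_i(v_i)+\moat_i(w_i)$. Hence at least one of $\Wd(v_i,x)\le \moat_i(v_i)$ or $\Wd(w_i,x)\le \moat_i(w_i)$ holds; WLOG the former, i.e., $x \in B_i(v_i)$. Write $\tau_i := \moat_i(v_i)-\moat_{i_{j-1}}(v_i) = \moat_i(w_i)-\moat_{i_{j-1}}(w_i)$ for the uniform intra-phase growth of phase $j$ up to merge $i$. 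It remains to show $x \in \reg_j(v_i)$.

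The core step is a Voronoi-clash argument in the reduced metric $\rW_j$; let $\hat d_j$ denote the induced shortest-path distance. A preliminary observation is that $p$ avoids every ball $B_{i_{j-1}}(u)$ with $u \notin \{v_i, w_i\}$: the reduced length $\rW_j(p)$ is at most $W(p)-\moat_{i_{j-1}}(v_i)-\moat_{i_{j-1}}(w_i)=2\tau_i$, with equality iff $p$ enters no other ball, while $\rW_j(p)\ge \hat d_j(v_i,w_i)=2\tau_i$ because $v_i$'s and $w_i$'s moats meet at intra-phase time $\tau_i$; equality must therefore hold. Writing $\tau_v := \hat d_j(v,x)$ for each terminal $v$ active in phase $j$, the reduced-weight subpath identity on $p$ gives $\tau_{v_i}+\tau_{w_i}=2\tau_i$, and $x \in B_i(v_i)$ yields $\tau_{v_i}\le\tau_i$. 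Now suppose for contradiction that some active $u\ne v_i$ in phase $j$ has $\tau_u<\tau_{v_i}$. The reduced triangle inequality then gives
\[
\hat d_j(u,w_i) \le \tau_u + \tau_{w_i} < \tau_{v_i} + \tau_{w_i} = 2\tau_i,
\]
forcing $u$'s and $w_i$'s moats to meet strictly before intra-phase time $\tau_i$ of phase $j$, contradicting that merge $i$ is the first merge in phase $j$ involving $w_i$. Hence $x \in \vor_j(v_i)$.

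To conclude, there are two cases. If $x \notin \bigcup_u B_{i_{j-1}}(u)$, then since $v_i$ is active in phase $j$, $x \in B_i(v_i)\cap(\vor_j(v_i)\setminus\bigcup_u B_{i_{j-1}}(u)) \subseteq \reg_j(v_i)$. Otherwise, the no-foreign-ball conclusion combined with the fact that the initial and final sub-paths of $p$ inside $B_{i_{j-1}}(v_i)$ and $B_{i_{j-1}}(w_i)$ are disjoint forces $x \in B_{i_{j-1}}(v_i)$ (or symmetrically $w_i$). I would then finish by downward induction on $j$ applied to the smallest phase $j_0 \le j-1$ for which $x \in B_{i_{j_0}}(\tilde v_i)$---where $\tilde v_i$ is the ancestor super-node of $v_i$ active in phase $j_0$---rerunning the same Voronoi-clash argument in phase $j_0$ to place $x$ into $\reg_{j_0}(\tilde v_i)\subseteq\reg_j(v_i)$. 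The main obstacle is precisely this descent: correctly identifying $\tilde v_i$ along the merge tree, verifying its activity in phase $j_0$, and pinpointing the merge event in phase $j_0$ that drives the contradiction. The cornerstone is the uniform-growth identity $\tau_{v_i}+\tau_{w_i}=2\tau_i$, which is preserved across phases by uniform moat growth and makes the contradiction propagate verbatim.
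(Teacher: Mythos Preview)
Your approach has several genuine gaps that prevent it from going through.

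\textbf{Active--inactive merges.} You set $\tau_i:=\moat_i(v_i)-\moat_{i_{j-1}}(v_i)=\moat_i(w_i)-\moat_{i_{j-1}}(w_i)$, but this second equality presumes $\act^{(j)}(w_i)=\true$. The last merge of a phase may join an active moat with an inactive one (indeed, that is one of the two ways a phase ends), in which case $\moat_i(w_i)=\moat_{i_{j-1}}(w_i)$ and your identity $\tau_{v_i}+\tau_{w_i}=2\tau_i$ collapses. The paper treats this explicitly via a separate subcase.

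\textbf{The Voronoi-clash contradiction does not land.} From $\hat d_j(u,w_i)<2\tau_i$ you conclude that ``$u$'s and $w_i$'s moats meet strictly before intra-phase time~$\tau_i$, contradicting that merge $i$ is the first merge in phase $j$ involving $w_i$.'' Two problems: (a)~there is no assumption that merge $i$ is the first merge of phase $j$ involving $w_i$; $M_i(w_i)$ may have absorbed many terminals earlier in the phase. (b)~More importantly, the case $u\in M_i(v_i)$ (or $u\in M_i(w_i)$) is not excluded. If $u$ already sits in $v_i$'s moat, then a small $\hat d_j(u,w_i)$ does not force any additional merge event before step $i$, and yet you still must rule out $x\in\reg_j(u)$ because the lemma concerns the \emph{specific} terminals $v_i,w_i$. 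Your argument gives nothing here.

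\textbf{The no-foreign-ball step is unjustified.} You assert $\hat d_j(v_i,w_i)=2\tau_i$, but a reduced shortest path is free to shortcut through \emph{any} ball $B_{i_{j-1}}(u)$, so a priori $\hat d_j(v_i,w_i)$ could be strictly smaller. Without this equality, the squeeze $\rW_j(p)\le 2\tau_i\le\hat d_j(v_i,w_i)\le\rW_j(p)$ fails and the conclusion ``$p$ enters no other ball'' is lost.

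\textbf{How the paper proceeds.} The paper does \emph{not} try to show $p$ avoids the old balls $B_{i_{j-1}}(u)$. Instead it first invokes \lemmaref{lemma:decomp} to get $p\subseteq\bigcup_{u\in T}\reg_j(u)$, and then, assuming $p$ enters some $\reg_j(u)$ with $u\neq v_i,w_i$, performs a case analysis on whether $u\in M_i(v_i)$ and on the activity of $M_i(u)$ and $M_i(w_i)$. In each case it builds an explicit detour path through $u$ whose existence contradicts the \emph{minimality of $\mu_i$} at step $i$ (i.e., the choice of the pair $(v_i,w_i)$), rather than contradicting some earlier merge event. That minimality argument is exactly what handles the $u\in M_i(v_i)$ case and the active--inactive asymmetry, both of which your clash argument misses.
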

\lemmaref{lemma:path_in_region} implies that each merging path is ``witnessed''
by the nodes of the respective edge crossing the boundary between the regions.
By the construction from \lemmaref{lemma:partition}, these nodes will be
able to correctly determine the reduced weight of the path. This
motivates the following 
definition.

\begin{definition}\label{def-induced}
  For each $v\in T$, fix a shortest-paths tree on $\reg_{j_{\max}}(v)$.
  Suppose that $e=\{x,y\}$ is an edge so that $x\in\reg_{j_{\max}}(v)$ and
  $y\in \reg_{j_{\max}}(w)$ for some terminals $v\neq w$. Then $e$
  \emph{induces} the unique path $p_{vew}$ that is the concatenation of the
  shortest path from $v$ to $x$ given by the terminal decomposition with $(x,y)$
  and the path from $y$ to $w$ given by the terminal decomposition.
\end{definition}

Since the witnessing nodes cannot determine locally whether ``their''
path is the next merging path, 
they need to
encapsulate and communicate the salient information about the witnessed path.

\begin{definition}\label{def-cands}
  Suppose that $e=\{x,y\}$ is an edge satisfying $x\in \reg_j(v)$ and
  $y\in\reg_j(w)$ with $v\neq w$, $e\subseteq
  \reg_j(v)\cup \reg_j(w)$, $e\not \subseteq  \reg_{j-1}(v)\cup \reg_{j-1}(w)$, 
   and $\act^{(j)}(x)=\true$. Then $e$ is said to induce a
  \emph{candidate merge} $\left(\{v,w\},j,\hat{W}(p_{vew}\cap
  \reg_j(v)),e\right)$ in phase $j$ with \emph{associated path} $p_{vew}$.
\end{definition}
$\hat{W}(p_{vew}\cap \reg_j(v))$ specifies the
increment of the moat radius of the (active) terminal $v$ before the
respective balls intersect.
To order candidate merges we need the following additional concept.

\begin{definition}\label{def-cand-graph}
The \emph{candidate multigraph} is defined as
$G_c:=(T,E_c)$, where for each candidate merge
$\left(\{v,w\},j,\hat{W}(p_{vew}\cap \reg_j(v)),e\right)$ there is an edge
$\{v,w\}\in E_c$.
\end{definition}

We can now  relate the paths selected by
\algref{algo:centralized} to the candidate merges.

\begin{lemma}\label{lemma:equivalent}
Consider the sequence of candidate merges ordered in
ascending lexicographical order: 
first by phase index, then by reduced weight, and finally break ties by
identifiers.  Discard
each merge that closes a 
cycle (including parallel edges) in $G_c$. Let $F_c\subseteq E_c$ be
the resulting forest in $G_c$. Then union of the paths corresponding
to  $F_c$  is exactly the set $F_{i_{\max}}$
computed by \algref{algo:centralized} (with the same tie-breaking rules).
\end{lemma}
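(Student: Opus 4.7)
The plan is to prove the claim by induction on the rank $i$ of a candidate merge in the specified lexicographic order, maintaining the invariant that after processing the first $i$ candidates (keeping non-cycle-closing ones in $F_c$): (a) the union of associated paths of the kept candidates equals $F_{i'}$ for the appropriate $i'\le i$, and (b) two terminals are in the same $F_c$-tree component iff they lie in the same contracted node of \algref{algo:centralized}. The base case is trivial. For the inductive step, I fix a phase $j$ and analyze merges inside it; because the phase index is the primary lexicographic key, merges across phases are automatically handled in the correct order, and phases themselves are processed in order in the centralized algorithm.

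Within phase $j$, \defref{def:merge} ensures the activity status is constant, so all active moats grow at a common uniform rate while inactive moats are frozen. Consequently, the next moat intersection involving terminals $v,w$ happens precisely when the cumulative extra growth reaches the minimum reduced-weight quantity needed to cover a $v$-$w$ path within $\reg_j(v)\cup\reg_j(w)$ and previously uncovered territory. This quantity is exactly $\hat W(p_{vew}\cap\reg_j(v))$ for the witnessing edge $e$ of \defref{def-cands}, so the centralized algorithm's within-phase intersection order coincides with the sorted order of candidate merges by reduced weight (with identifier tie-breaking matching on both sides).

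For the path-equivalence, I invoke \lemmaref{lemma:path_in_region}: the shortest path $p$ connecting the $i$-th merging pair $v_i,w_i$ lies in $\reg_{j(i)}(v_i)\cup\reg_{j(i)}(w_i)$, so it crosses the region boundary along a unique edge $e=\{x,y\}$ with $x\in\reg_j(v_i)$, $y\in\reg_j(w_i)$ (uniqueness following from the shortest-path-tree structure imposed on each region). This $e$ satisfies the hypotheses of \defref{def-cands} and induces a candidate merge whose associated path $p_{v_iew_i}$ equals $p$ by uniqueness of the tree-derived paths and the assumed uniqueness of shortest paths. Cycle-closing candidates are handled symmetrically by invariant (b): if $v,w$ are already in the same $F_c$-component when the candidate is processed, the corresponding moats have already been contracted into the same node in the centralized algorithm, so no output edge is added; conversely, a candidate between different $F_c$-components corresponds to a genuine first-time connection whose path is output.

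The main obstacle I anticipate is handling the phase boundaries cleanly, specifically verifying that the set of candidate merges generated from the $j$-th terminal decomposition captures \emph{all and only} the moat intersections that occur in phase $j$ of \algref{algo:centralized}. The delicate point is that the regions $\reg_j(v)$ are cut off precisely at the end-of-phase moat radii $B_{i_j}(v)$, which are themselves defined by when an activity change is triggered. I would argue that a candidate merge exists for every within-phase intersection (because its witnessing edge straddles two distinct regions by the time phase $j$ ends), and conversely that no candidate from phase $j$ is ``spurious'' — its reduced weight corresponds to an intersection that would indeed occur before any activity change. Combining this with steps established in the previous paragraphs, summing over all phases $j=1,\dots,j_{\max}$ yields $\bigcup_{e\in F_c} p_{vew} = F_{i_{\max}}$, completing the proof.
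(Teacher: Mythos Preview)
Your proposal is essentially the same approach as the paper's: an induction that pairs each merge of \algref{algo:centralized} with the next non-cycle-closing candidate merge in the lexicographic order, using \lemmaref{lemma:path_in_region} to produce the witnessing boundary edge and hence the candidate, and arguing that a smaller surviving candidate would contradict the minimality of $\mu_i$. The only cosmetic difference is the direction of the induction: the paper inducts on the merge index $i$ of \algref{algo:centralized} and shows that the $i^{th}$ merge coincides with the $i^{th}$ element of $F_c$, whereas you induct on the rank in the sorted candidate list and maintain the dual invariant; the substance is identical.

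One caution: your parenthetical ``uniqueness following from the shortest-path-tree structure'' for the crossing edge is not quite the right justification. What you actually need (and what the paper uses) is that the least-weight path $p$ between $v_i$ and $w_i$ lies in $\reg_{j(i)}(v_i)\cup\reg_{j(i)}(w_i)$, so some edge $e$ on $p$ straddles the two regions; then $p_{v_iew_i}=p$ because both the tree paths and $p$ are least-weight and the paper's convention makes least-weight paths unique. You do not need the crossing edge to be unique. Also, the ``main obstacle'' you flag---showing the phase-$j$ candidates are exactly the within-phase intersections, with no spurious ones---is precisely where the paper's own argument is thinnest (it relies on the region/ball correspondence of \lemmaref{lemma:decomp} and a case analysis on activity of the two endpoints that is only sketched). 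So your instinct that this step needs the most care is correct.
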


\lemmaref{lemma:equivalent} implies that, similarly to Kruskal's
algorithm, it suffices to scan the candidate
merges in ascending order and filter out cycle-closing edges.
Using the technique introduced for MST \cite{GarayKP-98,KuttenP-98},
the filtering 
procedure can be done concurrently with collecting the merges, achieving full
pipelining effect. 
For later development,
we show a general
statement that allows for multiple merge phases to be handled concurrently and
out-of-order execution of a subset of the merges.
\begin{lemma}\label{lemma:filtering}
  Denote by $E_c^{(j)}$ the subset of candidate merges in phase $j$
  and set $F_c^{(j)}:=E_c^{(j)}\cap F_c$. For a set $F_c'\subseteq
  \bigcup_{j'=1}^{j}F_c^{(j')}$, assume that each node $u\in V$ is
  given a set $E_c(u)$ of candidate merges such that 
  $\bigcup_{j'=1}^j F^{(j')}\setminus F_c'
  \subseteq
  \bigcup_{u\in V}E_c(u)
  \subseteq 
  \bigcup_{j'=1}^j E_c^{(j')}$. Finally, assume
  that for each $u\in V$, each candidate merge in $E_c(u)$ is tagged
  by the connectivity components of its terminals in the subgraph
  $(T,F_c')$ of $G_c$. Then $\bigcup_{j'=1}^j F_c^{(j')}\setminus
  F_c'$ can be made known to all nodes in $\BO(D+|\bigcup_{j'=1}^j
  F_c^{(j')}\setminus F_c'|)$ rounds.
\end{lemma}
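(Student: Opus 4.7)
The plan is to pipeline a sorted upcast-with-filtering procedure along a BFS tree of the communication network, in the spirit of the distributed MST filtering technique of \cite{GarayKP-98,KuttenP-98}. First, I would construct (or reuse) a BFS tree $\tau$ rooted at an arbitrary node $r$; this costs $\BO(D)$ rounds and is absorbed into the time bound. Each node $u$ sorts its local list $E_c(u)$ in ascending lexicographic order as specified by \lemmaref{lemma:equivalent} (phase, reduced weight, identifier). Additionally, each node maintains an incremental union-find data structure, initialized by the component tags of $F_c'$ that decorate every candidate merge in $E_c(u)$, which also records a local ``accepted'' forest $L_u$ of edges of $G_c$ that grows over time.

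Next, I would run the following pipelined Kruskal-style upcast on $\tau$. In each round, every internal node $u$ considers the smallest-priority candidate merge it has not yet processed, drawn from its own sorted list together with the sorted streams received from each of its children. It applies the Kruskal test using its union-find: if the edge's two endpoints lie in the same component (under $F_c'$ augmented by $L_u$), discard it; otherwise, add it to $L_u$, update the union-find, and forward the edge to $u$'s parent in $\tau$. By \lemmaref{lemma:equivalent}, the set of edges that eventually reach $r$ is exactly $\bigcup_{j'=1}^{j} F_c^{(j')}\setminus F_c'$; a final top-down broadcast along $\tau$ disseminates this set to every node. The hypothesis $\bigcup_{j'=1}^j F^{(j')}\setminus F_c' \subseteq \bigcup_u E_c(u)$ guarantees that no output edge is absent from the pipeline, and the local cycle test is sound because edges that close a cycle in $L_u$ are created by accepted edges that are, by induction, all present in the final forest produced by Kruskal's rule.

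The main obstacle is the running-time analysis: we must show that at most $\BO(|\bigcup_{j'=1}^{j} F_c^{(j')}\setminus F_c'|)$ edges traverse any tree edge, so that the pipeline incurs only an initial $\BO(D)$ latency and then delivers one output edge per round. This rests on the invariant that at every node $u$ and every time step, $L_u$ is a subforest of the final Kruskal forest on $(T,\bigcup_j F^{(j)})$ modulo the edges encoded by the $F_c'$-tags; hence every edge $u$ forwards upward is an element of the output, and the total number of edges that cross any edge of $\tau$ is bounded by $|\bigcup_{j'=1}^{j} F_c^{(j')}\setminus F_c'|$. The claim $\BO(D+|\bigcup_{j'=1}^{j} F_c^{(j')}\setminus F_c'|)$ then follows from the standard pipelining argument for sorted merging on a rooted tree of depth $\BO(D)$.
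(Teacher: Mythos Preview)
Your approach is essentially the paper's: a BFS-tree convergecast with local cycle elimination, in the style of \cite{GarayKP-98,KuttenP-98}, followed by a broadcast of the result. The algorithmic skeleton and the citation are right. The gap is in your justification of the running-time bound.

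You claim the invariant that $L_u$ is always a subforest of the final Kruskal forest $\bigcup_{j'=1}^{j} F_c^{(j')}\setminus F_c'$, and deduce that every edge $u$ forwards belongs to the output. This invariant is false. Take three terminals $a,b,c$ and candidate edges $e_1=\{a,b\}<e_2=\{b,c\}<e_3=\{a,c\}$ with $F_c'=\emptyset$; the global Kruskal forest is $\{e_1,e_2\}$. If $e_1$ lies in a branch of $\tau$ disjoint from $u$'s subtree while $e_2,e_3\in E_c(u)$, then $u$ accepts both $e_2$ and $e_3$ (since $e_3$ closes no cycle with $L_u=\{e_2\}$ alone) and forwards $e_3\notin\bigcup_{j'}F_c^{(j')}$. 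So edges outside the output can and do traverse tree edges, and your ``subset'' invariant cannot be the reason the pipeline is short.

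The correct bound on the number of edges any node forwards is a matroid-rank argument, not a subset argument. Every edge in the pipeline lies in $\bigcup_{j'=1}^{j} E_c^{(j')}$ by hypothesis, and the local filter guarantees that the set of edges $u$ ever forwards is independent in the graphic matroid on $T$ contracted by $F_c'$. Hence its size is at most the rank of that contracted matroid restricted to $\bigcup_{j'=1}^{j} E_c^{(j')}$. But Kruskal's algorithm always outputs a \emph{basis} of its input edge set, so this rank equals $\bigl|\bigcup_{j'=1}^{j} F_c^{(j')}\setminus F_c'\bigr|$. That is what caps the per-node forwarding and yields the $\BO(D+|\bigcup_{j'} F_c^{(j')}\setminus F_c'|)$ bound; correctness at the root then follows from the standard height-plus-rank induction of \cite{GarayKP-98,KuttenP-98}, not from your invariant.
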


When emulating \algref{algo:centralized} distributively, we may
overrun the end of the phase if the causing event occurs
remotely. This may lead to spurious merges, which should be
invalidated later.

\begin{definition}
A \emph{false candidate} is a tuple $(\{v,w\},j,\hat{W},e)$ with $v,w\in T$,
$j\in \N$, $2\hat{W}\in \N_0$, and $e\in E$ that is not a candidate
merge. Candidate merges' order is extended to false candidates in the
natural way.
\end{definition}

Fortunately, false candidates originating from the $j^{th}$ Voronoi
decomposition given by \lemmaref{lemma:partition} will always have larger
weights than candidate merges in phase $j$, since they are 
induced by edges outside $\bigcup_{v\in 
T}\reg_j(v)=\bigcup_{v\in T}B_{i_j}(v)$ (see \lemmaref{lemma:decomp}). This
motivates the following corollary.

\begin{corollary}\label{coro:filtering}
Let $E_c^{(j)}$ denote the set of candidate merges in phase $j$ and set
$F_c^{(j)}:=E_c^{(j)}\cap F_c$. Suppose
$\bigcup_{j'=1}^{j-1}F_c^{(j')}$ is globally known, as well
$\Comp(v)$, for all $v\in T$. If each node $u\in V$ 
is given a set $E_c(u)$ of candidate merges and false candidates so that
$E_c^{(j)}\subseteq \bigcup_{u\in V}E_c(u)$ and each false candidate has larger
weight than all candidate merges in $E_c^{(j)}$, then $F_c^{(j)}$ can be made
globally known in $\BO(D+|F_c^{(j)}|)$ rounds.
\end{corollary}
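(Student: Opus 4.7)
The plan is to reduce \corollaryref{coro:filtering} to \lemmaref{lemma:filtering} after two local preparation steps, exploiting the weight ordering between true and false candidates to control the Kruskal-style pipeline.

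First, I would perform local tagging. Since $F_c':=\bigcup_{j'=1}^{j-1}F_c^{(j')}$ and $\Comp(v)$ for every $v\in T$ are globally known, each node $u$ can compute without any communication the connected component in $(T,F_c')$ of each terminal appearing in its tuples, and annotate each candidate in $E_c(u)$ accordingly. This supplies precisely the component-label precondition demanded by \lemmaref{lemma:filtering}.

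Next, I would invoke the pipelined Kruskal filter that underlies the proof of \lemmaref{lemma:filtering} on the tagged collection $\bigcup_{u\in V}E_c(u)$, with $F_c'$ as the already-known forest. Tuples are forwarded up the BFS tree in ascending lexicographic order (phase index, reduced weight, identifier); at every node the standard cycle-closing test based on the component tags is applied, and only survivors are pushed to the parent. By the corollary's weight hypothesis, every false candidate is strictly heavier than every tuple in $E_c^{(j)}$, so at every level of the pipeline all of $E_c^{(j)}$ is absorbed before any false candidate is considered. When the root has finished processing $E_c^{(j)}$, its Union--Find partition is exactly that of $(T,F_c'\cup F_c^{(j)})$, because $F_c^{(j)}$ is by definition the Kruskal spanning forest of $E_c^{(j)}$ given $F_c'$.

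The delicate step is arguing that no subsequently processed false candidate is added at the root. Here I would appeal to the structural fact that \lemmaref{lemma:decomp} is invoked for in the excerpt immediately preceding the corollary: every false candidate we actually encounter arises from the $j$-th Voronoi decomposition and hence from an edge lying outside $\bigcup_{v\in T}B_{i_j}(v)$, which forces its two endpoint terminals to already be in a common component of $F_c'\cup F_c^{(j)}$. Consequently every false candidate closes a cycle the moment it is tested at the root and is discarded, so the root forwards exactly $F_c^{(j)}$. Flooding this set back down the BFS tree and inheriting the pipelined analysis of \lemmaref{lemma:filtering} yields the claimed $\BO(D+|F_c^{(j)}|)$ rounds.

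The main obstacle is precisely the structural claim ruling out spurious Kruskal selections: the weight hypothesis alone does not forbid a false candidate from bridging two distinct components of $F_c'\cup F_c^{(j)}$, and one must use the specific provenance of false candidates—namely, that they come from Bellman--Ford-extended Voronoi regions strictly beyond the radii $\moat_{i_j}(v)$—to ensure any would-be bridging edge is already subsumed by a lighter genuine candidate in $E_c^{(j)}$, which is what \lemmaref{lemma:decomp} delivers.
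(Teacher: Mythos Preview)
Your reduction to \lemmaref{lemma:filtering} and the local tagging step are fine, and you have correctly identified that the crux is what happens to the false candidates in the pipeline. However, the structural claim you rely on---that every false candidate has its two terminals already in a common component of $(T,F_c'\cup F_c^{(j)})$---is not true, and \lemmaref{lemma:decomp} does not deliver it. A false candidate induced by an edge outside $\bigcup_{v\in T}B_{i_j}(v)$ connects two terminals $v_u,v_{u'}$ that need not be in the same moat at the end of phase $j$: the phase ends the moment \emph{some} terminal changes activity status, which can happen while many pairs of active terminals in distinct moats remain unmerged. So a false candidate can perfectly well bridge two distinct components of $(T,F_c'\cup F_c^{(j)})$, survive the cycle test at the root, and both pollute the output and blow up the round count beyond $\BO(D+|F_c^{(j)}|)$.

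The paper's proof avoids this issue entirely by changing the termination criterion rather than relying on cycle-filtering of false candidates. The root, knowing $F_c'$ and all labels $\Comp(v)$, simulates the merge process as the sorted stream of (possibly false) candidates arrives: after each accepted merge it recomputes activity status, and the first time some terminal's activity flips it knows phase $j$ has ended. By the weight hypothesis this happens after exactly the $|F_c^{(j)}|$-th accepted element, i.e., in round $D+|F_c^{(j)}|$, before any false candidate is even examined. The root then discards everything that arrives afterward. You should replace your cycle-closing argument for false candidates with this activity-change detection at the root.
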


We can now describe the algorithm (see pseudocode in
\appref{app-distalg}). The algorithm
proceeds in merge phases. In each phase, it constructs the
$j^{th}$ terminal decomposition except for knowing the $B_{i_j}(v)$ values
(\lemmaref{lemma:partition}). Using this decomposition, nodes propose
candidate merges, of which some are false candidates. The filtering
procedure from \corollaryref{coro:filtering} is applied to determine
$|F_c^{(j)}|$
. The weight of the last merge
is the increase in moat radii during phase $j$, setting
$B_{i_j}(v)$ and thus $\reg_j(v)$ for each $v\in T$, which allows us
to proceed to the next phase. Finally, the algorithm computes
the minimal subforest of the computed forest,  as in 
\algref{algo:centralized}. We summarize the analysis with the
following statement.

\begin{theorem}\label{theorem:2_distributed}
\sfic can be solved deterministically with approximation factor $2$ in
$\BO(ks+t)$ rounds.
\end{theorem}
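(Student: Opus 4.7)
The approximation factor is inherited from the centralized moat--growing algorithm: by \theoremref{theorem:2approx} it suffices to output the same edge set as \algref{algo:centralized}, and by \lemmaref{lemma:equivalent} that edge set is exactly the union of the paths associated with the edges of the forest $F_c$ obtained by scanning the candidate merges in ascending lexicographic order of $(\text{phase},\hat W,\text{id})$ and discarding cycle-closers in $G_c$. The distributed algorithm computes $F_c$ phase-by-phase, maintaining after each phase the invariants required by \lemmaref{lemma:partition} together with global knowledge of $\bigcup_{j'\le j}F_c^{(j')}$, the connectivity it induces on $T$, the moat-radius increment of the phase, and the updated activity bits $\act^{(j+1)}(\cdot)$.

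\textbf{One phase.} For phase $j$, first invoke \lemmaref{lemma:partition}, running Bellman--Ford for $\BO(s)$ rounds so that every node learns its region and the information needed to locally assemble candidate merges (\defref{def-cands}) for every edge crossing a region boundary. Alongside the true candidates the local computation may emit false candidates stemming from edges outside $\bigcup_{v\in T}B_{i_j}(v)$, but as observed in the text these all carry strictly larger reduced weight than any true candidate of phase $j$. Feed the tuples into the pipelined filtering of \corollaryref{coro:filtering} (which needs the component labels of each candidate's endpoints in $(T,\bigcup_{j'<j}F_c^{(j')})$, already available by the invariant): in $\BO(D+|F_c^{(j)}|)$ rounds the set $F_c^{(j)}$ becomes globally known. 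The heaviest accepted element determines the moat-radius increment $\Delta_j$; broadcasting $\Delta_j$ and the new merges over a BFS tree in $\BO(D+|F_c^{(j)}|)$ rounds lets every node update $\moat_{i_j}$ for the terminal of its region, recompute whether any input component has just been completed (using the globally known connectivity in $\bigcup_{j'\le j}F_c^{(j')}$ together with the labels $\Comp$), and so restore all invariants for the next phase.

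\textbf{Totaling.} By \lemmaref{lem-numphases} there are at most $2k$ phases. Summing the per-phase bound yields
\[
\BO\Bigl(\sum_{j=1}^{2k}\bigl(s+D+|F_c^{(j)}|\bigr)\Bigr)=\BO\bigl(ks+kD+\textstyle\sum_{j}|F_c^{(j)}|\bigr).
\]
Since $F_c$ is a forest on the $t$-vertex set $T$, $\sum_j|F_c^{(j)}|\le t-1$; and because every weighted shortest path uses at least as many hops as the unweighted distance between its endpoints, $s\ge D$. So the total is $\BO(ks+t)$. Once $F_c$ is globally known, each node locally identifies which of its incident edges lie on some associated path $p_{vew}$ via the stored shortest-path trees, and a final pruning pass dropping edges that are redundant within an input component fits into $\BO(D+t)$ additional rounds, which is absorbed.

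\textbf{Main obstacle.} The delicate point is the bookkeeping that keeps the per-phase cost independent of the number of candidate merges that do \emph{not} make it into $F_c^{(j)}$: the Voronoi-based candidate generation may produce many tuples along region boundaries, and only the strict weight separation between true candidates and the false candidates born outside $\bigcup_{v\in T}B_{i_j}(v)$ lets \corollaryref{coro:filtering} terminate after $|F_c^{(j)}|$ acceptances rather than processing the entire candidate stream. Tying this together with the invariants of \lemmaref{lemma:partition} --- so that the Bellman--Ford of the next phase again costs only $\BO(s)$ instead of re-exploring the graph from scratch --- is what turns the naive $\BO(k(s+t))$ accounting into the claimed $\BO(ks+t)$.
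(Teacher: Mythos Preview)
Your proposal is correct and follows essentially the same route as the paper: establish correctness by \lemmaref{lemma:equivalent} and \theoremref{theorem:2approx}, then bound the running time phase-by-phase via \lemmaref{lemma:partition} for the $\BO(s)$ Bellman--Ford step and \corollaryref{coro:filtering} for the $\BO(D+|F_c^{(j)}|)$ filtering step, summing over the at most $2k$ phases of \lemmaref{lem-numphases} and using $|F_c|\le t-1$ and $D\le s$. The paper additionally spends an explicit $\BO(D+t)$ rounds up front broadcasting all pairs $(v,\Comp(v))$ so that \corollaryref{coro:filtering}'s precondition (global knowledge of $\Comp$) is met and activity updates can be done locally, and it marks output edges by token-passing on the depth-$\BO(s)$ region trees rather than a separate $\BO(D+t)$ pruning pass; neither difference affects the asymptotic bound or the logical structure.
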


\subsection{Achieving a Running Time that is Sublinear in
\texorpdfstring{$t$}{t}}
\label{sec:sublinear}

The additive $t$ term in \theoremref{theorem:2_distributed} can be
avoided. We do this by generalizing a technique first used for MST 
construction~\cite{GarayKP-98,KuttenP-98}. 
Roughly, the idea is to allow moats to grow locally until
they are ``large,'' and then use centralized filtering. A new
threshold that distinguishes ``large'' from ``small'' in this case is $\sqrt{st}$.


\begin{definition}
Define $\sigma=\sqrt{\min\{st,n\}}$. A moat is called \emph{small} if when
formed, its connected component using edges that were selected to the output up
to that point contains fewer than $\sigma$ nodes. A moat which is not small
is called \emph{large}.
%
\end{definition}


To reduce the time complexity, we implement
\algref{algo:central_approx}, where moats change
their ``active'' status only between \emph{growth phases}. In each growth phase,
the maximal moat radius grows by a factor of $1+{\varepsilon/2}$. 
The key insight here is that all we need is to determine at which
moat size the first inactive moat gets merged, because all active terminals
keep growing their moats throughout the
entire growth phase. 

We first slightly
adapt the definition of merge phases.

\begin{definition}
For an execution of \algref{algo:central_approx}, denote by $i_j$,
$j=1,\ldots,j_{\max}$, the merges for which either the if-statement in
\lineref{line:growth_phase} is executed or one of the moats participating in the
merge is inactive. Then the merges $i_j+1,\ldots,i_{j+1}$ constitute the
\emph{$j^{th}$ merge phase}. For $g\in 1,\ldots,g_{\max}$, denote by $j_g$ the
index so that $i_{j_g}$ is the $g^{th}$ merge for which the if-statement in
\lineref{line:growth_phase} is executed. Then the merges
$i_{j_g+1},\ldots,i_{j_{g+1}}$ constitute the \emph{$g^{th}$ growth phase} and
we define that $k_g:=j_{g+1}-j_g$. For convenience, $i_0:=0$ and
$j_0:=0$.
\end{definition}

For constant $\varepsilon$, 
the number of growth phases is in
$\BO(\log n)$ (see \lemmaref{lemma:number_merge_growth_phase}).

\paragraph{Algorithm overview.}
The algorithm is specified in \appref{app:sublinear-code}, except for the final
pruning step, which is discussed
below. The main loop runs over growth phases:
first, regions and terminal decompositions are computed. Then, each
small moat proposes its least-weight candidate merge. 
To avoid long chains of merges, we run a matching algorithm with
small moats as nodes and proposed merges as edges, and then add the
candidate merges proposed by the unmatched small moats. 
After a
logarithmic number of iterations of this procedure, at most $\sigma$
moats remain that may participate in further merges in the growth
phase; the filtering procedure from \lemmaref{lemma:filtering} then
selects the remaining merges in $\BO(\sigma+D)$ rounds. Finally, the
activity status 
for the next growth phase is computed;
small moats are handled by communicating over the edges connecting
them, and large moats rely on pipelining communication over a BFS
tree.


\paragraph{Analysis overview.}
The analysis is given in \appref{app:sublinear-analysis}. We only review the
main points here.  First, \lemmaref{lemma:large_moats} shows that
small moats have strong diameter at most $\sigma$, and that the number
of large moats is bounded by $\sigma$.
We show, in \lemmaref{lemma:growth_3b}, that the set $F_g$ the
algorithm selected by the end of growth phase $g$ is identical to that selected
by an execution of \algref{algo:central_approx} on the same instance of \sfic.
To this end, \lemmaref{lemma:decomp_correct} first shows
that the terminal decompositions are computed correctly in 
 $\BO(sk_g)$ rounds.
 Finally, we prove in \lemmaref{lemma:growth_phase} that the growth
 phase is completed in $\sO(k_gs+\sigma)$ rounds and, if it was not
 the last phase, it provides the necessary information to perform the
 next one.  We summarize the results of this subsection as follows.

\begin{corollary}\label{coro:growth}
For any instance of \sfic, a distributed algorithm can compute a solving forest
$F$ in $\sO(sk+\sigma)$ rounds that satisfies that its minimal subforest solving
the instance is optimal up to factor $2+\varepsilon$.
\end{corollary}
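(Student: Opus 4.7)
The plan is to instantiate the sublinear algorithm described in Section~\ref{sec:sublinear} (whose pseudocode is deferred to \appref{app:sublinear-code}) and obtain the corollary by combining the approximation guarantee inherited from \algref{algo:central_approx} with the per-growth-phase complexity bounds advertised above. Concretely, I would run the algorithm to completion across all $g_{\max}\in\BO(\log n)$ growth phases (as per \lemmaref{lemma:number_merge_growth_phase}), then apply the final pruning step that removes cycle-closing edges to obtain the output forest $F$.

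For correctness, I would invoke \lemmaref{lemma:growth_3b}, which asserts that the edge set $F_g$ produced after growth phase $g$ is exactly what \algref{algo:central_approx} would have selected on the same instance. Iterating this across all $g_{\max}$ growth phases shows that the final output $F$ equals the output of \algref{algo:central_approx}; \theoremref{theorem:2+eps_approx} then gives that pruning $F$ to its minimal subforest yields a $(2+\varepsilon)$-approximation. The pruning itself can be implemented in $\BO(D+|F|)\subseteq\sO(sk+\sigma)$ rounds by aggregating $F$ over a BFS tree and broadcasting which edges to keep, so it does not dominate the complexity.

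For the running time, I would sum the per-phase bound from \lemmaref{lemma:growth_phase}, namely $\sO(k_g s+\sigma)$, over $g=1,\ldots,g_{\max}$:
\[
\sum_{g=1}^{g_{\max}}\sO(k_g s+\sigma)\;=\;\sO\!\left(s\sum_{g=1}^{g_{\max}}k_g\right)+\sO(g_{\max}\sigma).
\]
Since $\sum_g k_g=j_{\max}$ is the total number of merge phases across the execution, the bound $j_{\max}\in\BO(k)$ (analogous to \lemmaref{lem-numphases}, which I would verify carries over to the growth-phase-based definition since each merge phase still corresponds to either an activity-status change of a terminal or a growth-phase boundary, and there are only $\BO(k)$ of the former and $\BO(\log n)$ of the latter) together with $g_{\max}\in\BO(\log n)$ yields a total of $\sO(sk+\sigma)$ rounds.

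The main obstacle I would expect is the bookkeeping needed to verify that the information fed into each growth phase matches the precondition of \lemmaref{lemma:decomp_correct}: each node must know its region's active terminal, its parent in the spanning tree, and the relevant residual distances from the previous phase. Since \lemmaref{lemma:growth_phase} explicitly includes the claim that a completed growth phase ``provides the necessary information to perform the next one,'' this is a matter of checking that the initialization at $g=0$ is trivial (singleton regions) and that the inductive step is exactly what that lemma asserts; the rest of the argument is the arithmetic above.
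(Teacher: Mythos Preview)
Your core argument is essentially the paper's own: induct on growth phases via \lemmaref{lemma:growth_phase} (which supplies both the $\sO(k_gs+\sigma)$ per-phase bound and the inductive invariant), sum using \lemmaref{lemma:number_merge_growth_phase} to get $\sum_g k_g\in k+\BO(\log n/\varepsilon)$, and conclude correctness from $F=F_{i_{j_{g_{\max}}}}$ via \theoremref{theorem:2+eps_approx}. That part is fine.

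However, you have misread what the corollary asserts and thereby introduced a genuine error. The corollary only asks for a forest $F$ whose \emph{minimal subforest} is a $(2+\varepsilon)$-approximation; it does \emph{not} ask you to compute that minimal subforest. Your added ``final pruning step'' is not part of this statement---the paper deliberately defers pruning to a separate subsection (\sectionref{sec:prune})---and your claimed bound for it is false. The inclusion $\BO(D+|F|)\subseteq\sO(sk+\sigma)$ does not hold: $|F|$ (or even $|F_c|\leq t-1$) can be $\Theta(t)$, and $t$ can be much larger than $sk+\sigma$ (take $k=1$, small $s$, $t=n$; then $sk+\sigma=\sO(\sqrt{n})$ but $t=n$). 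This is precisely why the paper develops a nontrivial $\sO(\sigma+k+D)$ pruning routine afterwards. Drop the pruning paragraph entirely and your proof matches the paper's.

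A minor citation issue: the correctness that $F$ coincides with \algref{algo:central_approx}'s output is established in (the proof of) \lemmaref{lemma:growth_phase}, not directly in \lemmaref{lemma:growth_3b}, whose formal statement concerns only Step~3b of each phase.
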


\paragraph{Fast Pruning Algorithm.}
\label{sec:prune}
After computing $F$
, it remains to select the minimal
subforest solving the given instance of problem \sfic: we may have
included merges with non-active moats that need to be pruned.
Simply collecting $F_c$ and $\Comp$ at a single node takes
$\Omega(t)$ rounds, and the depth of (the largest tree in) $F$ can be
$\Omega(st)$ in the worst case. Thus, we employ some of the strategies for
computing $F$ again. First, we grow clusters to size $\sigma$ locally, just like
we did for moats, and then solve a derived instance on the clusters to decide
which of the inter-cluster edges to select. Subsequently, the subtrees inside
clusters have sufficiently small depth to resolve the remaining demands by a
simple pipelining approach. Details are provided in \appref{app:prune}. We
summarize as follows.
\begin{corollary}\label{coro:2+eps_distributed}
For any constant $\varepsilon>0$, a deterministic distributed algorithm can
compute a solution for problem \sfic that is optimal up to factor
$(2+\varepsilon)$ in $\sO(s\min\{k_0,\WD\}+\sqrt{\min\{st,n\}}+k+D)$ rounds,
where $k_0$ is the number of input components with at least two terminals.
\end{corollary}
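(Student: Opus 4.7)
The plan is to build the corollary from three pieces: a preprocessing step, the fast forest-computing algorithm of \corollaryref{coro:growth}, and a new pruning routine that extracts the minimal solving subforest quickly. First I would apply \lemmaref{lemma:transform_to_minimal} to convert the input into an equivalent minimal instance of \sfic in $\BO(D+k)$ rounds, leaving only the $k_0$ components with at least two terminals. On this reduced instance, \corollaryref{coro:growth} produces a solving forest $F$ in $\sO(sk_0+\sigma)$ rounds, where $\sigma=\sqrt{\min\{st,n\}}$, and guarantees that the minimal solving subforest of $F$ is a $(2+\varepsilon)$-approximation. The $\WD$ bound in the statement comes from the observation that once the maximal moat radius exceeds $\WD$ all demands in a common component are already contained in a single moat, so the effective number of merge phases that drives the $s$-multiplier is capped by $\min\{k_0,\WD\}$.

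The main novelty lies in extracting the minimal subforest $F^\ast\subseteq F$ within $\sO(\sigma+k+D)$ rounds. A naive root-and-aggregate approach would cost $\Omega(\text{depth}(F))$ or $\Omega(t)$, which can be as large as $\Omega(st)$. Instead, reusing the recipe from \sectionref{sec:sublinear}, I would locally partition $F$ into clusters of at most $\sigma$ nodes by running a BFS inside $F$ and freezing each cluster once it attains size $\sigma$. This takes $\sO(\sigma)$ rounds and yields clusters of strong diameter $\BO(\sigma)$ inside $F$; since $F$ is a forest on at most $n$ vertices, the number of clusters is at most $\BO(n/\sigma)\subseteq \BO(\sigma)$.

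The pruning is then solved on a derived instance whose nodes are clusters, whose edges are the inter-cluster edges of $F$, and whose demands are inherited by assigning to each cluster the set of component labels appearing at its local terminals. There are only $\BO(\sigma)$ super-nodes and $\BO(\sigma)$ inter-cluster edges, but up to $k$ distinct labels in total, so the derived instance can be gathered at a BFS root using $\BO(\log n)$ bits per item, solved centrally, and broadcast back in $\sO(\sigma+k+D)$ rounds via standard pipelining over a BFS tree. Finally, inside each cluster we prune intra-cluster edges to obtain the minimal subtree that connects the local terminals to the selected inter-cluster ports; because each cluster has depth $\BO(\sigma)$ in $F$, an upward-downward aggregation of subtree label sets finishes this in $\sO(\sigma)$ rounds.

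The main obstacle I anticipate is the interface between clusters and the derived instance: each cluster must summarise its component labels compactly while also mapping the globally made pruning decisions back to concrete incident edges and to consistent intra-cluster routing of the remaining demands, and the aggregation of labels over the BFS tree must be pipelined carefully so as not to blow past the $\sO(\sigma+k+D)$ budget. Once this is engineered, the total running time is $\BO(D+k)+\sO(sk_0+\sigma)+\sO(\sigma+k+D)=\sO(s\min\{k_0,\WD\}+\sigma+k+D)$, and the approximation ratio is inherited from \corollaryref{coro:growth} because pruning only removes edges that are unnecessary for feasibility.
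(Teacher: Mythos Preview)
Your overall architecture matches the paper's: reduce to a minimal instance via \lemmaref{lemma:transform_to_minimal}, invoke the forest construction of \corollaryref{coro:growth} (equivalently \theoremref{theorem:2+eps_distributed}), and prune by clustering $F$ at scale~$\sigma$ with a global pass over the cluster forest. Two points, however, need correction.

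Your justification for the $\WD$ cap is not the right one. That moat radii stay below $\WD/2$ bounds only the total growth, not the number of merge phases, which is governed by activity changes and can still be $\Theta(k_0)$. The paper instead uses integrality of edge weights: moat radii at which merges occur take at most $\BO(\WD)$ distinct values, and merge phases ending at the same radius share the same terminal decomposition, so the $\BO(s)$-cost recomputation of \lemmaref{lemma:partition} is needed at most $\min\{k_0,\WD\}$ times.

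The obstacle you flag is indeed the crux, and your sketch does not resolve it. Naively gathering all (cluster,\,label) pairs is $\Theta(\min\{t,\sigma k\})$ items, which overshoots $\sO(\sigma+k+D)$. The paper's fix (Step~6 of the pruning routine, \lemmaref{lemma:prune_6_7}) is to aggregate on the BFS tree with on-the-fly redundancy elimination: a message $(C,\Comp)$ is forwarded only if it alters the node's current picture of which inter-cluster edges carry which label classes, and any two labels seen to share an edge are henceforth identified. Since $({\cal C},F_{\cal C})$ has fewer than $\sigma$ edges and each identification reduces the number of label classes, every node sends at most $\BO(\sigma+k)$ non-redundant messages. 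A related slip: the $\BO(\sigma)$ cluster-count bound requires that $F$ spans at most $\sigma^2=\min\{st,n\}$ nodes (it is a union of at most $t-1$ paths of $\leq s$ hops), not merely $n$; otherwise $n/\sigma\not\in\BO(\sigma)$ when $st<n$.
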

\
\section{Randomized Algorithm}
\label{sec-alg2}
In \cite{KKMPT-12}, Khan \textit{et al.}\ propose a randomized
algorithm for \sfic that constructs an 
expected $\BO(\log n)$-approximate solution in $\tilde{\BO}(sk)$ time
w.h.p.
In this section we
show how to modify it so as to reduce the running time to
$\sO(k+\min(s,\sqrt n))$ while keeping the approximation ratio
in $\BO(\log n)$.

\paragraph{Overview of the algorithm in \cite{KKMPT-12}.}
The algorithm consists of two main steps. First, a
virtual tree is constructed and embedded in the network, where each
physical node is a virtual leaf.  Then the algorithm selects, for each
input component $\Comp$, the minimal subtree containing all terminals
labeled $\lambda$, and adds, for each virtual edge in these
subtrees, the physical edges of the corresponding path in $G$. Since
the selected set of virtual
edges corresponds to an optimal solution in the tree topology, and
since it can be shown that the expected stretch factor of the embedding is in
$\BO(\log n)$, the result follows.

In more detail, the virtual tree is constructed as follows.  Nodes
pick IDs independently at random.  Each node of the graph is a leaf in
the tree, with ancestors $v_0,\ldots,v_L$, where $L$ the base-2 logarithm of
the weighted diameter (rounded up). The $i^{th}$ ancestor $v_i$ is the
node with the largest ID within distance $\beta2^i$ from $v$, for a
global parameter $\beta$ picked uniformly at random from $[1,2]$. The
weight of the virtual edge $(v_{i-1},v_{i})$ is defined to be
$\beta2^i$.  We note that the embedding in $G$ is via a shortest path
from each node $v$ to each of its $L+1$ ancestors (and not from
$v_{i-1}$ to $v_i$), implemented by ``next hop'' pointers along the
paths.  It is shown that w.h.p., at most $\BO(\log n)$ such distinct
paths pass through any physical node.

Now, consider the second phase. 
Let $T_{\Comp}$, for an input component $\Comp$,  denote the minimal
subtree that contains all terminals of $\Comp$ as leaves.
Clearly, $\bigcup_{\Comp \in \Lambda}T_\Comp$ is the optimal solution
to \sfic on the virtual tree. Thus, all that needs to be done is to select
for each virtual edge in this solution a path in $G$ (of weight smaller or
equal to the virtual tree edge) so that the nodes in $G$ corresponding to the
edge's endpoints get connected. However, since the embedding of the
tree may have paths of $\BO(s)$ hops, and since there are $k$ labels
to worry about,  the straightforward
implementation from~\cite{KKMPT-12} requires $\sO(sk)$ rounds to
select the output edges due to possible
congestion.

\paragraph{Overview of our algorithm.}
Our first idea is to improve  the second phase from~\cite{KKMPT-12} as
follows.  Each internal node $v_i$ is the root of a shortest paths
tree of weighted diameter $\beta2^i$.
For  any virtual tree edge
$\{v_i,v_{i-1}\}\in T_{\Comp}$, we
make sure that exactly one node $v$ in the virtual subtree rooted at $v_{i-1}$
includes the edges of a shortest physical path 
(in $G$) connecting $v$ and $v_i$ in the edge set $F$ output of the
algorithm. 
This is done by $v$ by sending  a message $(\Comp,v_i)$ to $v_i$ up
the shortest paths tree rooted at $v_i$, and these messages are
filtered along the way so that only the first 
$(\Comp,v_i)$ message is forwarded for each $\Comp\in \Lambda$.
This ensures that the only $\BO(s+k)$ steps are needed per
destination. 
Since there are $\BO(\log n)$ such destinations for each node, by
time-multiplexing 
we get running time of $\sO(s+k)$ (w.h.p.). 

When $s>\sqrt{n}$,%
\footnote{%
  W.l.o.g., we present the algorithm as if $s$ was known, because it
  can be determined in $\BO(D+\min\Set{s,\sqrt n})$ rounds as follows:
  Compute $n$ by convergecast, then run Bellman-Ford until
  stabilization or until $\sqrt n$ iterations have elapsed, whichever
  happens first. Since stabilization can be detected $\BO(D)$ time
  after it occurs, we are done.
%
} 
the running time can be improved further to $\sO(\sqrt{n}+k+D)$. The idea is as
follows. Let ${\cal S}$ be the set of the $\sqrt{n}$ nodes of highest rank.
We truncate each leaf-root path in the virtual tree at the first occurrence of a
node from $\cal S$: instead of connecting to that ancestor, the node $v$
connects to the \emph{closest} node from ${\cal S}$. This construction can be
performed in time $\sO(\sqrt{n}+k+D)$ w.h.p.\ (see \appendixref{sec:partial}).
Consider now the edge set $F$ returned by the procedure above: for each input
component $\Comp\in \Lambda$, the terminals labeled $\Comp$ will be partitioned
into connected components, each containing a node from $\cal S$ (if there is a
single connected component it is possible that it does not include any node from
$\cal S$). We view each such connected component as a ``super-terminal'' and
solve the problem by applying an algorithm from \cite{LenzenP13}. The output is
obtained by the set $F$ from the first virtual tree and the additional edges
selected by this algorithm. We show that the overall approximation ratio remains
$\BO(\log n)$ and that the total running time is
$\BO(k+\min\Set{\sqrt{n},s})$.

\paragraph{Detailed description.}
We present the construction for $s\leq \sqrt{n}$ and $s>\sqrt{n}$ in a unified
way. Detailed proofs for the claimed properties are given in
\appendixref{sec:tree_selection}. The first stage consists of the following
steps.
\begin{compactenum}
  \item If $s\leq \sqrt{n}$, set ${\cal S}:=\emptyset$. Otherwise, let ${\cal
  S}$ be the set of $\sqrt{n}$ nodes of highest rank. Delete from the virtual
  tree internal nodes mapped to nodes of $\cal S$. Compute the remaining part of
  the virtual tree and, if ${\cal S}\neq \emptyset$, let each node learn about
  its closest node from ${\cal S}$. In other words, each node 
  $v\notin {\cal S}$ learns the identity of and the shortest paths to
  $v_0,\ldots,v_{i_v-1},\tilde{v}_{i_v}$, where $\tilde{v}_{i_v}$ is
  the node closets to $v$ from $\cal S$. If $v\in {\cal S}$, $i_v=0$ and
  $\tilde{v}_{i_v}=v$.
  \item For each terminal $v\in T$, set $l(v):=\{\Comp(v)\}$. For all other
  terminals, $l(v):=\emptyset$.
  \item For $i\in \{0,\ldots,L\}$ phases:
  \begin{compactenum}
    \item Make for each $\Comp \in \Lambda$ known to all nodes whether it
    satisfies that there is only one terminal $v\in T$ with $\Comp\in\{l(v)\}$.
    If this is the case, delete $\Comp$ from $l(v)$.
    \item Each node $v\in V$ sets
    $\unsent:=\{(\Comp,v_i)\,|\,\Comp \in l(v)\}$ if $i<i_v$ and
    $\unsent:=\{(\Comp,\tilde{v}_{i_v})\,|\,\Comp \in l(v)\}$ otherwise. Then
    all nodes set $\sent:=\emptyset$, $l(v):=\emptyset$, and
    $\hat{l}(v):=\emptyset$.
    \item Repeat until no more messages are sent:
    \begin{compactitem}
      \item For each node $w$, do the following. Each node $v\in V$ for which
      $\unsent\setminus \sent\neq \emptyset$ picks some $(\Comp,w)\in
      \sent\setminus \unsent$ and sets $\sent:=\sent \cup \{(\Comp,w)\}$. If
      $v\neq w$, it sends $(\Comp,w)$ to the next node on the least-weight path
      to $w$ known from the tree construction, otherwise it sets
      $\hat{l}(v):=\hat{l}(v) \cup \{\Comp\}$.
      Each traversed edge is added to $F$.
      \item Each node $v$ that receives a message $(\Comp,w)$ sets
      $\unsent:=\unsent \cup \{(\Comp,w)\}$.
    \end{compactitem}
    \item Each node $w$ with $\hat{l}(w)\neq \emptyset$ selects a node $v$ that
    added, for some $\Comp$, $(\Comp,w)$ to its $\unsent$ variable in Step~3b.
    It sends all entries in its $\hat{l}(w)$ variable to $v$. The node $v$ and
    the routing path to $v$ are determined by backtracing a sequence of messages
    $(\Comp,w)$ from Step 3c. The receiving node $v$ sets $l(v):=\hat{l}(w)$.
  \end{compactenum}
  \item Return $F$.
\end{compactenum}
\emph{The Second Stage.}
If $s\leq \sqrt{n}$, $F$ is the solution.
Otherwise, we construct a new instance and solve it.
%
To define the new instance,
define, for each $v\in {\cal
  S}$, the node set
\begin{equation*}
T_v:=\left\{w\in T\mid\mbox{in $(V,F)$, $v$ is closest to $w$
among nodes from ${\cal S}$ and within $\tilde{\BO}(\sqrt{n})$ hops}\right\},
\end{equation*}
 ties  broken lexicographically. Let $V_r:=V\setminus\bigcup_{v\in\cal S}T_v$. 
The new instance is defined over the following graph.

\begin{definition}
  The
  \emph{$F$-reduced graph} $\hat{G}=(\hat{V},\hat{E},\hat{W})$  is
  defined as
  follows. 
  \begin{compactitem}
  \item $\hat{V}:=\Set{T_v\mid v\in{\cal S}}\cup V_r$
  \item $\hat E
         :=\Set{\{T_u,T_v\}\mid u\in T_u, v\in T_v, \{u,v\}\in E}
         ~\cup~\Set{u,T_v\mid u\in V_r, \{u,v\}\in E\text{ for some }
           v\in T_v}
         ~\cup \\~~~~~~~~~~\Set{\{u,v\}\mid u,v\in V_r}$
  \item $\hat{W}(\hat u,\hat v):=
\begin{cases}\min\Set{W(u,v)\mid u\in T_u, v\in T_v,
      \{u,v\}\in E}& \text{if } \hat u =T_u, \hat v=T_v\\
    \min\Set{W(u,v)\mid u\in T_u,
      \{u,v\}\in E}& \text{if } \hat u =T_u, \hat v\in V_r\\
    W(u,v)&\text{if  } \hat u,\hat v\in V_r
\end{cases}
$
  \end{compactitem}
\end{definition}

To complete the description of the new instance, we specify the new terminals
and labels.
Given an instance of \sfic and the edge set $F$ computed in the first
stage, the \emph{$F$-reduced instance} is defined over the 
$F$-reduced graph $\hat{G}$ as follows. The set of terminals is 
$\hat{T}:=\{T_v\,|\,v\in {\cal S}\wedge T_v\cap T\neq \emptyset\}$. To
construct the labels, define the helper graph $(\Lambda,E_{\Lambda})$,
where 
$$E_{\Lambda}:=\Set{\Set{\Comp,\Comp'}\mid
  \Comp(v)=\Comp,\Comp(u)=\Comp'\text{ for some }v,u\in T_w\text{ for
some }w\in{\cal S}}~.$$  
Now, let
$\hat{\Lambda}$ be the set of connected components of
$(\Lambda,E_{\Lambda})$, identified by  $\BO(\log n)$ bits
each. Finally, the
label $\hat\Comp(T_v)$ of a node $T_v$ in $\hat G$ is the identifier of the
connected component in $(\Lambda,E_{\Lambda})$ of any label $\Comp\in \Lambda$
which belongs to any node in $T_v$ ($\hat{\Comp}(\cdot)$ is well defined,
because all these labels belong to the same connected component of
$(\Lambda,E_{\Lambda})$). 

Since the reduced instance imposes fewer constraints, its optimum is at most
that of the original instance. We show that the reduced instance can be
constructed efficiently, within $\sO(\sqrt{n}+k+D)$ rounds, and then apply the
algorithm from~\cite{LenzenP13} to solve it with approximation factor $\BO(\log
n)$. For this approximation guarantee, the algorithm has time complexity
$\sO(\sqrt{n}+\hat{t}+D)$; since we made sure that the reduced instance has
$\hat{t}=\sqrt{n}$ terminals only, this becomes $\sO(\sqrt{n}+D)$.
The union of the returned edge set with $F$ then yields a solution of the
original instance that is optimal up to factor $\BO(\log n)$. Detailed proofs of
these properties and the following main theorem can be found in
\appendixref{sec:spanner}.
\begin{theorem}\label{theorem:fast}
There is an algorithm that solves \sfic in $\tilde{\BO}(\min\{s,\sqrt{n}\}+k+D)$
rounds within factor $\BO(\log n)$ of the optimum w.h.p.
\end{theorem}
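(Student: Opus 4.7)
The plan is to handle the two cases $s\le\sqrt{n}$ and $s>\sqrt{n}$ separately, show that the first stage runs in $\tilde{\BO}(\min\{s,\sqrt{n}\}+k+D)$ rounds in both, argue that the second stage (invoked only when $s>\sqrt{n}$) adds at most $\sO(\sqrt{n}+D)$ rounds, and finally combine the $\BO(\log n)$-stretch property of the tree embedding from~\cite{KKMPT-12} with a monotonicity argument for the $F$-reduced instance to conclude the approximation ratio. Throughout, the number of levels of the virtual tree is $L=\BO(\log \WD)=\BO(\log n)$, so it suffices to bound the cost of each level by $\tilde{\BO}(\min\{s,\sqrt{n}\}+k+D)/\polylog n$.

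For the first stage I would begin with the construction of the (possibly truncated) virtual tree. When $s\le\sqrt{n}$, this is exactly the construction of~\cite{KKMPT-12} and takes $\tilde{\BO}(s+D)$ rounds. When $s>\sqrt{n}$, I would invoke the construction of \appendixref{sec:partial}: because $|{\cal S}|=\sqrt{n}$ and the tree is truncated at the first ancestor in $\cal S$, each local shortest-paths tree used for routing has hop-depth $\tilde{\BO}(\sqrt{n})$ w.h.p., yielding $\tilde{\BO}(\sqrt{n}+k+D)$ rounds. Next I would analyze Step~3 of a single level $i$. The key structural observation is that for any fixed destination $w$, the set of distinct messages $(\Comp,w)$ that a node $u$ ever forwards during the phase is bounded by $|\Lambda|=k$, since the filtering in Step~3c deduplicates per $\Comp$; moreover all such messages travel along the single shortest-paths tree rooted at $w$, of hop-depth at most $\min\{s,\sqrt{n}\}+\BO(1)$. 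A standard tree convergecast with pipelining (as in~\cite{GarayKP-98,KuttenP-98}) therefore delivers all messages destined for $w$ in $\BO(k+\min\{s,\sqrt{n}\})$ rounds. Since by the combinatorial property of the embedding established in~\cite{KKMPT-12} each physical node participates in at most $\BO(\log n)$ distinct destination trees w.h.p., round-robin time-multiplexing across destinations blows up the per-level time by only an $\BO(\log n)$ factor, giving $\tilde{\BO}(\min\{s,\sqrt{n}\}+k)$ per level and $\tilde{\BO}(\min\{s,\sqrt{n}\}+k)$ overall for the first stage.

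If $s\le\sqrt{n}$ the algorithm already returns $F$, and the approximation factor of $\BO(\log n)$ follows from~\cite{KKMPT-12}. If $s>\sqrt{n}$, I would analyze the second stage. Each super-node $T_v$ has strong (BFS) diameter $\tilde{\BO}(\sqrt{n})$ by the truncation rule, so each $v\in{\cal S}$ can gather the identities, terminal labels and incident edge weights of $T_v$ in $\tilde{\BO}(\sqrt{n})$ rounds; the helper graph $(\Lambda,E_\Lambda)$ and its connected components can then be constructed by upcasting the witnessed label pairs along a global BFS tree in $\sO(\sqrt{n}+k+D)$ rounds. Because $\hat t\le|{\cal S}|=\sqrt{n}$, the algorithm of~\cite{LenzenP13} applied to the reduced instance runs in $\sO(\sqrt{n}+\hat t+D)=\sO(\sqrt{n}+D)$ rounds with approximation factor $\BO(\log n)$. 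Since collapsing connected subgraphs only drops constraints and does not increase any edge weight, the optimum of the reduced instance is at most the optimum of the original, so combining the resulting edge set with $F$ yields an $\BO(\log n)$-approximation of the original instance while respecting the claimed time bound.

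The main obstacle will be the careful pipelining analysis of Step~3: one must verify that the interaction between the per-$\Comp$ filtering at intermediate nodes, the convergecast along each destination tree, and the round-robin multiplexing across the $\BO(\log n)$ destinations served by each node produces a sum of the two terms $\min\{s,\sqrt{n}\}$ and $k$ rather than their product. Once this bottleneck is settled and the tree-construction bound from \appendixref{sec:partial} is in place, the remaining arguments (the bound on $\hat t$, the construction cost of $\hat G$ and of $(\Lambda,E_\Lambda)$, and the monotonicity of the reduced optimum) are routine adaptations of the tools already developed in the excerpt.
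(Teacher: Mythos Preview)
Your overall structure---first-stage pipelining analysis, then the reduced instance for $s>\sqrt{n}$---matches the paper's approach, and your treatment of the running time is essentially correct. However, there is one genuine gap in the approximation argument.

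The tree embedding of~\cite{KKMPT-12} guarantees only that the optimal solution on the virtual tree is within factor $\BO(\log n)$ of the optimum \emph{in expectation}. The theorem, however, claims an $\BO(\log n)$ approximation \emph{with high probability}. Your sentence ``the approximation factor of $\BO(\log n)$ follows from~\cite{KKMPT-12}'' does not bridge this gap: a single execution of the first stage may, with non-negligible probability, produce an embedding whose stretch is far above $\BO(\log n)$. The paper closes this gap by an explicit boosting step that you omit: it runs the entire first stage $c\log n$ times independently, computes the weight of each resulting edge set $F$, and keeps one of minimum weight. By Markov's inequality each run achieves $\BO(\log n)$ stretch with probability at least $1/2$, so after $c\log n$ independent repetitions at least one succeeds with probability $1-n^{-c}$. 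This repetition is part of the algorithm itself, not merely the analysis, and it is what justifies the ``w.h.p.'' in the theorem statement; without it your proof establishes only an expected $\BO(\log n)$ approximation.

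A smaller point: your claim that ``each $v\in{\cal S}$ can gather the identities, terminal labels and incident edge weights of $T_v$ in $\tilde{\BO}(\sqrt{n})$ rounds'' is too strong as stated, since $|T_v|$ can be $\Theta(n)$ and the $\tilde{\BO}(\sqrt{n})$ hop-diameter bounds only the depth, not the amount of information. The paper avoids this by never collecting $T_v$ at $v$; instead it upcasts only label pairs and a spanning forest on $\Lambda$ over a global BFS tree (at most $\sqrt{n}+k$ non-redundant messages per node), which is closer to what you sketch in the following clause. You should drop the local-gathering claim and rely solely on the global convergecast.
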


\newpage


{\small

}

\section*{APPENDIX}
\appendix
\section{Preliminaries}
\begin{proof}[Proof of \lemmaref{lemma:transform_to_input}]
  We construct an (unweighted) breadth-first-search (BFS) tree rooted at
an arbitrary node, say the one with the largest identifier. Clearly,
this results in a tree of depth $\BO(D)$ this can be done in $\BO(D)$
rounds. For the first transformation, each node sends all connection
requests it initially knows or receives from its children and that do
not close cycles in $T$ to the root. Since any forest on $T$ has at
most $t-1$ edges, this takes at most $\BO(t+D)$ rounds using messages
of size $\BO(\log n)$. Subsequently, the remaining set of requests at
the root is broadcasted over the BFS tree to all nodes, also in time
$\BO(t+D)$. By transitivity of connectivity, a set $F$ is feasible in
the original instance iff it is feasible w.r.t.\ the remaining set of
connectivity requests. Since these are now global knowledge, the nodes
can locally compute the induced connectivity components (on the set of
terminals) and and unique labels for them: say, the smallest ID in the
component. Setting the label of
terminal $v$ to the label of its connectivity component, the resulting
instance with input components is equivalent as well. 
\end{proof}
\begin{proof}[Proof of \lemmaref{lemma:transform_to_minimal}]
As for the previous lemma, we construct a BFS tree rooted at some
node. Each terminal sends the message $(v,\Comp(v))$ to its parent in the
BFS tree. For each label $\Comp$, if a node ever learns about two
different messages $(v,\Comp)$, $(w,\Comp)$, it sends $(\true,\Comp)$ to
its parent and ignores all future messages with label $\Comp$. All other
messages are forwarded to the parent. Since for each label $\Comp$, no
node sends more than $2$ messages, this step completes in $\BO(D+k)$
rounds. Afterwards, for each $\Comp$ with $|C_{\Comp}|>1$, the root
has either received a message $(\true,\Comp)$, or it has received two messages
$(v,\Comp)$, $(w,\Comp)$, or it has received one message $(v,\Comp)$ and is
in input component $C_{\Comp}$ itself. On the other hand, if
$|C_{\Comp}|=1$, clearly none of these cases applies. Therfore, the
root can determine the subset of labels $\{\Comp\in
\Lambda\,|\,|C_{\Comp}|>1\}$ and broadcast it over the BFS tree,
taking another $\BO(D+k)$ rounds. The minimal instance is then
obtained by all terminals in singleton input components deleting their
label. 
\end{proof}

\section{Lower Bounds}
\label{app-lb}
\begin{proof}[Proof of \lemmaref{lem-lb1}]
Let $\cA$ be a distributed algorithm for \sfcr with approximation
ratio $\rho<\infty$. 
We reduce Set Disjointness (SD) to $\rho$-approximate \sfcr as follows.
Let $A,B\subseteq[n]$ be an instance of SD. Alice, who knows $A$,
constructs the following graph: the nodes are the set
$\Set{a_i}_{i=1}^n$ and two additional nodes denoted $a_0$ and
$a_{-1}$. 
All nodes corresponding to elements in $A$ are connected to $a_0$ and
all nodes corresponding to $[n]\setminus A$ are connected to
$a_{-1}$. Formally, define $E_A=\Set{(a_0,a_i)\mid i\in
  A}\cup\Set{(a_{-1},a_i)\mid i\notin A}$.
Similarly, Bob  constructs nodes $\Set{b_i}_{i=-1}^n$ and edges 
$E_B=\Set{(b_0,b_i)\mid i\in B}\cup\Set{(b_{-1},b_i)\mid i\notin B}$.
In addition to the edges $E_A$ and $E_B$, the graph contains the edges
$E_{AB}=\Set{(a_0,b_0),(a_{-1},b_{-1}),(a_0,b_{-1}),(a_{-1},b_0)}$. 
All edges, except $\Set{(a_0,b_{0}),(a_{-1},b_{-1})}$ have unit cost,
and the edges $\Set{(a_0,b_{0}),(a_{-1},b_{-1})}$ have cost
$W:=\rho(2 n+2)+1$.
This concludes the description of the graph (see \figureref{fig-lb} left). 
Finally, we define the
connection requests 
as follows: for each $i\in A$ we introduce the connection request
$b_i\in R_{a_i}$, and similarly for each $i\in B$ we introduce the
request $a_i\in R_{b_i}$. Note that we have $t\le n$ and $k\le 2$.

\begin{figure}[t]
  \centering
  \includegraphics[height=2.2in]{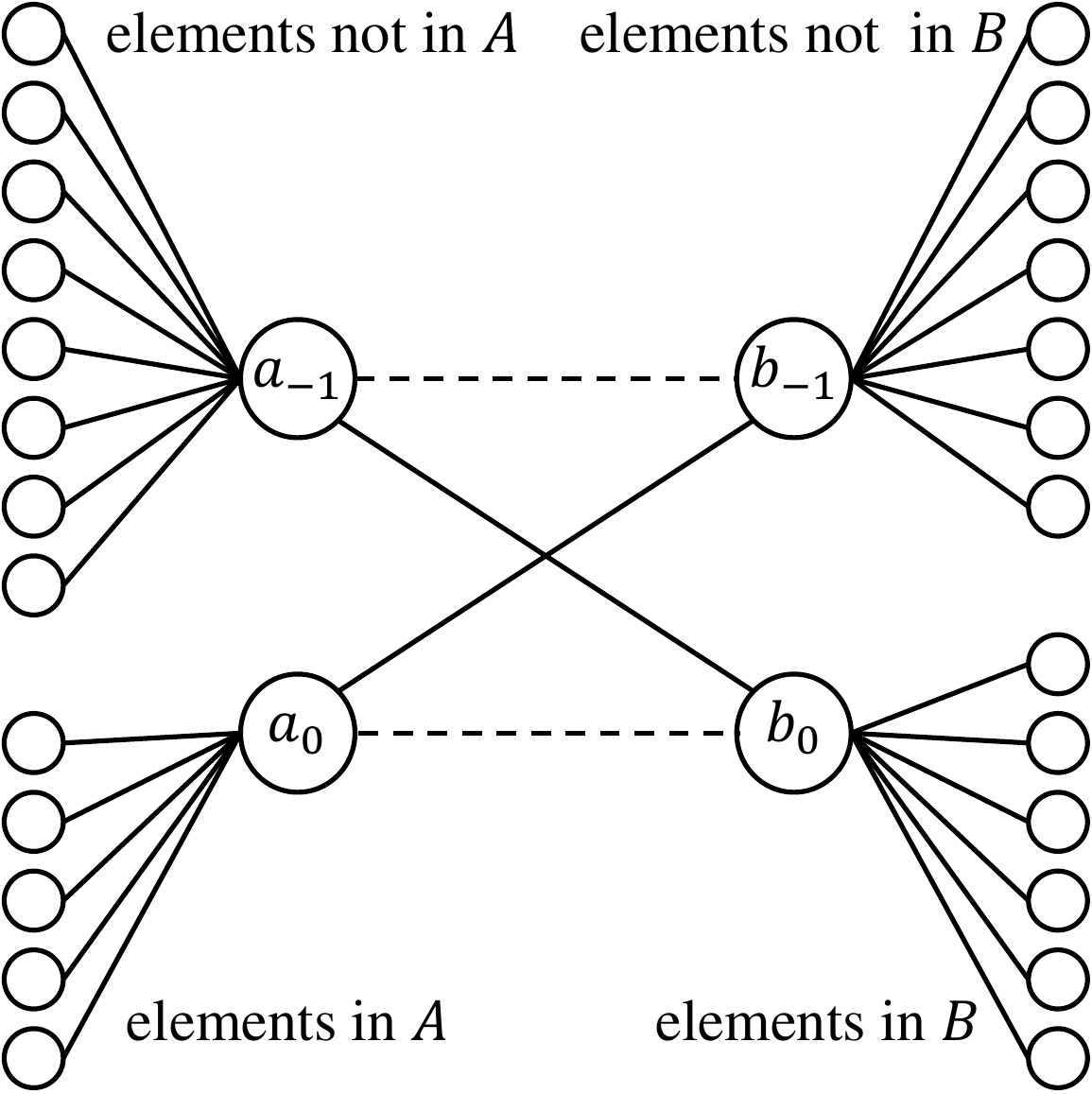}\hspace*{15mm}
  \includegraphics[height=2.2in]{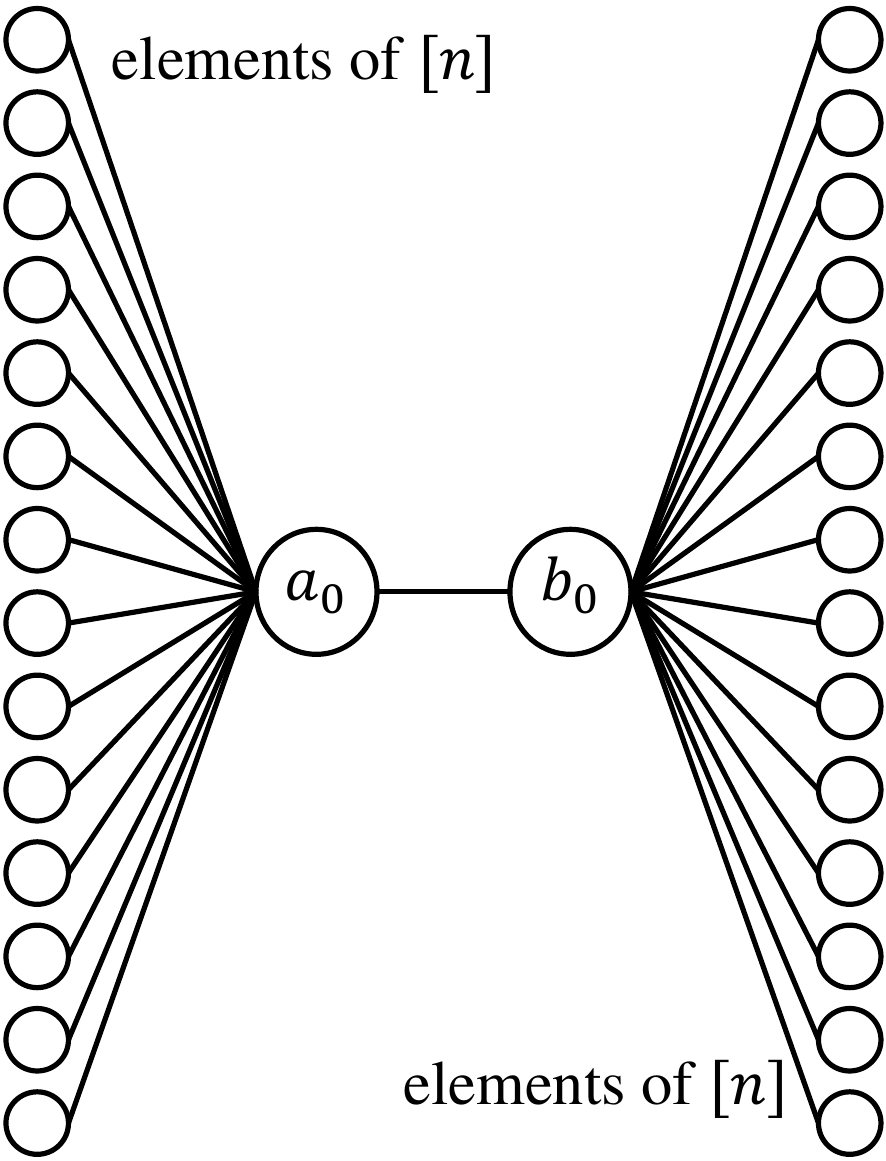}
  \caption{\small Reductions of Set Disjointness to Distributed Steiner
    Forest. Left: reduction to \sfcr (solid edges are light, dashed
    edges are heavy). Right: reduction to \sfic (all edges have unit weight).} 
  \label{fig-lb}
\end{figure}
This completes the description of the \sfcr instance. We now claim
that if $\cA$ computes a $\rho$-approxima\-tion to \sfcr, then we can output
the answer ``YES'' to
the original SD instance iff $\cA$ produces an output that does not
include neither of the heavy edges
$\Set{(a_0,b_{-1}),(a_{-1},b_0)}$. To see this, consider the optimal
solutions. If $A\cap B=\emptyset$, then all connection requests can be
satisfied using edges from $E_A\cup
E_B\cup\Set{(a_0,b_{-1}),(a_{-1},b_0)}$. Hence the optimal cost is at most
$2n+2$, which means that any $\rho$-approximate solution cannot include
a heavy edge; and if $A\cap B\ne\emptyset$, then any solution must
include at least one of the
heavy edges, and hence its weight is larger than $\rho(2n+2)$.

It follows that if $\cA$ is a $\rho$-approximate solution to \sfcr,
then the following algorithm solves SD: Alice and Bob construct the
graph based on their local input without any communication. Then Alice
simulates $\cA$ on the $\Set{a_i}$ nodes  and Bob simulates $\cA$ on
the $\Set{b_i}$ nodes. The only communication required between Alice
and Bob to run the simulation is the messages that cross the edges in
$E_{AB}$. Now,  solving SD requires exchanging $\Omega(n)$ bits
in the worst case (see,
e.g., \cite{KushilevitzN-book}). In the $\Congest(\ell)$ model,
at most $\BO(\ell)$ bits can cross $E_{AB}$ in a round, and hence it must
be the case that the running time of $\cA$ is in
$\Omega(n/\ell)\subseteq\Omega(t/\ell)$.
\end{proof}

\noindent\textbf{Remarks.}
\\$\bullet$  In the lower bound,  $n$ is a parameter describing the universe
  size  of the input
  to SD. Let $n'$ denote the number of nodes in the  corresponding
  instance of \sfcr. Note that we can set $n'$ to any number larger
  than $2n+2$ just by adding 
  isolated nodes. Similarly we can extend the diameter to any number
  larger than $3$ so long as it's smaller than $n'-2n+1$ by attaching
  a chain of $n'-(2n+2)$ nodes to $a_1$. Finally, we can also extend
  $k$ to any number larger than $2$ by adding pairs of nodes
  $\Set{(c_i,c_i')}$, each pair connected  by an edge, and have
  $R_{c_i}=\Set{c_i'}$.
\\$\bullet$  Since $D$ is a trivial
lower bound, we may apply
\lemmaref{lemma:transform_to_input} to convert any \sfcr instance with
$k\geq 2$ into an \sfic instance without losing worst-case performance
w.r.t.\ the 
set of the considered parameters. (If we are
guaranteed that $k=1$, the transformation is trivial, as all terminals
are to be connected.) 
\\$\bullet$  We note that in the hard instances of SD, $|A|,|B|\in\Theta(n)$
  and $|A\cap B|\le1$.
\\$\bullet$  The hardness result applies to \sfcr algorithms that do not require
  symmetric requests. More specifically, if the \sfcr algorithm works only
  for inputs satisfying  $\forall u,v(u\in R_v \iff v\in R_u)$, then the reduction from
  SD fails.
\\$\bullet$  The special case of MST ($t=n$ and $k=1$) can be solved
in time $\sO(\sqrt n+D)$ \cite{KuttenP-98}.

\begin{proof}[Proof of \lemmaref{lem:lower_k}]
As in \lemmaref{lem-lb1}, we reduce Set Disjointness (SD) to \sfic.
Specifically, 
the reduction is as follows.
Let $A,B$ be the input sets
to Alice and Bob, respectively, where $|A|,|B|\subseteq[n]$.
Alice constructs a star whose leaves are the nodes
$\Set{a_i}_{i=1}^n$, all connected to a center node $a_0$ (see
\figureref{fig-lb} right).  
 For each node $a_i$ Alice sets $\lambda(a_i)=i$ if $i\in A$ and
$\lambda(a_i)=\bot$ otherwise. Similarly Bob
constructs another star whose leaves are $\Set{b_i}_{i=1}^n$, all
connected to the center node $b_0$, 
and sets $\lambda(b_i)=i$ if $i\in B$ and $\lambda(b_i)=\bot$
otherwise. In addition the instance to \sfic contains the edge
$(a_0,b_0)$. All edges have unit weight.
Note that using \sfic terminology, we have that the number of input
components satisfies $k\le n$.

We now claim that given any $\rho$-approximation algorithm $\cA$ for \sfic,
the following algorithm solves SD: Alice and Bob construct the graph
(without any communication), and then they simulate 
$\cA$, where Alice simulates all the $\Set{a_i}$ nodes and Bob simulates
all the $\Set{b_i}$ nodes. The answer to SD is YES iff the
edge $(a_0,b_0)$ is not in the output of $\cA$.
To show the algorithm correct, consider two cases.
If the SD instance is  a NO instance, then there exists some $i\in
A\cap B$, which implies, by construction, that $a_i$ and $b_i$ must be
connected by the output edges, and, in particular, the edge  $(a_0,b_0)$ must 
be in the output of $\cal A$ (otherwise $\cal A$ did not produce a
valid output); and if the SD
instance was   a 
NO instance, then the optimal solution to the constructed \sfic instance
contains no edges, i.e., its weight is $0$, and therefore no
finite-approximation 
algorithm may include any edge, and in particular the edge $(a_0,b_0)$, in its output. This
establishes the correctness of the reduction. 

Finally, we note that 
the simulation of $\cA$ requires communicating only the messages
that are sent over the edge $(a,b)$. Since, as mentioned above, any
algorithm for SD 
requires communicating $\Omega(n)$ bits between Alice and Bob, we conclude that if $\cA$
guarantees finite approximation ratio, the number of bits  it must
communicate over $(a_i,b_i)$ is in $\Omega(n)\subseteq\Omega(k)$, and
since in the 
$\Congest(\ell)$ model only $\BO(\ell)$ bits can be communicated over
a single edge in each round,
it must be the case that the running time of $\cA$ is in
$\Omega(k/\ell)$.
\end{proof}

\begin{proof}[Proof of \lemmaref{lem:lower_s}]
Follows from the observation that the shortest $s$-$t$ path is a
special case of the Steiner Forest problem where $s$ and $t$ are the
only two terminals, belonging to the same component. Therefore the
lower bound of 
\cite{DHKNPPW-11} on distributed algorithms solving the shortest $s$-$t$
path problem applies.
\end{proof}

\section{Basic Moat Growing Algorithm}
\label{app-basic}
\begin{algorithm}[H]\small
\caption{Centralized Moat-Growing.}\label{algo:centralized}
\SetKwInOut{Input}{input}
\SetKwInOut{Output}{output}
\Input{$\forall v\in V: \Comp(v)\in \Lambda\cup \{\bot\}$\hfill // input
components}
\Output{feasible forest $F\subseteq E$\hfill // $2$-approximation}
$\M_1:=\{\{v\}\,|\,v\in T\}$ \hfill// moats partition $T$; for
$v\in T$, let $M_i(v)\in \M_i$ s.t.\ $v\in M_i(v)$\\
\For{each $v\in T$}{
  $\moat_0(v):=0$\hfill // by how much moats grew while $v$'s moat was active\\
  $\Comp_1(\{v\}):=\lambda(v)$ \hfill // input components are merged when
  moats merge\\
  $\act_1(\{v\}):=\true$ \hfill // satisfied components' moats become inactive
}
$F_0:=\emptyset$ \hfill // set of selected edges\\
$i:=0$\\
\While{$\exists M\in \M_{i+1}: \act_{i+1}(M)=\true$}{
  $i:=i+1$\\
  $\mu':=\min_{\mu\in \R^+_0}\exists v,w\in T:$\\
  $~~\act_i(M_i(v))=\act_i(M_i(w))=\true \wedge
  \Wd(v,w)=\moat_{i-1}(v)+\moat_{i-1}(w)+2\mu$\\
  $\mu'':=\min_{\mu\in \R^+_0}\exists v,w\in T:$\\
  $~~\act_i(M_i(v))\neq\act_i(M_i(w))=\false \wedge
  \Wd(v,w)=\moat_{i-1}(v)+\moat_{i-1}(w)+\mu$\\
  $\mu_i:=\min\{\mu',\mu''\}$ \hfill // minimal moat growth so
  that two moats touch\\
  \For{each $u\in T$ with $\act_i(M_i(u))=\true$}{
    $\moat_i(u):=\moat_{i-1}(u)+\mu_i$\hfill // grow moats
  }
  Denote by $v_i,w_i\in T$ a pair of terminals giving rise to
  $\mu_i$\nllabel{line:select_v_i_w_i}\\
  Let $E_p$ be the edge set of a least-weight path from $v_i$ to $w_i$
  (drop edges in cycles with $F_i$)\\
  $F_i:=F_{i-1}\cup E_p$\hfill // connect $M_i(v)$ and $M_i(w)$\\
  $\M_{i+1}:=\M_i\cup \{M_i(v_i)\cup M_i(w_i)\}\setminus
  \{M_i(v_i),M_i(w_i)$\}\hfill // merge moats\\
  \For{each $M\in \M_{i+1}$}{
    \If{$M=M_i(v_i)\cup M_i(w_i)$}{
      $\Comp_{i+1}(M):=\Comp_i(M_i(v_i))$
    }
    \ElseIf{$\Comp_i(M)=\Comp(M_i(w_i))$}{
      $\Comp_{i+1}(M):=\Comp_i(M_i(v_i))$\hfill // merge input components (if
      different)
    }
    \Else{
      $\Comp_{i+1}(M):=\Comp_i(M)$
    }
  }
  \If{$\{M\in \M_{i+1}\,|\,\Comp_{i+1}(M)=\Comp_i(M_i(v_i))\}=\{M_i(v_i)\cup
  M_i(w_i)\}$}{
    $\act_{i+1}(M_i(v_i)\cup M_i(w_i)):=\false$ \hfill// new moat's
    component connected by $F_i$
  }
  \Else{
    $\act_{i+1}(M_i(v_i)\cup M_i(w_i)):=\true$
  }
  \For{$M\in \M_{i+1}\setminus \{M_i(v_i)\cup M_i(w_i)\}$}{
    $\act_{i+1}(M):=\act_i(M)$
  }
}
\Return{minimal feasible subset of $F_i$}\hfill // may have selected useless
paths
\end{algorithm}

\begin{definition}[Merges]
Each iteration of the while-loop of \algref{algo:centralized} is
called  a
\emph{merge step}, or simply a \emph{merge}.  The total number of
merges is denoted $i_{\max}$. The number of active moats
during the $i^{th}$ merge  is denoted $\act_i$, i.e., $\act_i:=|\{M\in
\M_i\,|\,\act_i(M_i)=\true\}|$.
\end{definition}

\begin{lemma}\label{lemma:components}
For $i\in \{0,\ldots,i_{\max}\}$, the set $F_i$ computed by
\algref{algo:centralized} is an inclusion-minimal forest such that each $M\in
\M_{i+1}$ is the cut of $T$ with a component of $(V,F_i)$.
\end{lemma}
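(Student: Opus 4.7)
The plan is to prove the claim by induction on $i$. For the base case $i=0$, we have $F_0 = \emptyset$ and $\M_1 = \{\{v\}: v \in T\}$, so each $M \in \M_1$ is a singleton terminal that equals the $T$-cut of its own singleton component in $(V, F_0)$; the empty edge set is trivially inclusion-minimal with this property. Now assume the claim for $F_{i-1}$ and $\M_i$, and show it for $F_i = F_{i-1}\cup E_p$ and $\M_{i+1}$.

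First, $F_i$ is a forest: $F_{i-1}$ is a forest by the inductive hypothesis, and the algorithm explicitly drops from $E_p$ any edge that would close a cycle, so no cycle is introduced. Next, for the cut property, note that dropping cycle-closing edges preserves connectivity, so the components of $(V, F_i)$ agree with those of $(V, F_{i-1} \cup P)$, where $P$ is the chosen shortest $v_i$-$w_i$ path. Hence each component of $F_i$ is either a component of $F_{i-1}$ disjoint from $P$ (unchanged by the inductive hypothesis), or the union of all $F_{i-1}$-components that share a vertex with $P$. This ``big'' merged component certainly contains $M_i(v_i)\cup M_i(w_i)$, so what remains is to show that $P$ does not pick up any extra terminals, i.e., it meets only $F_{i-1}$-components $C$ with $T\cap C\in\{M_i(v_i),M_i(w_i),\emptyset\}$.

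The crux is ruling out that $P$ traverses a node $y$ lying in the $F_{i-1}$-component of some terminal $u$ with $M_i(u)\notin\{M_i(v_i),M_i(w_i)\}$. I would argue by minimality of $\mu_i$: since $y$ lies on the shortest path $P$, its two subpaths are also shortest, giving $\Wd(v_i,y)+\Wd(y,w_i)=\Wd(v_i,w_i)$, and the right-hand side equals $\moat_{i-1}(v_i)+\moat_{i-1}(w_i)+c\mu_i$ with $c\in\{1,2\}$ according to the active/inactive case in the algorithm. Combining this with a bound $\Wd(u,y)\leq\moat_{i-1}(u)$ (the auxiliary invariant that, in any component of $F_{j}$ built from shortest paths between moats whose radii sum to $\moat_{j}(u)$, the $F_{j}$-distance from any inner node to its moat center is at most that radius) and the triangle inequality forces one of the pairs $(v_i,u)$ or $(u,w_i)$ to realize a strictly smaller $\mu$ value, contradicting the algorithm's choice of $v_i,w_i$. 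This auxiliary distance bound is the main technical obstacle; I would prove it as a companion inductive invariant so that it is available exactly when needed.

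Finally, for inclusion-minimality, observe that every edge $e\in F_i$ was added at some step $j\leq i$ as part of the shortest path between terminals $v_j$ and $w_j$. Because $F_i$ is a forest and later steps never introduce cycles, the unique $v_j$-$w_j$ path in $F_i$ coincides with the one established in $F_j$ and therefore passes through $e$. Deleting $e$ from $F_i$ thus disconnects $v_j$ from $w_j$; both are terminals lying in the same element $M\in\M_{i+1}$, so the split produces two non-empty subsets of $M$, breaking the cut property. Hence no edge can be removed, completing the induction.
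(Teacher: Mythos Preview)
Your proof follows the same inductive skeleton as the paper's: anchor at $F_0=\emptyset$, then in the step verify that $F_i$ is a forest, that the new moat $M_i(v_i)\cup M_i(w_i)$ is exactly the $T$-cut of the merged component, and that no edge can be dropped. The inclusion-minimality argument you give is equivalent to the paper's.

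Where you differ is in the cut property. The paper argues only that the shortest $v_i$--$w_i$ path contains no \emph{terminal} of a third moat, and then invokes the induction hypothesis. You correctly note the stronger obstruction: the path might hit a non-terminal node $y$ lying in the $F_{i-1}$-component of some foreign moat $M_i(u)$, which would also glue that component in. Your proposed fix via an auxiliary distance invariant is the right idea and makes the argument genuinely complete, whereas the paper leaves this step implicit.

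One point to sharpen: your invariant as written (``the $F_j$-distance from any inner node to \emph{its moat center} is at most that radius'') is ambiguous, since a moat has many terminals and no single center. The statement you actually need is existential: for every node $y$ in the $F_{i-1}$-component of moat $M\in\M_i$, there \emph{exists} some $u\in M$ with $\Wd(u,y)\le\moat_{i-1}(u)$. This follows because each added path has weight exactly $\moat_j(v_j)+\moat_j(w_j)$, so every node on it lies in $B_G(v_j,\moat_j(v_j))\cup B_G(w_j,\moat_j(w_j))$, and the $\moat$ values never decrease. With this existential form, your pigeonhole/minimality contradiction goes through cleanly (handle the degenerate case $\mu_i=0$ via the tie-breaking convention on path weights).
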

\begin{proof} 
We show the claim by induction on $i$. We have that $\M_1=\{\{v\}\,|\,v\in T\}$
and $F_0=\emptyset$, i.e., the claim holds for $i=0$. Now assume that it holds
for $i\in \{0,\ldots,i_{\max}\}$ and consider index $i+1$. The choice of
$F_{i+1}\setminus F_i$ guarantees that the joint moat
$M_{i+1}(v_{i+1})\cup M_{i+1}(w_{i+1})$ is subset of the same connectivity
component of $(V,F_{i+1})$. To see that no terminal from $T\setminus
(M_{i+1}(v_{i+1})\cup M_{i+1}(w_{i+1}))$ is connected to this component by
$F_{i+1}$, observe that a least-weight path from $v_{i+1}$ to $w_{i+1}$ contains
no terminal from $T\setminus M_{i+1}(v_{i+1})\cup M_{i+1}(w_{i+1})$ (otherwise
it is not of least weight or $\mu_{i+1}$ would not have been minimal). By the
induction hypothesis, this implies that $M_{i+1}(v_{i+1})\cup M_{i+1}(w_{i+1})$
is a maximal subset of $T$ that is in the same component $(V,F_{i+1})$.

It remains to show that $F_{i+1}$ is an inclusion-minimal forest with this
property. Since $F_{i+1}\setminus F_i$ closes no cycles, it follows from the
induction hypothesis that $F_{i+1}$ is a forest. From this and the
inclusion-minimality of $F_i$ it follows that deleting any edge from $F_i$ will
disconnect a pair of terminals in the same moat. Similarly, removing an edge
from $F_{i+1}\setminus F_i$ will disconnect the new moat $M_{i+1}(v_{i+1})\cup
M_{i+1}(w_{i+1})$.
\end{proof}

\begin{lemma}\label{lemma:feasible}
The output $F$ of \algref{algo:centralized} is a feasible forest.
\end{lemma}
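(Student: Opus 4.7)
The plan is to first reduce the statement to showing that $F_{i_{\max}}$ (the edge set built up just before the final \texttt{return} line) is a feasible forest. Since the output $F$ is defined as a minimal feasible subset of $F_{i_{\max}}$, once $F_{i_{\max}}$ itself is feasible, $F$ is feasible by construction and is a forest as a subset of the forest $F_{i_{\max}}$ (the latter by \lemmaref{lemma:components}). By that same lemma, $\M_{i_{\max}+1}$ is exactly the partition of $T$ induced by the connected components of $(V,F_{i_{\max}})$, so feasibility is equivalent to the claim that for each $\lambda\in\Lambda$, all terminals of $C_\lambda$ lie in a single moat $M\in\M_{i_{\max}+1}$.

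The while-loop terminates only when every moat in $\M_{i_{\max}+1}$ is inactive, so I would establish, by induction on $i$, the primary invariant: for every inactive moat $M\in\M_{i+1}$ and every $v\in M$, every $w\in T$ with $\lambda(w)=\lambda(v)$ also lies in $M$. The base case $i=0$ is vacuous, as all initial singleton moats are active. In the step, only $M^\star:=M_i(v_i)\cup M_i(w_i)$ can change its activity status; all other moats inherit $\act_{i+1}=\act_i$ with unchanged terminal content, and a short argument shows they still satisfy the invariant (if a terminal violating the invariant existed, it would sit in a moat untouched by the merge, contradicting the hypothesis at step $i$). The nontrivial case is when $M^\star$ is declared inactive, i.e., no other moat in $\M_{i+1}$ carries the label $\Comp_i(M_i(v_i))$.

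To handle that case I would maintain a secondary invariant tying labels to terminal sets: at every step $i$ and for every current label $\mu$, the union of moats currently labeled $\mu$ is exactly the set of terminals whose original input-component label has been transitively identified with $\mu$ through the merge/relabel rules. This invariant survives each merge because the \textbf{ElseIf} clause propagates the old label $\Comp_i(M_i(w_i))$ to the new representative $\Comp_i(M_i(v_i))$ uniformly across all moats, effectively implementing a union-find-style merger of equivalence classes on $\Lambda$. Granting the secondary invariant, when $M^\star$ becomes inactive its class is represented by $\Comp_i(M_i(v_i))$, and the unique moat bearing this representative is $M^\star$, so the entire class is contained in $M^\star$; since $\lambda(v)$ lies in this class for every $v\in M^\star$, the primary invariant follows. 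Applying the primary invariant at $i=i_{\max}$ (all moats inactive) yields that every $C_\lambda$ lies in a single moat, completing the feasibility proof.

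The main obstacle is the secondary invariant: the label-equivalence bookkeeping must simultaneously account for two effects—merges of active moats sharing a label (which preserve the representative) and merges involving an inactive moat (the $\mu''$ case), which can carry the class of a previously-satisfied input component into a larger moat whose label differs. Proving this invariant carefully, together with the case analysis in the inductive step showing that only $M^\star$ can undergo a status change, is where the bulk of the work lies; once it is in place, the remainder of the argument is essentially a bookkeeping reduction through \lemmaref{lemma:components}.
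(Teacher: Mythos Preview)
Your proposal is correct and follows essentially the same route as the paper: reduce via \lemmaref{lemma:components} to showing that every inactive moat contains only complete input components, and establish this through the invariant that terminals sharing an original label $\lambda(v)$ always reside in moats carrying the same current component identifier (your ``secondary invariant''), so that when a moat becomes the unique holder of its identifier it must contain the entire equivalence class. The paper's proof simply compresses your primary and secondary invariants into the single observation that relabelings are applied uniformly, but the structure and content of the argument are the same.
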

\begin{proof}
By \lemmaref{lemma:components} and the fact that the algorithm terminates once
all moats are inactive, it is sufficient to show that an inactive moat contains
only complete input components.

Note that if the algorithm changes component identifiers, it does so by changing
them for all moats $M\in \M_i$ with $\Comp_i(M)=\lambda$ into some
$\Comp_{i+1}(M)=\lambda'$. Hence all terminals $v\in T$ which initially shared
the same value $\lambda(v)$ are always in moats with identical component
identifiers. Since initially for each $\Comp \in \Lambda$ there are at least two
distinct terminals $v,w\in T$ with $\Comp(v)=\Comp(w)$, for each $\Comp$
initially there are at least two moats $M\in \M_1$ with $\Comp_1(M)=\Comp$. A
merge between moats $M,M'\in \M_i$ assigns component identifier $\Comp_i(M)$ to
all moats with identifier $\Comp_i(M)$ or $\Comp_i(M')$. The merged moat (which
is a connectivity component of $(T,E_i)$) becomes inactive if and only if it is
the only remaining moat with label $\Comp_i(M)$. The statement of the lemma
follows.
\end{proof}

\begin{lemma}\label{lemma:cost} 
For any feasible output $F$, \algref{algo:centralized} satisfies that
\begin{equation*}
W(F)\geq \sum_{i=1}^{i_{\max}}\act_i\mu_i.
\end{equation*}
\end{lemma}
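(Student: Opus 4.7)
My plan is a standard primal-dual (LP duality) argument. For every moat $M$ that appears in the run (i.e., $M \in \M_i$ for some $i$), define
$$
y_M \;:=\; \sum_{\{i\,:\,M\in\M_i\,\wedge\,\act_i(M)=\true\}}\!\!\mu_i,
$$
the total amount by which $M$ grew while active. Since during merge $i$ exactly $\act_i$ moats are active and each is grown by $\mu_i$, we have $\sum_M y_M = \sum_{i=1}^{i_{\max}}\act_i \mu_i$. So the lemma reduces to showing that $y$ is a feasible dual solution to the standard cut LP for Steiner forest, since weak duality against the integral primal indicator vector of $F$ yields the desired bound.

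Concretely, I would check two properties. (i) \emph{Dual feasibility:} for every edge $e\in E$,
$$
\sum_{M\,:\,e\in\delta(M)} y_M \;\le\; W(e),
$$
where $\delta(M)$ is the set of edges with exactly one endpoint in $M$. (ii) \emph{Hitting property:} for every moat $M$ with $y_M>0$ (equivalently, $M$ was active at some step), any feasible forest $F$ satisfies $|F\cap\delta(M)|\ge 1$. Granted both properties,
$$
W(F)\;=\;\sum_{e\in F} W(e)\;\ge\;\sum_{e\in F}\sum_{M\,:\,e\in\delta(M)} y_M\;=\;\sum_M y_M\,|F\cap\delta(M)|\;\ge\;\sum_M y_M\;=\;\sum_{i=1}^{i_{\max}}\act_i\mu_i,
$$
which is exactly the statement of the lemma.

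Property (ii) is immediate from the algorithm's bookkeeping: if $\act_i(M)=\true$ at some step $i$, then by definition the merged input component $\Comp_i(M)$ has a terminal in $M$ and a terminal in $T\setminus M$ (otherwise $M$ would have been set inactive in the previous iteration of the \textbf{if}/\textbf{else} block updating $\act$). Feasibility of $F$ forces a path in $F$ between these two terminals, and such a path must use at least one edge of $\delta(M)$.

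The main obstacle is property (i). The cleanest way to see it is via the ``moat-as-ball'' picture: fix $e=\{u,v\}$ and consider the two (laminar) chains of moats $M$ with $e\in\delta(M)$, according to whether $u\in M$ or $v\in M$. During any step $i$ in which such a moat $M$ is active, its growth by $\mu_i$ translates to the moat's fractional coverage of $e$ increasing by exactly $\mu_i$ from the relevant side (until the endpoint on the other side is engulfed, at which point $e\notin\delta(M)$ thereafter). Hence the sum on the left of (i) equals the combined reach of the two chains into $e$. However, the definitions of $\mu'$ and $\mu''$ in \algref{algo:centralized} imply that as soon as the sum of the two reaches along the shortest $u$--$v$ path through $e$ would equal $W(e)$, a merge event with parameter $\mu_i$ is triggered that unites the two sides into a single moat containing both endpoints of $e$; after this point, $e$ lies inside that moat and never again appears in any $\delta(M)$ with $y_M$ being further incremented. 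Formalizing this requires a short per-edge induction over merge steps together with \lemmaref{lemma:components} to identify ``the moat containing $u$'' unambiguously at each step, but no other ingredients are needed.
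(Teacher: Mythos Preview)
Your high-level strategy---bound $W(F)$ from below via the standard cut-LP dual for Steiner forest---is sound and well known, but your concrete formulation does not work as written. In \algref{algo:centralized}, a moat $M\in\M_i$ is a subset of the \emph{terminal} set $T$, not a region in $V$. With your definition $\delta(M)=\{e:\text{exactly one endpoint of }e\text{ lies in }M\}$, property~(i) fails on any edge with a non-terminal endpoint. Concretely, take terminals $u,v$ with $\Comp(u)=\Comp(v)$, a non-terminal $p$, edges $\{u,p\},\{p,v\}$ of weights $1$ and $9$, and the direct edge $\{u,v\}$ of weight $10$. The moats of $u$ and $v$ merge only when $\moat(u)+\moat(v)=\Wd(u,v)=10$, so $\moat(u)=5$ at the merge. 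For $e=\{u,p\}$, every moat containing $u$ has $e\in\delta(M)$ (since $p\notin T$ is never in any moat), hence $\sum_{M:\,e\in\delta(M)} y_M=\moat_{i_{\max}}(u)\ge 5>1=W(e)$. Your parenthetical ``until the endpoint on the other side is engulfed, at which point $e\notin\delta(M)$'' never triggers for non-terminal endpoints, so the packing constraint is simply false.

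The fix is exactly what your ``moat-as-ball'' intuition suggests but your $y_M$ does not encode: put the dual mass on the growing \emph{regions} $\bigcup_{v\in M}B_G(v,\moat_i(v))\subseteq V$, not on $M\subseteq T$. Then (i) becomes the statement that the fractional coverage of each edge by active regions never exceeds its weight, which is precisely what $\mu'$ and $\mu''$ enforce; and (ii) follows because an active region separates a demand pair. This is the standard Agrawal--Klein--Ravi / Goemans--Williamson packing argument, and it does give the lemma. The paper itself takes a different route: a direct induction on $i_{\max}$ that peels off the first layer of growth (splitting $W=W_1+W_2$ with $W_1$ the mass inside the radius-$\mu_1$ balls around terminals), charges $W_1(F)\ge \act_1\mu_1$ via disjointness of these balls, and then constructs a reduced instance on $(V,E,W_2)$ to which the induction hypothesis applies. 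Both arguments ultimately exploit the same disjointness of the growing balls; the paper's avoids setting up the LP entirely at the cost of a somewhat intricate case analysis when the first merged moat becomes inactive.
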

\begin{proof}
We show the statement by induction on $i_{\max}$. The statement is trivial for
$i_{\max}=0$ (i.e., no input components), so suppose it holds for
$i_{\max}\in \N_0$ and consider $i_{\max}+1$. We split up the weight function
$W$ into $W_1+W_2$ so that $W_1(F)\geq \act_1\mu_1$ and define a modified
instance to which we can apply the induction hypothesis, proving that $W_2(F)\geq
\sum_{i=2}^{i_{\max}+1}\act_i\mu_i$.

For each $e\in E$, define $W_1$ to be $W$ within $\bigcup_{v\in T}
B_G(v,\mu_1)$ and $0$ outside (boundary edges have the appropriate fraction of
their weight) and $W_2:=W-W_1$. Consider the edge set $F_C$ of a connectivity
component $C\subseteq T$ induced by $F$. We claim that if it contains $n_C\geq
2$ nodes, it must hold that $W_1(F_C)\geq n_C\mu_1$. To see this, note that the
choice of $\mu_1$ guarantees that the $B_{\mu_1}(v)$ are disjoint for all
$v\in T$. Moreover, by definition, any path connecting $v\in T$ to a node
outside $B_{\mu_1}(v)$ must contain edges of weight at least $\mu_1$ within
$B_{\mu_1}(v)$. The claim follows. Summing over all connectivity components
$C\subseteq T$ induced by $F$ (which satisfy $n_C\geq 2$ since by the problem
definition each terminal must be connected to at least one other terminal), we
infer that $W_1(F)\geq |T|\mu_1=\act_1\mu_1$.

Recall that $\M_1=\{\{v\}\,|\,v\in T\}$. We take the following steps:
\begin{compactitem}
  \item The algorithm replaces the moats $\{v_1\}$ and $\{w_1\}$ by the
  joint moat $\{v_1,w_1\}$. For the purpose of our induction, we simply
  interpret this as setting $T':=T\setminus \{w_1\}$ if the resulting moat is
  active.
  \item If the merge connected the only two terminals $v_1$ and $w_1$ sharing
  the same component identifier, the respective moat becomes inactive. In this
  case, we also remove $v_1$ from $T$, i.e., $T':=T\setminus \{v_1,w_1\}$.
  \item The algorithm assigns to all moats $M\in \M_1$ with
  $\Comp_1(M)=\Comp_1(M_1(w_1))$ the component identifier $\Comp(v_1)$, i.e.,
  $\Comp_2(M):=\Comp_1(M_1(v_1))$. Analogously, we set $\Comp'(v):=\Comp(v)$
  for all $v\in T'\setminus \{v\in T\,|\,\Comp(v)=\Comp(w_1)\}$ and
  $\lambda'(v):=\lambda(w_1)$ for $v\in T'\cap \{v\in
  T\,|\,\Comp(v)=\Comp(w_1)\}$.
  \item Note that the previous steps guarantee that for each terminal $v\in T'$,
  there is a terminal $v\neq w\in T'$ so that $\Comp'(v)=\Comp'(w)$.
  \item The new instance of the problem is now given by the graph
  $G'=(V,E,W_2)$, the terminal set $T'$, and the terminal component function
  $\Comp'$.
\end{compactitem}
Consider an execution of \algref{algo:centralized} on the new instance. We make
the following observations:
\begin{compactitem}
  \item For each $v\in T$ and any radius $r\in \R^+_0$, it holds that
  $B_{G'}(v,r)=B_G(v,r+\mu_1)$. 
  \item Since $B_{G'}(v_1,r)=B_{G'}(w_1,r)$ (as their distance in $G'$ is $0$),
  deleting $w_1$ from the set of terminals has the same effect as joining them
  into one moat.
  \item Hence, if the merged moat $\{v_1,w_1\}$ remains active and thus $v_1$ is
  part of the set of terminals of the new instance, we get a one-to-one
  correspondence between merges of the two instances, i.e., it holds that
  $\act_{i+1}=\act_i'$ and $\mu_{i+1}=\mu_i'$ for all $i\in
  \{1,\ldots,i_{\max}\}$ (where $'$ indicates values for the new instance).
  \item By the induction hypothesis, this implies that
  \begin{equation*}
  W_2(F)\geq \sum_{i=1}^{i_{\max}}\act_i'\mu_i'=
  \sum_{i=2}^{i_{\max}+1}\act_i\mu_i.
  \end{equation*}
  \item If $\{v_1,w_1\}$ became inactive, but never participates in a merge, the
  same arguments apply.
\end{compactitem}
Hence, suppose that $\{v_1,w_1\}\in \M_{i_0}$ participates in a merge in step
$i_0$. For all indices $i<i_0-1$, the above correspondence holds. Moreover,
since $\{v_1,w_1\}$ is inactive, (i) the moat $M\in \M_{i_0}$ with which it is
merged must satisfy that $\act_{i_0}(M)=\true$ and (ii) we have that
$\act_{i_0+1}(M\cup \{v_1,w_1\})$, i.e., the resulting moat is active (as
$\Comp_{i_0}(\{v_1,w_1\})=\Comp_{i_0}(M')$ for any $M'\in \M_{i_0}$ would
contradict the fact that $\{v_1,w_1\}$ is inactive). Thus, the merge does not
affect the number of active moats, i.e., $\act_{i_0+1}=\act_{i_0}$.
Furthermore, it holds that $\moat_{i_0}(v_1)=\moat_{i_0}(w_1)=\mu_1$, since
$\{v_1,w_1\}$ has been active only during merge $1$. We conclude that, for any
$r\in \R^+_0$,
\begin{equation*}
\bigcup_{v\in M\cup \{v_1,w_1\}}B_G(v,\moat_{i_0+1}(v)+r)
= \bigcup_{v\in M}B_{G'}(v,\moat_{i_0-1}'(v)+r),
\end{equation*}
as the moats of size $\mu_1$ around $v_1$ and $w_1$ at the end of the $i^{th}$
merge exactly compensate for the fact that the edges inside the respective
weighted balls in $G$ have no weight in $G'$. By induction on $i\in
\{i_0+1,\ldots,i_{\max}+1\}$, it follows that, for any $r\in \R^+_0$,
\begin{equation*}
\bigcup_{v\in M\cup \{v_1,w_1\}}B_G(v,\moat_i'(v)+r)
= \bigcup_{v\in M}B_{G'}(v,\moat_{i-2}'(v)+r),
\end{equation*}
and we can map the following merges of the two runs onto each other, i.e.,
$\mu_{i_0}+\mu_{i_0+1}=\mu_{i_0-1}'$ and, for $i\in
\{i_0,\ldots,i_{\max}-1\}$, $\mu_{i+2}=\mu_i'$ as well as
$\act_{i+2}=\act_i$. In particular,
\begin{equation*}
\act_{i_0}\mu_{i_0}+\act_{i_0+1}\mu_{i_0+1}
=\act_{i_0}(\mu_{i_0}+\mu_{i_0+1})=\act_{i_0-1}'\mu_{i_0-1}',
\end{equation*}
and the induction hypothesis yields that
\begin{equation*}
W_2(F)\geq \sum_{i=1}^{i_{\max}-1}\act_i'\mu_i'=
\sum_{i=2}^{i_{\max}+1}\act_i\mu_i.
\end{equation*}
Hence, in both cases $W(F)=W_1(F)+W_2(F)\geq
\sum_{i=1}^{i_{\max}+1}\act_i\mu_i$, and the proof is complete.
\end{proof}

\begin{proof}[Proof of \theoremref{theorem:2approx}]
By \lemmaref{lemma:feasible}, the output $F$ of the algorithm is a feasible
forest. With each merge, the algorithm adds the edges of a path of cost
$\moat_i(v_i)+\moat_i(w_i)$ to $F$. Hence
\begin{equation*}
W(F)\leq  \sum_{i=1}^{i_{\max}}\moat_i(v_i)+\moat_i(w_i)\\
=  \sum_{i=1}^{i_{\max}}\left(\sum_{\substack{j=1\\ \act^{(j)}(M_j(v_j))=\true}}^i
\mu_j +\sum_{\substack{j=1\\ \act^{(j)}(M_j(w_j))=\true}}^i \mu_j\right).
\end{equation*}

We construct $G'$ and $F'$ from $(V,F\cup F_{i-1})$ by contracting edges in
$B_G(v,\sum_{j=1}^{i-1}\mu_j)$ for all $v\in T$. If edges are ``partially
contracted'' since they are only fractionally part of
$B_G(v,\sum_{j=1}^{i-1}\mu_j)$ for some $v\in T$, their weight simply is
reduced accordingly; note that since $F$ is a forest, no edges are ``merged'',
i.e., the resulting weights are well-defined. By \lemmaref{lemma:components},
this process identifies for each moat $M\in \M_{i-1}$ its terminals. Note that
the edges from $F_{i-1}$ are completely contained in these balls. We interpret
the set of active moats $T'=\{M\in \M_i\,|\,\act_i(M)=\true\}$ (which after
contraction are singletons) as the set of terminals in $G'$. Since $F$ is
minimal w.r.t.\ satisfying all constraints, so is $F'$ (where in $G'$ two
terminals need to be connected if the corresponding moats contain terminals
that need to be connected). As only active moats contain terminals with
unsatisfied constraints (cf.~\lemmaref{lemma:feasible}), $F'$ is the union of at
most $|T'|-1=\act_i-1$ shortest paths between terminal pairs from $T'$ that
contain no other terminals.

Now consider the balls $B_{G'}(v,\mu_i)$ around nodes $v\in T'$. By the choice
of $\mu_i$, they are disjoint. For each such ball $B_{G'}(v,\mu_i)$, by
definition any least-weight path has edges of weight at most $\mu_i$ within the
ball. We claim that any path in $F'$ that connects nodes $v,w\in T'$, but
contains no third node $u\in T'\setminus \{v,w\}$, does not pass through
$B_{G'}(u,\mu_i)$ for any $u\in T'$. Otherwise, consider the subpath from $v$
to a node in $B_{G'}(u,\mu_i)$ for some $u\in T'\setminus \{v,w\}$ and
concatenate a shortest path from its endpoint to $u$. The result is a path
from $v$ to $u$ that smaller weight than the original path from $v$ to $w$.
Symmetrically, there is a path shorter than the one from $v$ to $w$ connecting
$w$ and $u$. However, together with the fact that the algorithm connects moats
incrementally using least-weight paths of ascending weight implies that the
pairs $\{v,u\}$ and $\{w,u\}$ must end up in the same moat \emph{before} the
path connecting $v$ and $w$ is added. By transitivity of connectivity this
necessitates that $v$ and $w$ are in the same moat when a path connecting them
is added, a contradiction. We conclude that indeed each of the considered paths
passes through the balls around its endpoints only.

Overall, we obtain that in the above double summation, for each index $i$, there
are at most $2(\act_i-1)$ summands of $\mu_i$: $2$ for each of the at
most $\act_i-1$ paths connecting nodes in $T'$ considered in the previous
paragraph (note that the contraction did not change weights of edges covered by
these summands). We conclude that
\begin{equation*}
W(F) \leq  \sum_{i=1}^{i_{\max}}\left(\sum_{\substack{j=1\\
\act^{(j)}(M_j(v_j))=\true}}^i \mu_j +\sum_{\substack{j=1\\
\act^{(j)}(M_j(w_j))=\true}}^i \mu_j\right)\\
\leq \sum_{i=1}^{i_{\max}}2(\act_i-1)\mu_i\\
< 2\sum_{i=1}^{i_{\max}}\act_i\mu_i.
\end{equation*}
By \lemmaref{lemma:cost}, this is at most twice the cost of any feasible
solution. In particular, the cost of $F$ is smaller than twice that of an
optimal solution.
\end{proof}

\section{Rounded Moat Radii}
\label{app-epsilon}

\begin{algorithm}[p!]\small
\caption{Centralized Approximate Moat-Growing with
approximation ratio $(2+\varepsilon)$.}\label{algo:central_approx}
\SetKwInOut{Input}{input}
\SetKwInOut{Output}{output}
\Input{$\forall v\in V: \Comp(v)\in \Lambda\cup \{\bot\}$\hfill // input
components}
\Output{feasible forest $F\subseteq E$\hfill // $2$-approximation}
$\M_1:=\{\{v\}\,|\,v\in T\}$ \hfill// moats partition $T$; for
$v\in T$, let $M_i(v)\in \M_i$ s.t.\ $v\in M_i(v)$\\
\For{each $v\in T$}{
  $\moat_0(v):=0$\hfill // by how much moats grew while $v$'s moat was active\\
  $\Comp_1(\{v\}):=\lambda(v)$ \hfill // input components are merged when
  moats merge\\
  $\act_1(\{v\}):=\true$ \hfill // satisfied components' moats become inactive
}
$F_0:=\emptyset$ \hfill // set of selected edges\\
$i:=0$\\
$\hat{\mu}:=1$\\
\While{$\exists M\in \M_{i+1}: \act_{i+1}(M)=\true$}{
  $i:=i+1$\\
  $\mu':=\min_{\mu\in \R^+_0}\exists v,w\in T:$\\
  $~~\act_i(M_i(v))=\act_i(M_i(w))=\true \wedge
  \Wd(v,w)=\moat_{i-1}(v)+\moat_{i-1}(w)+2\mu$\\
  $\mu'':=\min_{\mu\in \R^+_0}\exists v,w\in T:$\\
  $~~\act_i(M_i(v))\neq\act_i(M_i(w))=\false \wedge
  \Wd(v,w)=\moat_{i-1}(v)+\moat_{i-1}(w)+\mu$\\
  $\mu_i:=\min\{\mu',\mu''\}$ \hfill // minimal moat growth so
  that two moats touch\\
  \If{$\sum_{j=1}^i \mu_j\geq \hat{\mu}$\nllabel{line:growth_phase}}{
    $\mu_i:=\hat{\mu}-\sum_{j=1}^{i-1} \mu_j$ \hfill // stop moat growth at
    $\hat{\mu}$\\
    $F_i:=F_{i-1}$\hfill // no merge, just checking whether moats are active\\
    $\M_{i+1}:=\M_i$\\
    \For{each $M\in \M_i$}{
      $\Comp_{i+1}(M):=\Comp_i(M)$\\
      \If{$\{M'\in \M_i\,|\,\Comp_i(M')=\Comp_i(M)\}=\{M\}$}{
        $\act_{i+1}(M):=\false$ \hfill// moat's terminals satisfied
      }
      \Else{
        $\act_{i+1}(M):=\true$
      }
    }
    $\hat{\mu}:=(1+\varepsilon/2)\hat{\mu}$\hfill // threshold for next check
  }
  \Else{
    Denote by $v_i,w_i\in T$ a pair of terminals giving rise to $\mu_i$\\
    Let $E_p$ be the edges of a least-weight path from $v_i$ to $w_i$
    (drop edges in cycles with $F_i$)\nllabel{line:select_path}\\
    $F_i:=F_{i-1}\cup E_p$\hfill // connect $M_i(v)$ and $M_i(w)$\\
    $\M_{i+1}:=\M_i\cup \{M_i(v_i)\cup M_i(w_i)\}\setminus
    \{M_i(v_i),M_i(w_i)$\}\hfill // merge moats\\
    $\Comp_{i+1}(M_i(v_i)\cup M_i(w_i)):=\Comp_i(M_i(v_i))$\\
    $\act_{i+1}(M_i(v_i)\cup M_i(w_i)):=\true$\\
    \For{each $M\in \M_{i+1}\setminus \{M_i(v_i)\cup M_i(w_i)\}$}{
      \If{$\Comp_i(M)=\Comp(M_i(w_i))$}{
        $\Comp_{i+1}(M):=\Comp_i(M_i(v_i))$\hfill // merge input components (if
        different)
      }
      \Else{
        $\Comp_{i+1}(M):=\Comp_i(M)$
      }
      $\act_{i+1}(M):=\act_i(M)$
    }
  }
  \For{each $u\in T$ with $\act_i(M_i(u))=\true$}{
    $\moat_i(u):=\moat_{i-1}(u)+\mu_i$\hfill // grow moats
  }
}
\Return{minimal feasible subset of $F_i$}\hfill // may have selected useless
paths
\end{algorithm}
\clearpage

\begin{corollary}\label{coro:cost_approx}
For any solution $F$, \algref{algo:central_approx} satisfies that
\begin{equation*}
\left(1+\frac{\varepsilon}{2}\right)W(F)\geq \sum_{i=1}^{i_{\max}}\act_i\mu_i,
\end{equation*}
where $i_{\max}$ is the final iteration of the while-loop of the algorithm.
\end{corollary}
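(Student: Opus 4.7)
The plan is to mirror the inductive proof of Lemma~\ref{lemma:cost}, organizing the argument by growth phases of \algref{algo:central_approx} rather than by individual merge steps. Within a single growth phase, no moats get deactivated (activity is only reconsidered at the threshold checks), so the dynamics inside a phase are identical to those of \algref{algo:centralized} restricted to the set of currently-active moats. This is exactly the regime in which the disjoint-balls charging of Lemma~\ref{lemma:cost} applies: we can bill each phase's contribution to $\sum_i\act_i\mu_i$ against the weight of $F$ inside the newly swept moat regions.

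Concretely, I would first rewrite
\begin{equation*}
\sum_{i=1}^{i_{\max}}\act_i\mu_i \;=\; \sum_{v\in T}r_v,
\end{equation*}
where $r_v$ is the accumulated radius of $v$'s moat at the moment it is deactivated. In \algref{algo:central_approx}, deactivation always occurs at the end of some growth phase $g_v$, so $r_v=\hat{\mu}_{g_v}$. The key observation is that $v$'s input component was \emph{not} yet fully connected at the preceding threshold check at $\hat{\mu}_{g_v-1}=\hat{\mu}_{g_v}/(1+\varepsilon/2)$ (otherwise the check would already have deactivated $v$). Hence, applying the disjoint-balls argument of Lemma~\ref{lemma:cost} conservatively, with each moat credited only up to the previous threshold, forces
\begin{equation*}
W(F) \;\geq\; \sum_{v\in T}\hat{\mu}_{g_v-1} \;=\; \frac{1}{1+\varepsilon/2}\sum_{v\in T}r_v,
\end{equation*}
which is the claim.

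The main obstacle will be executing the disjoint-weight charging across mid-phase merges: when two active moats meet inside a growth phase, they combine into a single active moat that keeps growing until the phase ends, and the $W=W_1+W_2$ splitting used in Lemma~\ref{lemma:cost}'s inductive step must be adapted so that the resulting contracted instance corresponds to the state of \algref{algo:central_approx} at the next phase boundary (with the threshold $\hat{\mu}$ shifted accordingly). I expect the cleanest way to handle this is a two-level induction: an inner induction over the merges inside a single phase, structurally identical to that in Lemma~\ref{lemma:cost} but with all newly merged moats kept active, combined with an outer induction over the growth phases themselves, at each step of which the accumulated growth $\hat{\mu}_{g-1}$ is subtracted from the moat radii before invoking the induction hypothesis.
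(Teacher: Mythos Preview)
Your opening identity
\[
\sum_{i=1}^{i_{\max}}\act_i\mu_i \;=\; \sum_{v\in T}r_v
\]
is false: $\act_i$ counts active \emph{moats}, not active \emph{terminals}. Once two terminals merge they contribute one summand $\mu_i$ to $\act_i\mu_i$, but both of their $r_v$'s continue to grow. For instance, with three terminals $a,b,c$ that merge pairwise over two steps with $\mu_1=\mu_2=1$, one gets $\sum_i\act_i\mu_i=3+2=5$ while $r_a=r_b=r_c=2$ and $\sum_v r_v=6$. The same confusion undermines the proposed charging $W(F)\ge\sum_{v\in T}\hat\mu_{g_v-1}$: the disjoint-balls argument of \lemmaref{lemma:cost} needs the balls to be pairwise disjoint, which holds for distinct \emph{moats} at a given time but not for distinct \emph{terminals} inside the same moat. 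Finally, $r_v=\hat\mu_{g_v}$ is not generally true either, since a terminal can sit in an inactive moat for several growth phases, get re-activated by a merge with an active moat, and be deactivated later; its accumulated radius is then strictly smaller than the current threshold.

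The paper avoids all of this by staying at the moat level. It introduces $u_i$, the number of \emph{unsatisfied} moats at step $i$ (those still needing an external connection), and observes that the proof of \lemmaref{lemma:cost} goes through verbatim to give $W(F)\ge\sum_i u_i\mu_i$. The gap $\sum_i(\act_i-u_i)\mu_i$ is then charged to ``bad'' moats---satisfied but not yet deactivated because the next threshold check is pending---and each such moat's extra growth until the next check is at most $\varepsilon/2$ times the total growth so far; a disjointness argument on the chains of unsatisfied predecessors shows these charges do not collide, yielding $\sum_i\act_i\mu_i\le(1+\varepsilon/2)\sum_i u_i\mu_i$. If you want to rescue your phase-based plan, you would need to rewrite everything in terms of moats rather than terminals and then essentially rediscover this unsatisfied/active comparison.
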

\begin{proof}
Denote by $u_i$ the number of unsatisfied moats in the $i^{th}$ iteration of the
while-loop of \algref{algo:central_approx}, i.e., the moats which can terminals
that need to be connected to terminals in different moats. Analogously to
\lemmaref{lemma:cost}, we have that
\begin{equation*}
W(F)\geq \sum_{i=1}^{i_{\max}}u_i\mu_i.
\end{equation*}
Now consider a satisfied moat $M_i\in \M_i$ that is formed in iteration $i-1$
out of two unsatisfied moats; we call such a moat bad. Denote by $j(M_i)\geq i$
the first iteration in which a moat $\bar{M}\supseteq M$ is unsatisfied or
inactive, whichever happens earlier. Since the minimal edge weight is $1$ and
$\hat{\mu}$ is increased by factor $1+\varepsilon/2$ whenever the algorithm
checks whether to inactivate moats, it holds that $\sum_{k=i}^{j(M_i)-1}
\mu_k\leq \varepsilon/2\cdot \sum_{k=1}^{i-1} \mu_k$. As an unsatisfied moat
can only be created by merging an unsatisfied moat (with a satisfied or
unsatisfied moat), there is a sequence of unsatisfied moats $M_0\subseteq
M_1\subseteq\ldots\subseteq M_{i-1}$ such that $M_{i-1}\subset M_i$.

We observe that if we pick a different moat $M'$ and merge $i'$ as above and
apply the same construction, the resulting sequence
$M_0'\subseteq\ldots\subseteq M_{i'-1}'$ must be disjoint from the sequence
$M_0\subseteq\ldots\subseteq M_{i-1}$, since for each $j\in
\{1,\ldots,i_{\max}\}$, the set of moats $\M_j$ forms a partition of $T$ and
the sequences contain no unsatisfied moats. We conclude that
\begin{equation*}
\sum_{i=1}^{i_{\max}}\act_i\mu_i \leq \sum_{i=1}^{i_{\max}}u_i\mu_i
+ \sum_{i=1}^{i_{\max}}\sum_{\substack{M_i\in \M_i\\ M_i \mbox{
bad}}}\sum_{k=i}^{j(M_i)-1}\mu_k
\leq \left(1+\frac{\varepsilon}{2}\right)\sum_{i=1}^{i_{\max}}u_i\mu_i
\leq \left(1+\frac{\varepsilon}{2}\right)W(F).
\end{equation*}
\end{proof}

\begin{proof}[Proof of \theoremref{theorem:2+eps_approx}]
Analogous to \theoremref{theorem:2approx}, except that the final bound on the
approximation ratio follows from \corollaryref{coro:cost_approx}.
\end{proof}

\section{\texorpdfstring{Proofs for \sectionref{ssec-dist}}{Proofs Concerning
the Distributed Moat Growing Algorithm}}
\label{app-dist}

\begin{proof}[Proof of \lemmaref{lem-numphases}]
  Clearly, the total number of times moats become inactive is at
  most $k$, because every input component becomes completely contained in a moat
  exactly once throughout the execution. When  an inactive moat
  merges, either all its 
  terminals become active again or a new inactive moat is formed. Hence, the
  total number of merges for which the activity status of some terminals change
  is at most $2k$.
\end{proof}

\begin{proof}[Proof of \lemmaref{lemma:partition}]
To compute the Voronoi decomposition in phase $j$, we use the single-source
Bellman-Ford algorithm, where active moats are sources.
All nodes in active moats are initialized with distance $0$, and the edge
weights are given by the reduced weight function $\hat{W}_j$ (which is known
locally, because the moat size is locally known). Messages are tagged by the
identifier of the closest source w.r.t.\ $\hat{W}_j$ (the ``old'' trees are not
touched, but simply extended). In $\BO(s)$ rounds, the Bellman-Ford algorithm
terminates, and the result is that the shortest paths trees are extended to
include all nodes in the respective Voronoi regions $\vor_j$ that are not in
$\reg_{j-1}(v)$ for a terminal $v\in T$ with $\act^{(j)}=\true$, and each node
knows its distance from the closest moat according to $\hat W_j$, i.e.,
$\Wd(v,u)-\moat_{i_{j-1}}(v)$. Finally, observe that nodes in $\reg_{j-1}(v)$
for some $v\in T$ with $\act^{(j)}(v)=\false$ simply can use the information
from the previous phase $j-1$.
\end{proof}

\begin{lemma}\label{lemma:decomp}
For each $j\in \{0,\ldots,j_{\max}\}$, it holds that $\bigcup_{v\in
T}\reg_j(v)=\bigcup_{v\in T}B_{i_j}(v)$.
\end{lemma}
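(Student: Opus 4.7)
The plan is to proceed by induction on $j$. The base case $j = 0$ is immediate: $\reg_0(v) = B_0(v) = \{v\}$ for every $v \in T$, so both sides equal $T$.

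For the inductive step, assume $\bigcup_{v \in T} \reg_{j-1}(v) = \bigcup_{v \in T} B_{i_{j-1}}(v)$. Any terminal $v$ with $\neg\act^{(j)}(v)$ contributes identically to both sides, since $\reg_j(v) = \reg_{j-1}(v)$ by definition and $B_{i_j}(v) = B_{i_{j-1}}(v)$ because an inactive moat does not grow. Using the inductive hypothesis, the claim reduces to showing
\begin{equation*}
\bigcup_{v:\,\act^{(j)}(v)} \left(B_{i_j}(v) \cap \vor_j(v) \setminus \bigcup_{u \in T} B_{i_{j-1}}(u)\right)
\;=\; \bigcup_{v:\,\act^{(j)}(v)} \left(B_{i_j}(v) \setminus \bigcup_{u \in T} B_{i_{j-1}}(u)\right).
\end{equation*}
The inclusion $\subseteq$ is trivial. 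For $\supseteq$, I would fix any $x$ on the right-hand side and let $v^*$ be the active terminal whose Voronoi region (with respect to $\hat W_j$) contains $x$. By definition $x \in \vor_j(v^*)$, so the remaining task is to show $x \in B_{i_j}(v^*)$.

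Two structural observations drive the argument. First, during phase $j$ every active moat grows by the same amount $\delta_j := \sum_{i = i_{j-1}+1}^{i_j}\mu_i$, so $\moat_{i_j}(v') = \moat_{i_{j-1}}(v') + \delta_j$ for every active $v'$. Second, for any active $v$ and any $y \notin \bigcup_{u} B_{i_{j-1}}(u)$, the initial $\moat_{i_{j-1}}(v)$-prefix of a $W$-shortest $v$-$y$ path lies inside $B_{i_{j-1}}(v)$ and hence contributes $0$ to $\hat W_j$, yielding $\hat W_j(v, y) \leq \Wd(v, y) - \moat_{i_{j-1}}(v)$. Applied to $y = x$ with $v$ any active witness of $x \in B_{i_j}(v)$, this gives $\hat W_j(v, x) \leq \delta_j$, and by minimality of $v^*$ also $\hat W_j(v^*, x) \leq \delta_j$.

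The main obstacle is converting this reduced-weight bound into $\Wd(v^*, x) \leq \moat_{i_j}(v^*)$. I would argue that an $\hat W_j$-shortest $v^*$-$x$ path $p^*$ cannot visit any active moat $B_{i_{j-1}}(u)$ of a terminal $u$ belonging to an active moat-set different from $v^*$'s: otherwise, at the first such entry point $y$, the equality $\hat W_j(u, y) = 0$, the prefix-optimality of $p^*$, and the positive $\hat W_j$-separation between distinct active moat-sets would give $\hat W_j(u, x) < \hat W_j(v^*, x)$, contradicting the choice of $v^*$. Consequently the only weight ``hidden'' by $\hat W_j$ along $p^*$ comes from $v^*$'s own moat-set (contributing at most $\moat_{i_{j-1}}(v^*)$ units of $W$-weight) and from inactive balls, which via the shortest-path-tree construction of \lemmaref{lemma:partition} can be accounted for without inflating the bound; assembling the pieces yields $\Wd(v^*, x) \leq \moat_{i_{j-1}}(v^*) + \hat W_j(v^*, x) \leq \moat_{i_j}(v^*)$, i.e., $x \in B_{i_j}(v^*)$, closing the induction.
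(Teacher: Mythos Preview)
Your inductive skeleton matches the paper's exactly: anchor at $j=0$, use the hypothesis to dispose of points already in $\bigcup_u B_{i_{j-1}}(u)$, and for a point $x$ outside that set reduce to showing $x\in B_{i_j}(v^*)$ for the active Voronoi center $v^*$. The paper phrases the last step a little differently---it takes the least-weight path from $x$ to the nearest \emph{active} region and works with $W$-distances throughout, rather than going through $\hat W_j$ and converting back---but the substance is the same.

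Your conversion step, however, has a real gap. The appeal to \lemmaref{lemma:partition} to handle the weight hidden inside inactive balls is not an argument: that lemma tells you how to compute the terminal decomposition distributively, it says nothing about bounding the $W$-weight of $p^*$ inside $\bigcup_{u}B_{i_{j-1}}(u)$. A $\hat W_j$-shortest path of reduced weight $\le\delta_j$ could in principle tunnel through an inactive ball of large radius and emerge with $W$-weight far exceeding $\moat_{i_j}(v^*)$; nothing you wrote rules this out. Relatedly, the claim that $v^*$'s own moat-set hides at most $\moat_{i_{j-1}}(v^*)$ units of $W$-weight is not justified either, since terminals in the same moat can have \emph{different} $\moat$ values (their activity histories before merging may differ), so $p^*$ might exit that moat-set through the boundary of some $B_{i_{j-1}}(w')$ with $\moat_{i_{j-1}}(w')\neq\moat_{i_{j-1}}(v^*)$.

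What actually closes the inactive-moat issue is a structural fact about merge phases that you do not invoke: by \defref{def:merge}, no terminal changes activity strictly inside phase $j$, so an active moat cannot touch an inactive moat before merge $i_j$. Concretely, for every active $v$ and inactive $u$ in distinct moats one has $\Wd(v,u)\ge \moat_{i_j}(v)+\moat_{i_{j-1}}(u)$, hence every point of an inactive ball lies at $W$-distance $\ge\moat_{i_j}(v)$ from $v$. This is what forces the least-weight path from $x$ to the nearest active region to stay out of inactive balls, and it is the reason the paper's more direct $W$-distance formulation works without the $\hat W_j$-to-$W$ juggling. The paper's proof is admittedly terse on this point as well, but its route avoids the detour where your argument breaks down.
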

\begin{proof}
We prove the statement by induction on $j$; it trivially holds for $j=0$, so
consider the induction step from $j-1$ to $j$. For any node (or part of an edge)
in $\bigcup_{v\in T}\reg_{j-1}(v)=\bigcup_{v\in T}B_G(v,\moat_{i_{j-1}}(v))$,
the statement trivially holds by the induction hypothesis. Hence, suppose a node
(or part of an edge) is outside $\bigcup_{v\in T}\reg_{j-1}(v)$ and
consider the least-weight path $p$ that leads to $\bigcup_{v\in T,
\act^{(j)}(v)=\true}\reg_{j-1}(v)$ (for simplicity, suppose it contains no
fractional edges; the general case follows by subdividing edges into lines).
Suppose $v\in T$ is the terminal in whose region $\reg_{j-1}(v)$ the path ends.
Then, by the definition of reduced weights and $\vor_j(v)$, the path is
contained in $\vor_j(v)\setminus \bigcup_{v\in T,\act^{(j)}(v)=\true}\reg_{j-1}(v)$.
Hence, if $W(p)\leq \moat_{i_j}(v)-\moat_{i_{j-1}}(v)$, i.e., the node (or part
of an edge) is contained in $B_{i_j}(v)$, it must be in $\bigcup_{v\in
T}\reg_j(v)$. The choice of $v$ implies that $p\subseteq B_{i_j}(v)$ is
equivalent to $p\subseteq \bigcup_{v\in T,\act^{(j)}(v)=\true}B_{i_j}(v)$. Because
the node (or part of an edge) is outside $\bigcup_{v\in
T}\reg_{j-1}(v)=\bigcup_{v\in T}\reg_j(v)$, this is equivalent to the node (or
part of an edge) being in $\bigcup_{v\in T}B_{i_j}(v)$. We
conclude that $\bigcup_{v\in T}\reg_j(v)=\bigcup_{v\in T}B_{i_j}(v)$, i.e., the
induction step succeeds.
\end{proof}

\begin{proof}[Proof of \lemmaref{lemma:path_in_region}]
Since $p$ is a least-weight path, $W(p)=\Wd(v_i,w_i)$. By the definition of
$\mu_i$, hence $W(p)=\moat_i(v_i)+\moat_i(w_i)$. By \lemmaref{lemma:decomp},
$\bigcup_{v\in T}\reg_{j(i)}(v)=\bigcup_{v\in T}B_{i_{j(i)}}(v))$.
Thus, any path $q$ between to terminals that enters the uncovered region in
phase $j(i)$ must have weight $W(q)>W(p)$; in particular, $p$ cannot enter the
uncovered region.

Hence, assume for contradiction that $p$ enters $\reg_{j(i)}(u)$ for some $u\in
T$. Denote by $p'$ a minimal prefix of $p$ ending at node $x\in \reg_{j(i)}(u)$
for some $u\in T$. We make a case distinction, where the first case is that
$u\in M_i(v_i)$. Consider the concatenation $q_1$ of the suffix of $p$ starting
at $x$ to a least-weight path from $u$ to $x$. By the definition of regions, we
have that
\begin{eqnarray*}
W(q_1)-\moat_{i_{j(i)-1}}(w_i)-\moat_{i_{j(i)-1}}(u) &=& W_{j(i)}(q_1)\\
&<& W_{j(i)}(p)\\
&=& W(p)-\moat_{i_{j(i)-1}}(w_i)-\moat_{i_{j(i)-1}}(v_i).
\end{eqnarray*}
By assumption $u$ and $v_i$ are in the same moat after merge $i-1$, which must
have been active. By the definition of merge phases, $u$ and $v_i$ thus were
both in active moats during all merges $i_{j(i)-1}+1,\ldots,i$. This entails
that their $\moat$ variables have been increased by the same value in each of
these merges, yielding that
\begin{equation*}
W(q_1)-\moat_{i-1}(w_i)-\moat_{i-1}(u)<W(p)-\moat_{i-1}(w_i)-\moat_{i-1}(v_i).
\end{equation*}
As $p$ is a least-weight path from $v_i$ to $w_i$, we conclude that
\begin{equation*}
\Wd(u,w_i)-\moat_{i-1}(w_i)-\moat_{i-1}(u)<\Wd(v_i,w_i)-\moat_{i-1}(w_i)-\moat_{i-1}(v_i).
\end{equation*}
This contradicts the minimality of $\mu_i$, since $u$ is in an active moat in
merge $i$.

Hence it must hold $u\notin M_i(v_i)$, which is the second case. Consider
the path $q_2$ which is the concatenation of a least-weight path between $x$ and
$u$ to $p'$. Similarly to the first case, we have that
\begin{equation*}
W(q_2)-\moat_{i_{j(i)-1}}(v_i)-\moat_{i_{j(i)-1}}(u) 
<W(p)-\moat_{i_{j(i)-1}}(v_i)-\moat_{i_{j(i)-1}}(w_i).
\end{equation*}
If $M_i(u)$ is active, $u$ is in active moats during merges $i\in
\{i_{j(i)-1}+1,\ldots,i\}$, and similarly to the first case we can infer that
\begin{equation*}
\Wd(v_i,u)-\moat_{i-1}(v_i)-\moat_{i-1}(u)<\Wd(v_i,w_i)-\moat_{i-1}(v_i)-\moat_{i-1}(w_i);
\end{equation*}
the same applies if $\act_i(M_i(w_i))=\false$. Again this contradicts the
minimality of $\mu_i$, as $M_i(v_i)$ is active. 

It remains to consider the possibility that $\act_i(M_i(u))=\false$ and
$\act_i(M_i(w_i))=\true$. Symmetrically to the first case, we can exclude that
$u\in M_i(w_i)$. Since $u$ is in inactive moats during phase $j(i)$, it holds that
$\moat_{i-1}(u)=\moat_{i_{j(i)-1}}(u)$. By definition of $q_1$ and $q_2$, we
thus have that
\begin{equation*}
\Wd(v_i,u)+\Wd(w_i,u)-2\moat_{i-1}(u)\leq W(q_1)+W(q_2)-2\moat_{i_{j(i)-1}}(u)
<W(p)=\Wd(v_i,w_i).
\end{equation*}
As $W(p)=\moat_i(v_i)+\moat_i(w_i)\geq \moat_{i-1}(v_i)+\moat_{i-1}(w_i)$, this
yields
\begin{equation*}
\Wd(v_i,u)+\Wd(w_i,u)-2\moat_{i-1}(u)<
2\Wd(v_i,w_i)-\moat_{i-1}(v_i)-\moat_{i-1}(w_i).
\end{equation*}
By the pidgeon hole principle, we obtain that
\begin{equation*}
\Wd(v_i,u)-\moat_{i-1}(u)-\moat_{i-1}(v_i)<
\Wd(v_i,w_i)-\moat_{i-1}(v_i)-\moat_{i-1}(w_i)
\end{equation*}
or that
\begin{equation*}
\Wd(w_i,u)-\moat_{i-1}(u)-\moat_{i-1}(w_i)<
\Wd(v_i,w_i)-\moat_{i-1}(v_i)-\moat_{i-1}(w_i).
\end{equation*}
As both $\act_i(M_i(v_i))=\act_i(M_i(w_i))=\true$ and $u\notin M_i(v_i)\cup
M_i(w_i)$, this contradicts the minimality of $\mu_i$. We conclude that all
cases lead to contradiction and therefore the claim of the lemma is true.
\end{proof}
\begin{proof}[Proof of \lemmaref{lemma:filtering}]
To specify the execution of \algref{algo:centralized}, the following symmetry
breaking rule is introduced: Among all feasible combinations of choices for
$v_i$ and $w_i$ in \lineref{line:select_v_i_w_i}, and paths $p$ in
\lineref{line:select_path}, the algorithm selects the path $p_{v_iew_i}$ such
that $\{v_i,w_i\}\cup e$ is minimal w.r.t.\ the order used in point (iii) of
\defref{def-cand-graph}.

For the respective execution, we show the claim by induction on the merges $i$.
We anchor the induction at $i=0$, for which $F_0=\emptyset$, which equals the
union of edges in the paths associated with $\emptyset$. Hence, consider merge
$i\in \{1,\ldots,i_{\max}\}$, assuming that the claim holds for the first $i-1$
merges/candidate merges in $F_c$. \lemmaref{lemma:path_in_region} shows that the
least-weight path $p_{v_iew_i}$ from $v_i$ to $w_i$ selected by
\algref{algo:centralized} in merge $i$ satisfies that $p_{v_iew_i}\in
\reg_{j(i)}(v_i)\cup \reg_{j(i)}(w_i)$. Since $\act^{(j(i))}(v_i)=\true$ and
$M_{i_{j(i)}+1}(v_i)\subseteq M_i(v_i)\neq M_i(w_i)\supseteq
M_{i_{j(i)}+1}(w_i)$, $e$ induces candidate merge
$(\{v_i,w_i\},j(i),\hat{W}_{j(i)}(p_{v_iew_i}\cap \reg_{j(i)}(v_i)),e)$.

We claim that this candidate merge is the next element of $F_c$ (according to
the order). Assuming otherwise for contradiction, the symmetry breaking rules
specified above imply that there is a candidate merge
$(\{v,w\},j,\hat{W}_j(p_{ve'w}\cap \reg_j(v)),e')$ which (i) satisfies that
$(j,\hat{W}_j(p_{ve'w}\cap \reg_j(v))<j,\hat{W}_j(p_{v_iew_i}\cap \reg_j(v))$
(lexicographically), (ii) closes no cycle with the first $i-1$ selected merges,
and (iii) satisfies that $\act^{(j)}(v)=\true$. By property (ii) and the
induction hypothesis, $M_i(v)\neq M_i(v)$. If $j'<j(i)$, the candidate merge
must have been selected as element $i'<j(i)\leq i$ into $F_c$, contradicting the
fact that no w.r.t.\ $G_c$ duplicate edges are selected into $F_c$. Therefore,
by (i), $j=j(i)$ and $\hat{W}_j(p_{ve'w}\cap \reg_j(v))<\hat{W}(p_{v_iew_i}\cap
\reg_j(v_i))$. By the definition of regions,\footnote{TODO: A bit of a leap
here, but should not be hard to show by a case distinction. Should be done at
some point\ldots} this implies that $\moat_i(v)+\moat_i(w)>\Wd(v,w)$.
It follows that $v$ and $w$ must satisfy that $M_i(v)=M_i(w)$, since otherwise
\algref{algo:centralized} would merge these moats instead in merge $i$. However,
the induction hypothesis and the facts that $F_c$ closes no cycles and contains
no duplicate edges entail that $M_i(v)\neq M_i(w)$, a contradiction; the claim
follows.

Because the path associated with candidate merge
$(\{v_i,w_i\},j(i),\hat{W}_{j(i)}(p_{v_iew_i}\cap \reg_{j(i)}(v_i)),e)$ is
$p_{v_iew_i}$, the induction hypothesis yields that the edge set of the union
of paths associated with the first $i$ elements of $F_c$ is a superset of $F_i$.
Since $p_{v_iew_i}\setminus \{e\}$ is contained in the shortest-path-trees at
$v_i$ and $w_i$, the respective edges close no cycles with the cut of $F_{i-1}$
with the trees rooted at $v_i$ and $w_i$, respectively. Since $M_i(v_i)\neq
M_i(w_i)$, $e$ does not close a cycle in $F_i$ either. We conclude that
\algref{algo:centralized} adds all edges in $p_{v_iew_i}$ to $F_{i-1}$ when
$p_{v_iew_i}$ does not close a cycle with $F_{i-1}$, implying that
constructing $F_i$. Hence, the the edge set of the union of paths associated
with the first $i$ elements of $F_c$ equals $F_i$, the induction step succeeds,
and the proof is complete.
\end{proof}

\begin{proof}[Sketch of Proof of \lemmaref{lemma:filtering}]
We use the edge elimination procedure introduced for MST
\cite{GarayKP-98,KuttenP-98}, which works as follows. We use an (unweighted) BFS
tree rooted at some node $R\in V$, which can be constructed in $\BO(D)$ rounds.
For round $r\in \N$, let $F_u(r)$ denote the set of candidate merges node $u\in
V$ holds at the end of round $r$, where $F_u(0):=E_c(u)$. In each round each
node executes the following convergecast procedure.
\begin{compactenum}
\item ${F}_u(r-1)$ is scanned in ascending weight order, and a merge that closes
a cycle in $G_c$ with the union of $F_c'$ and previous merges is deleted. (This
is possible because the merges are tagged by the connectivity components of
the terminals they join in $(T,F_c')$.)
\item The least-weight unannounced merge in $F_u(r-1)$ is announced by $u$ to
its parent ($R$ skips this step).
\item ${F}_u(r)$ is assigned the union of $F_u(r-1)$ with all merges received
from children.
\end{compactenum}
Once all sets stabilize (which can be detected at an overhead of $\BO(D)$
rounds), the set $F_R(r)$ equals $\bigcup_{j'=1}^j F_c^{(j')}\setminus F_c'$.
Perfect pipelining is achieved, leading to the stated running time bound.
\end{proof}

\begin{proof}[Proof of \lemmaref{coro:filtering}]
Set $F_c':=\bigcup_{j'=1}^{j-1}F_c^{(j')}$. Each node $u\in V$ locally
computes the connectivity components of $(T,F_c')$ and tags the elements of
$E_c(u)$ accordingly. We apply the same procedure as for
\lemmaref{lemma:filtering}, except that we need to detect termination
differently, as we would like to stop the routine once the root knows
$F_c^{(j)}$. The pipelining guarantees that after $D+i$ rounds of the routine,
the first $i$ elements of the ascending list of merges (whose sublist up to
element $|F_c^{(j)}|$ equals $F_c^{(j)}$) are known to the root. Since the root
knows $F_c'$ and, for each $v\in T$, $\Comp(v)$, it can locally compute the
variables $\act^{(j)}(v)$, $v\in V$, and will detect in round $D+|F_c^{(j)}|$ that
some terminal changes its activity status. This enables to determine when to
terminate the collection routine and which elements of $F_R(D+|F_c^{(j)}|)$
constitute $F_c^{(j)}$.
\end{proof}
We put the pieces of our analysis together to bound the time complexity of our
algorithm.
\begin{lemma}\label{lemma:2_time}
The above algorithm can be implemented such that it runs in $\BO(sk+t)$
rounds.
\end{lemma}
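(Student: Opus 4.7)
The plan is to charge the round complexity to the at most $2k$ merge phases guaranteed by \lemmaref{lem-numphases}, plus a single final pruning step. Within each phase $j$ I would (i) extend the shortest-path trees of the previous decomposition using pipelined Bellman-Ford under the reduced weights $\hat W_j$, costing $\BO(s)$ rounds by \lemmaref{lemma:partition}; (ii) have every node locally assemble the candidate merges and false candidates induced by its incident edges, after a single round of exchanging region identifiers and boundary distances with its neighbors; (iii) run the filtering procedure of \corollaryref{coro:filtering}, delivering $F_c^{(j)}$ to a BFS root in $\BO(D+|F_c^{(j)}|)$ rounds; and (iv) pipeline-broadcast $F_c^{(j)}$, the weight of the last merge of the phase (which fixes $\mu_j$ and hence every $\moat_{i_j}(v)$), and the resulting activity-status updates back down the BFS tree in another $\BO(D+|F_c^{(j)}|)$ rounds, so that the algorithm is ready to enter phase $j+1$ with the hypotheses of \lemmaref{lemma:partition} in place.

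Summing step (i) over the at most $2k$ phases yields $\BO(sk)$ rounds; summing steps (iii) and (iv) yields $\sum_j \BO(D+|F_c^{(j)}|) = \BO(kD) + \BO(|F_c|)$. Since $F_c$ is a forest on the $t$ terminals of $G_c$, we have $|F_c| \leq t - 1$, and since every unweighted shortest path is a fortiori weighted-shortest with the trivial unit weighting argument, $D \leq s$. Hence the per-phase work telescopes to $\BO(sk+t)$. To produce the minimal feasible subforest of $F$, I would prune at the granularity of $F_c$: collect $F_c$ (at most $t-1$ edges, each tagged with the identifiers of its two terminals) together with each terminal's component label $\Comp(v)$ at the root in $\BO(D+t)$ rounds, locally compute an inclusion-minimal $F_c^\star \subseteq F_c$ whose induced paths still connect every input component, and pipeline-broadcast $F_c^\star$ back down in $\BO(D+t)$ rounds; each node then removes from $F$ every incident edge whose owning $F_c$-edge is absent from $F_c^\star$. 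This adds $\BO(D+t) \subseteq \BO(sk+t)$ rounds.

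The main obstacle is step (iv): keeping the per-phase downward broadcast inside $\BO(D+|F_c^{(j)}|)$ rounds, even though the next invocation of \lemmaref{lemma:partition} requires every node $u$ in a region $\reg_j(v)$ to know its updated offset $\Wd(v,u) - \moat_{i_j}(v)$. The key observation is that only the global scalar $\mu_j$, the $\BO(|F_c^{(j)}|)$ activity changes, and the constant-size descriptors of the edges in $F_c^{(j)}$ need to be announced; each node can then update its offsets and its $(\act^{(j+1)}, \Comp)$ state locally from that information and the tree it already participates in. Pipelining these announcements on the BFS tree preserves the telescoped bound $\sum_j \BO(D+|F_c^{(j)}|) = \BO(kD + t) \subseteq \BO(sk+t)$, giving the claimed running time.
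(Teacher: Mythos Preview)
Your outline matches the paper's approach: phase-by-phase application of \lemmaref{lemma:partition} and \corollaryref{coro:filtering}, summed over at most $2k$ phases with $|F_c|\le t-1$ and $D\le s$.

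There is one genuine gap. \corollaryref{coro:filtering} has as a precondition that $\Comp(v)$ be globally known for all $v\in T$: the root of the BFS tree terminates the convergecast for phase~$j$ precisely when it sees that the next candidate merge would change some terminal's activity status, and it cannot detect this without the labels. You invoke the corollary in step~(iii) but collect the labels at the root only at the very end, for pruning. The paper fixes this by broadcasting all pairs $(v,\Comp(v))$ over the BFS tree \emph{before} the main loop, in $\BO(D+t)$ rounds.

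Once labels are global, what you flag as ``the main obstacle'' in step~(iv) disappears: after $F_c^{(j)}$ is broadcast, every node already holds $\bigcup_{j'\le j}F_c^{(j')}$ together with all labels, so it can compute the moat partition $\M^{(j+1)}$ and every $\act^{(j+1)}(v)$ by purely local computation (this is Step~3d in the paper). No separate announcement of ``activity changes'' is needed---which is just as well, since the number of \emph{terminals} whose activity flips at the end of a phase is the size of the merged moat and can be $\Theta(t)$, not $\BO(|F_c^{(j)}|)$. With that one-time $\BO(D+t)$ broadcast added up front, the rest of your accounting goes through and coincides with the paper's.
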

\begin{proof}
Clearly, Step 1 can be executed in $\BO(D)$ rounds. Step 2 consists of local
computations only. By \lemmaref{lemma:equivalent}, we have that
$F_c=F_c^{(j_{\max})}$, since at the end of merge phase $j_{\max}$, no active
terminals remain. We conclude that the loop in Step 3 of the above algorithm is
executed for $j_{\max}$ iterations. By \lemmaref{lem-numphases}, $j_{\max}\leq
2k_0$.

We claim that iteration $j\in \{1,\ldots,j_{\max}\}$ of the loop can be executed
in $\BO(s+|F_c^{(j)}|)$ rounds, which we show by induction on $j$. The induction
hypothesis is that, after $j-1$ iterations of the loop, the prerequisites of
\lemmaref{lemma:partition} are satisfied for index $j-1$,
$\moat^{(j-1)}(v)=\moat_{i_{j-1}}(v)$ for all $v\in T$, and the value of the
variable $\act^{(j)}(v)$ is correct for each $v\in T$. This is trivially satisfied
for $j=1$ by initialization, hence suppose the hypothesis holds for
$j\in\{1,\ldots,j_{\max}-1\}$. Under this assumption, \lemmaref{lemma:partition}
shows that Step 3a can be executed in $\BO(s)$ rounds, in the sense that the
trees become locally known as stated in the lemma. Clearly, this implies that
Step 3b can be executed in one round, by each node $u$ sending $v_u$ to each
neighbor.

Consider $(\{v_u,v_{u'}\},j,\hat{W},\{u,u'\})\in E_c(u)$. We have that
$\act^{(j)}(v_u)=\true$. For each entry, we have that $v_u\neq
v_{u'}$ and $\{u,u'\}\notin \reg_{j-1}(v_u)\cap \reg_{j-1}(v_{u'})$. Thus,
if $\{u,u'\}\in \reg_j(v_u)\cap \reg_j(v_{u'})$, the hypothesis that
$\moat_{i_{j-1}}(v_u)=\moat^{(j-1)}(v_u)$ implies that
\begin{equation*}
\hat{W}=\Wd(v_u,u)-\moat_{i_{j-1}}(v_u)+W(\{u,u'\}\cap T_u)
=W(p_{v_u\{u,u'\}v_{u'}})-\moat_{i_{j-1}}(v_u)=\hat{W}_j(p_{v_u\{u,u'\}v_{u'}})
\end{equation*}
and $(\{v_u,v_{u'}\},j,\hat{W},\{u,u'\})$. Hence,
$E_c^{(j)}\subseteq \bigcup_{u\in V}E_c(u)$ and an entry
$(\{v_u,v_{u'}\},j,\hat{W},\{u,u'\})\in E_c(u)$ is a candidate merge if and only
if $\{u,u'\}\in \reg_j(v_u)\cap \reg_j(v_{u'})$.

As $\act^{(j)}(v_u)=\true$, it holds that that
$\moat_{i_j}(v_u)-\moat_{i_{j-1}}(v_u):=\hat{W}_{\max}$ is identical for all
$v_u\in T$. We have that
\begin{equation*}
W(\{u,u'\}\cap T_u)\subseteq \reg_j(v_u)\Leftrightarrow
\Wd(v_u,u)+W(\{u,u'\}\cap T_u)\leq \moat_{i_j}(v_u)
\Leftrightarrow \hat{W}\leq \hat{W}_{\max}.
\end{equation*}
Similarly, if $\act^{(j)}(v_{u'})=\true$,
\begin{equation*}
W(\{u,u'\}\cap T_{u'})\subseteq \reg_j(v_{u'})\Leftrightarrow
\hat{W}\leq \hat{W}_{\max},
\end{equation*}
because $\moat_{i_{j-1}}(v_{u'})=\moat^{(j-1)}(v_{u'})$. It follows that
\begin{equation*}
W(\{u,u'\}\cap T_u)\subseteq \reg_j(v_u)\Leftrightarrow
W(\{u,u'\}\cap T_{u'})\subseteq \reg_j(v_{u'}).
\end{equation*}
On the other hand, if $\act^{(j)}(v_{u'})=\false$, the statement $W(\{u,u'\}\cap
T_{u'})\subseteq \reg_j(v_{u'})$ is trivially satisfied, because $T_{u'}$ spans
$\reg_j(v_{u'})$. We conclude that $(\{v_u,v_{u'}\},j,\hat{W},\{u,u'\})\in
E_c(u)$ is a candidate merge if and only if $\hat{W}\leq \hat{W}_{\max}$.

Therefore, each false candidate in $\bigcup_{u\in V}E_c(u)$ is of larger weight
than all candidate merges in $E_c^{(j)}$. We conclude that the prerequisites of
\corollaryref{coro:filtering} are satisfied for merge phase $j$, yielding that
Step 3c can be executed in $\BO(D+|F_c^{(j)}|)$ rounds.

Step 3d requires local computation only. We observe that:
\begin{compactitem}
\item For each $v\in T$, $\moat^{(j)}(v)=\moat_{i_j}(v)$, since we established
that $\mu^{(j)}=\hat{W}_{\max}=\moat_{i_j}(v)-\moat_{i_{j-1}}(v)$ for each $v\in
T$ with $\act^{(j)}(v)=\true$.
\item By \lemmaref{lemma:partition}, the local information available to the
nodes from Step 3a and the $\moat^{(j)}$ variables permit to determine, for each
$u\in V$, whether $u\in \reg_j(v_u)$ and the fraction of its incident edges
inside $\reg_j(v_u)$.
\item By \lemmaref{lemma:equivalent}, $\M^{(j+1)}=\M_{i_j+1}$, i.e., the moats
at the beginning of merge phase $j+1$.
\item The computed variables $\act^{(j+1)}(v)$, $v\in T$, are thus correct.
\end{compactitem}
This establishes the induction hypothesis for index $j+1$. The total time
complexity of the $j^{th}$ iteration of the loop in Step 3 is
$\BO(D+s+|F_c^{(j)})|)\subseteq \BO(s+|F_c^{(j)})|)$, yielding a total of
\begin{equation*}
\BO\left(\sum_{j=1}^{j_{\max}}s+|F_c^{(j)})|\right)=
\BO\left(j_{\max}s+|F_c|\right)\subseteq \BO(ks+|T|-1)=\BO(ks+t)
\end{equation*}
rounds to complete Step 3.

Step 4 requires local computations only. For Step 5, for an edge $\{x,y\}$
inducing a candidate merge from $F_{\min}$, $x$ and $y$ send a token to their
respective parents. Each node receiving a token for the first time forwards it,
other tokens will be ignored. Edge $\{x,y\}$ and all edges traversed by a token
are selected into $F$. Since the goal is to select for each edge $\{x,y\}$ the
edge and the paths from $x$ and $y$ to the roots in their respective trees, this
rule ensures that $F$ is computed correctly. Because the shortest-path-trees
have depth at most $s$ and there is no congestion, this implementation of Step 4
completes in $\BO(s)$ rounds (where termination is detected in
$\BO(D)\subseteq \BO(s)$ rounds over the BFS tree). Since Step 6 requires no
communication, summing up the time complexities for Steps 1 to 6 yields a total
running time bound of $\BO(D+t+ks+t+s)=\BO(ks+t)$.
\end{proof}
\begin{proof}[Proof of \theoremref{theorem:2_distributed}]
By \lemmaref{lemma:2_time}, the above algorithm can be executed within the
stated running time bound. By \lemmaref{lemma:equivalent}, the edge set $F'$ of
the union of paths associated with $F_c$ equals the set $F_{i_{\max}}$
computed by some execution of \algref{algo:centralized}. Hence, if we can show
that the set $F$ returned in Step 6 of the above algorithm is the minimal subset
of $F'$ that solves the instance, the theorem readily follows from
\theoremref{theorem:2approx}.

Recall that because \algref{algo:centralized} never closes a cycle,
$F'=F_{i_{\max}}$ is a forest, and so is $F$. By the minimality of $F_{\min}$,
any two terminals connected by $F_{\min}$ (viewed as forest in $G_c$) must be
connected by any subforest of $F'$ that is a solution. For any edge $\{x,y\}\in
F$, there is an element of $(\{v,w\},\cdot,\cdot,\cdot)\in F_{\min}$ such that
$\{x,y\}$ is on the associated path connecting $v$ and $w$. Deleting $\{x,y\}$
from $F$ will disconnect $v$ and $w$ (because $F$ is a forest), implying that
the resulting edge set does not solve the instance of \sfic. We conclude that
$F$ is indeed the edge set returned by \algref{algo:centralized}, and therefore
optimal up to factor $2$.
\end{proof}

\subsection{The distributed algorithm}
\label{app-distalg}
\begin{compactenum}
\item Construct a directed BFS tree, rooted at $R$. For each $v\in T$, broadcast
$(v,\Comp(v))$ to all nodes (via the BFS tree).
\item Set $j:=1$ (index of the merge phase) and $M_1:=\{\{v\}\,|\,v\in T\}$. For
each $v\in T$, set $\moat^{(0)}(v):=0$, $\act_1(v):=\true$, and
$\reg_0(v):=\{v\}$.
\item While $\exists v\in T$ with $\act^{(j)}(v)=\true$:
\begin{compactenum}
\item Compute the collection of shortest-path-trees spanning for each $v\in T$
with $\act^{(j)}=\false$ $\reg_j(v)$ and for each $v\in T$ with $\act^{(j)}=\true$
$\reg_{j-1}(v)\cup (\vor_j(v)\setminus \bigcup_{w\in T}B_{i_{j-1}}(w))$.
\item For each $u\in V$, denote by $v_u$ the root of the tree ${\cal T}_u$ it
participates in. For each $u\in V$ with $\act^{(j)}(v_u)=\true$, locally construct
$E_c(u)$ as follows. For each neighbor $u'$ of $u$ so that $v_{u'}\neq v_u$ and
$\{u,u'\}\notin \reg_{j-1}(v_u)\cup \reg_{j-1}(v_{u'})$, $u$
adds $(\{v_u,v_{u'}\},j,\Wd(v_u,u)-\moat^{(j-1)}(v_u)+W(\{u,u'\}\cap
{\cal T}_u),\{u,u'\})$ to $E_c(u)$. For all other nodes $u$,
$E_c(u):=\emptyset$.
\item Determine $F_c^{(j)}$ and make it known to all nodes.
\item Suppose the maximal merge in $F_c^{(j)}$ is
$(\cdot,\cdot,\mu^{(j)},\cdot)$. Each $u\in V$ locally computes:
\begin{compactitem}
\item for $v\in T$ with $\act^{(j)}(v)=\true$,
$\moat^{(j)}(v):=\moat^{(j-1)}(v)+\mu^{(j)}$; 
\item for $v\in T$ with $\act^{(j)}(v)=\false$,
$\moat^{(j)}(v):=\moat^{(j-1)}(v)$;
\item whether $u\in \reg_j(v_u)$ or not, and the fraction of its
incident edges inside $\reg_j(v_u)$;
\item the set $\M^{(j+1)}$ of connectivity components of the forest on $T$
induced by $\bigcup_{j'=1}^jF_c^{(j)}$ (for $v\in T$, denote by $M_v\in
\M^{(j+1)}$ the moat so that $v\in M_v$);
\item for $v\in T$ with $\exists w\in M_v, w'\in T\setminus M_v:\,\Comp(w)=
\Comp(w')$, $\act^{(j+1)}(v):=\true$;
\item for $v\in T$ with $\nexists w\in M_v, w'\in T\setminus M_v:\,\Comp(w)=
\Comp(w')$, $\act^{(j+1)}(v):=\false$.
\end{compactitem}
\item $j:=j+1$.
\end{compactenum}
\item Set $F_c:=\bigcup_{j'=1}^{j-1}F_c^{(j)}$.
Each node locally computes the minimal subset $F_{\min}\subseteq F_c$ such that
the induced forest on $T$ connects for each $\Comp\in \Lambda$ all terminals
$v\in T$ with $\Comp(v)=\Comp$.
\item $F:=\emptyset$. For each element of $F_{\min}$, suppose $e=\{x,y\}$ is the
inducing edge and $p_{vew}$ the associated path. Add $e$ to $F$ and also all
edges on the paths from $x$ to $w$ and $y$ to $w$ that are given by the
shortest-path-trees spanning $\reg_{j-1}(v)$ and $\reg_{j-1}(w)$,
respectively.
\item Return $F$.
\end{compactenum}

\section{Material for \texorpdfstring{\sectionref{sec:sublinear}}{Section
\ref*{sec:sublinear}}}
\label{app:sublinear}
\subsection{Specification of the Algorithm}
\label{app:sublinear-code}
\paragraph{Specification of the algorithm.}
\begin{compactenum}
\item Construct a directed BFS tree, rooted at $R$.
\item Set $j:=0$ (index of the merge phase), $F:=\emptyset$, and
$\hat{\mu}:=1$. At each $v\in T$, set $\moat^{(0)}(v):=0$, $\act_1(v):=\true$,
$\reg_0(v):=\{v\}$, $M_v:=\{v\}$, and $L(M_v):=v$ (the leader of moat $M_v$).
\item While $\exists v\in T:\, \act^{(j+1)}(v)=\true$:
\begin{compactenum}
\item While $\sum_{j'=1}^{j-1}\mu^{(j')}<\hat{\mu}$:
\begin{compactenum}
\item $j:=j+1$.
\item Compute the shortest-path-trees spanning for each $v\in T$
with $\act^{(j)}=\false$ $\reg_j(v)$ and for other terminals
$\reg_{j-1}(v)\cup (\vor_j(v)\setminus \bigcup_{w\in T}B_{i_{j-1}}(w))$.
\item For each $u\in V$, denote by $v_u$ the root of the tree ${\cal T}_u$ it
participates in. For each $u\in V$ with $\act^{(j)}(v_u)=\true$, check whether there
is a neighbor $u'\in V$ with $\act^{(j)}(v_{u'})=\false$. If so, set
\begin{equation*}
c_u:=\argmin\limits_{\substack{\{u,u'\}\in E\\ \act^{(j)}(v_{u'})=\false}}
\{(\{v_u,v_{u'}\},j,\Wd(v_u,u)-\moat^{(j-1)}(v_u)+W(\{u,u'\}\cap
{\cal T}_u),\{u,u'\})\},
\end{equation*}
i.e., $c_u$ is the least-weight candidate merge with an inactive terminal
induced by an edge incident to $u$. For all other nodes $u$, $c_u:=\bot$.
\item Over the BFS tree, determine
$(\{v_c,w_c\},j,\hat{W},\{u,u'\}):=\argmin_{u\in V}\{c_u\}$ and make it known to
all nodes. If there is no such candidate merge or
$\hat{W}+\sum_{j'=1}^{j-1}\mu^{(j')}>\hat{\mu}$, set
$\mu^{(j)}:=\hat{\mu}-\sum_{j'=1}^{j-1}\mu^{(j')}$. Otherwise,
$\mu^{(j)}:=\hat{W}$. All terminals $v\in T$ with $\act^{(j)}(v)=\true$ set
$\moat^{(j)}(v):=\moat^{(j-1)}(v)+\mu^{(j)}$. Other terminals set
$\moat^{(j)}(v):=\moat^{(j-1)}(v)$. Each terminal $v\in T$ broadcasts
$\moat^{(j)}(v)-\moat^{(j-1)}(v)$ over its current shortest-path-tree. Each node
$u\in V$ determines whether it is in $\reg_j(u_v)$ and the fraction of its
incident edges in $\reg_j(u_v)$.
\item If $\mu^{(j)}=\hat{W}$ (i.e., merge phase $j$ does not end the growth
phase), terminal $w_c$ (i.e., the one with $\act^{(j)}(w_c)=\false$) broadcasts
$L(M_w)$ over the BFS tree. All terminals $v$ with $L(M_v)=L(M_{w_c})$ set
$\act^{(j+1)}(v):=\true$. Each terminal $v\in T$ broadcasts $\act^{(j+1)}(v)$ over
its current shortest-path-tree.
\end{compactenum}
\item For $\lceil\log \sqrt{\min\{t/s,n\}}\rceil$ iterations:
\begin{compactenum}
\item Denote by $\M:=\{\{v\in T\,|\,L(M_v)=L\}\,|\,\exists w\in T:\,L=L(M_w)\}$
the set of current moats. Each small moat $M\in \M$ finds the smallest
candidate merge $(\{v,w\},j',\hat{W},e)$ satisfying that $j'\leq j$, $v\in
M$, and $w\notin M$ (if there is any). Denote the set of such candidate merges
by $F_C$.
\item Interpret $F_C$ as the edge set of a simple graph on the node set $\M$, by
reading each candidate merge $(\{v,w\},j,\hat{W},e)\in F_C$ as an edge
$\{M_v,M_w\}$.\footnote{This is well-defined, since the minimality of edges in
$F_C$ ensures that there can be only one edge between any pair of moats.}
Define $F_C':=\{(\{v,w\},j,\hat{W},e)\in F_C\,|\,\mbox{$M_v$ and $M_w$ are
small}\}$. Determine an inclusion-maximal matching $M\subseteq F_C'\subseteq
F_C$. Each (small) moat that is not incident to an edge in $M$, but added an
edge to $F_C$, adds the respective edge to $M$ again, resulting in a set of
candidate merges $F_+\subseteq F_C$.
\item For each $(\{v,w\},j,\hat{W},e)\in F_+$, add the edges of $p_{vew}$ to
$F$.
\item Denote by ${\cal C}$ the set of connectivity components of $(V,F)$. For
each $v\in T$, set $M_v:=T\cap C_v$, where $C_v\in {\cal C}$ is the component
such that $v\in C$. Each terminal $v\in T$ learns the identifier of $L(M_v)$,
the terminal with largest identifier among all terminals $w$ with $M_w=M_v$.
Each terminal $v\in T$ learns whether $M_v$ is small.
\item For each small $M_v$, make the complete set $M_v$ known to all its
terminals.
\end{compactenum}
\item For each $v\in T$, broadcast $L(M_v)$ to all nodes in $\reg_j(v)$ (over
its shortest-path-tree).
\item For each $u\in V$, locally construct $E_c(u)$ as follows. Starting from
$E_c(u):=\emptyset$, for each $j'\in \{1,\ldots,j\}$ with $\act_{j'}(v_u)=\true$
and each neighbor $u'$ of $u$ so that $v_{u'}\neq v_u$ and $\{u,u'\}\notin
\reg_{j'-1}(v_u)\cup \reg_{j'-1}(v_{u'})$, $u$ adds
$(\{v_u,v_{u'}\},j',\Wd(v_u,u)-\moat^{(j'-1)}(v_u)+W(\{u,u'\}\cap
\reg_{j'}(v_u)),\{u,u'\})$ to $E_c(u)$. Each candidate merge is tagged by the
identifiers of the moat leaders $L(M_{v_u})$ and $L(M_{v_{u'}})$.
\item Denote by $F_c'$ the set of candidate merges whose associated paths' edges
have been added to $F$ so far. Determine $F_+:=\bigcup_{j'=1}^j
F_c^{(j')}\setminus F_c'$.
\item For each $(\{v,w\},j,\hat{W},e)\in F_+$, add the edges of $p_{vew}$ to
$F$.
\item Denote by ${\cal C}$ the set of connectivity components of $(V,F)$. For
each $v\in T$, set $M_v:=T\cap C_v$, where $C_v\in {\cal C}$ is the component
such that $v\in C$. Each terminal $v\in T$ learns the identifier of $L(M_v)$,
the terminal with largest identifier among all terminals $w$ with $M_w=M_v$.
Each terminal $v\in T$ learns whether $M_v$ is small.
\item For each small $M_v$, make the complete set $M_v$ known to all its
terminals.
\item For each $v\in T$, determine whether there are $w\in M_v$ and $u\in
T\setminus M_v$ so that $\Comp(w)=\Comp(u)$. If this is the case, set
$\act^{(j+1)}(v):=\true$, otherwise set $\act^{(j+1)}(v):=\false$.
\end{compactenum}
\item Return $F$.
\end{compactenum}

\subsection{Proofs}
\label{app:sublinear-analysis}
\begin{lemma}\label{lemma:number_merge_growth_phase}
For $\varepsilon\in \BO(1)$ and any execution of \algref{algo:central_approx},
there are at most $\BO(\log n/\varepsilon)$ growth phases and
$\sum_{g=1}^{g_{\max}}k_g\in k+\BO(\log n/\varepsilon)$.
\end{lemma}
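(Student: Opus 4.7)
The plan is to bound the two quantities separately. For the number of growth phases, note that \algref{algo:central_approx} initializes $\hat{\mu}:=1$ and multiplies it by $1+\varepsilon/2$ each time the if-statement at \lineref{line:growth_phase} triggers, while moat radii never need to exceed the weighted diameter $\WD$ before the algorithm terminates. Since edge weights are polynomially bounded in $n$, $\WD\in n^{\BO(1)}$, yielding $g_{\max}\le\log_{1+\varepsilon/2}\WD\in\BO(\log n/\varepsilon)$, using $\log(1+\varepsilon/2)\in\Theta(\varepsilon)$ for $\varepsilon\in\BO(1)$.

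For the second part, the definition of merge phases classifies each ending-index $i_j$ as either a type-(A) event (an execution of the if-statement at \lineref{line:growth_phase}) or a type-(B) event (a merge involving an inactive moat). Since the $k_g$ count disjoint subsets of merge phases, $\sum_{g=1}^{g_{\max}}k_g\le j_{\max}$, and type-(A) events contribute exactly $g_{\max}\in\BO(\log n/\varepsilon)$ to this total. It therefore suffices to show that the total number of type-(B) events is at most $k$.

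I use as a potential the number of currently used component identifiers $\phi_i:=|\{\Comp_i(M)\mid M\in\M_i\}|$ and show that $\phi$ strictly decreases at each type-(B) event. Consider such a merge $i$ involving an inactive moat $M$ with identifier $\lambda:=\Comp_i(M)$. $M$ became inactive at some earlier if-branch that required $\lambda$ to be unique to $M$. I claim $\lambda$ is still unique to $M$ just before merge $i$: between that if-branch and merge $i$, all merges are active-active (anything else would be an earlier type-(B) event), and each such merge's relabeling rule sets $\Comp_{i'+1}(M'):=\Comp_{i'}(M_{i'}(v_{i'}))$ only when $\Comp_{i'}(M')=\Comp_{i'}(M_{i'}(w_{i'}))$; this cannot affect the inactive $M$ (whose identifier $\lambda$ differs from both identifiers involved in the merge) and cannot assign $\lambda$ to another moat (as $\Comp_{i'}(M_{i'}(v_{i'}))\ne\lambda$). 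Hence the two moats merged at step $i$ carry distinct identifiers, and the algorithm's assignment $\Comp_{i+1}(M_i(v_i)\cup M_i(w_i)):=\Comp_i(M_i(v_i))$ together with relabeling all moats carrying $\Comp_i(M_i(w_i))$ causes the identifier $\Comp_i(M_i(w_i))$ to vanish from the set.

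Since $\phi$ starts at $k$ and stays nonnegative, at most $k$ type-(B) events can occur, so $\sum_{g=1}^{g_{\max}}k_g\le g_{\max}+k\in k+\BO(\log n/\varepsilon)$. The main subtlety, which I expect to be the trickiest point to nail down rigorously, is the invariant that an inactive moat's identifier remains its sole property across all intervening active-active merges; the key observation is that the algorithm's relabeling rule can only propagate an identifier that two or more moats currently share, and thus cannot create, destroy, or spread an identifier unique to a single (inactive) moat.
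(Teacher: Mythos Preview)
Your proof is correct and follows essentially the same two-part structure as the paper. For the growth-phase bound, the paper argues a bit more carefully that $\sum_{i=1}^{i_{\max}}\mu_i\le\WD/2$ (via a contradiction: two same-label terminals in distinct moats would have been active throughout, forcing their balls to overlap), whereas you simply assert that moat radii stay below $\WD$; your version is fine as a sketch but skips this step.

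For the second part, your potential $\phi_i=|\{\Comp_i(M):M\in\M_i\}|$ is a cleaner and more explicit version of what the paper states tersely (``there are at most $k$ merges which may result in inactive moats\ldots\ each of which can be merged only once''). Both rest on the same fact: a type-(B) merge eliminates one component identifier. One small imprecision: your sentence ``between that if-branch and merge $i$, all merges are active-active (anything else would be an earlier type-(B) event)'' is not literally true---other type-(B) events involving \emph{different} inactive moats may well occur in between. This does not break your argument, since any intervening merge (active-active or type-(B)) involves moats whose identifiers differ from $\lambda$, so the relabeling rule still cannot spread or overwrite $\lambda$. Your final paragraph already identifies the right invariant; just drop the claim that intermediate merges must be active-active.
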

\begin{proof}
We claim that $\sum_{i=1}^{i_{\max}}\mu_i\leq \WD/2$. Assuming the
contrary, there must be some active moat $M\in \M_{i_{\max}-1}$. Since the moat
is active, there are terminals $v\in M$ and $w\in T\setminus M$ so that
$\Comp(v)=\Comp(w)$. Clearly, these terminals were not in the same moats after
any merge $i<i_{\max}$ and therefore remain active throughout the entire
execution of the algorithm. It follows that
$\moat_v(i_{\max})=\moat_w(i_{\max})=\sum_{i=1}^{i_{\max}}\mu_i> \WD/2$.
However, by definition $\Wd(v,w)\leq \WD$, implying that
\begin{equation*}
2\mu_{i_{\max}}\leq \Wd(v,w)-\moat_{i_{\max}-1}(v)-\moat_{i_{\max}-1}(w)
\leq \WD-\moat_{i_{\max}-1}(v)-\moat_{i_{\max}-1}(w).
\end{equation*}
Because $\mu_{i_{\max}}=\moat_{i_{\max}}(v)-\moat_{i_{\max}-1}(v)
=\moat_{i_{\max}}(w)-\moat_{i_{\max}-1}(w)$, this yields the contradiction
\begin{equation*}
0\leq \WD-\moat_{i_{\max}}(v)-\moat_{i_{\max}}(w)<0.
\end{equation*}
We conclude that indeed $\sum_{i=1}^{i_{\max}}\mu_i\leq \WD/2$. Since
$\hat{\mu}$ is initialized to $1$ and grows by factor $1+\varepsilon/2$ with
each growth phase, we obtain that the number of growth phases is bounded by
\begin{equation*}
1+\left\lceil \log_{1+\varepsilon/2}\left(\frac{\WD}{2}\right)\right\rceil
\leq 1+\left\lceil\frac{\log \WD}{\log(1+\varepsilon/2)}\right\rceil
\in \BO(\log n /\varepsilon),
\end{equation*}
where the last step exploits that for $\varepsilon\in \BO(1)$,
$\log(1+\varepsilon)\in \Omega(\varepsilon)$. The bound on the number of merge
phases follows from this bound and the definition of the $k_g$, since there are
at most $k$ merges which may result in inactive moats (i.e., input components
become satisfied), each of which can be merged only once.
\end{proof}


\begin{lemma}\label{lemma:large_moats}
At any stage of the above algorithm, the number of large moats is bounded by
$\sigma$ and the connectivity component of $(V,F)$ of a small moat has a hop
diameter of at most $\sigma$.
\end{lemma}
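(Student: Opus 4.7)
The plan is to establish the two assertions separately, each via a short counting argument.

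For the hop-diameter bound on small moats, I would first observe that the algorithm never commits an edge that closes a cycle, so $(V,F)$ is a forest at every stage and the connectivity component of any moat is a subtree of $F$. By definition, a small moat's component has fewer than $\sigma$ nodes at the time of formation; the moat then persists with this same component until it is absorbed into a (different, newly formed) merged moat, so throughout its existence as a small moat its component retains at most $\sigma-1$ nodes. A tree on at most $\sigma-1$ nodes has hop-diameter at most $\sigma-2<\sigma$, giving the second half of the lemma.

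For the bound on the number $L$ of large moats, the plan is to prove two inequalities and combine them via the identity $\sigma^2=\min\{st,n\}$. First, because distinct moats correspond to disjoint connectivity components of $(V,F)$ and each large moat's component contains at least $\sigma$ nodes, we have $L\sigma\leq n$, i.e., $L\leq n/\sigma$. For the second inequality, I would argue that a moat holding $t_M$ terminals has a component of size at most $t_M\cdot s$: the moat was built up by $t_M-1$ merges, each contributing the edges of a least-weight path that can be taken to use at most $s$ hops (by the definition of the shortest-path-diameter, and because the algorithm obtains its merging paths from the Bellman-Ford-based trees of \lemmaref{lemma:partition}, which stabilize after $\BO(s)$ rounds and hence encode paths of at most $s$ hops). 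Since $F$ is a forest, the edges contributed by distinct merges to a single component are disjoint, yielding at most $(t_M-1)s+1\leq t_M s$ nodes. Thus each large moat carries at least $\sigma/s$ terminals, and because terminals are disjoint across moats, $L\cdot\sigma/s\leq t$, i.e., $L\leq ts/\sigma$. Taking the minimum gives $L\leq \min\{n,ts\}/\sigma=\sigma^2/\sigma=\sigma$, as required.

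The only delicate point I expect to encounter is verifying that the merging paths actually used by the algorithm have length at most $s$: the definition of $s$ guarantees only the \emph{existence} of some hop-optimal weight-shortest path of length $\leq s$ for each pair. This is exactly what the Bellman-Ford construction of \lemmaref{lemma:partition} provides, so committing to the induced paths (consistent with the tie-breaking rule of \defref{def-cand-graph}) is what makes the double counting go through and closes the argument.
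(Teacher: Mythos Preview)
Your proposal is correct and follows essentially the same approach as the paper: both bound the component size of a moat with $t_M$ terminals by $O(t_M s)$ via the observation that merging paths are least-weight and hence have at most $s$ hops, then combine this with the trivial bound of $n$ total nodes to conclude there are at most $\min\{st,n\}/\sigma=\sigma$ large moats. The paper organizes the case split slightly differently (first bounding the total number of nodes across all moat components by $\sigma^2$, then dividing by $\sigma$), but the substance is identical; your remark that the Bellman-Ford trees of \lemmaref{lemma:partition} are what certify the $\leq s$-hop bound on merging paths is apt and makes explicit a point the paper leaves implicit.
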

\begin{proof}
The bound on the hop diameter of small moats' components trivially follows from
the fact that they contain at most $\sigma$ nodes.

Suppose $st< n$. We claim that the connectivity component of a moat with $\tau$
terminals contains at most $1+(\tau-1)(s-1)$ nodes. This holds trivially for the
initial moats. Now suppose moats $M$ and $M'$ are merged. The merging path has
at most $s$ hops, implying that at most $s-2$ nodes are added. Hence the new
moat has at most $2+(|M\cap T|+|M'\cap T|-2)(s-1)+(s-2)\leq 1+(|(M\cup M')\cap
T|-1)(s-1)$ nodes. The claim follows. This entails that the total
number of nodes in moats' components is bounded by $st$.

We conclude that there are at most $\sigma^2$ nodes in moats' connectivity
components w.r.t.\ $F$, and therefore at most $\sigma$ large moats.
\end{proof}
\begin{lemma}\label{lemma:decomp_correct}
Suppose that after $g-1\in \{0,\ldots,g_{\max}-1\}$ growth phases, the
variables $\act^{(j_g+1)}(v)$, $\moat^{(j_g)}(v)=\moat_{i_{j_g}}(v)$,
the local representations of $\reg_j(v)$, $j\in \{1,\ldots,j_g\}$, and the trees
spanning them, membership of edges in $F=F_{i_{j_g}}$, and
$M_v=M_{i_{j_g}+1}(v)$ are identical to the corresponding values for an
execution of \algref{algo:central_approx}. Then in growth phase $g$, Step 3a
of the algorithm correctly computes the terminal decompositions $j\in
\{j_g+1,\ldots,j_{g+1}\}$, as well as the variables
$\moat^{(j)}(v)=\moat_{i_j}(v)$ and $\act^{(j+1)}(v)$. It can be completed in
$\BO(sk_g)$ rounds.
\end{lemma}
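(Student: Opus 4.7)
The plan is to proceed by induction on the merge phase index $j\in\{j_g+1,\ldots,j_{g+1}\}$ within growth phase $g$, using the lemma's hypothesis as the base case for $j=j_g$. At each inductive step, I argue both correctness and an $\BO(s)$ round-cost bound per merge phase, and then sum over the $k_g$ merge phases.

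For the inductive step, I would reason as follows. Assume that after the algorithm has processed merge phases up through $j-1$ the state variables ($\moat^{(j-1)}$, the trees spanning $\reg_{j-1}(v)$, the fractions of incident edges lying inside these regions, $F_{i_{j-1}}$, $M_v$, and $\act^{(j)}$) match the corresponding values in \algref{algo:central_approx}. Then the prerequisites of \lemmaref{lemma:partition} are satisfied: each node $u$ locally knows the reduced weights $\hat{W}_j$ of its incident edges because it knows what fraction of each edge lies inside the previously grown moats, and the old trees, activities and distances-to-roots are known. Invoking \lemmaref{lemma:partition} gives, in $\BO(s)$ rounds, the extended shortest-path trees spanning the desired regions, together with each node $u$ learning its root $v_u$, parent in that tree, and $\Wd(v_u,u)-\moat^{(j-1)}(v_u)$. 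This justifies Step 3a.b.

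For Step 3a.c, each node locally identifies, for each incident edge $\{u,u'\}$ with $\act^{(j)}(v_{u'})=\false$ and $v_{u'}\neq v_u$, the tuple $(\{v_u,v_{u'}\},j,\Wd(v_u,u)-\moat^{(j-1)}(v_u)+W(\{u,u'\}\cap{\cal T}_u),\{u,u'\})$. An argument analogous to that used in the proof of \lemmaref{lemma:2_time} shows these tuples contain the (reduced-weight) value of the merging path whenever the inducing edge actually bridges two regions in phase $j$; false candidates (edges that do not fully lie in the union of the two regions) necessarily have larger reduced weight than any true candidate merge of phase $j$, exactly as in the argument preceding \corollaryref{coro:filtering}. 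Hence a convergecast over the BFS tree, taking $\BO(D)\subseteq\BO(s)$ rounds, identifies the minimum $c_u$, which corresponds to either the genuine least-weight inactive-terminal candidate merge, or to a tuple whose weight would exceed the remaining budget $\hat{\mu}-\sum_{j'<j}\mu^{(j')}$ of the growth phase (in which case we correctly cap $\mu^{(j)}$ and terminate the growth phase). The broadcast of the result, and of each terminal's increment $\moat^{(j)}(v)-\moat^{(j-1)}(v)$ down its own tree, is another $\BO(s)$ rounds; this allows every $u$ to compute whether it lies in $\reg_j(v_u)$ and the fractions of its incident edges so contained, by comparing distance values to $\moat^{(j)}(v_u)$. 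The activity update in Step 3a.e (relevant only when the phase was ended by an inactive-terminal merge) requires the leader $L(M_{w_c})$ to broadcast over the BFS tree and then each terminal to rebroadcast its new $\act^{(j+1)}$ over its tree; this again costs $\BO(D+s)=\BO(s)$ rounds. Thus the inductive hypothesis is re-established for $j+1$, and the round-cost per merge phase is $\BO(s)$.

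The hardest part of the argument is justifying that the local construction in Step 3a.c produces candidates ordered consistently with the centralized algorithm's choice, in particular that false candidates induced by edges whose endpoints lie in the boundary of the Voronoi decomposition have strictly larger reduced weights than all genuine candidate merges in phase $j$. This follows from \lemmaref{lemma:decomp} (so any ``uncovered'' edge has total reduced weight at least $\mu^{(j)}$ contributed past the frontier) combined with the fact that the candidate selected as $(\{v_c,w_c\},j,\hat{W},\{u,u'\})$ is the global argmin and is compared against the remaining growth-phase budget, guaranteeing the correct $\mu^{(j)}$ even when no inactive-terminal merge falls inside the budget. Summing $\BO(s)$ rounds over the $k_g$ merge phases of the growth phase yields the claimed $\BO(sk_g)$ total, completing the proof.
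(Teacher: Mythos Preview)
Your proposal is correct and follows essentially the same approach as the paper: induction on merge phases $j\in\{j_g+1,\ldots,j_{g+1}\}$, invoking \lemmaref{lemma:partition} for the tree extension, then arguing that the global minimum $c_u$ correctly identifies either the phase-ending active--inactive merge or the budget cap, with $\BO(s)$ cost per phase summed to $\BO(sk_g)$.

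One small point where the paper is more direct: you frame the correctness of $c_{\min}$ via the ``false candidates have larger reduced weight'' machinery borrowed from \corollaryref{coro:filtering}, but here a simpler observation suffices. Since inactive regions do not grow, any edge $\{u,u'\}$ with $u'$ in an inactive region already has $u'\in\reg_j(v_{u'})=\reg_{j-1}(v_{u'})$, and the edge cannot lie in $\reg_{j-1}(v_u)\cup\reg_{j-1}(v_{u'})$ (else $v_u$ and $v_{u'}$ would share activity status). So the only question is whether $u\in\reg_j(v_u)$, and since you take the \emph{minimum} $c_u$, any candidate with smaller $\hat W$ than the genuine phase-ending one would itself end the phase earlier---a contradiction. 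The paper also explicitly notes that the merged moat is necessarily active (because the inactive moat's labels are disjoint from all others), which justifies the simple leader-broadcast rule in Step~3a.v; you gesture at this step but do not spell out why no other activity changes occur.
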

\begin{proof}
We prove the claim by induction on the iterations $j\in
\{j_g+1,\ldots,j_{g+1}\}$ of the loop in Step 3a, anchored at $j=j_g$. The
hypothesis is that all respective values for index $j$ are correct, which holds
for $j=j_g$ by assumption. For the induction step from $j-1$ to $j$, observe
that the hypothesis and \lemmaref{lemma:partition} yield that Step 3aii can be
performed in $\BO(s)$ rounds. Clearly, Step 3aiii requires one round of
communication only.

If $c_{\min}:=\argmin_{u\in V}\{c_u\}\neq \bot$, suppose
$c_{\min}=(\{v,w\},j,\hat{W},e)$. If $\hat{W}+\sum_{j'=1}^{j-1}\mu^{(j')}\leq
\hat{\mu}$, we claim that $c_{\min}\in F^{(j)}$ is the candidate merge
completing merge phase $j$. Otherwise (also if $c_{\min}=\bot$), $j=j_{g+1}$ and
active moats grow by exactly $\hat{\mu}-\sum_{j'=1}^{j-1}\mu^{(j')}$ during the
merge phase. To see this, recall that merge phase $j$ ends if (i) an active and
an inactive moat merge or (ii) active moats have grown by
$\hat{\mu}-\sum_{j'=1}^{j-1}\mu^{(j')}$. Note that, by
\lemmaref{lemma:partition} and the induction hypothesis,
\begin{equation*}
\hat{W}_j(p_{v_u\{u,u'\}v_{u'}}\cap
\reg_j(v_u))=\Wd(v_u,u)-\moat^{(j-1)}(v_u)+W(\cap\{u,u'\}\cap T_u)
\end{equation*}
for any $\{u,u'\}\in \reg_{j-1}(v_u)\cup (\vor_j(v_u)\setminus \bigcup_{w\in
T}B_{i_{j-1}}(w))\cup \reg_j(v_{u'})$ so that $\act^{(j)}(v_u)=\true$ and
$\act^{(j)}(v_{u'})=\false$. Moreover, $\{u,u'\}\notin \reg_{j-1}(v_u)\cup
\reg_{j-1}(v_{u'})$, as otherwise $v_u$ and $v_{u'}$ would have been connected
in an earlier merge phase and cannot satisfy that $\act^{(j)}(v_u)\neq
\act^{(j)}(v_{u'})$.

Suppose (i) applies, i.e., \algref{algo:central_approx} merges the moats of
terminals $v_{i_j}$ and $w_{i_j}$ in step $i_j$, and suppose it does so by the
path $p_{v_{i_j}ew_{i_j}}$ induced by $e= \{u,u'\}$ with $u\in \reg_j(v_{i_j})$
and $u'\in \reg_j(w_{i_j})$ (by \lemmaref{lemma:path_in_region}, we know that
such an edge exists). Since the merge phase ends due to this merge and terminals
can become inactive only at the end of a growth phase, it must hold that
$\true=\act_{i_j}(v_{i_j})\neq \act_{i_j}(w_{i_j})$. It follows that
$c_u=c_{\min}$, as any $c_{u''}<c_u$ would imply that another pair of terminals
from active and inactive moats would be merged earlier, ending the merge phase
at an earlier point. The same argument yields that in case of (ii), no
$c_u=(\cdot,j,\hat{W},\cdot)$ can exist with
$\hat{W}+\sum_{j'=1}^{j-1}\mu^{(j')}\leq \hat{\mu}$, as otherwise an active and
inactive terminal would get merged before the growth phase ends.

We conclude that the above claim holds. It follows that in Step 3aiv, which can
be completed in $\BO(D+s)=\BO(s)$ rounds, the correct variables
$\moat^{(j)}(v)$, $v\in T$, and therefore also regions $\reg_j(v)$ are
determined. If $j=j_{g+1}$, the induction halts. Otherwise, we know that the
merge $i_j$ connects an active and inactive moat. Because the input labels of
terminals in the inactive moat must be disjoint from those of other terminals
(as by the hypothesis the variables $\act^{(j)}$ have correct values), the resulting
moat must consist of active terminals; no terminals outside the new moat change
their activity status. By the prerequisites of the lemma, the terminals in the
inactive moat $M$ recognize their membership by the identifier of their leader
$L(M)$. Since any merge with an inactive moat makes its terminals active and no
terminals can become inactive except for the end of a growth phase, we conclude
that Step 3av results in the correct values of the variables $\act^{(j+1)}(v)$,
$v\in T$. Step 3av requires $\BO(D+s)=\BO(s)$ rounds, resulting in a total
complexity of $\BO(s)$ of the iteration of the while-loop in Step 3a.

The above establishes that, unless $j=j_{g+1}$, the induction hypothesis is
established for index $j+1$. Hence, the induction succeeds. We conclude that
there are $k_g=j_{g+1}-j_g$ iterations of the loop in Step 3a, for each of which
we observed that it can be implemented with running time $\BO(s)$.
\end{proof}

\begin{lemma}\label{lemma:growth_3b}
Suppose that the prerequisites of \lemmaref{lemma:decomp_correct} are satisfied
for growth phase $g$. Then, each candidate merge selected by the above
algorithm in Step 3b of growth phase $g$ is in $F_c$ for a (specific, for all
applications of the lemma to an instance fixed) execution of
\algref{algo:central_approx}. The step can be completed in $\sO(\sigma+s)$
rounds.
\end{lemma}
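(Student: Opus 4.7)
The plan is to prove correctness and runtime separately. For correctness, I would establish a Bor\r{u}vka-style invariant: every candidate merge added to $F_+$ in Step~3b is the minimum-weight incident candidate merge for some small moat $M$ in the current quotient graph $G_c$, and such an edge must lie in $F_c$ for some execution of \algref{algo:central_approx} with an appropriate (lexicographic) tie-breaking rule. The argument mirrors the classical cut property of MSTs: any minimum-weight edge leaving a cut is in every MST, provided ties are broken consistently. Uniqueness of path weights (assumed globally via lexicographic ordering) and the fact that each new edge strictly connects disjoint moats imply that the union of selected edges remains a forest across iterations. Combining the tie-breaking latitude allowed by \lemmaref{lemma:equivalent} with the matching in substep~(ii) then lets us exhibit a single execution of \algref{algo:central_approx} consistent with all selected edges.

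For the runtime, each of the $\BO(\log\sqrt{\min\{t/s,n\}}) \subseteq \BO(\log n)$ iterations consists of the five substeps (i)--(v), which I would bound as follows. Substep~(i) aggregates, for each small moat, its minimum-weight incident candidate merge; by \lemmaref{lemma:large_moats} small moats have hop-diameter at most $\sigma$, so a convergecast on an intra-moat BFS tree takes $\BO(\sigma)$ rounds. Substep~(ii) computes an inclusion-maximal matching on the quotient graph induced by $F_C'$: contracting small moats to super-nodes, a deterministic matching procedure based on Cole--Vishkin~\cite{CV-86} coloring runs in $\BO(\log^* n)$ virtual rounds, each simulated in $\BO(\sigma)$ physical rounds via intra-moat broadcast, giving $\sO(\sigma)$ total. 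Substep~(iii) adds the paths $p_{vew}$ (each of length at most $2s$) for the selected candidate merges; the matching property ensures each small moat contributes at most one path, and pipelining adds all paths in $\BO(s)$ rounds. Substeps~(iv)--(v) re-learn leaders and small-moat membership via convergecast and broadcast within the newly formed components, which still have hop-diameter $\BO(\sigma)$, in $\BO(\sigma)$ rounds.

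The main obstacle will be two-fold. First, one must justify that each small moat roughly at least doubles in size per iteration, so that after $\BO(\log\sigma)$ iterations only large moats remain: a matched small moat merges with a partner, and each unmatched small moat adds its min-edge, which by maximality of the matching connects it to a matched pair or a large moat, hence every small moat grows by a factor of at least $3/2$ per iteration. Second, one must simulate the virtual matching on the quotient graph without congestion: since any two adjacent small moats share exactly one candidate merge (by the minimality built into the selection rule), all inter-moat communication can be serialized along the witness edge in the physical-to-virtual round mapping, while intra-moat aggregation proceeds over a precomputed spanning tree whose depth is at most the moat's hop-diameter $\sigma$. These two points together yield the claimed $\sO(\sigma+s)$ bound for all of Step~3b.
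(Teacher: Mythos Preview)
Your overall approach---Bor\r{u}vka-style correctness via the cut property, Cole--Vishkin for the matching, intra-moat convergecast for aggregation---matches the paper's. However, there are two genuine gaps in the runtime argument.

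First, your bound of $\BO(\sigma)$ for substeps~(i), (ii), (iv), (v) omits an additive $s$. Candidate merges are witnessed by edges at the boundaries of the regions $\reg_j(v)$, not at the boundary of the moat's $(V,F)$-component; the region extends beyond that component via the shortest-path trees of the terminal decomposition, which have depth up to $s$. Hence aggregating the minimum candidate merge for a small moat requires routing first down the depth-$s$ tree to the terminal and then across the depth-$\sigma$ intra-moat tree, for $\BO(\sigma+s)$ rounds. The same applies to each simulated round of Cole--Vishkin and to propagating the selected merges back out. This does not change your final bound, but your justification as written is incomplete.

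Second, and more substantively, your claim that ``newly formed components still have hop-diameter $\BO(\sigma)$'' is false. An unmatched small moat whose $F_C$-edge goes to a \emph{large} moat joins a component containing that large moat, whose diameter is not bounded by $\sigma$. The paper resolves this by proving a structural dichotomy for the components of $(\M,F_+)$: either all constituent moats are small, in which case the component is two stars joined by a matching edge (moat-diameter at most $4$, so physical diameter $\BO(\sigma+s)$); or the component is a star of small moats centered on a unique large moat. In the latter case no traversal of the large moat is performed: its leader identifier is already known to its own nodes, and is propagated only into the attached small moats in $\BO(\sigma+s)$ rounds, with the new moat declared large by fiat. Your argument needs this case distinction; a uniform convergecast over the new component does not work.
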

\begin{proof}
As in \lemmaref{lemma:equivalent}, we consider the execution of
\algref{algo:central_approx} employing the same tie breaking mechanism as we use
to order candidate merges.

We prove the claim by induction on the iterations of the loop in Step 3b. The
hypothesis is that all merges performed by the algorithm up to the beginning of
the current loop iteration correspond indeed to candidate merges from $F_c$ and
the moats $\M$ defined in Step 3bi implicitly given by the variable $L(M_v)$
known to each $v\in T$ are the moats induced by the union of edges of associated
paths. The induction is anchored by the assumptions of the lemma; hence consider
some iteration of the loop.

Suppose for a moat $M\in \M$, the smallest candidate merge is
$(\{v,w\},j,\hat{W},e)$. By \lemmaref{lemma:decomp_correct}, the nodes in $e$
can detect the existence of the candidate merge by communicating over $e$;
performing this concurrently for all nodes, this takes one round, since each
edge induces one candidate merge only. Since the moat is small, by
\lemmaref{lemma:large_moats}, the moat's component in $(V,F)$ has diameter at
most $\sigma$. Hence, a spanning tree rooted at the leader can be constructed
and used to determine the least-weight candidate merge as specified in Step 2bi
within $\BO(\sigma+s)$ rounds (the additive $s$ accounts for the depth of the
trees of the terminal decomposition).

In Step 2bii, only small moats $M\in \M$ need to participate in the computation.
We interpret the subgraph of the graph specified in Step 2bii induced by $F_C$
as a directed graph, where each small moat has one outgoing edge. We $3$-color
the graph by simulating the Cole-Vishkin algorithm~\cite{CV-86} on this graph,
where moat leaders take the role of the nodes and communication is routed
through the spanning trees of the moats. Observe that since nodes need to
receive messages only from their ``parent'' and send identical messages to their
children, the congestion is constant. Hence, each round of the Cole-Vishkin
algorithm can be simulated in $\BO(\sigma+s)$ rounds in $G$, the depth bound for
the trees constructed in Step 2bi. After $\BO(\log^* |\M|)\in \sO(1)$ rounds, a
$3$-coloring is computed, which in $3$ additional simulated rounds can be used
to determine a maximal matching. After another simulated round, each moat
leader in a small moat knows its incident edges from $F_C$. Consequently, Step
2biii requires another $\BO(\sigma+s)$ rounds.

Concerning Step 3biv, observe that the construction of $F_+$ ensures for each
connectivity component of $(\M,F_+)$, either all moats in the component are
small and it consists of two stars connected by a matching edge, or it is a star
centered at a large moat, whose leaves are all small moats. If the former
applies, \lemmaref{lemma:large_moats} shows that Step 3biv can be completed for
small moats within $\BO(\sigma+s)$ rounds using the edges from $F$ in the
respective component of $(V,F)$ only. Moreover, in this time a spanning tree can
be constructed and used to count the number of terminals or nodes, respectively,
determining whether the new moat is small. For the case where a large moat is
involved, the new leader will be the leader of the unique large moat in the
respective connectivity component of $(V,F)$. Since this leader is already known
to all terminals in the large moat, \lemmaref{lemma:large_moats} shows that its
identifier can be distributed to all nodes in the ``attached'' small moats in
$\BO(\sigma+s)$ rounds. Trivially, the resulting moat is large.

With respect to Step 3bv, we again apply \lemmaref{lemma:large_moats}, showing
that for each small moat, in $\BO(\sigma)$ rounds, a spanning tree with edges
from $F$ can be constructed that spans its component in $(V,F)$. This tree is
used to broadcast the terminal identifiers of its at most $\sqrt{\min\{st,n\}}$
terminals to all constituent nodes within $\BO(\sigma)$ rounds.

To complete the induction step, it thus remains to show that $F_+\subseteq F_c$
and therefore indeed all edges selected into $F$ in Step 3biii are also selected
by the execution of \algref{algo:central_approx} that selects the same merges
and the associated paths, and also that the computed moats are indeed the cuts
of $T$ with the connectivity components of $(V,F)$. Observe that for a candidate
merge added to $F_C$ by moat $M$, any cycle it might close in $G_c$ must contain
another candidate merge between terminals in $M$ and $T\setminus M$. Since any
candidate merge selected into $F_C$ is minimal among \emph{all} candidate merges
for $M$, it follows that it will never be filtered out. Therefore, it must
hold that $F_+\subseteq F_C\subseteq F_c$. As we already observed earlier,
$(\M,F_+)$ is a forest at the end of Step 3bii. Since for each
$(\{v,w\},\cdot,\cdot,\cdot)\in F_+$ the associated path is contained in
$\reg_j(v)\cup \reg_j(w)$ for some $j$, it connects exactly the moats
$M_v,M_w\in \M$. We conclude that the new moats are exactly those computed in
the iteration of the loop in Step 3b. We conclude that the induction hypothesis
for the next loop iteration is established, i.e., the induction succeeds. Since
there are $\BO(\log n)$ iterations, the total time complexity is
$\sO(\sigma+s)$.
\end{proof}

\begin{lemma}\label{lemma:growth_phase}
Suppose the prerequisites of \lemmaref{lemma:decomp_correct} are satisfied for a
growth phase $g\in \{1,\ldots,g_{\max}\}$. Then the growth phase can be
completed in $\sO(k_gs+\sigma)$ rounds and the prerequisites of
\lemmaref{lemma:decomp_correct} hold for index $g+1$.
\end{lemma}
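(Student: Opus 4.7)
The plan is to bound the running time of the growth phase by summing the costs of its constituent steps (Steps 3a--3j of the algorithm in \appref{app:sublinear-code}), invoking the previously established results, and then to verify that each of the bulleted prerequisites of \lemmaref{lemma:decomp_correct} is reestablished at index $g+1$ by the end of the phase. The high-level picture is that Steps 3a and 3b together perform all merges of the growth phase; Step 3a dispatches the $k_g$ merge phases that either end a growth phase or involve an inactive moat, while Step 3b absorbs all remaining merges by letting small moats propose and pipelining them through local matching rounds. Steps 3c--3j then collect the few merges between large moats via the global filtering procedure and update the bookkeeping needed for phase $g+1$.

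First I would bound Step 3a. By \lemmaref{lemma:decomp_correct}, under our hypothesis the inner while-loop executes exactly $k_g$ iterations and each of them completes correctly in $\BO(s)$ rounds, producing correct values of $\moat^{(j)}$, $\act^{(j+1)}$, and the regions $\reg_j(v)$ with their spanning trees for all $j\in \{j_g+1,\ldots,j_{g+1}\}$. This contributes $\BO(k_g s)$ rounds. Next, I would apply \lemmaref{lemma:growth_3b} to each of the $\BO(\log n)$ iterations of the for-loop in Step 3b, giving a total of $\sO(\sigma+s)$ rounds and guaranteeing that every candidate merge added in this step belongs to $F_c$ of the reference execution of \algref{algo:central_approx}.

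The crucial structural observation is that after Step 3b only a small number of moats remain in play: each iteration matches small moats in pairs or absorbs them into large moats, so by \lemmaref{lemma:large_moats} and the choice of $\sigma=\sqrt{\min\{st,n\}}$, after $\lceil\log \sqrt{\min\{t/s,n\}}\rceil$ iterations the number of moats that can still participate in further merges of phases $j\le j_{g+1}$ is at most $\sigma$. Consequently $|F_+|\in \BO(\sigma)$ in Step 3e, so \corollaryref{coro:filtering} (applied with $F_c':=\bigcup_{j'=1}^j F_c^{(j')}$ already known through Step 3b) collects $F_+$ and makes it globally known in $\BO(D+\sigma)$ rounds. The path selection of Step 3f is performed by sending tokens up the shortest-path-trees rooted at the endpoints of each inducing edge; since these trees have depth $\BO(s)$ and $|F_+|\in \BO(\sigma)$, pipelining yields $\BO(s+\sigma)$ rounds. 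The moat updates of Step 3g, the leader propagation and small-moat gossiping in Steps 3h--3i (again using \lemmaref{lemma:large_moats} to bound diameters by $\sigma$), and the activity-status computation in Step 3j each fit into $\sO(\sigma+s+D)\subseteq\sO(\sigma+s)$ rounds using the BFS tree and the in-moat spanning trees. Summing everything gives $\sO(k_g s+\sigma)$, as claimed.

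It remains to verify the invariant for index $g+1$. By \lemmaref{lemma:decomp_correct}, at the end of Step 3a we have $\moat^{(j_{g+1})}(v)=\moat_{i_{j_{g+1}}}(v)$, the correct regions $\reg_j(v)$ for $j\le j_{g+1}$, and their spanning trees. By \lemmaref{lemma:growth_3b} and the filtering/pipelining arguments above, every edge added to $F$ in Steps 3b and 3e--3f corresponds to an element of $F_c$ selected by the reference execution, so at the end of the growth phase $F=F_{i_{j_{g+1}}}$. Step 3g then guarantees $M_v=M_{i_{j_{g+1}}+1}(v)$ for every $v\in T$, and Step 3j sets $\act^{(j_{g+1}+1)}(v)$ consistently with the reference execution, where a component becomes inactive exactly when all terminals sharing its label lie inside the moat. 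The main obstacle I anticipate is the final bookkeeping: one has to be careful that the ``short-cut'' processing of the last growth phase (when no more active terminals exist) and the handling of moats spanning both the small and large regime in Steps 3d--3i are compatible with the correspondence to \algref{algo:central_approx}; this is where the tagging of candidate merges by leader identifiers and the invariant that $F_+\subseteq F_c$ established in \lemmaref{lemma:growth_3b} must be combined with \corollaryref{coro:filtering} to rule out spurious merges.
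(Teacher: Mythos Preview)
Your outline follows the paper's proof closely: bound Step~3a by \lemmaref{lemma:decomp_correct}, Step~3b by \lemmaref{lemma:growth_3b}, argue that at most $\sigma$ moats remain eligible so the remaining filtering has output size $\BO(\sigma)$, and then push the bookkeeping in Steps~3c--3i through with $\sO(\sigma+s)$-round arguments. The invariant check at the end is also the right shape.

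There is one genuine slip. You invoke \corollaryref{coro:filtering} for Step~3e and assert that $F_c'$ (the set of merges performed in Step~3b) is ``already known through Step~3b.'' It is not. The whole point of the small-moat matching in Step~3b is that it is carried out \emph{locally}, inside connectivity components of diameter $\BO(\sigma)$; the selected merges are never broadcast globally, and doing so would cost $\Omega(t)$ rounds and destroy the sublinear bound. \corollaryref{coro:filtering} requires global knowledge of $\bigcup_{j'}F_c^{(j')}$ and of $\Comp(v)$ for all $v\in T$, neither of which is available here. The correct tool is \lemmaref{lemma:filtering}, whose hypothesis is weaker: it only needs each candidate in $E_c(u)$ to be \emph{tagged} by the connectivity components of its endpoints in $(T,F_c')$. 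That tagging is precisely what Steps~3c and~3d supply, via the leader identifiers $L(M_{v_u})$ and $L(M_{v_{u'}})$. Once you swap in \lemmaref{lemma:filtering}, the $\BO(D+\sigma)$ bound for Step~3e goes through as you intended.

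A smaller point: Step~3a does not itself add any merge paths to $F$; it computes the terminal decompositions and detects merge-phase boundaries, while all path selection happens in Steps~3b and~3f. Your sentence ``Steps~3a and~3b together perform all merges'' should be adjusted accordingly, and for the invariant $F=F_{i_{j_{g+1}}}$ you need the \emph{completeness} direction from \lemmaref{lemma:filtering} (that $F_+=\bigcup_{j'\le j_{g+1}}F_c^{(j')}\setminus F_c'$ is recovered in full), not just the containment $F_+\subseteq F_c$.
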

\begin{proof}
By \lemmaref{lemma:decomp_correct}, the regions $\reg_{j_{g+1}}(v)$, $v\in V$,
have been determined in Step 3a, within $\BO(k_g s)$ rounds. By
\lemmaref{lemma:growth_3b}, the Step completes in $\sO(s+\sigma)$ rounds and
determines for $v\in T$ the variable $L(M_v)$ in accordance with $F$, where $F$
is the edge set of paths associated with a set $F_c'\subseteq F_c$. Step 3c can
thus be correctly executed in $\BO(s)$ rounds, and Step 3d, which requires local
computations only, will determine sets $E_c(u)$, $u\in V$, so that
$\bigcup_{j=1}^{j_{g+1}}E_c^{(j)}\subseteq \bigcup_{u\in V}E_c(u)$. Hence, the
preconditions of \lemmaref{lemma:filtering} are satisfied, permitting to perform
Step 3e in $\BO(D+|\bigcup_{j=1}^{j_{g+1}}F_c^{(j)}\setminus F_c'|)$ rounds.

We claim that $|\bigcup_{j=1}^{j_{g+1}}F_c^{(j)}\setminus F_c'|\leq \sigma$. To
see this, observe that in each iteration of the loop in Step 2b, each small moat
that has an incident candidate merge will be merge with some other moat. Hence,
the minimal number of terminals (if $st<n$) or nodes (if $st\geq n$) in a moat
that can still participate in a merge in the growth phase doubles in each
iteration of the loop. It follows that after Step 2b, any moat that can still
participate in a merge in merge phase $g$ is large. By
\lemmaref{lemma:large_moats}, there are at most $\sigma$ large moats. Since
$F_c$ (as edge set in $G_c$) contains neither cycles nor duplicate edges, the
claim follows. In particular, Step 3d completes within $\BO(D+\sigma)\subseteq
\BO(\sigma+s)$ rounds.

Since $F_+$ becomes known to all nodes, Step 3f can be performed in $\BO(s)$
rounds. For Step 3g, we collect for each candidate merge in $F_+$ the
identifiers of the merged moats' leaders over the BFS tree, in
$\BO(D+\sigma)\subseteq \BO(s+\sigma)$ rounds. The new leaders then can be
computed locally by all nodes, since $F_+$ is known by all nodes. For each new
moat, the number of terminals (or nodes) is then determined by pipelining the
respective additions on the BFS tree and broadcasting the result to all nodes,
again requiring $\BO(s+\sigma)$ rounds. This enables each node to determine
whether its moat is small or large. Step 3h is performed, for each small moat,
within its connectivity component of $(V,F)$. Because
\lemmaref{lemma:large_moats} states that the diameter of these components is at
most $\BO(\sigma)$ and small moats contain at most $\sigma$ terminals, this
completes in $\BO(\sigma)$ rounds.

To perform Step 3i, we identify all terminals in each moat with the moat leader
and then apply the technique from \lemmaref{lemma:transform_to_minimal}. Since
an input component $\Comp\in \Lambda$ is subset of a moat if and only if there
will be only one tuple $(\Comp(v),L(M_v))$ with $\Comp(v)=\Comp$ present
(possibly at several nodes), this will determine correctly which input
components are satisfied, after $\BO(D+k)\subseteq \BO(s+k)$ rounds. A moat is
active in growth phase $g+1$ if and only if there is a terminal whose input
component is not subset of some moat. By \lemmaref{lemma:large_moats}, small
moats have diameter at most $\sigma$ w.r.t.\ $(V,F)$, enabling to complete the
step within another $\BO(\sigma)$ rounds for small moats. For large moats, we
perform the respective convergecasts and broadcasts on the BFS tree, tagging
the messages with the moat leader's identifier. Because, by
\lemmaref{lemma:large_moats}, there are at most $\BO(\sigma)$ large moats, the
congestion at each node is bounded by $\BO(\sigma)$ and the step can be
completed in $\BO(D+\sigma)\subseteq \BO(s+\sigma)$ rounds for large moats.

Summing up the time complexities of all steps, a total of $\sO(sk_g+\sigma)$
rounds suffices to complete the growth phase. The variables
$\act_{j_{g+1}+1}(v)$, $v\in V$, have been determined in Step 3i. The variables
$\moat^{(j_{g+1})}$ are, by \lemmaref{lemma:decomp_correct}, known by the end of
Step 3a of growth phase $g$, alongside $\reg_{j_{g+1}}(v)$ and the corresponding
spanning trees, for $j\in \{1,\ldots,j_{g+1}\}$. By \lemmaref{lemma:growth_3b},
the moat leader variables reflected the moats corresponding to the respective
set of selected edges $F$ after Step 3b, which in turn matched a set
$F_c'\subseteq \bigcup_{j=1}^{j_{g+1}}F^{(j)}$ (since there were never any
candidate merges for phases $j>j_{g+1}$). By \lemmaref{lemma:filtering} and
Steps 3e to 3g, we conclude that $F$ is the edge set of the paths associated
with $\bigcup_{j=1}^{j_{g+1}}F^{(j)}$, i.e., $F=F_{i_{j_{g+1}}}$ for the
considered execution of \algref{algo:central_approx}, and leader variables
$L(M_v)$, $v\in T$, have the correct values for these moats. In summary, all
claims of the lemma hold and the proof concludes.
\end{proof}

\begin{proof}[Proof of \corollaryref{coro:growth}]
Constructing a BFS tree requires $\BO(D)$ rounds.
\lemmaref{lemma:number_merge_growth_phase} and inductive application of
\lemmaref{lemma:growth_phase} shows that Steps 2, 3, and 4 of the algorithm can
be executed in $\sO((\sum_{g=1}^{g_{\max}}k_g)s+\sigma)=\sO(ks+\sigma)$ rounds.
Moreover, the returned set $F$ equals the set $F_{i_{j_{g_{\max}}}}$ computed by
\algref{algo:central_approx}. Therefore, it is a forest, and its minimal
subforest solving the instance is, by \theoremref{theorem:2+eps_approx}, optimal
up to factor $2+\varepsilon$.
\end{proof}

\subsection{Fast Pruning Algorithm}
\label{app:prune}
The following routine assumes that for an instance of \sfic, a forest $F$ on at
most $\sigma^2:=\min\{st,n\}$ nodes solving the instance is given, where each
node knows which of its incident edges are in $F$. At the heart of the routine
are Steps 4 to 6, which heavily exploit that $F$ is a tree to ensure optimal
pipe-lining for the edge selection process.
\begin{compactenum}
\item Set $F_0:=\emptyset$ (this will be the pruned edge set). Construct an
(unweighted) BFS tree on $G$, rooted at $R$ and make the set of labels $\Lambda$
known to all nodes.
\item For each connectivity component of $(V,F)$ of diameter at most $\sigma$,
optimally solve the respective (sub)instance of \sfic. Add the respective edges
to $F_0$ and delete these components from $(V,F)$. W.l.o.g., assume that all
components of $(V,F)$ have diameter larger than $\sigma$ in the following.
\item Construct a partition of $(V,F)$ into clusters ${\cal C}$, so that
(i) $|{\cal C}|\leq \sigma$, (ii) for each $C\in {\cal C}$, the depth of the
minimal subtree of $F$ spanning $C$ is $\sO(\sigma)$, and (iii) for each $C\in
{\cal C}$, the spanning subtree induced by $F$ is directed to a root $R_C\in C$
(in the sense that each node knows its parent and the identifier of $R_C$).
\item Denote by $({\cal C},F_{\cal C})$ the
forest on ${\cal C}$ resulting from contracting each $C\in {\cal C}$ in $(V,F)$.
Make $({\cal C},F_{\cal C})$ known to all nodes.
\item Each node $u\in V$ initializes for each $e\in F_{\cal C}$
$l_e(u):=\emptyset$ and for each $C\in {\cal C}$ $l_C(u):=\emptyset$. Terminals
$v\in T$ set $l_C(v):=\{\Comp(v)\}$ (where $C$ is uniquely identified by the
identifier of $R_C$).
\item Perform the following on the BFS tree until no more messages are sent
\begin{compactitem}
\item Each node $u\in V$ sends a \emph{non-redundant} node label $(C,\Comp)$ for
$\Comp\in l_C(u)$ to its parent (if there is one). A label is \emph{redundant}
if the following holds. Start from variables $\hat{l}_e(u):=\emptyset$ and
$\hat{l}_C(u):=\emptyset$ and simulate the operations below for all messages
sent to the parent in previous rounds. If the label in question would not alter
the state of the variables further, it is redundant.
\item If $w\in V$ receives ``$(C,\Comp)$'', it sets
$l_C(w):=l_C(w)\cup\{\Comp\}$. If there is some other $C'\in {\cal C}$ with 
$\Comp\in l_{C'}(w)$, it sets $l_e(w):=l_e(w)\cup \{\Comp\}$ and
$l_{C''}(w):=l_{C''}(w)\cup \{\Comp\}$ for all edges $e$ and nodes $C''$ on the
path connecting $C$ and $C'$.\footnote{Note that different connectivity
components of $(V,F)$ must have disjoint sets of labels, since $F$ solves the
instance. Since $F$ is a forest, there is thus always a unique such path.}
\item Whenever there is for any node $u\in V$ an edge $e$ with
$\Comp,\Comp'\in l_e(u)$, for each $e\in F_{\cal C}$ with $l_e(u)\cap
\{\Comp,\Comp'\}\neq \emptyset$ set $l_e(u):=l_e(u)\cup \{\Comp,\Comp'\}$ and
for each $C\in {\cal C}$ with $l_C(u)\cap
\{\Comp,\Comp'\}\neq \emptyset$ set $l_C(u):=l_C(u)\cup \{\Comp,\Comp'\}$.
\end{compactitem}
\item Once this is done, the root $R$ of the BFS tree broadcasts the result
(using the same encoding).
\item For each edge in $e\in F_{\cal C}$ with $l_e(R)\neq \emptyset$, add $e$ to
$F_0$.
\item For each terminal $v\in T$, set $l(v):=\{\Comp(v)\}$. Nodes $u\in
V\setminus T$ set $l(v):=\emptyset$. If node $u\in V$ is the endpoint of an edge
$e\in F_{\cal C}$, $u$ sets $l(u):=l(u)\cup l_e(R)$.
\item For each tree spanning a cluster $C\in {\cal C}$, select for each
$\Comp\in \Lambda$ the edges of the minimal subtree spanning all terminals $v\in
C$ with $\Comp\in l(v)$ into $F_0$.
\item Return $F_0$.
\end{compactenum}

We start by analyzing the time complexity of the routine. The first lemma covers
the selection procedure for trees of depth at most $\sigma$ used in Steps 2 and
10.

\begin{lemma}\label{lemma:prune_2_10}
Steps 2 and 10 of the above routine can be completed in $\BO(\sigma+k)$ rounds.
\end{lemma}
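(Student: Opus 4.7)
My plan is to reduce both Steps 2 and 10 to the same sub-task: given a rooted tree $\tau$ of hop-depth $\BO(\sigma)$ with, for each node $v$ and each label $\Comp\in\Lambda$, an indicator $r_\Comp(v)\in\{0,1\}$, compute the union over $\Comp$ of the minimal subtree of $\tau$ that spans $\{v:r_\Comp(v)=1\}$. In Step~2, $\tau$ is a BFS spanning tree of a small component of $(V,F)$ (of hop-diameter at most $\sigma$ by assumption), which can be built in $\BO(\sigma)$ rounds, and $r_\Comp(v)=1$ iff $\Comp(v)=\Comp$. In Step~10, $\tau$ is already a rooted spanning tree of the cluster $C$ at $R_C$, whose depth is $\sO(\sigma)$ by property (ii) of Step~3, and $r_\Comp(v)=1$ iff $\Comp\in l(v)$. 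Since components of $(V,F)$ are vertex-disjoint in Step~2 and clusters partition $V$ in Step~10, the trees share no edges and all the per-tree computations can run in parallel without inducing any congestion at shared vertices.

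Inside a single tree, the algorithm is the textbook pipelined convergecast followed by a pipelined broadcast. All nodes agree on the ordering of $\Lambda$ by identifier ($\Lambda$ is globally known after Step~1 of the pruning routine). Each node $u$ then computes, label by label in this order, the count $n_\Comp(u):=|\{v\in\mathrm{subtree}(u):r_\Comp(v)=1\}|$, and forwards it to its parent in the round immediately after it has aggregated the corresponding child counts. A node at depth $d$ transmits the count for the $i$-th label at round $d+i$, so the root learns all totals $t_\Comp:=n_\Comp(\mathrm{root})$ within $\BO(\sigma+k)$ rounds. A symmetric pipelined broadcast then delivers the vector $(t_\Comp)_{\Comp\in\Lambda}$ to every node of the tree in another $\BO(\sigma+k)$ rounds. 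Finally, each node $u$ locally inspects each edge to a child $c$ and adds it to $F_0$ iff there exists $\Comp$ with $0<n_\Comp(c)<t_\Comp$.

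The correctness of the edge-selection criterion is the only non-routine point: because $\tau$ is a tree, removing edge $(u,c)$ separates $c$'s subtree from the remainder, so this edge belongs to the minimum Steiner subtree for label $\Comp$ iff both sides contain a $\Comp$-relevant terminal, i.e., iff $0<n_\Comp(c)<t_\Comp$; taking the union over $\Comp$ yields exactly the minimal subforest required by Steps~2 and~10. Throughput-wise, each tree edge carries at most one $\BO(\log n)$-bit label count per round in each direction, so no congestion arises along $\tau$, and the pipelining depth is $\BO(\sigma+k)$. Running these computations in parallel over all small components (Step~2) or all clusters (Step~10) therefore completes the step within $\BO(\sigma+k)$ rounds overall, as claimed.
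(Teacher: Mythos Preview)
Your proof is correct and follows essentially the same approach as the paper: reduce both steps to selecting, for each label, the minimal spanning subtree in a rooted tree of depth $\BO(\sigma)$, then pipeline the $k$ per-label computations through an up-sweep and a down-sweep to finish in $\BO(\sigma+k)$ rounds. The paper's implementation sends mark/unmark tokens rather than your subtree counts, but this is a cosmetic variation with identical structure and analysis (and your edge-selection criterion $0<n_\Comp(c)<t_\Comp$ is exactly equivalent to the paper's ``at least two tokens'' test).
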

\begin{proof}
Consider a tree of depth at most $\sigma$, where each node $u$ in the tree is
given a set $l(u)\subseteq \Lambda$ and the requirement is to mark all edges
that are on a path connecting some nodes $u$ and $u'$ in the tree with $l(u)\cap
l(u')\neq \emptyset$, communicating over tree edges only. This is the
requirement of Step 10, and by setting $l(u)=\{\Comp(u)\}$ for terminals $u$ and
$l(u)=\emptyset$ otherwise, we see that Step 2 can be seen as a special case.

We root the tree in $\BO(\sigma)$ rounds. Consider a fixed label $\Comp \in
\Lambda$. Each node $u$ with $\Comp\in l(u)$ a message $\Comp$ to its parent,
which is forwarded to the root; each node sends only one such message $\Comp$.
All edges traversed by a message are tentatively marked. Once this is complete,
the root $R$ checks whether it received at least two messages $\Comp$ or
satisfies that $\Comp\in l(R)$. If this is not the case, it sends an ``unmark''
message to the child sending a $\Comp$ message (if there is one). The receiving
child performs the same check w.r.t.\ its subtree, possible sending another
``unmark'' message, and so on. Clearly, removing the edges traversed by an
``unmark'' message from the set of tentatively marked edges is the minimal set
of edges connecting the nodes with $\Comp \in l(u)$. We perform this process
concurrently for all $\Comp \in \Lambda$ (tagging the unmark messages by the
respective component label), using pipelining to avoid congestion. Each of
the two phases can be completed in $\BO(\sigma+k)$ rounds, since there are at
most $k$ distinct labels and each node sends at most two messages per label.
\end{proof}

The next lemma discusses the growing of clusters. The employed technique is the
same as for Step 3b of the subroutine from \sectionref{sec:tree_selection},
analyzed in detail in \lemmaref{lemma:growth_3b}.

\begin{lemma}\label{lemma:prune_3}
Step 3 of the above routine can be completed in $\sO(\sigma)$ rounds.
\end{lemma}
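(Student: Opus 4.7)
The plan is to mirror the cluster-growing procedure from Step 3b of the algorithm in \sectionref{sec:sublinear}, whose implementation and analysis are given in \lemmaref{lemma:growth_3b}, and apply it to the forest $F$, which has at most $\sigma^2=\min\{st,n\}$ nodes. Initially each node of $F$ forms its own singleton cluster, rooted at itself. We then proceed in $\BO(\log \sigma)$ iterations of the following local step: every cluster $C$ whose spanning subtree in $F$ currently has fewer than $\sigma$ nodes (a \emph{small} cluster) selects an arbitrary incident $F$-edge leading to another cluster; the resulting directed graph on clusters has out-degree at most one at each small cluster, so we can $3$-color it via Cole--Vishkin \cite{CV-86} in $\BO(\log^* \sigma)\subseteq \sO(1)$ simulated cluster rounds and extract an inclusion-maximal matching in $\BO(1)$ further rounds. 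Matched pairs of small clusters merge along the matching edge; each unmatched small cluster that chose an edge attaches itself to the other endpoint's cluster along the chosen edge.

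In each iteration, every surviving small cluster at least doubles its node count, so after at most $\lceil \log_2 \sigma \rceil$ iterations every remaining cluster contains at least $\sigma$ nodes. Since the total node set of $F$ has size at most $\sigma^2$, this yields property~(i): $|{\cal C}| \leq \sigma$. For property~(ii), observe that the diameter of each cluster's $F$-subtree at most doubles per merge: two small clusters have diameter $<\sigma$, and when a small cluster attaches to a large one we only add an edge plus a subtree of diameter $<\sigma$. Consequently the cluster diameter stays in $\BO(\sigma)$ throughout, and rooting each cluster at a designated endpoint (say, the matching-edge endpoint in the cluster of larger identifier, or the unique large cluster when a star of small clusters attaches to it) and orienting parent pointers by a convergecast over the $F$-subtree gives property~(iii).

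For the time bound, each simulated cluster-level round corresponds to communication over the current $F$-subtrees, whose depth is $\sO(\sigma)$; since each cluster holds a single parent-oriented tree there is no congestion, and one simulated round costs $\sO(\sigma)$ rounds in $G$. Summing $\BO(\log \sigma)$ iterations of $\sO(1)$ simulated rounds each, and adding the $\sO(\sigma)$-round re-rooting and identifier propagation at the end of each merge, yields the claimed $\sO(\sigma)$ total running time.

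The main obstacle is the same one handled in \lemmaref{lemma:growth_3b}: once a cluster has reached $\sigma$ nodes it must stop proposing edges (otherwise its depth could blow up), yet it still has to serve as an attachment target for adjacent small clusters without the resulting structures forming long chains. As in that lemma, restricting proposed edges to small clusters and attaching at most one star of unmatched small clusters per large cluster per iteration ensures the diameter bound; the analysis then carries over essentially verbatim, with the role of "moat" replaced by "cluster in $F$" and the role of the selected-edge set $F$ in the outer algorithm replaced by the tree $F$ itself.
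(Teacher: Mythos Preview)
Your proof is correct and follows essentially the same approach as the paper: iterated matching-based merging of small clusters using Cole--Vishkin coloring, with the doubling argument yielding property~(i) and the bounded merge structure yielding property~(ii). The one imprecision is that the cluster diameter is $\sO(\sigma)$ rather than $\BO(\sigma)$---a large cluster can grow by $\BO(\sigma)$ in each of the $\BO(\log \sigma)$ iterations as small clusters attach---but the lemma's $\sO$ bound absorbs this log factor.
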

\begin{proof}
Initialize the clusters to singletons. We consider a cluster \emph{small}, if it
contains fewer than $\sigma$ nodes. Otherwise it is \emph{large}. For $\lceil
\log \sigma\rceil$ iterations, perform the following.
\begin{compactenum}
\item Each small cluster selects an arbitrary outgoing edge from $F$ (this is
feasible, since after Step 2 each connectivity component contains at least
$\sigma$ nodes). Denote the set of selected edges by $F_C$.
\item Suppose $F_C'$ is the subset of edges between small clusters. Find a
maximal matching $M\subseteq F_C'$.
\item Each small cluster without an incident edge from $M$ adds the previously
selected edge to $M$, resulting in set $F_+$.
\item Merge clusters according to $F_+$ (constructing rooted spanning trees).
The new clusters select a leader and determine whether they are small or not.
\end{compactenum}
Since for each small cluster in each iteration at least one edge is selected,
the minimal number of nodes in a cluster grows by at least factor $2$ in each
iteration, implying that no small clusters remain in the end. Since there can be
at most $\sigma^2/\sigma=\sigma$ large clusters, the bound on $|{\cal C}|$
holds. Due to the construction of $F_+$, in each iteration the longest path in
the graph on the current clusters that is selected into $F_+$ has $3$ hops.
Moreover, at most one large cluster is present in each connectivity component of
the subgraph induced by $F_+$, implying that the maximal diameter of clusters
remains in $\sO(\sigma)$.

Concerning the running time, observe that the matching can be selected by
simulating the Cole-Vishkin algorithm~\cite{CV-86} on the cluster graph. Due to
the bound on the diameter of clusters, the routine can be completed in
$\sO(\sigma)$ rounds.
\end{proof}

Step 6 of our subroutine pipelines several related pieces of information, namely
(i) the inter-cluster edges to select, (ii) input components ``responsible'' for
this edge to be selected, and (iii) input components which can be identified,
because the minimal subtrees of $F$ spanning them are not disjoint (and any
subforest of $F$ solving the instance connects the terminals in the respective
different input components, too).

\begin{lemma}\label{lemma:prune_6_7}
Steps 6 and 7 of the above routine can be completed in $\sO(\sigma+k+D)$ rounds.
\end{lemma}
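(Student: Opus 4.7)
The plan is to analyze Step~6 as a pipelined convergecast along the BFS tree and Step~7 as the reverse pipelined broadcast, then invoke the standard pipelining bound: on a tree of depth $\BO(D)$, delivery of $M$ messages of $\BO(\log n)$ bits each requires $\BO(D + M)$ rounds. It therefore suffices to bound the number of messages traversing any fixed BFS edge by $\sO(\sigma + k)$, and the claim then follows by adding the $D$ term from the tree depth.

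First I would verify that the redundancy test forces each node $u$ to forward a pair $(C,\Comp)$ to its parent only when doing so strictly enlarges the simulated shadow state $\hat{l}_C$ or $\hat{l}_e$ at the parent. Any useful such message falls into one of two types: (i)~the first time a label $\Comp \in \Lambda$ is announced upward from $u$'s subtree, of which there are at most $k$; or (ii)~an ``intersection'' event that causes a previously unmarked edge of $F_{\cal C}$ (of which there are at most $\sigma-1$) to gain the label $\Comp$ in $\hat{l}_e$, witnessed by two clusters on opposite sides of the edge. The key observation to make the accounting work is that, once the $\Comp$-occupied subtree of $F_{\cal C}$ visible from $u$'s subtree has been fully exposed to the parent, all further $(C',\Comp)$ messages leave the parent's shadow state unchanged and so are suppressed.

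Equipped with the per-edge bound, the standard pipelined convergecast of Step~6 runs in $\sO(D + \sigma + k)$ rounds and leaves the root $R$ with a compressed transcript of events sufficient to reconstruct $l_e(R)$ and $l_C(R)$ for every $e \in F_{\cal C}$ and $C \in {\cal C}$ by local simulation. Step~7 then re-uses the same encoding to broadcast those events downward along the BFS tree; each intermediate node simulates the resulting updates to $l_e, l_C$ locally as it forwards, and the same accounting bounds the running time of Step~7 by $\sO(D + \sigma + k)$.

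The main obstacle is the refined per-edge bound on useful messages. A naive count that assigns one message per distinct $(C,\Comp) \in {\cal C} \times \Lambda$ pair only yields $\BO(\sigma k)$ per BFS edge, which is too loose. Bringing this down to $\sO(\sigma + k)$ requires leveraging both the forest structure of $F_{\cal C}$ and the transitive closure implicit in the marking rule, and charging each type~(ii) message either to the first appearance of a new label (at most $k$ times) or to the first marking of a new edge of $F_{\cal C}$ with any label (at most $\sigma-1$ times) after which further messages about the same $(C,\Comp)$ structure collapse in the shadow simulation.
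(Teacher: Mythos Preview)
Your overall framework---pipelined convergecast on the BFS tree plus an $\BO(\sigma+k)$ bound on the number of non-redundant messages crossing any fixed edge, followed by the symmetric broadcast for Step~7---is exactly what the paper does. The gap is in your message-count argument.

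Your two types do not exhaust the non-redundant messages. A message $(C,\Comp)$ can be non-redundant because it causes an \emph{already-marked} edge $e$ of $F_{\cal C}$ (currently carrying some $\Comp'$ not yet identified with $\Comp$) to acquire $\Comp$ as well; this triggers the third rule of Step~6 and merges the two label classes. Such a message is neither your type~(i) (it is not the first appearance of $\Comp$) nor your type~(ii) (the edge was not previously unmarked). Your closing observation that ``further messages about the same $(C,\Comp)$ structure collapse'' only suppresses repeated messages for a \emph{fixed} label once its subtree is exposed; it says nothing about a \emph{different} label being added to an edge that already carries one. And your charging phrase ``first appearance of a new label (at most $k$ times)'' inside the type~(ii) discussion does not clearly refer to these merge events---if it is meant to, the argument for why there are only $k$ of them is missing.

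The paper closes this with a potential argument that you gesture at but do not carry out. Any non-redundant message beyond the first for its label $\Comp$ causes some edge of $F_{\cal C}$ to receive a label it did not previously carry. Either that edge was unlabeled---at most $|F_{\cal C}|<\sigma$ such events in total---or it already carried a label from a different equivalence class, in which case the third rule merges the two classes and the number of distinct label classes (initially at most $k$) drops by one, bounding these events by $k-1$. Hence each node sends $\BO(k+\sigma)$ non-redundant messages, and the standard pipelining induction (which the paper spells out explicitly) gives $\BO(\sigma+k+D)$ rounds for Step~6; Step~7 broadcasts the same $\BO(\sigma+k)$ items in the same time.
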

\begin{proof}
For each $\Comp \in \Lambda$, any non-redundant (received) message
$(C,\Comp)$ after the first implies that some edge receives a new label.
Initially, the number of different possible labels for edges is at most $k$.
Whenever an already labeled edge receives an additional label, the number of
possible different edge labels is decreased by one. The number of times an
unlabeled edge can become labeled is at most $|{\cal C}|\leq \sigma$. We
conclude that no node sends more than $k+|F_{\cal C}|< k+\sigma$ messages.

Denote by $m_u$ the number of non-redundant messages non-root node $u$ will send
and by $d_u$ the depth of the subtree rooted at $u$. We claim that after $r$
rounds, $u$ has sent $\min\{r-d_u,m_u\}$ non-redundant messages, which we show
by induction on $d_u$. The statement is trivial for $d_u=0$, i.e., leaves. For
$d_u>0$, by the induction hypothesis at the end of round $r-1$, either $u$ has
received all non-redundant messages from its children or at least one child sent
at least $r-1-(d_u-1)$ non-redundant messages. Hence, if $u$ has not yet sent
$\min\{r-d_u,m_u\}$ non-redundant messages, it will send another message in
round $r$. By the induction hypothesis, it thus has sent $\min\{r-d_u,m_u\}$
messages by the end of round $r$, and the induction step succeeds.

We conclude that Step 6 completes within $\BO(\sigma+k+D)$ rounds. By
broadcasting $\BO(\sigma+k)$ non-redundant messages over the BFS tree, it can
make the result known to all nodes, also in $\BO(\sigma+k+D)$ rounds.
\end{proof}

It remains to show that the algorithm chooses the correct set of inter-cluster
edges and the demands derived from the respective selection process in Step 9
ensures that the intra-cluster edges selected into $F_0$ in Step 10 complete the
minimal solution.

\begin{lemma}\label{lemma:prune_correct}
The set $F_0\subseteq F$ returned is minimal with the property that it solves
the instance of \sfic solved by $F$.
\end{lemma}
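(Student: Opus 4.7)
The plan is to prove that $F_0$ equals the unique inclusion-minimal subforest $F^*$ of $F$ that solves the given \sfic instance, where
\[
F^* \;=\; \bigcup_{\Comp \in \Lambda} T_\Comp,
\]
and $T_\Comp$ denotes the minimal subtree of $F$ spanning all terminals of label $\Comp$. Equivalently, $e \in F$ lies in $F^*$ iff removing $e$ from $F$ separates two terminals of some common label. Establishing $F_0 = F^*$ immediately yields feasibility (each $T_\Comp \subseteq F_0$ connects its label-$\Comp$ terminals) and inclusion-minimality (every $e \in F_0$ lies on some $T_\Comp$, so its removal disconnects a same-label pair).

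First, the small-diameter components handled in Step~2 are solved optimally, so inside those components $F_0$ coincides with $F^*$ by construction. I may therefore assume that all remaining components of $(V,F)$ have diameter exceeding $\sigma$; by Step~3 they are partitioned into clusters $\mathcal{C}$ of depth $\mathcal{O}(\sigma)$ with $|\mathcal{C}| \le \sigma$, and Step~4 makes the contracted cluster forest $(\mathcal{C},F_{\mathcal{C}})$ globally known. The argument then splits along this quotient: I show that Steps~5--8 select exactly the inter-cluster edges of $F^*$, and that Steps~9--10 subsequently pick exactly its intra-cluster edges inside the tree spanning each cluster.

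The central invariant, which I would establish by induction on the BFS tree from leaves to root, is that after the propagation in Step~6 stabilizes,
\[
l_e(R) \;=\; \bigl\{\,\Comp \in \Lambda : e \in T_\Comp^{\mathcal{C}}\bigr\}
\]
for every $e \in F_{\mathcal{C}}$, where $T_\Comp^{\mathcal{C}}$ is the minimal subtree of $F_{\mathcal{C}}$ spanning the clusters that contain a $\Comp$-terminal. The first two bullets of Step~6 perform the expected work: each terminal's label travels upwards in its $l_C$ variable, and when two copies of the same $\Comp$ meet at an ancestor, that ancestor marks every edge and cluster on the unique $F_{\mathcal{C}}$-path joining the two originating clusters. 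The third bullet, which merges two labels whose $l_e$ label sets ever overlap, is the principal subtlety and the main obstacle of the proof: I plan to handle it by observing that if two labels' spanning subtrees in $F_{\mathcal{C}}$ share an edge then, because $F_{\mathcal{C}}$ is a forest, their union is still a subtree, so every further $F_{\mathcal{C}}$-edge required by one is also required by the other and identifying them preserves the invariant. I would additionally verify soundness of the redundancy test by bookkeeping the simulated $\hat{l}_e,\hat{l}_C$ variables: a message $(C,\Comp)$ is filtered out only when its downstream effects on any $l_e(R)$ or $l_{C'}(R)$ are already guaranteed by earlier propagation.

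For the intra-cluster phase, Step~9 equips each node $u$ with the demand set $l(u) = \{\Comp(u)\}\cup \bigcup_{e\ni u,\, e\in F_{\mathcal{C}}} l_e(R)$ (reading $\Comp(u)=\emptyset$ for non-terminals). By the invariant above and the fact that $F$ is a forest, these are exactly the labels $\Comp$ for which $u$ must lie on $T_\Comp\cap E(C)$, namely either because $u$ is itself a $\Comp$-terminal within its cluster $C$, or because $u$ is the unique endpoint in $C$ of an inter-cluster edge lying on $T_\Comp^{\mathcal{C}}$, i.e., the entry point of the remainder of $T_\Comp$ into $C$. Consequently, the minimum Steiner tree computed in Step~10 inside the tree spanning $C$ for the demand set of each label $\Comp$ returns exactly $T_\Comp\cap E(C)$. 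Taking the union over clusters, labels, and the small components handled in Step~2 yields $F_0 = F^*$, proving that $F_0$ is feasible and inclusion-minimal.
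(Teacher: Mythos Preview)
Your overall architecture matches the paper's: reduce to large components, split the minimal solution $F^*=\bigcup_{\Comp}T_{\Comp}$ into inter-cluster edges (Steps~5--8) and intra-cluster edges (Steps~9--10), and argue correctness of each part via an invariant on the $l_e$ variables at the BFS root. The gap is that your central invariant
\[
l_e(R)=\{\Comp\in\Lambda: e\in T_{\Comp}^{\mathcal C}\}
\]
is false as stated. The third bullet of Step~6 performs transitive label identification: whenever two labels $\Comp,\Comp'$ co-occur on some edge, both are propagated to every edge carrying either one. Thus if $T_{\Comp}^{\mathcal C}$ and $T_{\Comp'}^{\mathcal C}$ share an edge but are not equal, $l_e(R)$ will contain $\Comp'$ for edges $e\in T_{\Comp}^{\mathcal C}\setminus T_{\Comp'}^{\mathcal C}$. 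Your remark that ``identifying them preserves the invariant'' does not repair this: identification preserves the \emph{support} $\{e:l_e(R)\neq\emptyset\}$, which is what Step~8 needs, but not the per-label equality you claim.

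This imprecision propagates to your Step~10 argument. You assert that $l(u)$ contains exactly the labels $\Comp$ for which $u$ lies on $T_{\Comp}\cap E(C)$; this too is false, since an endpoint $u$ of an inter-cluster edge $e$ may inherit a merged label $\Comp'$ with $u\notin T_{\Comp'}$. The paper handles this correctly by proving only the weaker containment $l_e(R)\subseteq\{\Comp:\exists v\in T_e,\ \Comp(v)=\Comp\}$, where $T_e$ is the connected component of $F^*$ containing $e$, and then arguing that augmenting demands by labels already satisfied within the same component of $F^*$ cannot enlarge the minimal solving subforest. You need either this weaker invariant, or a careful reformulation in terms of the equivalence classes induced by the merging rule, together with the observation that the extra per-label demands in Step~10 are always dominated by demands of another label in the same class.
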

\begin{proof}
Clearly, Step 2 does not affect the correctness of the solution. Hence,
w.l.o.g.\ assume that $(V,F)$ contains only components of diameter larger than
$\sigma$, i.e., no deletions happen in Step 2.

Observe that the minimal subforest solving the instance is the union over all
$\Comp\in \Lambda$ of the minimal trees $T_{\Comp}\subseteq F$ spanning all
terminals $v\in T$ with $\Comp(v)=\Comp$. Note that by the initialization and
due to the rules of Step 6, $T_{\Comp}\cap F_{\cal C}$ will be labeled by
$\Comp$, i.e., $l_e(R)\supseteq \{\Comp \in
\Lambda\,|\,e\in T_{\Comp}\}$. On the other hand, if $e\notin \bigcup_{\Comp
\in \Lambda}T_{\Comp}$, the set of input labels on each side of the edge must be
disjoint. Since Step 6 will maintain this invariant, the edge will satisfy that
$l_e(R)=\emptyset$. We conclude that the edges selected into $F_0$ in Step 8 are
exactly the edges from $F_{\cal C}$ in a minimal solution.

Denote for each node $e\in F$ in the minimal solution by $T_e\subseteq F$ its
component in the minimal solution; for $u\in T_e$ for some such $e$, denote
$T_u=T_e$ ($T_u:=\emptyset$ otherwise). We claim that $l_e(R)\subseteq
\{\Comp\,|\,\exists v\in T_e:\,\Comp(v)=\Comp\}$ at the end of Step 6. To see
this, we claim that the algorithm maintains for all $u\in V$ the invariants that
$l_e(u)\subseteq \{\Comp\,|\,\exists v\in T_e:\,\Comp(v)=\Comp\}$ and
$l_C(u)\subseteq \{\Comp\,|\,\exists v\in C, w\in T_v:\,\Comp(w)=\Comp\}$. This
holds trivially after the initialization in Step 5. According to the first rule
of Step 6 and the invariants, an sent message $(C,\Comp)$ satisfies that $\Comp
\subseteq \{\Comp\,|\,\exists v\in C, w\in T_v:\,\Comp(w)=\Comp\}$. Hence, a
node $w$ receiving ``$(C,\Comp)$'' will not violate the invariant due to its
change of $l_C(w)$. If there is some $C'\in {\cal C}$ with $\Comp\in l_{C'}(w)$,
the invariant implies that $C$ and $C'$ both contain nodes that are connected by
the minimal solution to terminals $u,u'\in T$ with $\Comp(u)=\Comp(u')=\Comp$.
Since these terminals must be connected, too, the path connecting $C$ and $C'$
is part of a single connectivity component of the minimal solution, which
contains terminals labeled $\Comp$. We conclude that the invariants cannot be
violated (first) due to the second rule of Step 6. Because if $T_{\Comp}\cap
T_{\Comp'}\neq \emptyset$, they must be part of the same connectivity component
of the minimal solution, the invariants cannot be violated (first) due to the
third rule of Step 6. In summary, the invariants are upheld, yielding in
particular that $l_e(R)\subseteq \{\Comp\,|\,\exists v\in
T_e:\,\Comp(v)=\Comp\}$ for each $e$ with $l_e(R)\neq \emptyset$.

From this result, it follows that replacing the labels $\Comp(v)$, $v\in T$, by
the sets $l(u)$, $u\in V$, defined in Step 9, does not change the minimal subset
of $F$ that satisfies all constraints: if endpoint $u\in V$ of edge $e\in
F_{\cal C}$ sets $l(u):=l(u)\cup \{\Comp\}$ for $\Comp \in l_e(R)$, it follows
that $T_{\Comp}\subseteq T_e$ and therefore $u$ is connected to all terminals
$v\in T$ with $\Comp(v)=\Comp$ by the minimal solution.

Trivially, Step 10 cannot violate the minimality of the computed solution; it
thus remains to show that after Step 10, $F_0$ solves the instance. Suppose
$v,w\in T$ with $\Comp(v)=\Comp(w)$. If $v,w\in C$ for some $C$, Step 10 ensures
that $v$ and $w$ are connected by $F_0$, since $\Comp\in l(v)\cap l(w)$. Hence,
suppose that $v\in C\neq C'\ni w$. Denote by $p_{vw}$ the unique path connecting
$v$ and $w$ in $F$. We already observed that $p\cap F_{\cal C} \subseteq F_0$
and each edge $e\in p\cap F_{\cal C}$ satisfies that $\Comp\in l_e(R)$. Due to
Steps 9 and 10, it follows that $F_0$ connects $v$ and $w$.
\end{proof}

We summarize the results of our analysis of the pruning routine as follows.

\begin{corollary}\label{coro:prune}
Given an instance of \sfic and a forest $F$ on $\sigma^2$ nodes that solves
it, the above routine computes the minimal $F_0\subseteq F$ solving the
instance. It can be implemented with running time $\sO(\sigma+k+D)$.
\end{corollary}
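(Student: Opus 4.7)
The plan is to assemble the corollary directly from the preceding lemmas on the individual steps of the routine. For correctness, I would simply invoke Lemma \ref{lemma:prune_correct}, which already establishes that the returned $F_0\subseteq F$ is the minimal subforest solving the given instance of \sfic. Thus only the running time bound needs to be argued here.

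For the running time, I would walk through the routine step by step. Step~1 constructs a BFS tree and broadcasts $\Lambda$ to all nodes, taking $\BO(D+k)$ rounds. Step~2 and Step~10 both solve the ``mark the minimal Steiner-forest in a tree of depth at most $\sigma$'' subproblem, which by Lemma \ref{lemma:prune_2_10} costs $\BO(\sigma+k)$ rounds. Step~3, the cluster-growing procedure, costs $\sO(\sigma)$ rounds by Lemma \ref{lemma:prune_3}. In Step~4, the contracted forest $({\cal C},F_{\cal C})$ has $|{\cal C}|\leq \sigma$ nodes and hence at most $\sigma-1$ edges; its incidence information can be collected at $R$ along the BFS tree and re-broadcast in $\BO(\sigma+D)$ rounds. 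Step~5 is purely local. Steps~6 and~7, where the labels $l_e(\cdot)$ and $l_C(\cdot)$ are pipelined up and down the BFS tree, take $\sO(\sigma+k+D)$ rounds by Lemma \ref{lemma:prune_6_7}. Steps~8 and~9 require no further communication since each node already knows $l_e(R)$ for all relevant $e$. Summing up gives a total of $\sO(\sigma+k+D)$ rounds, as claimed.

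The only mildly non-routine point is to check that Step~4 really does fit in $\BO(\sigma+D)$ rounds, since the clusters' spanning trees in $F$ may have depth $\sO(\sigma)$ and we communicate only over the BFS tree, not over $F$. This is handled by having each cluster root $R_C$ locally enumerate its incident inter-cluster edges (available from its neighbors' reports along the rooted spanning trees inside the cluster, which was established as part of Step~3) and then convergecasting the list to $R$; since $|F_{\cal C}|<\sigma$, both the convergecast and the ensuing broadcast pipeline in $\BO(\sigma+D)$ rounds. With this observation in hand, stringing the lemmas together yields the corollary.
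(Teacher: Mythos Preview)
Your proposal is correct and follows essentially the same approach as the paper: invoke \lemmaref{lemma:prune_correct} for correctness and sum the per-step bounds from Lemmas~\ref{lemma:prune_2_10}, \ref{lemma:prune_3}, and~\ref{lemma:prune_6_7}, handling Step~4 separately via the $|{\cal C}|\leq \sigma$ bound. The paper's treatment of Step~4 is slightly more direct---it simply notes that the endpoints of edges in $F_{\cal C}$ recognize them locally (after exchanging cluster identifiers with neighbors) and pipeline them up the BFS tree---rather than routing through cluster roots first, but your variant is also valid and yields the same bound.
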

\begin{proof}
Correctness is shown in \lemmaref{lemma:prune_correct}. Step 1 requires
$\BO(k+D)$ rounds. Step 4 can be completed in $\sO(\sigma+D)$ rounds, since due
to Step 3 $|{\cal C}|\leq \sigma$ and the nodes incident to the edges in
$F_{\cal C}$ know that these edges are in $F_{\cal C}$. Steps 5, 8, 9, and 11
require local computations only. The remaining steps can be completed within
$\sO(\sigma+k+D)$ rounds by Lemmas~\ref{lemma:prune_2_10}, \ref{lemma:prune_3},
and \ref{lemma:prune_6_7}.
\end{proof}

We conclude that executing the pruning routine on the input $F$ determined by
the algorithm from \sectionref{sec:sublinear} yields a fast factor
$(2+\varepsilon)$-approximation.

\begin{theorem}\label{theorem:2+eps_distributed}
For any constant $\varepsilon>0$, a deterministic distributed algorithm can
compute a solution for problem \sfic that is optimal up to factor
$(2+\varepsilon)$ in $\sO(sk+\sqrt{\min\{st,n\}})$ rounds.
\end{theorem}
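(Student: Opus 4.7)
My plan is to obtain the theorem as a direct consequence of the two main building blocks already established in \appref{app:sublinear}: \corollaryref{coro:growth}, which produces a ``candidate'' forest $F$ whose minimal solving subforest is $(2+\varepsilon)$-approximate, and \corollaryref{coro:prune}, which efficiently prunes $F$ down to its minimal solving subforest. The overall algorithm is simply the composition of these two routines.

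More concretely, the first step is to invoke the algorithm of \corollaryref{coro:growth} on the given instance of \sfic. This runs in $\sO(sk+\sigma)$ rounds, where $\sigma=\sqrt{\min\{st,n\}}$, and outputs a forest $F$ such that the minimal subforest $F_0\subseteq F$ solving the instance has weight within factor $(2+\varepsilon)$ of the optimum. A key point to verify is that $F$ has at most $\sigma^2=\min\{st,n\}$ nodes, so that it meets the size precondition of the pruning routine; this follows from \lemmaref{lemma:large_moats} (by the end of the growth phases each moat's connectivity component in $(V,F)$ contains at most $\sigma^2$ nodes, and the union of all such components is exactly the node set spanned by $F$).

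The second step is to feed $F$ into the pruning routine of \corollaryref{coro:prune}, which in $\sO(\sigma+k+D)$ rounds returns the minimal subset $F_0\subseteq F$ that still solves the instance. By the approximation guarantee from the first step, $W(F_0)\leq (2+\varepsilon)W(F^\ast)$ where $F^\ast$ is an optimal solution.

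Adding the two running times yields $\sO(sk+\sigma)+\sO(\sigma+k+D)=\sO(sk+\sigma+D)$, and since $D\leq s\leq sk$ this simplifies to $\sO(sk+\sqrt{\min\{st,n\}})$ as claimed. The only mildly delicate piece is bookkeeping: checking that the input to the pruning algorithm indeed satisfies the node-count bound $\sigma^2$, and that the nodes locally ``know'' which of their incident edges are in $F$ at the moment the second phase starts (both facts are immediate from the distributed bookkeeping maintained throughout \algref{algo:central_approx}'s distributed emulation described in \appref{app:sublinear-code}). No other nontrivial step is required, so I expect no real obstacle beyond this sanity check.
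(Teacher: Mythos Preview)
Your proposal is correct and follows essentially the same approach as the paper: invoke \corollaryref{coro:growth} to build $F$, then \corollaryref{coro:prune} to extract the minimal solving subforest, and combine the running times. The only minor difference is how you justify the $\sigma^2$ node bound on $F$: the paper argues directly that $F$ is the union of at most $t-1$ associated paths, each of at most $s$ hops (and trivially $F$ spans at most $n$ nodes), whereas you appeal to \lemmaref{lemma:large_moats}---note that the $\sigma^2$ bound you need is actually established in that lemma's \emph{proof} rather than its statement, so strictly speaking you should either cite the proof or give the direct counting argument the paper uses.
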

\begin{proof}
By \corollaryref{coro:growth}, a forest solving the problem whose minimal
subforest is optimal up to factor $2+\varepsilon$ can be computed in
$\sO(sk+\sqrt{\min\{st,n\}})$. Note that the
forest is the union of at most $t-1$ paths of hop length at most $s$, and
trivially contains at most $n$ nodes. Hence, we can apply
\corollaryref{coro:prune} with $\sigma=\sqrt{\min\{st,n\}}$ to show the claim of
the theorem.
\end{proof}

Since any instance can be transformed to one with minimal inputs efficiently and
the number of different terminal decompositions that needs to be computed is
trivially bounded by $\WD$, we obtain the following stronger bound as a
corollary.

\begin{proof}[Proof of \corollaryref{coro:2+eps_distributed}]
By \lemmaref{lemma:transform_to_minimal}, we can transform the instance to a
minimal instance in $\BO(k+D)$ rounds; the minimal instance has $k_0$ input
components. The number of different possible moat sizes at which merges may
happen is bounded by $\WD$ (since edge weights are assumed to be integer and
moats grow to size at most $\WD/2$). If multiple merge phases end for the same
such value, we can complete all of them without having to recompute the terminal
decomposition. The result thus follows from
\theoremref{theorem:2+eps_distributed}.
\end{proof}

\section{Proofs for \texorpdfstring{\sectionref{sec-alg2}}{Section
\ref*{sec-alg2}}}

\subsection{Partial Construction of the Virtual Tree}\label{sec:partial}
We start out with some basic observations on the virtual tree that is
constructed by the algorithm from~\cite{KKMPT-12}.
\begin{lemma}\label{lemma:stage1_tree_s}
The following holds for the virtual tree described above.
\begin{compactenum}
\item The tree nodes corresponding to the set ${\cal S}$ of the $\sqrt{n}$ nodes
of highest rank induce a subtree.
\item For each leaf $v$, denote by $i_v\in \{0,\ldots,L\}$ the
minimal index so that $B_G(v,2^{i_v}\beta)\cap {\cal S}\neq \emptyset$. Then,
for $i\in \{0,\ldots,i_v-1\}$, there is a least-weight path from $v$ to $v_i$ of
$\tilde{\BO}(\sqrt{n})$ hops w.h.p.
\item For each leaf $v$, w.h.p.\ there is a node
$\tilde{v}_{i_v}\in {\cal S}$ for which $\Wd(v,\tilde{v}_{i_v})=\min_{w\in {\cal
S}}\{\Wd(v,w)\}$ and there is a least-weight path from $v$ to $\tilde{v}_{i_v}$
of $\tilde{\BO}(\sqrt{n})$ hops.
\end{compactenum}
\end{lemma}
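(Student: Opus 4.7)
The plan is to prove all three parts using one probabilistic estimate about the random ID assignment, combined with a straightforward monotonicity observation. For claim 1, I will use that $B_G(v,\beta 2^i)\subseteq B_G(v,\beta 2^{i+1})$, so the maximum-ID node in the larger ball has ID at least as large as the one in the smaller ball; hence the sequence $\mathrm{ID}(v_0)\leq\mathrm{ID}(v_1)\leq\ldots\leq\mathrm{ID}(v_L)$ is non-decreasing, and since ${\cal S}$ consists precisely of the $\sqrt n$ nodes of highest ID, the ancestors of $v$ lying in ${\cal S}$ form a contiguous suffix of $v_0,\ldots,v_L$. Combining these suffixes over all leaves shows that the physical nodes in ${\cal S}$ induce a subtree of the virtual tree.

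For claims 2 and 3, the key is the following estimate. Fix any pair $(v,u)$ and condition on the weighted graph and on $\beta$; then the canonical (uniquely tie-broken) least-weight path $p^*(v,u)$ and its hop count $h(v,u)$ depend only on the weights and are independent of the random IDs. Since ${\cal S}$ is a uniformly random size-$\sqrt n$ subset of $V$, for any fixed set of $k$ nodes the probability that none of them lies in ${\cal S}$ is at most $(1-1/\sqrt n)^k\leq e^{-k/\sqrt n}$, which is $n^{-\Omega(c)}$ when $k=c\sqrt n\ln n$. A union bound over the at most $n^2$ pairs $(v,u)$ then establishes the ``good event'': w.h.p., whenever $h(v,u)>c\sqrt n\ln n$, the path $p^*(v,u)$ contains at least one node of ${\cal S}$.

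Claim 2 is then immediate on the good event: for $i<i_v$ the path $p^*(v,v_i)$ lies inside $B_G(v,\beta 2^i)$, which is disjoint from ${\cal S}$ by definition of $i_v$, forcing $h(v,v_i)\in\sO(\sqrt n)$. For claim 3, take $\tilde v_{i_v}$ to be any node of ${\cal S}\cap B_G(v,\beta 2^{i_v})$ minimizing the distance to $v$ (existence follows from the definition of $i_v$). Since edge weights are positive, every interior node $u$ of $p^*(v,\tilde v_{i_v})$ satisfies $\Wd(v,u)<\Wd(v,\tilde v_{i_v})$ and is therefore not in ${\cal S}$ by minimality of $\tilde v_{i_v}$. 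A slight variant of the estimate above, in which the endpoint must lie in ${\cal S}$ while the $h(v,\tilde v_{i_v})-1$ interior nodes avoid it, together with a union bound over all $n^2$ pairs, then gives the required $\sO(\sqrt n)$-hop bound. The main subtlety is this endpoint-conditioning in claim 3: conditioning on a single specified node belonging to the uniformly random size-$\sqrt n$ set leaves the remaining $\sqrt n-1$ elements uniformly random among the other $n-1$ nodes, so the probability bound degrades only by a constant factor and the union bound still goes through.
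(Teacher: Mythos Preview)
Your proposal is correct and follows essentially the same approach as the paper: monotonicity of IDs along the leaf-to-root path for claim~1, and the observation that a uniformly random $\sqrt{n}$-subset hits every fixed set of $\Theta(\sqrt{n}\log n)$ nodes w.h.p., combined with a union bound over all $n^2$ pairs, for claims~2 and~3. Your product bound $(1-1/\sqrt{n})^k$ replaces the paper's explicit binomial-coefficient computation, and your explicit endpoint-conditioning for claim~3 spells out what the paper summarizes as ``the same reasoning applies if we condition on $w\in{\cal S}$''; both are equivalent.
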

\begin{proof}
The first statement follows from the fact that for each $i\in \{0,\ldots,L-1\}$,
the index of $v_{i+1}$ w.r.t.\ the random order must be larger than that of
$v_i$, since $v_{i+1}$ attains the maximum index over
$B_G(v,2^{i+1}\beta )\supseteq B_G(v,2^i\beta )$.

For the second statement, consider for any pair of nodes $v$ and $w$ a
least-hop shortest path from $v$ to $w$. If this path contains at least
$(c+3)\sqrt{n}\ln n$ hops (for a given constant $c$), it contains also at least
$(c+3)\sqrt{n}\ln n$ nodes (since least-weight paths cannot revisit nodes).
Observe that ${\cal S}$ is a uniformly random subset of the nodes. Hence, the
probability that no node from ${\cal S}$ is on the path is bounded from above
by
\begin{eqnarray*}
\binom{n-|{\cal S}|}{(c+3)\sqrt{n}\ln n}\cdot \binom{n}{(c+3)\sqrt{n}\ln
n}^{-1}&=&\frac{(n-\sqrt{n})!}{n!}\cdot \frac{(n-(c+3)\sqrt{n}\ln
n)!}{(n-(c+3)\sqrt{n}\ln n-\sqrt{n})!}\\
&< & \frac{(n-(c+3)\sqrt{n}\ln
n)^{\sqrt{n}}}{(n-\sqrt{n})^{\sqrt{n}}}\\
&< & \left(1-\frac{(c+2)\ln n}{\sqrt{n}}\right)^{\sqrt{n}}\\
&<& e^{-(c+2)\ln n}\\
&=& n^{-c-2}.
\end{eqnarray*}
By the union bound applied to all pairs of nodes $v,w\in V$, we conclude that
the probability that \emph{any} of these paths contains no node from ${\cal S}$
is at most $n^{-c}$. In other words, w.h.p., for each pair of nodes $v,w\in
V$, either a least-weight path from $v$ to $w$ with
$\tilde{\BO}(\sqrt{n})$ hops exists, or there is a node from ${\cal S}$ on a
least-weight path from $v$ to $w$, which therefore is closer to $v$ w.r.t.\
weighted distance than $w$. The second claim of the lemma follows. Regarding
the third claim, observe that the same reasoning applies if we condition on
$w\in {\cal S}$, showing that w.h.p.\ the least-weight path from $v$ the nodes
$w\in {\cal S}$ minimizing $\Wd(v,w)$ must have $\tilde{\BO}(\sqrt{n})$ hops.
\end{proof}
We leverage these insights to compute the virtual tree partially.
\begin{lemma}\label{lemma:stage1_partial}
Delete the internal nodes corresponding to the set ${\cal S}$ of the $\sqrt{n}$
nodes of highest rank from the virtual tree. W.h.p., the resulting forest can be
computed within $\tilde{\BO}(\sqrt{n}+D)$ rounds. Moreover, within this number
of rounds, each node $v\in V\setminus {\cal S}$ can learn about
$\tilde{v}_{i_v}$ and all nodes on the corresponding least-weight path can learn
the next hop on this path w.h.p. All detected least-weight paths have
$\tilde{\BO}(\sqrt{n})$ hops w.h.p.
\end{lemma}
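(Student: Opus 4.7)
The plan is to compute the required partial virtual tree by running two Bellman-Ford-style computations, truncated after $\tilde{\BO}(\sqrt n)$ iterations and carefully pipelined. The starting observation is that \lemmaref{lemma:stage1_tree_s} already bounds the hop length of every path we need: claim~2 guarantees that for $i<i_v$ there is a least-weight path from $v$ to $v_i$ of $\tilde{\BO}(\sqrt n)$ hops w.h.p., and claim~3 guarantees the same for the least-weight path from $v$ to its closest node $\tilde v_{i_v}\in {\cal S}$. This immediately yields the final ``hop length'' claim of the lemma and, crucially, tells us it suffices to run Bellman-Ford for only $\tilde{\BO}(\sqrt n)$ rounds.

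To determine $\tilde v_{i_v}$ for each $v\in V\setminus{\cal S}$, I would run multi-source Bellman-Ford with the nodes in ${\cal S}$ as the sources, where each node maintains the smallest known distance to a source, the identifier of that source, and the outgoing edge pointing to the next hop on the corresponding least-weight path. A node forwards an update only when it strictly improves the current local minimum, so the congestion on every edge is $\BO(1)$ per round. After $\tilde{\BO}(\sqrt n)$ rounds, claim~3 of \lemmaref{lemma:stage1_tree_s} implies that every $v$ has converged to the correct $\tilde v_{i_v}$, and each node along the relevant shortest path has stored the next hop toward $\tilde v_{i_v}$.

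For the ancestors $v_0,\ldots,v_{i_v-1}$ outside ${\cal S}$, I would run a single combined Bellman-Ford that propagates candidate tuples of the form $(u,\,\text{rank}(u),\,\Wd(v,u))$, where a tuple is forwarded only if it improves the locally known maximum rank at some scale $i$ with $\Wd(v,u)\le\beta 2^i$. By the analysis of the tree construction in \cite{KKMPT-12}, w.h.p.\ each node lies on at most $\BO(\log n)$ distinct least-weight ancestor-paths, so the total number of distinct candidate tuples that ever need to cross any given edge is $\tilde{\BO}(1)$. A round-robin schedule of outgoing tuples therefore keeps per-edge congestion in $\tilde{\BO}(1)$, and combined with the truncation at $\tilde{\BO}(\sqrt n)$ hops the whole computation finishes within the claimed bound. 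Next-hop pointers for each ancestor are recorded as a side effect: whenever $u$ accepts a new candidate $w$ from neighbor $u'$, it stores $u'$ as the next hop on the path to $w$. An initial $\BO(D)$-round BFS construction is used to broadcast the parameter $\beta\in[1,2]$ sampled at the root and to detect global termination.

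The main obstacle is controlling congestion while simultaneously handling all $L+1\in\BO(\log n)$ scales together with the ${\cal S}$-closest-node computation. Truncating at $\tilde{\BO}(\sqrt n)$ iterations is safe only because of the hop-length guarantees of \lemmaref{lemma:stage1_tree_s}, and the pipelining is only efficient because of the $\BO(\log n)$ paths-per-node bound from \cite{KKMPT-12}; once both of these structural facts are invoked, the remaining bookkeeping---verifying that the correct identifiers, distances, and next-hop pointers are propagated consistently, and that nodes locally recognize whether they lie on a discarded internal branch inside ${\cal S}$---is routine.
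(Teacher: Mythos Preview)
Your approach is essentially the same as the paper's: a multi-source Bellman-Ford from ${\cal S}$ to find $\tilde v_{i_v}$ (which the paper phrases as a Voronoi decomposition), followed by a truncated run of the LE-list construction from~\cite{KKMPT-12} for the ancestors $v_0,\ldots,v_{i_v-1}$, with a BFS tree for global coordination. The paper's write-up is actually terser than yours---it simply invokes the $\tilde{\BO}(\tilde s+D)$ running-time bound of~\cite{KKMPT-12} as a black box rather than re-describing the propagation.

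One small imprecision: the $\BO(\log n)$ bound you cite for congestion control in the ancestor computation (``each node lies on at most $\BO(\log n)$ distinct least-weight ancestor-paths'') is the bound the paper uses later for the \emph{edge-selection} stage; the standard congestion argument for LE-list construction itself rests on the related but distinct fact that each node's LE list has size $\BO(\log n)$ w.h.p. Both bounds come from~\cite{KKMPT-12} and either suffices, so this does not affect correctness.
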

\begin{proof}
We compute a Voronoi decomposition of $G$ w.r.t.\ to ${\cal S}$. This can be
done by, essentially, the single-source Bellmann-Ford algorithm\footnote{Connect
all nodes in ${\cal S}$ to a virtual node by edges of weight $0$ and piggy-back
the identifier of the node from ${\cal S}$ through which the constructed path to
the virtual node would pass on each message.} in time $\tilde{\BO}(\sqrt{n})$
w.h.p., since by Statement (iii) of \lemmaref{lemma:stage1_tree_s}, for each
$v\notin {\cal S}$, there is a least-weight path from $v$ to $\tilde{v}_{i_v}\in
{\cal S}$ of $\tilde{\BO}(\sqrt{n})$ hops w.h.p. Termination can be detected
over a BFS tree, requiring additional $\BO(D)$ rounds. This shows the second
claim of the lemma. As a byproduct, each node $v\notin {\cal S}$ learns $i_v$,
and the nodes on the corresponding least-weight path from $v$ to
$\tilde{v}_{i_v}$ learn the next routing hop on the path.

Now we execute the algorithm from~\cite{KKMPT-12}, however, constructing only
the forest resulting from deleting the internal nodes corresponding to nodes
from ${\cal S}$. By Statement~(i) of \lemmaref{lemma:stage1_tree_s}, this can be
done by determining, for each $v\in V\setminus {\cal S}$, the nodes $v_i$,
$i\in \{0,\ldots,i_v-1\}$, and the corresponding least-weight paths in $G$
connecting $v$ to the $v_i$. The algorithm from~\cite{KKMPT-12} requires time
$\tilde{\BO}(\tilde{s}+D)$ to do so, where $\tilde{s}$ is the maximal length of
any of the detected paths;\footnote{At the heart of the tree embedding algorithm
from~\cite{KKMPT-12} lies the construction of so-called LE lists. The algorithm
proceeds in phases of $\BO(\log n)$ rounds, where in each round, information
spreads by one hop along least-weight paths.} by Statement~(ii) of
\lemmaref{lemma:stage1_tree_s}, $\tilde{s}\in \tilde{\BO}(\sqrt{n})$ w.h.p.
\end{proof}

\subsection{Tree Construction and Edge Selection
Stage}\label{sec:tree_selection}

\subsubsection*{Time Complexity}

To prove that the first stage can be completed sufficiently fast, we show
helper lemmas concerning Steps 3a, 3c, and 3d of each phase of the stage.

\begin{lemma}\label{lemma:stage1_3a}
Fix a phase $i\in \{0,\ldots,L\}$ of the first stage. Step 3a of the phase can
be completed in $\BO(k+D)$ rounds.
\end{lemma}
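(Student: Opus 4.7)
The plan is to reduce Step 3a to the same convergecast/broadcast pattern used in the proof of \lemmaref{lemma:transform_to_minimal}. Recall that at the start of the first stage a BFS tree rooted at some node $R$ of depth $\BO(D)$ has been constructed (as part of the initialization of the stage); we reuse it here. Thus we only need to show that, for each $\Comp\in\Lambda$, we can determine at $R$ whether there is a unique terminal $v\in T$ with $\Comp\in l(v)$, broadcast the answer, and then locally delete such $\Comp$ from $l(v)$ at the respective (unique) terminal.

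First I would run, per-label, the following aggregation. Each node $v$ with $\Comp\in l(v)$ sends the pair $(v,\Comp)$ towards its parent. Each internal node maintains, per label, a small state (\emph{no pair seen}, \emph{one pair $(v,\Comp)$ seen}, \emph{two distinct pairs or marked as multiple}). Upon receiving a second distinct pair for the same $\Comp$ or a ``multiple'' flag, the node forwards $(\true,\Comp)$ once and subsequently suppresses further traffic for $\Comp$. By exactly the same accounting as in \lemmaref{lemma:transform_to_minimal}, each node sends at most two messages per label, so the total traffic per edge is $\BO(k)$ and the pipelined convergecast completes in $\BO(k+D)$ rounds. At the end, $R$ can decide, for every $\Comp\in\Lambda$, whether the set $\{v\in T\mid \Comp\in l(v)\}$ is a singleton (exactly one $(v,\Comp)$ reached $R$ and no ``multiple'' flag did) and, if so, knows the identity of this unique terminal.

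Next, $R$ broadcasts back along the BFS tree the set of labels that turned out to be unique, together with the identifier of the unique terminal; this is again $\BO(k)$ messages through a tree of depth $\BO(D)$, so pipelining gives $\BO(k+D)$ rounds. Each node $v\in T$ can then locally remove $\Comp$ from $l(v)$ whenever $v$ is flagged as the unique terminal carrying $\Comp$. Since all other operations in Step 3a are local, the total round complexity is $\BO(k+D)$.

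The only subtle point will be arguing that the per-label pipelining really keeps the congestion at each edge in $\BO(1)$ per round under an $\BO(\log n)$-size message, i.e.\ that distinct labels' messages do not block one another; but this follows verbatim from the argument for the MST-style pipelining of \lemmaref{lemma:filtering}/\lemmaref{lemma:transform_to_minimal}: the messages are ordered by label, each node forwards at most two per label, and the standard analysis yields a delay of $\BO(k+D)$. No new ideas are needed beyond this bookkeeping.
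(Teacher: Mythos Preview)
Your proposal is correct and follows essentially the same approach as the paper: a pipelined convergecast on the BFS tree in which each node forwards at most two messages per label (one identifying a witness, one signaling ``multiple''), followed by a broadcast of the result, each taking $\BO(k+D)$ rounds. The only cosmetic difference is that you also broadcast the identity of the unique witness, which is unnecessary (the unique node carrying $\Comp$ can recognize itself locally once it learns $\Comp$ is a singleton) but harmless for the bound.
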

\begin{proof}
The following is performed on a BFS tree.
\begin{compactitem}
\item If the third rule does not prohibit this, each node sends for each
$\Comp\in l(v)$ a message $(\Comp,v)$ to its parent.
\item Each node receiving a message forwards it to its parent, unless prohibited
by the third rule.
\item If a node ever receives a second message containing label
$\Comp$, it sends $(\Comp,\bot)$ to its parent and ignores all other messages
concerning $\Comp$.
\item Once this completes, the root of the tree can determine for which labels
$\Comp \in \Lambda$ there is only a single active terminal $v\in T$ with
$l(v)=\Comp$.
\end{compactitem}
This operation completes within $\BO(D+k)$ rounds, since no node sends more than
two messages for each label. The root broadcasts the result over the BFS tree to
all nodes, which also takes time $\BO(D+k)$.
\end{proof}

\begin{lemma}\label{lemma:stage1_3cd}
Fix a phase $i\in\{0,\ldots,L\}$ of the first stage. Steps 3c and 3d of the
phase can be implemented such that they complete within
$\sO(\min\{s,\sqrt{n}\}+k+D)$ rounds w.h.p.
\end{lemma}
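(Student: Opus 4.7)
The plan is to exploit the bounded-congestion structure of the routing trees induced by the virtual tree construction. For any destination $w$ that arises as a target in Step 3c, i.e., any internal virtual node $v_i$ or any $\tilde{v}_{i_v}\in{\cal S}$, the union of ``next hop'' pointers on the physical least-weight paths from sources to $w$ forms a tree $\mathcal{T}_w$ rooted at $w$. By \lemmaref{lemma:stage1_partial} (and the argument underlying \lemmaref{lemma:stage1_tree_s}), every such tree has depth $\tilde{\BO}(\min\{s,\sqrt n\})$ w.h.p. The key structural fact, inherited from the LE-list analysis of~\cite{KKMPT-12}, is that w.h.p.\ every physical node lies in at most $\BO(\log n)$ distinct trees $\mathcal{T}_w$, taken over all destinations and all phases.

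For Step 3c, I would implement the sending loop by time-multiplexing across the (at most $\BO(\log n)$) routing trees through each node in a fixed round-robin schedule: in each node's slot for destination $w$, it picks one unsent pair $(\Comp,w)\in\unsent\setminus\sent$ and forwards it toward $w$ along $\mathcal{T}_w$, updating $\sent$. Since the filtering rule guarantees that each node sends each $(\Comp,w)$ at most once, the total number of messages that ever traverse a fixed $\mathcal{T}_w$ is bounded by $k$ (one per $\Comp\in\Lambda$). Standard pipelining up a tree of depth $d$ with $k$ messages takes $\BO(d+k)$ rounds of its own slots; applying this with $d\in\tilde{\BO}(\min\{s,\sqrt n\})$ and then paying the $\BO(\log n)$ round-robin overhead yields a total of $\tilde{\BO}(\min\{s,\sqrt n\}+k)$ rounds. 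Global termination (``no more messages sent'') is detected in an additional $\BO(D)$ rounds by convergecast on the BFS tree constructed earlier.

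For Step 3d, each $w$ with $\hat{l}(w)\neq\emptyset$ needs to ship at most $|\hat{l}(w)|\le k$ labels back along the \emph{reverse} of one specific path used in Step 3c, whose identity is determined by backtracing the sequence of $(\Comp,w)$ messages. Each such reverse path is a root-to-leaf path in the same $\mathcal{T}_w$, and the in-tree orientation is irrelevant for the congestion bound: the same $\BO(\log n)$ trees through each node determine the load. Hence the same round-robin-plus-pipelining strategy transmits every $\hat{l}(w)$ in $\BO(\min\{s,\sqrt n\}+k)$ of the tree's slots, giving $\tilde{\BO}(\min\{s,\sqrt n\}+k)$ overall, plus $\BO(D)$ for termination.

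The main obstacle is rigorously justifying the $\BO(\log n)$ bound on the number of routing trees through any physical node, and the $\tilde{\BO}(\min\{s,\sqrt n\})$ bound on their depth, in the \emph{truncated} virtual tree used here (i.e., after the ${\cal S}$-cutoff); the former is essentially the congestion lemma of~\cite{KKMPT-12}, while the latter follows from \lemmaref{lemma:stage1_tree_s} and a union bound over all pairs. Once these two structural properties are in hand, the round-robin pipelining argument above is routine.
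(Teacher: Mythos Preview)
Your approach matches the paper's proof: round-robin over the $\BO(\log n)$ destination trees passing through each node, with $\BO(d+k)$ pipelining within each tree and $\BO(D)$ termination detection. One phrasing slip worth fixing: the \emph{total} number of messages traversing a fixed $\mathcal{T}_w$ is not bounded by $k$; the correct statement (and the one the paper uses) is that each node in $\mathcal{T}_w$ forwards at most one message per label $\Comp\in\Lambda$, which is what yields the $\BO(d+k)$ pipelining bound.
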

\begin{proof}
Observe that if $s\leq \sqrt{n}$, for each $v\in T$ and each $i\in
\{0,\ldots,L\}$, the least-weight path from $v$ to $v_i$ determined by the tree
construction has at most $s$ hops. If $s>\sqrt{n}$ and the partial construction
was executed, by \lemmaref{lemma:stage1_partial} no detected path has more than
$\tilde{s}\in \sO(\sqrt{n})$ hops w.h.p. Therefore, all least-weight paths in
$G$ used in Step 3c have at most $\tilde{s}$ hops w.h.p.

In Step 3c, in each iteration of the sending rule, each node sends at most one
message for each node $w$ such that it is on a least-weight path from some leaf
of the virtual tree to $w$ determined by the tree construction. By the
properties of the tree, each node $v\in V$ participates in at most $\BO(\log n)$
different such paths w.h.p. Hence each iteration requires $\BO(\log n)$ rounds
w.h.p.\footnote{Note that the respective bound can be computed from $n$, which
can be determined and communicated to all nodes in $\BO(D)$ rounds. Therefore,
the iterations can be performed sequentially without the need to explicitly
synchronize their execution.}

Consider all messages $(\cdot,w)$ that are sent in phase $i$. These messages are
sent along least-weight paths, i.e., they induce a tree rooted at $w$ in $G$.
For each $\Comp\in \Lambda$, each node in the tree sends at most one message. In
each iteration of the sending rule, a node will send some message $(\Comp,w)$ if
it currently stores any message $(\Comp',w)\in \unsent\setminus \sent$. Hence,
the total number of iterations until all messages $(\Comp,w)$ are delivered is
bounded by the sum of the depth of the tree, which is bounded by $\hat{s}$
w.h.p., and $|\Lambda|=k$. Termination of Step 3c can be detected at an additive
overhead of $\BO(D)$ rounds over a BFS tree. We conclude that Step 3c can be
performed in $\sO(\tilde{s}+k+D)$ rounds w.h.p.

Concerning Step 3d, the same arguments apply: on each tree rooted at some $w$,
at most $|\hat{l}(w)|\leq |\Lambda|=k$ messages need to be sent to some node $v$
in the tree. Using the same approach as for Step 3c, this requires
$\sO(\tilde{s}+k+D)$ rounds.
\end{proof}

\begin{corollary}\label{coro:stage1_time}
The first stage can be completed in $\tilde{\BO}(\min\{s,\sqrt{n}\}+k+D)$
rounds w.h.p.
\end{corollary}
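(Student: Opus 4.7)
The plan is to bound the time of each step of the first stage separately and then sum, absorbing the $\BO(\log n)$ factor from the number of phases and the w.h.p.\ bounds into the soft-$\sO$ notation. Throughout, let $\tilde{s}:=\min\{s,\sqrt{n}\}$; the key point that makes the desired bound attainable is that every least-weight path used for routing during the phases has length $\sO(\tilde{s})$ w.h.p.: this is immediate from the definition of $s$ when $s\le\sqrt{n}$, and follows from \lemmaref{lemma:stage1_partial} together with the truncation at $\cal S$ when $s>\sqrt{n}$.

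First I would handle Step~1. If $s\leq\sqrt{n}$, it requires only local initialization (and, if needed, a single convergecast/broadcast over a BFS tree to determine $s$, cf.\ the footnote in \sectionref{sec-alg2}), which is clearly within the claimed bound. If $s>\sqrt{n}$, \lemmaref{lemma:stage1_partial} shows that the truncated virtual tree together with the next-hop information for all shortest paths to the $\tilde{v}_{i_v}$ can be constructed in $\tilde{\BO}(\sqrt{n}+D)\subseteq\tilde{\BO}(\tilde{s}+k+D)$ rounds w.h.p. Step~2 involves only local assignments.

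Next I would bound one iteration of the main loop in Step~3. Step~3a is covered by \lemmaref{lemma:stage1_3a}, giving $\BO(k+D)$ rounds. Step~3b is local. Steps~3c and~3d are covered by \lemmaref{lemma:stage1_3cd}, giving $\sO(\tilde{s}+k+D)$ rounds w.h.p. So each of the $L+1\in \BO(\log n)$ iterations completes in $\sO(\tilde{s}+k+D)$ rounds w.h.p.

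Finally, summing over the $\BO(\log n)$ phases of Step~3 and taking the union bound over the polylogarithmically many high-probability events (tree construction in Step~1 and the path-length/pipelining arguments in each invocation of \lemmaref{lemma:stage1_3cd}) yields a total running time of $\tilde{\BO}(\tilde{s}+k+D)=\tilde{\BO}(\min\{s,\sqrt{n}\}+k+D)$ rounds w.h.p., as claimed. The only non-routine obstacle is already absorbed into \lemmaref{lemma:stage1_partial} and \lemmaref{lemma:stage1_3cd}, namely ensuring that when $s>\sqrt{n}$ the truncation at $\cal S$ really does cap the routing depth at $\sO(\sqrt n)$ so that the per-iteration pipelining bound goes through; aside from this, the argument is a straightforward accounting.
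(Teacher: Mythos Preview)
Your approach matches the paper's proof almost exactly: bound Step~1 via the tree construction, note Steps~2 and~4 are local, and handle each of the $L\in\BO(\log n)$ phases by invoking \lemmaref{lemma:stage1_3a} for Step~3a and \lemmaref{lemma:stage1_3cd} for Steps~3c--3d.

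There is one factual slip to fix. In the case $s\le\sqrt{n}$, Step~1 is \emph{not} ``only local initialization'': with ${\cal S}=\emptyset$, Step~1 must still construct the entire virtual tree, which is done by the algorithm of~\cite{KKMPT-12} and takes $\tilde{\BO}(s)$ rounds w.h.p.\ (this is precisely what the paper invokes). Since $\tilde{s}=s$ in that regime, the bound $\tilde{\BO}(\tilde{s}+D)$ for Step~1 still holds and your final conclusion is unaffected, but the justification you gave for that case is wrong and should be replaced by the citation to~\cite{KKMPT-12}.
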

\begin{proof}
In~\cite{KKMPT-12}, the authors show that the virtual tree can be constructed in
$\tilde{\BO}(s)$ rounds w.h.p. In \lemmaref{lemma:stage1_partial}, we show that
the partial tree can be constructed in $\tilde{\BO}(\sqrt{n}+D)$ rounds w.h.p.
Therefore, Step 1 of the algorithm completes in
$\tilde{\BO}(\min\{s,\sqrt{n}\}+D)$ rounds w.h.p. As Steps 2 and 4 are local
and $L=\BO(\log \WD)=\BO(\log n)$, it is sufficient to show that each phase
can be implemented in time $\sO(\min\{s,\sqrt{n}\}+k+D)$ w.h.p. By
\lemmaref{lemma:stage1_3a}, Step 3a of each phase can be completed in $\BO(D+k)$
rounds. Step 3b requires local computations only. \lemmaref{lemma:stage1_3cd}
shows that Steps 3c and 3d can be executed in $\sO(\min\{s,\sqrt{n}\}+k+D)$
rounds w.h.p.
\end{proof}

\subsubsection*{Approximation Ratio}

We now prove that the weight of the edge set $F$ selected in the first stage is
bounded by the cost of the optimal solution on the virtual tree, which is
optimal up to factor $\BO(\log n)$ in expectation. This is facilitated by the
following definition.

\begin{definition}[$\Comp$-subtrees]
For $\Comp \in \Lambda$, denote by $T_{\Comp}$ the minimal subtree of the
virtual tree such that all terminals $v\in T$ with $\Comp(v)=\Comp$ are leaves
in the subtree.
\end{definition}

We now show that the edges selected into $F$ correspond to edges in the optimal
solution on the virtual tree, which is the edge set of the
union $\bigcup_{\Comp\in \Lambda}T_{\Comp}$. We first prove that the entries
made into the $\unsent$ variables in Step 3b can be mapped to virtual tree edges
in $\bigcup_{\Comp\in \Lambda}T_{\Comp}$.

\begin{lemma}\label{lemma:stage1_subtree}
Suppose in phase $i\in \{0,\ldots,i_v\}$ of the first stage, node $v\in V$ adds
$(l(v),v_i)$ or $(l(v),\tilde{v}_{i_v})$ to its $\unsent$ variable in Step 3b.
Then $\{v_i,v_{i-1}\}\in T_{\Comp'}$ for some $\Comp'\in \Lambda$.
\end{lemma}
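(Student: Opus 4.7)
The plan is to prove the lemma by induction on the phase index $i$, strengthening it to the invariant~(I): whenever a node $v$ satisfies $\Comp\in l(v)$ at the start of Step~3b of phase $i$, there exist two terminals $v^*,u\in T$ with $\Comp(v^*)=\Comp(u)=\Comp$ such that $v^*$ is a descendant of $v_{i-1}$ in the virtual tree (interpreting $v_{-1}$ as the leaf $v$ itself), while $u$ is not. Given~(I), the lemma follows immediately: the unique virtual-tree path from $v^*$ to $u$ must exit the subtree rooted at $v_{i-1}$ and hence traverse the edge from $v_{i-1}$ up to $v_i$ (or to $\tilde v_{i_v}$ in the partial-tree case $i=i_v$). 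That edge therefore lies in the minimal subtree $T_\Comp$ spanning the terminals of component $\Comp$.

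Base case $i=0$. Only terminals $v\in T$ start with $l(v)=\{\Comp(v)\}\neq\emptyset$, so $v^*:=v$ is a terminal labeled $\Comp$ and trivially a descendant of $v_{-1}=v$. The survival of $\Comp\in l(v)$ past Step~3a forces some other $v''\neq v$ to also carry $\Comp$, and in phase~$0$ this $v''$ must be a terminal with $\Comp(v'')=\Comp$; setting $u:=v''$ establishes~(I).

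Inductive step $i\ge 1$. The value of $l(v)$ at the start of phase $i$ was set in Step~3d of phase $i-1$: it equals $\hat l(w)$ for the unique $w$ that selected $v$ in that step, and by construction $v$ had some $(\Comp',w)\in\unsent$ in Step~3b of phase $i-1$. This identifies $w$ as $v$'s level-$(i-1)$ ancestor in the (partial) virtual tree. The membership $\Comp\in\hat l(w)$ means that some node $v'$ initiated a message $(\Comp,w)$ in Step~3b of phase $i-1$ that reached $w$; consequently $\Comp\in l(v')$ at the start of phase $i-1$, and $w$ is also $v'$'s level-$(i-1)$ ancestor. Applying~(I) at phase $i-1$ to $v'$ yields a terminal $v^*\in T$ labeled $\Comp$ that is a descendant of $v'_{i-2}$ and hence of its parent $w=v_{i-1}$, giving the first half of~(I) for $v$. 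For the second half, Step~3a of phase $i$ did not remove $\Comp$, so there is some distinct $v''\neq v$ with $\Comp\in l(v'')$. By the same reasoning $v''$ was chosen as the unique recipient of some $w'$, which is $v''$'s level-$(i-1)$ ancestor. Since each destination selects at most one recipient in Step~3d, $w\neq w'$; as two distinct nodes at the same level of a tree have disjoint subtrees, the terminal $u$ guaranteed by~(I) for $v''$ in phase $i-1$ lies below $w'$ and therefore not below $w=v_{i-1}$.

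The main obstacle is the bookkeeping across Steps~3b--3d, and in particular the observation that two surviving carriers of the same label in phase $i$ must correspond to distinct targets at level $i-1$, because Step~3d picks at most one recipient per destination $w$. This one-to-one correspondence between carriers of $\Comp$ and level-$(i-1)$ subtrees containing a $\Comp$-terminal is precisely what places terminals on opposite sides of the edge $\{v_i,v_{i-1}\}$ and so forces it into $T_\Comp$; without it, the messaging could consolidate several sibling subtrees at a single representative and lose the witness on the ``other side'' of the edge.
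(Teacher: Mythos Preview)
Your invariant~(I) captures exactly the structural fact the paper exploits, and your argument is essentially a direct-proof recasting of the paper's contradiction argument. The first half of your inductive step (tracing $\Comp\in l(v)$ back through $w=v_{i-1}$ to an originator $v'$ and applying the hypothesis to $v'$) is correct; the crucial observation you use implicitly---that $w=v_{i-1}\notin\mathcal{S}$ forces $i-1<i_{v'}$ and hence $w=v'_{i-1}$---is exactly what makes the trace-back stay inside the virtual-tree structure.

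The second half, however, has two genuine gaps. First, you assert that $w'$ is ``$v''$'s level-$(i-1)$ ancestor,'' but this requires $i-1<i_{v''}$, which you have not established: nothing prevents $v''$ from having sent to $\tilde{v}''_{i_{v''}}\in\mathcal{S}$ in phase $i-1$, in which case $w'\in\mathcal{S}$ is not a level-$(i-1)$ virtual-tree node at all. Second, invoking ``(I) for $v''$ in phase $i-1$'' does not yield a $\Comp$-terminal: the label sitting in $l(v'')$ at phase $i-1$ need not be $\Comp$. What you actually need is to trace $\Comp\in\hat l(w')$ back to its \emph{originator} $v'''$ (with $\Comp\in l(v''')$ at phase $i-1$) and apply the hypothesis to $v'''$---mirroring your first-half argument. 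Even then, when $w'\in\mathcal{S}$ you get $i-1\ge i_{v'''}$, so if your invariant~(I) is only maintained for phases $i\le i_v$ (which is all your inductive step handles, since it assumes $w=v_{i-1}$), you cannot appeal to it for $v'''$.

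Both gaps are repairable. For the second half it suffices to observe that in every case $v''_{i-1}\neq v_{i-1}$: if $i-1<i_{v''}$ then $w'=v''_{i-1}$ and $w\neq w'$ gives it directly; if $i-1\ge i_{v''}$ then $v''_{i-1}\in\mathcal{S}$ while $v_{i-1}\notin\mathcal{S}$. This is the same case split the paper's contradiction argument handles (somewhat implicitly) by noting that any leaf in the subtree rooted at $v_{i-1}$ necessarily has $i_{\cdot}>i-1$ and hence never routes to a node of $\mathcal{S}$ in phases $\le i-1$. Once $v''_{i-1}\neq v_{i-1}$ is in hand, repeating your first-half trace-back starting from $v''$ (through the originator $v'''$ of the $\Comp$-message to $w'$) lands you below $v''_{i-1}$, and disjointness of the level-$(i-1)$ subtrees finishes the argument.
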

\begin{proof}
Assume for contradiction that the statement is wrong and $i\in \{0,\ldots,i_v\}$
is the minimal phase in which some node $v$ violates it. Hence,
$\{v_i,v_{i-1}\}\notin T_{\Comp'}$ for any $\Comp' \in \Lambda$. For the subtree
$T_{v_{i-1}}$ of the virtual tree rooted at $v_{i-1}$, this implies that
$\Lambda$ is partitioned into the sets of labels $\{\Comp\in \Lambda\,|\,\exists
w\in T_{v_{i-1}}:\Comp(w)=\Comp\}$ and $\{\Comp\in \Lambda\,|\,\exists w\in
T\setminus T_{v_{i-1}}:\Comp(w)=\Comp\}$. By induction on phases $j\in
\{0,\ldots,i\}$, we see that at the beginning of each such phase $j$, $\Lambda$
is partitioned into the subsets $\{\Comp\in \Lambda\,|\,\exists w\in
T_{v_{i-1}}:\Comp\in l(w)\}$ and $\{\Comp\in \Lambda\,|\,\exists w\in T\setminus
T_{v_{i-1}}:\Comp\in l(w)\}$: for $0<j\leq i$, by the induction hypothesis and
Steps 3b to 3d of phase $j-1$ this would imply that there is a node $w$ on level
$j-1$ of the virtual tree that has at least one descendant from $T_{v_{i-1}}$
and one descendant outside $T_{v_{i-1}}$; this is impossible for $j-1\leq i-1$,
as the root of $T_{v_{i-1}}$ is on level $i-1$.

Due to Steps 3b and 3d in phase $i-1$, there is at most one node $w\in
T_{v_{i-1}}$ that has $l(w)\neq \emptyset$ at the end of phase $i-1$; by Step 3b
for phase $i$, it must hold that $w=v$. However, we just showed that each node
$w\notin T_{v_{i-1}}$ satisfies that $l(w')\cap l(w)=\emptyset$. Thus, $v$ sets
$l(v):=\emptyset$ in Step 3a of phase $i$, contradicting the assumption that it
adds an entry to its $\unsent$ variable in Step 3b of the phase. Therefore, our
assumption that the statement of the lemma is wrong must be false, concluding
the proof.
\end{proof}

With this lemma in place, we are ready to prove that the total weight of the
selected edge set does not exceed the weight of the optimal solution on the
virtual tree. This is done by charging the weight of a selected least-weight
path (or prefix of such a path) to the corresponding virtual tree edge given
by \lemmaref{lemma:stage1_subtree}.

\begin{lemma}\label{lemma:stage1_approx}
The weight of the set $F$ returned by the first stage is bounded from above
by the weight of an optimal solution on the virtual tree.
\end{lemma}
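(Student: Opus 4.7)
My plan is to charge each edge added to $F$ to a distinct virtual edge in the optimal virtual-tree solution. The optimal solution $E^\star$ is $\bigcup_{\Comp\in\Lambda} E(T_\Comp)$, and each virtual edge $\{v_{i-1}, v_i\}$ has weight $\beta 2^i$ by construction.

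The edges of $F$ all enter via Step~3c: whenever a node $v$ with $l(v)\ne\emptyset$ acts as an \emph{initiator} in phase $i$ (it adds some $(\Comp, w_v)$ to $\unsent$ in Step~3b), every message it emits in that phase travels along the same least-weight path $P_v^i$ in $G$, since all such messages share the single destination $w_v = v_i$ (if $i<i_v$) or $w_v = \tilde v_{i_v}$ (if $i = i_v$). The path $P_v^i$ satisfies $W(P_v^i) = \Wd(v, w_v) \le \beta 2^i$ by the definitions of $v_i$ and $\tilde v_{i_v}$. Since $F$ is the union over initiating pairs $(v,i)$ of the edges of $P_v^i$,
\begin{equation*}
W(F) \;\le\; \sum_{(v,i)\text{ initiating}} W(P_v^i) \;\le\; \sum_{(v,i)\text{ initiating}} \beta 2^i.
\end{equation*}

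By \lemmaref{lemma:stage1_subtree}, each initiating pair $(v,i)$ places the virtual edge $\chi(v,i) := \{v_{i-1}, v_i\}$ (or its truncated counterpart involving $\tilde v_{i_v}$) inside $E^\star$. The main obstacle is to show that $\chi$ is injective; otherwise the above bound could double-charge a single virtual edge. Pairs from distinct phases trivially map to distinct edges (different levels of the virtual tree). Within a single phase $i \ge 1$, I would prove by induction the invariant that at the start of phase $i$ each virtual-tree node $w$ at level $i-1$ has at most one descendant leaf $v$ with $l(v) \ne \emptyset$, namely the unique representative picked by Step~3d of phase $i-1$ when $w$ was a destination (Step~3a of phase $i$ can only shrink $l$-sets further). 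Hence distinct initiators $v, v'$ in phase $i$ have $v_{i-1}\ne v'_{i-1}$, so $\chi(v,i) \ne \chi(v',i)$ even when $v_i = v'_i$. In phase $0$ the initiators are the distinct terminals, each itself a leaf of the virtual tree, so the leaf-to-$v_0$ virtual edges are distinct.

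Combining these pieces,
\begin{equation*}
W(F) \;\le\; \sum_{(v,i)\text{ initiating}} \beta 2^i \;=\; \sum_{(v,i)\text{ initiating}} W_{\mathrm{virt}}(\chi(v,i)) \;\le\; W_{\mathrm{virt}}(E^\star),
\end{equation*}
where the second inequality uses the injectivity of $\chi$ together with $\mathrm{Im}(\chi) \subseteq E^\star$. This matches the weight of the optimal solution on the virtual tree. Everything besides the per-phase uniqueness of representatives is routine bookkeeping built on \lemmaref{lemma:stage1_subtree} and the single-path-per-initiator structure of Step~3c.
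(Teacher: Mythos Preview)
Your proposal follows essentially the same charging argument as the paper: bound $W(F)$ by $\sum_{(v,i)}W(P_v^i)\le \sum_{(v,i)}\beta 2^i$, map each initiating pair to the virtual edge $\{v_{i-1},v_i\}$, invoke \lemmaref{lemma:stage1_subtree} to place that edge in $E^\star$, and use the per-subtree uniqueness of the representative chosen in Step~3d to get injectivity.

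There is one small gap. You write the destination is $\tilde v_{i_v}$ ``if $i=i_v$'', but Step~3b routes to $\tilde v_{i_v}$ whenever $i\ge i_v$, and initiating pairs with $i>i_v$ do occur (whenever $\tilde v_{i_v}$ picks $v$ as the representative in Step~3d of the preceding phase). For such pairs \lemmaref{lemma:stage1_subtree} as stated does not apply, so your map $\chi$ is not obviously into $E^\star$. The paper closes this by a separate observation: if an edge is first added to $F$ via a message $(\cdot,\tilde v_{i_v})$ in phase $i$, then necessarily $i=i_v$, because for $i>i_v$ the entire path from $v$ to $\tilde v_{i_v}$ was already traversed (and hence added to $F$) in phase $i-1$. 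Consequently only pairs with $i\le i_v$ need to be charged, and there \lemmaref{lemma:stage1_subtree} and your injectivity argument go through verbatim. With this one-line fix your proof coincides with the paper's.
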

\begin{proof}
Suppose edge $e$ is added to $F$ in phase $i\in \{0,\ldots,L\}$.
This must have happened because in Step 3c of the phase, it was traversed by
some message $(\cdot,v_i)$ or $(\cdot,\tilde{v}_{i_v})$, where some node $v$
made the respective entry to its $\unsent$ variable in Step 3b of the phase. In
the latter case, we claim that $i_v=i$. Assuming the contrary, clearly $i_v>i$
and Steps 3b to 3d of phase $i-1$ would entail that $v$ was selected in Step 3d
of phase $i-1$ by some node $w$ for which it added an entry $(\cdot,w)$ to its
$\unsent$ variable in Step 3b of the phase. It follows that $w=\tilde{v}_{i_v}$,
and each edge on the respective least-weight path from $v$ to $\tilde{v}_{i_v}$
has been traversed by a message in Step 3c of phase $i-1$. In particular, $e$
was added to $F$ already in an earlier phase. Thus, indeed it must hold that
$i=i_v$.

Hence, $e$ is traversed by a message $(\cdot,v_i)$ or $(\cdot,\tilde{v}_i)$
in phase $i$. From \lemmaref{lemma:stage1_subtree}, we have that
$\{v_i,v_{i-1}\}\in T_{\Comp}$ for some $\Comp\in \Lambda$. Moreover, by Steps
3b and 3d of phase $i-1$, the node $v$ that made the respective entry in Step 3b
of phase $i$ is unique; there can be only one node $v$ in the subtree rooted at
$v_{i-1}$ that satisfies $l(v)\neq \emptyset$ at the beginning of phase $i$. We
``charge'' the weight of $e$ to the edge $\{v_i,v_{i-1}\}\in \bigcup_{\Comp\in
\Lambda}T_{\Comp}$. Because node $v$ is unique with the property that the
cost of edges traversed by messages $(\cdot,v_i)$ or $(\cdot,\tilde{v}_i)$ that
are charged to $\{v_i,v_{i-1}\}$ can be backtraced to an entry it made in Step
3b of phase $i$, virtual tree edge $\{v_i,v_{i-1}\}$ is in total charged at most
weight $\Wd(v,v_i)$ (if $i<i_v$) or $\Wd(v,\tilde{v}_i)$ (if $i=i_v$), the
weight of the respective least-weight paths in $G$ from $v$ to $v_i$ or
$\tilde{v}_{i_v}$, respectively. Because $\Wd(v,\tilde{v}_{i_v})\leq
\Wd(v,v_{i_v})$ and $\Wd(v,v_i)\leq \beta 2^i$, $\{v_i,v_{i-1}\}$ is in total
charged at most its own weight of $\beta 2^i$. We conclude that $W(F)=\sum_{e\in
F}W(e)$ is indeed at most the weight of the optimal solution on the virtual
tree, i.e., of the edge set of $\bigcup_{\Comp\in \Lambda}T_{\Comp}$.
\end{proof}

\subsubsection*{Feasibility}

It remains to examine what we have gained from selecting the edge set $F$ in
the first stage.

\begin{lemma}\label{lemma:stage1_connect}
For each terminal $v\in T\setminus {\cal S}$, at least one of the
following holds for the graph $(V,F)$, where $F$ is the output of the first
stage: (i) all terminals $w\in T$ with $\Comp(v)=\Comp(w)$ are in the same
connectivity component or (ii) $v$ is at most $\sO(\sqrt{n})$ hops
from a node in ${\cal S}$ w.h.p.
\end{lemma}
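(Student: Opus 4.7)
I will prove the claim by case analysis. If $\mathcal{S}=\emptyset$ (i.e., $s\leq\sqrt{n}$), condition (ii) is vacuous and (i) must be shown. In this regime Step~1 performs no truncation, and Step~3b always uses the physical ancestor $v_i$ rather than $\tilde{v}_{i_v}$, so the first stage coincides with the tree-embedding algorithm of \cite{KKMPT-12}. A straightforward induction along the chain of representatives for $\lambda:=\lambda(v)$ up the Steiner subtree $T_{\lambda}$ then yields (i): every pair of consecutive representatives (a terminal $u$ that sent $(\lambda,w)$ in Step~3c and the terminal $u'$ selected by $w$ in Step~3d) sits in a common $(V,F)$-component because both of the least-weight paths from $u$ and $u'$ to $w$ are deposited into $F$ in Step~3c, so by transitivity all leaves of $T_{\lambda}$ lie in $v$'s $(V,F)$-component.

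Now assume $\mathcal{S}\neq\emptyset$. If there exists a phase $i\geq i_v$ with $l(v)\neq\emptyset$, then Step~3b places at least one tuple $(\lambda,\tilde{v}_{i_v})$ into $\unsent$, and in Step~3c the corresponding message traverses the least-weight path from $v$ to $\tilde{v}_{i_v}\in\mathcal{S}$ learned during the partial tree construction; every edge on this path is added to $F$, and by \lemmaref{lemma:stage1_partial} the path has $\sO(\sqrt{n})$ hops w.h.p., giving (ii).

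Otherwise $l(v)=\emptyset$ at every phase $i\geq i_v$. Define the chain of representatives for $\lambda$ by setting $r_0:=v$ and, for each phase $j\geq 0$ in which $r_j$ still carries $\lambda$, letting $r_{j+1}$ be the terminal selected by $r_j$'s destination $w_j$ in Step~3d of phase $j$ (possibly $r_{j+1}=r_j$). Because $r_j$ and $r_{j+1}$ both sent a $(\lambda,\cdot)$-message to $w_j$, Step~3c places both corresponding least-weight paths into $F$; consequently all $r_j$ occupy a common $(V,F)$-component, and by \lemmaref{lemma:stage1_partial} consecutive distinct representatives are at most $\sO(\sqrt{n})$ hops apart in $(V,F)$ w.h.p. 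The chain contains at most $L+1\in\BO(\log n)$ distinct representatives. Two terminating situations arise. Case~S: the singleton rule of Step~3a removes $\lambda$ at some phase, so a single surviving representative $r^*$ was the only terminal with $\lambda\in l(\cdot)$ at that moment. Applying the common-destination observation recursively backwards, every terminal that initially carried $\lambda$ has been absorbed into this lineage and hence lies in $v$'s $(V,F)$-component, yielding~(i). Case~L: the chain persists to phase $L$ with some representative $r_L$. Since $B_G(r_L,\beta 2^L)\supseteq V\supseteq\mathcal{S}$ gives $i_{r_L}\leq L$, at some phase $\geq i_{r_L}$ the representative $r_L$ sent $(\lambda,\tilde{r}_{L,i_{r_L}})$ and is thereby within $\sO(\sqrt{n})$ hops of $\mathcal{S}$ in $(V,F)$. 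Combined with the chain from $v$ to $r_L$, which has $\BO(\log n)$ links of $\sO(\sqrt{n})$ hops each, this puts $v$ within $\sO(\sqrt{n})$ hops of $\mathcal{S}$, i.e., (ii).

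The principal subtlety I anticipate is Case~S: I must verify that no same-label terminal survives in a lineage disjoint from $v$'s at the moment the singleton rule fires. The key observation is that any terminal $w\neq r^*$ that ever crosses its own truncation threshold carrying $\lambda$ keeps $\lambda\in l(w)$ in every subsequent phase, since $w$ is the unique sender of $\lambda$ to $\tilde{w}_{i_w}$ and is therefore re-selected there in every Step~3d; the existence of such a $w$ would contradict $r^*$ being the unique terminal with $\lambda\in l(\cdot)$ when the singleton rule fires. Hence Case~S implies that no truncation-crossing for $\lambda$ ever occurred, reducing the analysis to the non-truncated chain argument already used for $\mathcal{S}=\emptyset$.
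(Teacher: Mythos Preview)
Your overall approach—tracking a chain of representatives for the label $\lambda=\lambda(v)$ through the phases—is essentially the paper's argument, just with more explicit case analysis. The per-phase connectivity observation (sender and selected node both lie on $F$-paths to the common destination, hence share a component) and the $\BO(\log n)\cdot\sO(\sqrt{n})=\sO(\sqrt{n})$ chain-length bound are both correct and match the paper.

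However, the ``principal subtlety'' paragraph contains a genuine error. The claim that a terminal $w$ which crosses its truncation threshold carrying $\lambda$ ``is the unique sender of $\lambda$ to $\tilde{w}_{i_w}$ and is therefore re-selected there in every Step~3d'' is false on both counts. First, another terminal $w'$ with $\lambda\in l(w')$ and $\tilde{w'}_{i_{w'}}=\tilde{w}_{i_w}$ would also send $(\lambda,\tilde{w}_{i_w})$. Second, and more importantly, Step~3d has the destination select \emph{one arbitrary sender among all labels}, not a sender per label; so $\tilde{w}_{i_w}$ may select some $w''$ that sent $(\lambda',\tilde{w}_{i_w})$ with $\lambda'\neq\lambda$, after which $w''$ inherits $\lambda$ and $w$ loses it. Thus your conclusion that ``Case~S implies that no truncation-crossing for $\lambda$ ever occurred'' does not hold.

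Fortunately the detour is unnecessary. Case~S follows from the very argument you already give for $\mathcal{S}=\emptyset$, and truncation plays no role in it. For each phase $j$, let $A_j$ be the set of nodes with $\lambda\in l(\cdot)$ at the start of phase~$j$. Every $a\in A_j$ sends to some destination $w_a$ (truncated or not), $w_a$ selects a node $a'\in A_{j+1}$, and both $a$ and $a'$ are connected to $w_a$ in $(V,F)$; hence every element of $A_j$ is connected to some element of $A_{j+1}$. When the singleton rule fires at phase $J$, $A_J=\{r^*\}$, so by backward induction every element of $A_0$—i.e., every terminal with label $\lambda$—is connected to $r^*$, giving~(i). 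This is exactly your ``common-destination observation recursively backwards,'' applied from \emph{every} initial $\lambda$-terminal rather than only along $v$'s particular chain. Replace the final paragraph with this and the proof is complete.
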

\begin{proof}
We claim that, for each phase $i\in \{0,\ldots,L\}$ and $\Comp \in \Lambda$,
the following holds w.h.p.: If at the beginning of the phase there are two or
more terminals $v$ with $\Comp\in l(v)$, at the end of the phase each such $v$
will be connected to a terminal $w$ with $\Comp\in l(w)$ by a path of
$\tilde{\BO}(\sqrt{n})$ hops in $(V,F)$. 

To see this, observe first that if there are two or more terminals $v$ with
$\Comp\in l(v)$ at the beginning of phase $i$, $\Comp$ will not be deleted from
the $l(v)$ variables of these nodes in Step 3a of the phase. Hence, each such
$v$ will add an entry $(\Comp,u)$, where either $u=v_i$ or $u=\tilde{v}_{i_v}$,
to its $\unsent$ variable in Step 3b of the phase. In Step 3c, all edges on the
least-weight path from $v$ to $u$ will be added to $F$. Each of the respective
paths has by Step 1 of the algorithm and \lemmaref{lemma:stage1_partial}
$\tilde{\BO}(\sqrt{n})$ hops w.h.p.

Due to Step 3c, $u$ will add $\Comp$ to $\hat{l}(u)$. In Step 3d, it will select
some node $w$ that added $(\cdot,u)$ to its $\unsent$ variable in Step 3b and
sent $\hat{l}(u)\ni \Comp$ to it; $w$ will hence set $l(w):=\hat{l}(u)\ni
\Comp$. Again, $w$ is connected to $u$ by a path of at most
$\tilde{\BO}(\sqrt{n})$ hops whose edges have been added to $F$, since in Step
3d a sequence of messages from Step 3c is backtraced. This shows the claim.

By induction on the phases, for each $\Comp\in \Lambda$, (i) each terminal $v\in
T$ with $\Comp(v)=\Comp$ is connected in $(V,F)$ to some node $w$ with
$\Comp\in l(w)$ via $\sO(\sqrt{n})$ hops w.h.p.\ at the end of the first
stage, or (ii) there is a unique node so that all terminals $v\in T$ with
$\Comp(v)=\Comp$ are connected by $F$ to this node. If (i) applies and
${\cal S}\neq \emptyset$, note that in phase $L$ all entries made to $\unsent$
variables in Step 3b were of the form $(\cdot,v_L)$, where $v_L$ is the root of
the virtual tree. Hence, all terminals $v\in T$ with $\Comp(v)=\Comp$ are
connected to the root of the virtual tree by edges in $F$. If ${\cal S}\neq
\emptyset$, the root of the virtual tree, i.e., the node of highest rank, must
be in ${\cal S}$. Hence, all entries made to $\unsent$ variables in Step 3b of
phase $L$ were of the form $(\cdot,\tilde{v}_{i_v})$, and all terminals $v\in T$
with $\Comp(v)=\Comp$ are connected to a node in~${\cal S}$.
\end{proof}

\begin{corollary}\label{coro:stage1_smalls_feasible}
If $s\leq \sqrt{n}$, the first stage solves problem $\sfic$.
\end{corollary}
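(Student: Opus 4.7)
The plan is to apply \lemmaref{lemma:stage1_connect} and observe that when $s \leq \sqrt{n}$, the algorithm is instructed (in Step 1 of the first stage) to set ${\cal S} := \emptyset$. Consequently, condition~(ii) of \lemmaref{lemma:stage1_connect}---that $v$ is within $\sO(\sqrt{n})$ hops of a node in ${\cal S}$---is vacuously false for every terminal $v \in T$, since there are no nodes in ${\cal S}$ at all. Hence condition~(i) must hold: for every terminal $v \in T$ (note $T \setminus {\cal S} = T$ in this case), all terminals $w \in T$ with $\Comp(v) = \Comp(w)$ lie in the same connectivity component of $(V,F)$.

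Having established this, I would conclude by the definition of \sfic: the feasibility requirement is precisely that all terminals within each input component are connected via $F$. Since this holds for all $v \in T$ and all labels $\Comp \in \Lambda$, the edge set $F$ returned by the first stage is feasible.

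The only potential subtlety is making sure that the reference to \lemmaref{lemma:stage1_connect} is used with the ``w.h.p.'' qualifier intact; however, since when ${\cal S} = \emptyset$ the clause pertaining to closeness to ${\cal S}$ is trivially vacuous and cannot be used to ``escape'' the conclusion, the feasibility guarantee is in fact deterministic conditional on the lemma's random event. No new work beyond invoking the lemma is required, so this is a one-line proof.
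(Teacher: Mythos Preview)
Your proposal is correct and matches the paper's own proof essentially verbatim: the paper simply notes that ${\cal S}=\emptyset$ when $s\leq\sqrt{n}$, so Statement~(i) of \lemmaref{lemma:stage1_connect} applies to all terminals. Your additional remark that the ``w.h.p.'' qualifier attached to clause~(ii) becomes irrelevant once ${\cal S}=\emptyset$ is a nice clarification, but no further work is needed.
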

\begin{proof}
Because ${\cal S}=\emptyset$ if $s\leq \sqrt{n}$, Statement~(i) of
\lemmaref{lemma:stage1_connect} applies to all terminals.
\end{proof}

\subsection{Spanner Construction and Completion Stage}\label{sec:spanner}


\subsubsection*{Running Time of the Transformation}
\begin{corollary}\label{coro:stage2_group}
Within $\sO(\sqrt{n})$ rounds, each $w\in \bigcup_{v\in {\cal S}}T_v$ can learn
the identifier of the node $v\in {\cal S}$ such that $w\in T_v$. W.h.p.,
terminals $w\notin \bigcup_{v\in {\cal S}}T_v$ satisfy that all terminals $u$ with
$\Comp(u)=\Comp(w)$ are connected in $(V,F)$.
\end{corollary}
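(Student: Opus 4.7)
My plan is to handle the two assertions of the corollary separately. The first is a straightforward distributed BFS, and the second is a direct consequence of \lemmaref{lemma:stage1_connect}.

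For the algorithmic part, I would execute a multi-source hop-BFS on the subgraph $(V,F)$, with every node in ${\cal S}$ acting as a simultaneous source. Each node $u$ maintains a pair $(d_u,\iota_u)$, initialised to $(0,\mathrm{ID}(u))$ if $u\in{\cal S}$ and to $(\infty,\bot)$ otherwise; in each round $u$ forwards $(d_u+1,\iota_u)$ to its $F$-neighbours and, upon receiving $(d,\iota)$ from a neighbour, replaces its pair by the lexicographically smaller of the two. After the $\tilde{\BO}(\sqrt{n})$ rounds corresponding to the hop budget hidden in the definition of $T_v$, the pair $(d_u,\iota_u)$ correctly encodes the nearest ${\cal S}$-node to $u$ in $(V,F)$ within that budget, with ties broken lexicographically, which is exactly what the definition of $T_v$ requires. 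Since each edge carries only $\BO(\log n)$ bits per round, the \Congest bandwidth restriction is respected and the procedure terminates in $\sO(\sqrt{n})$ rounds.

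For the second assertion, consider a terminal $w\in T$ with $w\notin\bigcup_{v\in{\cal S}}T_v$. By construction of the $T_v$, this means that no node of ${\cal S}$ lies within $\tilde{\BO}(\sqrt{n})$ hops of $w$ in $(V,F)$. Now \lemmaref{lemma:stage1_connect} asserts that w.h.p.\ at least one of (i) all terminals sharing $w$'s input component label are in the same connectivity component of $(V,F)$, or (ii) $w$ is within $\sO(\sqrt{n})$ hops of some node of ${\cal S}$, must hold. Our assumption on $w$ rules out (ii), so (i) must hold w.h.p., which is the second claim.

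The only point requiring care is the alignment of the two $\sO(\sqrt{n})$ hop constants: the hop budget appearing in the definition of $T_v$ must be chosen to dominate the constant hidden in the $\sO(\sqrt{n})$ bound from \lemmaref{lemma:stage1_connect}. Once that is fixed, the two parts combine cleanly, and the BFS step accounts for the entire $\sO(\sqrt{n})$ running time.
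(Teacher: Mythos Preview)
Your proof is correct and follows essentially the same approach as the paper: a multi-source hop-BFS on $(V,F)$ from the nodes of ${\cal S}$ (which the paper phrases as single-source Bellman--Ford from a virtual source attached to ${\cal S}$), together with a direct appeal to \lemmaref{lemma:stage1_connect} for the second claim. Your explicit remark about aligning the $\sO(\sqrt{n})$ hop constants is a useful clarification that the paper leaves implicit.
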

\begin{proof}
Membership in $T_v$, $v\in {\cal S}$, can be concurrently determined for all
terminals $w\in T$ by running (essentially) the single-source Bellmann-Ford
algorithm for $\tilde{\BO}(\sqrt{n})$ rounds on the (unweighted) graph $(V,F)$,
with a virtual source connected by $0$-weight edges to nodes in ${\cal S}$ and
piggy-backing the identifiers of nodes in ${\cal S}$ on messages referring to
paths to them. This can be simulated on $G$ at no overhead, since for each
$\{v,w\}\in F$, $v$ and $w$ know that $\{v,w\}\in F$. The second statement of
the lemma directly follows from \lemmaref{lemma:stage1_connect}.
\end{proof}

As we will see, it is not necessary to construct $\hat{G}$ explicitly. However,
obviously we must determine $\hat{\Comp}$.

\begin{lemma}\label{lemma:stage2_compute}
For an $F$-reduced instance, $\hat{T}$ and $\hat{\Comp}$ can be computed and
made known to all nodes in $\tilde{\BO}(\sqrt{n}+k+D)$ rounds.
\end{lemma}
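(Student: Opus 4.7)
The plan is to handle $\hat T$ and $\hat\Comp$ in sequence, leveraging the subtree structure $T_v$ from \corollaryref{coro:stage2_group} and a Boruvka-style aggregation on $(\Lambda,E_\Lambda)$.

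First, I would apply \corollaryref{coro:stage2_group} to let each node $w\in\bigcup_{v\in\mathcal{S}}T_v$ learn its $T_v$ in $\sO(\sqrt n)$ rounds; by \lemmaref{lemma:stage1_partial}, the subgraph of $(V,F)$ induced by each $T_v$ has hop-diameter $\sO(\sqrt n)$. A convergecast inside that subgraph lets $v$ decide whether $T_v\cap T\neq\emptyset$, hence whether $T_v\in\hat T$. Piping the at most $|\mathcal{S}|\leq\sqrt n$ membership indicators through a BFS tree of $G$ then makes $\hat T$ globally known in $\sO(\sqrt n+D)$ rounds. Next, in parallel across the disjoint $T_v\in\hat T$, I would pipeline the distinct labels of their terminals up to the respective root $v$; with depth $\sO(\sqrt n)$ and at most $k$ distinct labels per $T_v$, each such $v$ learns $L_v:=\{\Comp(u)\mid u\in T_v\cap T\}$ in $\sO(\sqrt n+k)$ rounds.

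The main obstacle is then computing the connected components of $(\Lambda,E_\Lambda)$: the naive approach of uploading every $(T_v,L_v)$ pair to one node would cost $\Omega(\sum_v |L_v|)=\Omega(t)$ in the worst case, violating the target running time. I would sidestep this by running $\BO(\log k)$ Boruvka-style iterations, maintaining a globally known map $\mathrm{cls}:\Lambda\to\Lambda$ (initialized to the identity) that sends each label to its current component representative. In each iteration: (i) every $v\in\mathcal{S}$ with $T_v\in\hat T$ inspects the multiset $\{\mathrm{cls}(\Comp)\mid \Comp\in L_v\}$ locally, and if it contains at least two distinct values emits a single merge request consisting of its two smallest elements; (ii) the at most $|\hat T|\leq\sqrt n$ merge requests are convergecast to the BFS root of $G$ in $\sO(\sqrt n+D)$ rounds; (iii) the root applies them via sequential union-find, renormalizes $\mathrm{cls}$ so that each class is represented by its smallest label, and broadcasts the updated map (size $k$) in $\sO(k+D)$ rounds. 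A standard functional-graph argument shows that each iteration at least halves the number of components that still have an outgoing $E_\Lambda$-edge, so $\BO(\log k)$ iterations suffice.

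Finally, each $v\in\mathcal{S}$ with $T_v\in\hat T$ sets $\hat\Comp(T_v):=\mathrm{cls}(\Comp)$ for any $\Comp\in L_v$; this is well-defined because the $T_v$-clique on $L_v$ forces all labels in $L_v$ into the same $E_\Lambda$-component, so the final $\mathrm{cls}$ agrees on all of $L_v$. A concluding $\sO(\sqrt n+D)$-round broadcast of the at most $\sqrt n$ pairs $(T_v,\hat\Comp(T_v))$ through the BFS root makes $\hat\Comp$ globally known, yielding a total running time of $\sO(\sqrt n+k+D)$ as claimed.
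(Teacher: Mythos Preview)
Your Bor\r{u}vka step has a genuine gap: the halving argument does not go through because merge proposals are emitted by the nodes $v\in\mathcal{S}$, not by the current components of $(\Lambda,E_\Lambda)$. Concretely, suppose a single $T_v$ contains one terminal from each of the $k$ input components, while every other $T_w$ contains terminals from at most one component. Then in each iteration only $v$ emits a non-trivial request, namely the pair of its two smallest current $\mathrm{cls}$-values, so exactly one merge happens per iteration and you need $k-1$ iterations rather than $\BO(\log k)$. The total cost becomes $\sO\big(k(\sqrt{n}+k+D)\big)$, overshooting the target bound. The standard functional-graph argument requires that \emph{every} component with an outgoing edge be an endpoint of some proposed edge; here a component whose representative is not among some $v$'s two smallest $\mathrm{cls}$-values may be touched by no proposal at all.

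The paper avoids this entirely with a single pipelined convergecast on the BFS tree: each node maintains one label slot per $u\in\mathcal{S}$ together with a forest on $\Lambda$, and forwards to its parent only slot values it has not yet sent and edges that do not close a cycle in its local forest. This bounds the number of messages per node by $\sqrt{n}+(k-1)$, so the root learns a spanning forest of $(\Lambda,E_\Lambda)$ in $\BO(\sqrt{n}+k+D)$ rounds with no iteration. Your scheme can be repaired along the same lines: have each $v$ emit a star on its distinct $\mathrm{cls}$-values (smallest to each other) and let intermediate BFS nodes drop cycle-closing edges; then one pass suffices and each node forwards at most $k-1$ edges, recovering the claimed bound.
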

\begin{proof}
By \corollaryref{coro:stage2_group}, for each $v\in {\cal S}$ and each terminal
$w\in T_v$, $w$ can learn $v$ in $\tilde{\BO}(\sqrt{n})$ rounds. We also make
the set ${\cal S}$ global knowledge, by broadcasting it over the BFS tree; this
takes $\BO(|{\cal S}|+D)=\BO(\sqrt{n}+D)$ rounds. Next, each node
$v$ locally initializes $F(v):=\emptyset$ and $\Comp_u(v):=\bot$ for each
$u\in {\cal S}$ if $v\notin T\cap T_u$ and $\Comp_u(v):=\Comp(v)$ otherwise.
Subsequently, the following is executed on a BFS tree until no node sends any
further messages.
\begin{compactitem}
\item If node $v$ has stored an edge $e\in F(v)$ or some $\Comp_u(v)\neq
\bot$ that it has not yet sent, it sends a message with this information to its
parent.
\item If node $w$ receives a message ``$\Comp_u(v)$'' and it currently stores
$\Comp_u(w)=\bot$, it sets $\Comp_u(w):=\Comp_u(v)$.
\item If node $w$ receives a message ``$\Comp_u(v)$'' and it currently stores
$\Comp_u(w)=\Comp$, it adds edge $\{\Comp_u(v),\Comp\}$ to its set $F(w)$
unless the edge would close a cycle in $(\Lambda,F(w))$.
\item If node $w$ receives a message ``$\{\Comp,\Comp'\}$'', it adds
$\{\Comp,\Comp'\}$ to its set $F(w)$ unless the edge would close a cycle in
$(\Lambda,F(w))$.
\end{compactitem}
Since there are $|{\cal S}|=\sqrt{n}$ variables $\Comp_u(v)$ at each node $v$
and $F(v)$ remains a forest, each node sends at most $\sqrt{n}+k-1$ messages.
Since forests are matroids, we have optimal pipelining and no messages are sent
any more after $\BO(\sqrt{n}+k+D)$ rounds; this can be detected in additional
$\BO(D)$ rounds.

We claim that once the above subroutine terminated, at the root $v_R$ of the BFS
tree the connectivity components of $(\Lambda,E(v_R))$ are the same as the
connectivity components of $(\Lambda,F_{\Lambda})$. To see this, observe first
that $F(v_R)\subseteq E_{\Lambda}$, since a node $v$ adds an edge
$\{\Comp,\Comp'\}$ to its set $F(v_R)$ only if it either receives a message
``$\{\Comp,\Comp'\}$'' or it receives a message ``$\Comp$'' and stores
$\Comp_u(v)=\Comp'$ (or vice versa); this implies that some nodes $v'$ and
$w'$ must have had $\Comp_u(v')=\Comp$ and $\Comp_u(w')=\Comp'$ initially, which
by the initialization values of the variables implies that indeed
$\{\Comp,\Comp'\}\in E_{\Lambda}$. Hence, for any node $v$, any connectivity
component of $(\Lambda,F(v))$ is a subset of a connectivity component of
$(\Lambda,E_{\Lambda})$.

Now suppose that $\{\Comp,\Comp'\}\in E_{\Lambda}$. Thus, there are $u\in {\cal
S}$ and $v,w\in T_u\cap T$ so that $\Comp(v)=\Comp$ and $\Comp(w)=\Comp'$.
Consider the sequence of ancestors $v_0=v, v_1,\ldots,v_r=v_R$ of $v$ in the BFS
tree, and consider their variables $\Comp_u(v_i)$ and $F(v_i)$, $i\in
\{0,\ldots,r\}$, at the end of the computation. We will prove by induction on
$i$ that for each such $v_i$, $F(v_i)$ connects $\Comp$ to $\Comp_u(v_i)\neq
\bot$. Trivially, this holds for $v=v_0$, so assume that it holds for some $i\in
\{0,\ldots,r-1\}$ and consider $v_{i+1}$. Since $v_i$ sends $\Comp_u(v_i)$ at
some point, $\Comp_u(v_{i+1})\neq \bot$. At the latest upon reception of this
message, $\Comp_u(v_{i+1})$ and $\Comp_u(v_i)$ become connected by $F(v_{i+1})$;
since $\Comp_u(v_{i+1})$ is modified only once and $u_{i+1}$ can only add edges
to $F(v_{i+1})$, but not remove them, $\Comp_u(v_{i+1})$ and $\Comp_u(v_i)$ are
connected by $F(v_{i+1})$ when the subroutine terminates. By the induction
hypothesis, $\Comp_u(v_i)$ and $\Comp$ are connected by $F(v_i)$. Due to the
rules of the algorithm, $v_i$ will announce all edges in $F(v_i)$ at
some point to $v_{i+1}$. Whenever such a message is received, $v_{i+1}$ either
adds the edge to $F(v_{i+1})$ or its endpoints are already connected by
$F(v_{i+1})$. This shows that $\Comp_u(v_{i+1})$ eventually gets connected to
$\Comp$, i.e., the induction hypothesis holds for index $i+1$. In particular,
$F(v_R)$ connects $\Comp$ and $\Comp_u(v_R)$. Reasoning analogously for
$\Comp'$, $F(v_R)$ connects $\Comp'$ and $\Comp_u(v_R)$, and therefore also
$\Comp$ and $\Comp'$. Hence, any connectivity component of $(\Lambda,F(v_R))$ is
a superset of a connectivity component of $(\Lambda,E_{\Lambda})$.

We conclude that the connectivity components of $(\Lambda,F(v_R))$ are the same
as those of $(\Lambda,E_{\Lambda})$, as claimed. Since $F(v_R)$ is a forest, the
root can broadcast $(\Lambda,F(v_R))$ over the BFS tree in
$\BO(|\Lambda|+D)\subseteq\BO(k+D)$ rounds. From this, each node can determine
the connectivity components of $(\Lambda,E_{\Lambda})$ locally. Now, each node
can compute $\hat{T}$ (for $u\in {\cal S}$, $T_u\in \hat{T}$ iff it is not
isolated in $(\Lambda,E_{\Lambda})$) and $\hat{\Comp}$ locally, as ${\cal S}$ is
already known to all nodes.
\end{proof}

\subsubsection*{Feasibility}

\begin{lemma}\label{lemma:stage2_feasible}
Suppose $\hat{F}$ is a solution of an $F$-reduced instance, where $F$ is the set
returned by the first stage. Define $F'\subseteq E$ by selecting for each
$\hat{e}\in \hat{F}$ an edge $e\in E$ inducing it into $F'$. Then $F\cup F'$ is
a solution of the original instance w.h.p.
\end{lemma}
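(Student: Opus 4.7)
The plan is to show that every pair of terminals $u,w\in T$ with $\Comp(u)=\Comp(w)$ are connected in $(V,F\cup F')$. The argument splits into two cases depending on whether the terminals lie inside one of the clusters $T_v$ grown from $\cal S$.

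First I would invoke \corollaryref{coro:stage2_group}, which states w.h.p.\ that any terminal outside $\bigcup_{v\in\cal S}T_v$ is already connected in $(V,F)$ to every terminal sharing its label. Hence if either $u$ or $w$ lies outside $\bigcup_{v\in\cal S}T_v$, they are connected in $(V,F)\subseteq (V,F\cup F')$ and there is nothing left to prove. So assume $u\in T_{v_u}$ and $w\in T_{v_w}$ for some $v_u,v_w\in\cal S$.

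Next, I would verify that $T_{v_u}$ and $T_{v_w}$ are terminals of the $F$-reduced instance with identical reduced label: they lie in $\hat T$ because $T_{v_u}\cap T\ni u\neq\emptyset$ and likewise for $v_w$; furthermore, $\Comp(u)=\Comp(w)$ is trivially in the same connected component of $(\Lambda,E_\Lambda)$ as itself, so by definition $\hat\Comp(T_{v_u})=\hat\Comp(T_{v_w})$. Since $\hat F$ solves the reduced instance, there is a path $\hat e_1,\ldots,\hat e_m$ in $(\hat V,\hat F)$ connecting $T_{v_u}$ to $T_{v_w}$, possibly passing through other super-nodes $T_v$ and through nodes in $V_r$.

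The core of the proof is then to translate this virtual path into a physical path in $(V,F\cup F')$. Here I would use two observations: (i) by the very definition of $T_v$, every terminal in $T_v$ is connected to $v$ in $(V,F)$ (along a path of $\tilde{\BO}(\sqrt n)$ hops), so each super-node $T_v$ corresponds to a single connected component of $(V,F)$; and (ii) the construction of $F'$ picks, for every $\hat e_i\in\hat F$, a physical edge $e_i\in E$ that induces $\hat e_i$, i.e.\ an edge in $E$ whose endpoints lie in the respective physical sets represented by the endpoints of $\hat e_i$. Consequently, each edge $\hat e_i$ along the path $\hat e_1,\ldots,\hat e_m$ yields a physical edge in $F'$ joining a node of the physical set represented by one endpoint of $\hat e_i$ to a node of the other. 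Concatenating these with the internal connectivity of the super-nodes $T_v$ provided by $F$ yields a path in $(V,F\cup F')$ from $u$ (a node of $T_{v_u}$) to $w$ (a node of $T_{v_w}$).

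The one subtlety worth care is the mixed case where $\hat e_i$ connects a super-node $T_v$ to a node $u\in V_r$ or two such nodes in $V_r$ directly: in those subcases the inducing edge literally is a physical edge, and nodes in $V_r$ trivially coincide with themselves in $(V,F)$, so the argument goes through identically. Combining all cases, $F\cup F'$ satisfies every connection demand of the original instance w.h.p.
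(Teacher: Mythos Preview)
Your proof is correct and follows essentially the same approach as the paper's: invoke \corollaryref{coro:stage2_group} to dispose of terminals outside $\bigcup_{v\in{\cal S}}T_v$, then for terminals $u\in T_{v_u}$ and $w\in T_{v_w}$ argue that $\hat\Comp(T_{v_u})=\hat\Comp(T_{v_w})$ so a path in $(\hat V,\hat F)$ exists, and lift it to $(V,F\cup F')$ using the internal $F$-connectivity of each $T_v$ plus the inducing edges in $F'$. You are in fact slightly more explicit than the paper about the $V_r$ subcase, which the paper leaves implicit.
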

\begin{proof}
Suppose for $u,u'\in T$ we have that $\Comp(u)=\Comp(u')$. If $u$ or $u'$ are
not in $\bigcup_{v\in {\cal S}}T_v$, \corollaryref{coro:stage2_group} shows that
$F$ connects $u$ and $u'$ w.h.p. Hence, suppose that $u\in T_v$ and $u'\in
T_w$ for some $v,w\in {\cal S}$. This implies that
$\hat{\Comp}(T_v)=\hat{\Comp}(T_w)$. Because $\hat{F}$ solves the $F$-reduced
instance, there is a path in $(\hat{V},\hat{F})$ connecting $T_v$ and
$T_w$. By definition of $\hat{E}$ and induced edges together with the fact $F$
connects each of the sets $T_x$, $x\in {\cal S}$, $F'\cup F$ connects $u$ and
$u'$. Since $u,u'\in T$ where arbitrary with the property that
$\Comp(u)=\Comp(u')$, applying the union bound over all pairs of terminals
shows that $F\cup F'$ is a solution of the original instance w.h.p.
\end{proof}

\subsubsection*{Approximation Ratio}

\begin{lemma}\label{lemma:stage2_cost}
An optimal solution to an $F$-reduced instance has at most the weight of an
optimal solution of the original instance.
\end{lemma}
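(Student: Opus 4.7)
The plan is to take an optimal solution $F^*$ of the original instance and exhibit a feasible solution $\hat{F}^*$ of the $F$-reduced instance with $\hat{W}(\hat{F}^*) \leq W(F^*)$. Concretely, $\hat{F}^* \subseteq \hat{E}$ will consist of all edges of $\hat{G}$ that are induced (in the sense of the definition of $\hat{E}$) by some edge of $F^*$ whose two endpoints do not both lie in a common set $T_a$; any edge of $F^*$ with both endpoints in a single $T_a$ is simply discarded, since after contraction it becomes a ``self-loop'' that is not present in $\hat{E}$. The weight bound is immediate: for each $\hat{e} \in \hat{F}^*$, the definition of $\hat{W}(\hat{e})$ as a minimum over inducing edges guarantees $\hat{W}(\hat{e}) \leq W(u,v)$ for any $\{u,v\} \in F^*$ inducing $\hat{e}$, and charging each $\hat{e}$ against one such edge gives $\hat{W}(\hat{F}^*) \leq W(F^*)$.

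The heart of the argument is feasibility. Fix arbitrary $T_v, T_w \in \hat{T}$ with $\hat{\Comp}(T_v) = \hat{\Comp}(T_w)$; I must show $T_v$ and $T_w$ are connected in $(\hat{V}, \hat{F}^*)$. Since $T_v, T_w \in \hat{T}$, pick terminals $u_v \in T_v \cap T$ and $u_w \in T_w \cap T$, and let $\Comp_v := \Comp(u_v)$ and $\Comp_w := \Comp(u_w)$. The hypothesis says exactly that $\Comp_v$ and $\Comp_w$ lie in the same connected component of $(\Lambda, E_\Lambda)$, so fix a path $\Comp_v = \Comp_0, \Comp_1, \ldots, \Comp_\ell = \Comp_w$ in $(\Lambda, E_\Lambda)$. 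For each edge $\{\Comp_i, \Comp_{i+1}\}$ along this path, the definition of $E_\Lambda$ supplies a node $z_i \in {\cal S}$ and terminals $a_i, b_i \in T_{z_i} \cap T$ with $\Comp(a_i) = \Comp_i$ and $\Comp(b_i) = \Comp_{i+1}$.

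I then chain connectivity via $F^*$. Because $F^*$ solves the original instance, each of the pairs $(u_v, a_0), (b_0, a_1), (b_1, a_2), \ldots, (b_{\ell-1}, u_w)$ consists of two terminals sharing the same label in $\Lambda$ and is therefore connected in $F^*$; meanwhile, for each $i \in \{0,\ldots,\ell-1\}$, the terminals $a_i$ and $b_i$ both lie in $T_{z_i}$, hence project to the same vertex of $\hat{G}$. Concatenating the $\ell+1$ label-matching $F^*$-paths (glued at the free identifications $a_i \sim b_i$) and projecting the result to $\hat{G}$ by contracting each $T_a$ yields a walk in $\hat{G}$: every edge $\{u,v\} \in F^*$ traversed either has both endpoints inside some $T_a$ (projecting to a loop, which is discarded) or induces an edge of $\hat{F}^*$ by construction. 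The projected walk therefore lives in $(\hat{V}, \hat{F}^*)$; it starts at $T_v$ (since $u_v \in T_v$) and ends at $T_w$ (since $u_w \in T_w$), establishing the desired connectivity.

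The main obstacle I expect is the bookkeeping in the final step: carefully checking that contracting the $T_a$'s maps arbitrary $F^*$-paths to genuine walks in $\hat{F}^*$, and handling degenerate cases such as $\ell = 0$ (where $\Comp_v = \Comp_w$ and no intermediate witnesses are needed) or $u_v$ or $u_w$ happening to coincide with some $a_i$ or $b_j$. Once these are checked, combining the weight bound with feasibility yields that the optimum of the reduced instance is at most $\hat{W}(\hat{F}^*) \leq W(F^*)$, which is the optimum of the original instance.
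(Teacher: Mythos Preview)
Your proposal is correct and follows essentially the same approach as the paper: both take an optimal solution $F^*$ (the paper calls it $F_o$) of the original instance, drop edges internal to any $T_a$, project the remaining edges to $\hat{E}$, and prove feasibility by chaining along a path $\Comp_0,\ldots,\Comp_\ell$ in $(\Lambda,E_\Lambda)$ using the witnesses $a_i,b_i\in T_{z_i}$ (the paper's $v_i,w_i\in T_{u_i}$) and the fact that $F^*$ connects same-label terminals. Your presentation is slightly more explicit about the contraction-to-walk interpretation and the charging for the weight bound, but the argument is the same.
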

\begin{proof}
Denote by $F_o$ an optimal solution of the original instance. For each $u\in
{\cal S}$, drop all edges between nodes $v,w\in T_u$. The remaining edge set
induces an edge set $\hat{F}\subseteq \hat{E}$ of at most weight $W(F_o)$ in
$\hat{G}$, which we claim to be a solution to the reduced instance; from this
the statement of the lemma follows immediately.

Consider terminals $T_u,T_{u'}\in \hat{T}$ of the new instance with
$\hat{\Comp}(T_u)=\hat{\Comp}(T_{u'})$. By definition of $\hat{\Comp}$, this
entails that there are nodes $v\in T_u$ and $w\in T_{u'}$ and a path
$(\Comp_0=\Comp(v),\Comp_1,\ldots,\Comp_{\ell}=\Comp(w))$ in
$(\Lambda,E_{\Lambda})$. For each edge $\{\Comp_{i-1},\Comp_i\}$ on the path,
$i\in \{1,\ldots,\ell\}$, there is a node $u_i\in {\cal S}$ and terminals
$v_i,w_i\in T_{u_i}\cap T$ so that $\Comp(v_i)=\Comp_{i-1}$ and
$\Comp(w_i)=\Comp_i$. Because $F_o$ is a solution of the original instance and
$\Comp(v_i)=\Comp(w_{i-1})$ for each $i\in \{2,\ldots,\ell\}$, there is a path
in $F_o$ connecting $w_i\in T_{u_{i-1}}$ and $v_i\in T_{u_i}$. Hence, $\hat{F}$
connects $T_{u_{i-1}}$ and $T_{u_i}$. It follows that it also connects $T_{u_1}$
and $T_{u_{\ell}}$. As $\Comp_0=\Comp(v)$ and $\Comp_{\ell}=\Comp(w)$, there are
paths in $F_o$ that connect $v$ to $v_1$ and $w$ to $w_{\ell}$, respectively.
Therefore, $\hat{F}$ connects $T_u$ to $T_{u_1}$ and $T_{u'}$ to $T_{u_{\ell}}$,
respectively. Overall, $\hat{F}$ connets $T_u$ and $T_{u'}$. Since
$T_u,T_{u'}\in \hat{T}$ were arbitrary with the property that
$\hat{\Comp}(T_u)=\hat{\Comp}(T_{u'})$, we conclude that $\hat{F}$ is indeed
a solution of the $F$-reduced instance.
\end{proof}

\subsubsection*{Solving the New Instance}

\begin{lemma}\label{lemma:stage2_solve}
A solution $\hat{F}$ of the $F$-reduced instance determined by the output of the
first stage of weight $\BO(\log n)$ times the optimum can be found in
$\sO(\sqrt{n}+D)$ rounds w.h.p., in the sense that an inducing edge set
$F'\subseteq E$ is marked in $G$ that satisfies $W(F')=\hat{W}(\hat{F})$.
\end{lemma}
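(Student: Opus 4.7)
The plan is to reduce to the randomized algorithm from~\cite{LenzenP13} for the $F$-reduced instance. Recall that this algorithm achieves approximation factor $\BO(\log n)$ in $\sO(\sqrt{n}+\hat{t}+D)$ rounds; by \lemmaref{lemma:stage2_compute}, $\hat{t}=|\hat{T}|\leq |{\cal S}|=\sqrt{n}$, so the target bound $\sO(\sqrt{n}+D)$ follows \emph{provided} that each round on $\hat{G}$ can be simulated in $\sO(\sqrt{n})$ rounds on $G$ (plus the usual $\BO(D)$ overhead for globally distributing parameters such as $n$ and $\hat{t}$).

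First, I would set up the simulation infrastructure. For each $v\in{\cal S}$, the leader of the supernode $T_v$ is $v$ itself. By \corollaryref{coro:stage2_group}, within $\tilde{\BO}(\sqrt{n})$ rounds each node $u\in T_v$ learns the identifier of $v$; by construction of $T_v$, a BFS tree of $T_v$ within $(V,F)$ has depth $\sO(\sqrt{n})$ w.h.p., so $v$ and the nodes in $T_v$ can exchange messages in $\sO(\sqrt{n})$ rounds. Next, each boundary node $u\in T_v$ (or $u\in V_r$) that is $G$-adjacent to some $w\in T_{v'}$ (or $w\in V_r$) identifies the corresponding edge $\{T_v,T_{v'}\}$, $\{u,T_{v'}\}$, or $\{u,w\}$ of $\hat{G}$ in one round of local exchange with its $G$-neighbors; the leader $v$ then learns the minimum-weight inducing edge for each of its incident $\hat{E}$-edges by a convergecast inside $T_v$, also in $\sO(\sqrt{n})$ rounds. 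These inducing edges will be the $F'\subseteq E$ we later mark.

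With this setup, I simulate the algorithm of~\cite{LenzenP13} on $\hat{G}$. In each simulated round, every ``$\hat{G}$-node'' (i.e., a leader $v\in{\cal S}$ or a node in $V_r$) wishes to send one $\BO(\log n)$-bit message along each of its incident $\hat{E}$-edges. A leader first broadcasts its outgoing messages inside $T_v$ via the tree (time $\sO(\sqrt{n})$), each boundary node then transmits the message across the appropriate inducing edge in $G$ (one round; concurrent transmissions over different inducing edges cause no congestion because each physical edge induces at most one $\hat{E}$-edge), and the receiving supernode converge\-casts the messages to its leader (again $\sO(\sqrt{n})$ rounds). Crucially, each physical node participates in at most one $T_v$ or is in $V_r$, so no congestion arises from membership in multiple supernodes; the $\sO(\sqrt{n})$ factor per simulated round is the only overhead, and since the algorithm from~\cite{LenzenP13} runs for $\sO(\sqrt{n}+D)$ simulated rounds given $\hat{t}\leq\sqrt{n}$, the total time is $\sO(\sqrt{n}+D)$.

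Finally, once~\cite{LenzenP13} returns $\hat{F}\subseteq\hat{E}$, each leader $v$ broadcasts the identifiers of its incident edges in $\hat{F}$ within $T_v$; the corresponding inducing edges (already locally known from the setup step) are marked into $F'\subseteq E$. By construction $\hat{W}(\hat{F})=W(F')$, and $\hat{W}(\hat{F})$ is within factor $\BO(\log n)$ of the optimum of the $F$-reduced instance. The main obstacle in the argument is verifying that the simulation of~\cite{LenzenP13} goes through with only an $\sO(\sqrt{n})$ slowdown per round; this relies exactly on the diameter bound of $\sO(\sqrt{n})$ for each $T_v$ in $(V,F)$ w.h.p.\ (as guaranteed by \lemmaref{lemma:stage1_partial} and \corollaryref{coro:stage2_group}) and on the fact that each physical edge induces a single $\hat{E}$-edge, which prevents congestion when multiple simulated messages traverse $G$ simultaneously.
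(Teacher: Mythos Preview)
Your reduction to~\cite{LenzenP13} is the right high-level idea, but the black-box simulation you describe does not give the claimed running time. You argue that each round on $\hat{G}$ costs $\sO(\sqrt{n})$ rounds on $G$ and that the algorithm of~\cite{LenzenP13} runs for $\sO(\sqrt{n}+D)$ rounds; multiplying these gives $\sO(n+\sqrt{n}\,D)$, not $\sO(\sqrt{n}+D)$. The sentence ``the $\sO(\sqrt{n})$ factor per simulated round is the only overhead \ldots\ the total time is $\sO(\sqrt{n}+D)$'' is simply an arithmetic slip. Moreover, even the per-round bound is optimistic: in the \Congest model a node of $\hat{G}$ may send \emph{distinct} messages to each of its $\hat{G}$-neighbors, so in one simulated round the leader of $T_v$ must route up to $\deg_{\hat{G}}(T_v)$ different messages through its intra-cluster tree; this number is not bounded by $\sO(\sqrt{n})$ in general, so the broadcast/convergecast inside $T_v$ can exceed $\sO(\sqrt{n})$ rounds.

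The paper avoids both problems by \emph{not} simulating $\hat{G}$. Instead it opens up the algorithm of~\cite{LenzenP13} and runs it directly on $G$ with one modification: the spanner construction, which normally initializes clusters to singletons, is initialized with the sets $T_v$, $v\in{\cal S}$, as clusters. The random-sampling step of~\cite{LenzenP13} can be skipped because the uniformly random set ${\cal S}$ already guarantees that least-weight paths between sampled nodes have $\sO(\sqrt{n})$ hops. The remaining steps (building the spanner on $\{T_v\mid v\in{\cal S}\}$, broadcasting it, solving centrally, and marking the inducing paths) then run on $G$ itself with the time analysis of~\cite[Theorem~5.2]{LenzenP13} yielding $\sO(|\hat{T}|^{1+1/k}+D)\subseteq\sO(\sqrt{n}+D)$. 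The key point you are missing is that the algorithm must be adapted, not simulated round-by-round.
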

\begin{proof}
We use our algorithm from \cite{LenzenP13} with a minor tweak. The (unmodified)
algorithm proceeds in the following main steps.
\begin{compactitem}
\item Sample a uniformly random set ${\cal S}$ of
$|{\cal S}|\in \tilde{\Theta}(\sqrt{n})$ nodes.
\item Construct and make known to all nodes a spanner of the complete graph on
the node set $T\cup {\cal S}$, where the edge weights are the weighted distances
in $G$.
\item For each $v\in T$, make $\Comp(v)$ known to all nodes and locally solve
the instance on the spanner by a deterministic $\alpha$-approximation algorithm.
\item For each edge in the computed solution, select the edges from a
corresponding least-weight path in $G$ into the returned edge set.
\end{compactitem}
Adding the set ${\cal S}$ ensures that any least-weight path between pairs of
nodes in $T\cup {\cal S}$ that has no inner nodes from $T\cup {\cal S}$ has
$\sO(\sqrt{n})$ hops. This property is already guaranteed in $G$ and
thus also $\hat{G}$ due to the uniformly random set ${\cal S}$ of the $\sqrt{n}$
nodes of highest rank; therefore, it can be skipped.

To simulate the algorithm on $\hat{G}$, it suffices to slightly modify the
second step of the algorithm. The spanner construction iteratively grows
clusters of nodes that are connected by spanner edges, where usually the
clusters are initialized to the singletons given by the node set of the spanner.
In our setting, for each $v\in {\cal S}$, the nodes in $T_v$ are already
connected after the first stage and identified to a single node in $\hat{G}$.
To reflect this in the spanner construction, we simply initialize the clusters
to be the sets $T_v$, $v\in {\cal S}$; the algorithm then constructs a spanner
on the complete graph on $\{T_v\,|\,v\in {\cal S}\}$ with edge weights given by
distances in $\hat{G}$. The paths the algorithm detects and whose edges will be
returned in the last step of the algorithm have weight equal to the edge weights
in $\hat{G}$.

Because the third step operates on the spanner only, it does not have to be
modified. Using the (deterministic) moat-growing algorithm, which guarantees
$\alpha=2$, and parameter $k=\log n$ in the spanner construction, Theorem 5.2
from \cite{LenzenP13} shows that the returned edge set $F'$ has weight at most
$\BO(\log n)$ times the optimum of the $F$-reduced instance. The above
modifications to the algorithm do not affect the running time apart from
ensuring that the number of nodes in the spanner (and the instance of \sfic on
the spanner solved in the third step) becomes $|\hat{T}|$, so the analysis from
\cite{LenzenP13} yields a running time of $\sO(|\hat{T}|^{1+1/k}+D)\subseteq
\sO(\sqrt{n}+D)$.
\end{proof}

\subsection{Completing the Algorithm}

Finally, we can state the complete algorithm as follows.
\begin{compactenum}
\item For a sufficiently large constant $c$, run the first stage $c\log n$
times.
\item Among the computed edge sets, determine a set $F$ of minimal weight.
\item If $s\leq \sqrt{n}$, return $F$. Otherwise,
\begin{compactenum}
\item Compute the $F$-reduced instance.
\item Solve the $F$-reduced instance, resulting in edge set $F'$.
\item Return $F\cup F'$.
\end{compactenum}
\end{compactenum}

\begin{proof}[Proof of \theoremref{theorem:fast}]
The time complexity follows from the observation that checking the weight of an
edge set returned by the first stage can be done in $\BO(D)$ rounds using a BFS
tree, Lemmas~\ref{lemma:stage2_compute} and~\ref{lemma:stage2_solve}, and
Corollaries~\ref{coro:stage1_time} and~\ref{coro:stage2_group}.

In~\cite{KKMPT-12}, it is shown that the weight of the optimal solution on the
virtual tree is within factor $\BO(\log n)$ of the optimum in expectation. By
Markov's inequality, with probability at least $1/2$, this expectation is
exceeded by factor at most $2$. Hence, with probability at least $1-1/2^{c\log
n}=1-n^c$, i.e., w.h.p., at least one of the computed virtual trees exhibits an
optimal solution that is within factor $\BO(\log n)$ of the optimum for the
instance on $G$. By \lemmaref{lemma:stage1_approx}, the weight of the set $F$ is
at most that of the optimal solution on the corresponding virtual tree, implying
that the set $F$ determined in the second step of the algorithm has weight
within factor $\BO(\log n)$ of the optimum w.h.p.

For $s\leq \sqrt{n}$, by \corollaryref{coro:stage1_smalls_feasible} $F$ is a
solution, i.e., the claim of the theorem holds. For $s>\sqrt{n}$, the algorithm
proceeds to compute $F'$. By \lemmaref{lemma:stage2_solve}, $F'$ induces a
solution of the $F$-reduced instance, yielding by
\lemmaref{lemma:stage2_feasible} that $F\cup F'$ solves the original instance
w.h.p. \lemmaref{lemma:stage2_solve} also guarantees that $F'$ has weight within
factor $\BO(\log n)$ of the optimum of the $F$-reduced instance, which by
\lemmaref{lemma:stage2_cost} implies that $W(F')$ weighs also at most $\BO(\log
n)$ times optimum of the original instance. We conclude that $W(F\cup F')$ is
optimal up to factor $\BO(\log n)$ w.h.p. Applying the union bound over the
various statements that hold w.h.p., the statement of the theorem follows for
$s>\sqrt{n}$.
\end{proof}

\end{document}